\newtheorem{corollary}{Corollary}
\newtheorem{definition}{Definition}
\newtheorem{theorem}{Theorem}
\newtheorem{lemma}{Lemma}
\newenvironment{lemma2}[2][Lemma]{\begin{trivlist} \item[\hskip \labelsep { #1}\hskip \labelsep {#2.}]}{\end{trivlist}}
\newtheorem{lemmaC}{Lemma}[section]
\newcommand{\mX}{\mathcal X}
\newcommand{\mY}{\mathcal Y}
\newcommand{\mF}{\mathcal F}
\newcommand{\mA}{\mathcal A}
\newcommand{\mB}{\mathcal B}
\newcommand{\mS}{\mathcal S}
\newcommand{\mZ}{\mathcal Z}
\newcommand{\mE}{\mathcal E}
\newcommand{\mT}{\mathcal T}
\newcommand{\abs}[1]{\ensuremath{|#1|}}
\newcommand{\C}[2]{{#1}\wedge{#2}}
\newcommand{\K}[2]{{#1} \searrow {#2}}
\newcommand{\cM}{\mathcal{M}}
\def\01{\{0,1\}}
\newcommand{\Markov}[3]{{#1} \leftrightarrow {#2} 
                             \leftrightarrow {#3}}
\newcommand{\OT}[3]{{\binom{#2}{#1}}{\textsf{-OT}^{#3}}}
\newcommand{\RabinOT}[2]{{({#1})\textsf{-RabinOT}^{#2}}}
\newcommand{\OLFE}[1]{{(#1)}{\textsf{-OLFE}}}
\newcommand{\eps}{\varepsilon}
\newcommand*{\sbin}{\{0,1\}}
\DeclareMathOperator{\dis}{D}
\DeclareMathOperator{\Supp}{supp}
\newcommand{\IP}[1]{{\textsf{IP}_{#1}}}
\newcommand{\EQ}[1]{{\textsf{EQ}_{#1}}}
\newcommand{\IS}[2]{{\Iop(#1;#2)}}
\newcommand{\cHS}[2]{{\Hop(#1|#2)}}
\newcommand{\cIS}[3]{{\Iop(#1;#2|#3)}}
\newcommand{\Iop}{I}
\newcommand{\Hop}{H}
\newcommand{\Hshannon}[2]{{\Hop(#1|#2)}}
\newcommand{\CIshannon}[3]{{\Iop(#1;#2|#3)}}
\newcommand{\Ishannon}[2]{{\Iop(#1;#2)}}
\DeclareMathOperator{\maj}{maj}
\DeclareMathOperator{\emin}{min}
\DeclareMathOperator{\emax}{max}
\newcommand{\rmax}{r_{\emax}}
\newcommand{\rmaxE}[3]{r_{\emax}^{#1}(#2|#3)}
\newcommand{\Hmin}{\Hop_{\emin}}
\newcommand{\chmin}[3]{\Hop_{\emin}^{#1}(#2\mid#3)}
\newcommand{\Hmax}{\Hop_{\emax}}
\newcommand{\rhob}{\ensuremath{\bar{\rho}}}
\newcommand{\support}{\textnormal{supp}}
\newcommand{\rminE}[3]{r_{\emin}^#1(#2|#3)}
\newcommand{\rminTE}[3]{ r_{\emin}^{#1}(#2|#3)}
\newcommand{\hh}[4]{\ensuremath{H_{#1}^{#2}({#3})_{#4}}}
\newcommand{\chh}[5]{\ensuremath{H_{#1}^{#2}({#3}|{#4})_{#5}}}
\newcommand{\Chmaxeps}[3]{\Hop_{\emax}^{#1}(#2|#3)}
\newcommand{\Chminteps}[3]{\Hop_{\emin}^{#1}(#2|#3)}
\newcommand{\hminteps}[2]{ \Hop_{\emin}^{#1}(#2)}
\newcommand{\hvn}[2]{\hh{}{}{#1}{#2}} 
\newcommand{\chvn}[3]{\chh{}{}{#1}{#2}{#3}}
\newcommand{\proj}[1]{|#1\rangle\langle#1|}
\newcommand{\hmax}[1]{\Hop_{\emax}(#1)}
\newcommand{\hmin}[2]{\Hop_{\emin}(#1)}
\newcommand{\tr}{\operatorname{tr}}
\newcommand{\supp}[1]{\textnormal{supp}\, ( #1 )}
\newcommand{\id}{{\mathbbm{1}}}
\newcommand\bit{\{0,1\}}
\newcommand{\bits}[1]{\bit^{#1}}
\newcommand{\bra}[1]{{\langle {#1}|}}
\newcommand{\ket}[1]{{|{#1}\rangle}}
\newcommand{\ketbra}[2]{{\ket{#1}{\bra{#2}}}}
\newcommand{\vecstate}[1]{\ket{#1}\bra{#1}}
\newcommand{\h}{\ensuremath{\mathcal{H}}}
\newcommand{\hi}[1]{\ensuremath{\mathcal{H}_{\textnormal{#1}}}}
\newcommand{\hA}{\hi{A}}
\newcommand{\hB}{\hi{B}}
\newcommand{\hR}{\hi{R}}
\newcommand{\idx}[2]{{#1}_{#2}}
\newcommand{\rhoA}{\ensuremath{\idx{\rho}{A}}}
\newcommand{\rhoB}{\ensuremath{\idx{\rho}{B}}}
\newcommand{\rhoABZ}{\ensuremath{\idx{\rho}{ABZ}}}
\newcommand{\rhoABC}{\ensuremath{\idx{\rho}{ABC}}}
\newcommand{\psiAB}{\ensuremath{\idx{\psi}{AB}}}
\newcommand{\phiAB}{\ensuremath{\idx{\phi}{AB}}}
\newcommand{\kron}{\otimes}
\newcommand{\ptrace}[2]{\ensuremath{\ptr{#1} (#2)}}
\newcommand{\idi}[1]{\ensuremath{\mathds{1}_{#1}}}
\newcommand{\idA}{\idi{A}}
\newcommand{\idB}{\idi{B}}
\newcommand{\normstates}[1]{\ensuremath{\mathcal{S}_{=}(#1)}}
\newcommand{\ptr}[1]{\operatorname{tr}_{#1}}
\newcounter{protoCount}
\newcounter{protoList}
\newsavebox{\tmpbox}
\newlength{\protobox}
\newcommand{\cancel}[1]{}
\newcommand{\severin}[1]{{\bf[severin: #1]}}
\title{On the Efficiency of Classical and Quantum Secure Function Evaluation}
\begin{document}

\author{Severin Winkler and J\"urg Wullschleger
\thanks{S. Winkler is with the Computer Science Department, ETH Z\"urich, Z\"urich, Switzerland (e-mail: severin@severinwinkler.ch).}
\thanks{J. Wullschleger is with Universit\'e de Montr\'eal and McGill University,  Montr\'eal (Qu\'ebec), Canada (e-mail: juerg@wulli.com).}
\thanks{S. Winkler was supported by the Swiss National Science Foundation and an ETHIIRA grant of ETH's research commission. J. Wullschleger was supported by the U.K. EPSRC grant EP/E04297X/1 and the Canada-France NSERC-ANR project FREQUENCY.}
}

\maketitle
\begin{abstract}
We provide  bounds on the efficiency of secure one-sided output two-party computation of  arbitrary finite functions from trusted distributed randomness in the statistical case. From these results we derive bounds on the efficiency of protocols  that use different variants of OT as a black-box. When applied to  implementations of OT, these bounds generalize most known results to the  statistical case. Our results hold in particular for transformations between a finite number of primitives and for any error.
In the second part we study the efficiency of quantum protocols implementing OT. While most classical lower bounds for perfectly secure reductions of OT to distributed randomness still hold in the quantum setting, we present a statistically secure protocol that violates these bounds by an arbitrarily large factor. We then prove a weaker lower bound that does hold in the statistical quantum setting and implies that even quantum protocols cannot extend OT. Finally, we present two lower bounds for reductions of OT to commitments and a protocol based on string commitments that is optimal with respect to both of these  bounds.
\end{abstract}
\begin{keywords}
Unconditional security, oblivious transfer, lower bounds, two-party computation, quantum cryptography.
\end{keywords}

\section{Introduction}
Secure multi-party computation allows two or more distrustful players to jointly compute a function of their inputs in a secure way \cite{Yao82}. Security here means that the players compute the value of the function correctly without learning more than what they can derive from their own input and output.

A primitive of central importance in secure multi-party computation is \emph{oblivious transfer} (OT). In particular, OT is sufficient to execute any two-party computation securely \cite{GolVai87,Kilian88} and OT can be precomputed offline, i.e.,  before the actual inputs to the computation are available, and converted into OTs later. The original form of OT ($\RabinOT{\frac12}{1}$) has been introduced by Rabin in \cite{Rabin81}. It allows a sender to send a bit $x$, which the receiver will get with probability $\frac{1}{2}$, while the sender does not learn whether the message has arrived or not. Another variant of OT, called one-out-of-two bit-OT ($\OT{1}{2}{1}$) was defined in \cite{EvGoLe85}. Here, the sender has two input bits $x_0$ and $x_1$. The receiver gives as input a choice bit $c$ and receives $x_c$ without learning $x_{1-c}$. The sender gets no information about the choice bit $c$. Other important variants of OT are $\OT{t}{n}{k}$ where the inputs are strings of $k$ bits and the receiver can 
choose $t<n$ out of $n$ secrets and $\RabinOT{p}{k}$ where the inputs are strings of $k$ bits and the erasure probability is $p \in [0,1]$.

If the players have access to noiseless classical or quantum communication only, it is impossible to implement information-theoretically secure OT, i.e. secure against an adversary with unlimited computing power. The primitives $\RabinOT{p}{k}$ and $\OT{1}{2}{1}$ are equally powerful \cite{Crepeau87}, i.e., one can be implemented from the other. Numerous reductions of $\OT{1}{n}{k}$ to $\OT{1}{2}{k'}$ are known \cite{BrCrRo86b,CreSan91,BCS96,DodMic99,WolWul05}. There has also been a lot of interest in reductions of OT to weaker primitives. For example, OT can be realized from noisy channels \cite{CreKil88,CrMoWo04,DFMS04,Wullsc09}, noisy correlations~\cite{WolWul04,NaWi06}, or weak variants of oblivious transfer~ \cite{CreKil88,Cachin98,DaKiSa99,BrCrWo03,DFSS06,Wullsc07}. 

In the quantum setting, OT can be implemented from black-box commitments~\cite{BBCS92,Yao95,DFLSS09,Unruh10}; this reduction is impossible in the classical setting\footnote{The existence of a classical reduction of OT to bit commitment in the malicious model would imply a semi-honest OT protocol from a communication channel only.}.

Given these positive results it is natural to ask how efficient such reductions can be in principle, i.e., how many instances of a given primitive are needed to implement OT.

\subsection{Previous Results}

Several lower bounds for OT reductions are known. The earliest impossibility result for information-theoretically secure reductions of OT~\cite{Beaver96} shows that the number of $\OT{1}{2}{1}$ cannot be \emph{extended}, i.e., there does not exist a protocol using $n$ instances of $\OT{1}{2}{1}$ that perfectly implements $m > n$ instances. Lower bounds on the number of instances of OT needed to perfectly implement other variants of OT have been presented in \cite{DodMic99} (see also \cite{Maurer99}). These bounds have been strengthened and generalized to secure sampling of arbitrary two-party distributions in \cite{WolWul05,WolWul08,PP10,PP11}. These bounds apply to the semi-honest model (where dishonest players follow the protocol, but try to gain additional information from the transcript of the computation) and in the case of implementations of OT also to the malicious model (where dishonest players behave arbitrarily). In the malicious model these bounds can be improved \cite{KKK08}. Lower bounds on the 
number of ANDs needed to implement general functions have been presented in \cite{BeiMal04}.

These results only consider \emph{perfect} protocols and do not give much insight into the case of statistical implementations. As pointed out in \cite{KKK08}, their result \emph{only} applies to the perfect case, because there is a statistically secure protocol that is more efficient \cite{CreSav06}. There can be a large gap between the efficiency of perfect and statistical protocols, as shown in \cite{BeiMal04}: The number of OTs needed to compute the equality function is exponentially bigger in the perfect case than in the statistical case. Therefore, it is not true in general that a bound in the perfect case implies a similar bound in the statistical case.

So far very little is known in the statistical case. In \cite{AhlCsi07} a proof sketch of a lower bound for statistical implementations of $\OT{1}{2}{k}$ has been presented. However, this result only holds in the asymptotic case, where the number $n$ of resource primitives goes to infinity and the error goes to zero as $n$ goes to infinity. In \cite{BeiMal04} a non-asymptotic lower bound on the number of ANDs needed for one-sided secure computation of arbitrary functions with \emph{Boolean} output has been shown. This result directly implies lower bounds for protocols that use $\OT{t}{n}{k}$ as a black-box. However, besides being restricted to Boolean-valued functions this result is not strong enough to show optimality of several known reductions and it does not provide bounds for reductions to randomized primitives such as $\RabinOT{\frac12}{1}$. The impossibility results for perfectly secure implementations of randomized two-party primitives of \cite{PP10,PP11} should also generalize to the case of a small 
statistical error according to the authors.

In the quantum setting almost all known negative results show that a certain primitive is impossible to implement from scratch. Commitment has been shown to be impossible in the quantum setting in \cite{Mayers97,LoChau97}. Using a similar proof, it has been shown in \cite{Lo97} that general one-sided two-party computation and in particular oblivious transfer are also impossible to implement securely in the quantum setting.

The only lower bounds for quantum protocols where the players have access to resource primitives (such as different variants of OT) have been presented in \cite{SaScSo09b}, where Theorem~4.7 shows that important lower bounds for classical protocols also apply to \emph{perfectly} secure quantum reductions.

\subsection{Contributions}
In Section~\ref{sec:sfe-lower-bounds} we consider statistically secure protocols that compute a function between two parties from trusted randomness distributed to the players. We provide two bounds  on the efficiency of such reductions --- in terms of the conditional Shannon entropy and the mutual information of the randomness --- that allow us in particular to derive bounds on the minimal number of $\OT{1}{n}{k}$ or $\RabinOT{p}{k}$ needed to compute a general function securely. Our results hold in the non-asymptotic regime, i.e., we consider a finite number of resource primitives and our results hold for \emph{any} error.

We will use the formalism of smooth entropies to show that one of these two bounds can be generalized to a bound in terms of the conditional min-entropy. This leads to tighter bounds in many cases and to arbitrarily better bounds for some reductions.

In Section \ref{lower-bounds:ot} we provide an additional bound for the special case of statistical implementations of $\OT{1}{n}{k}$ in the semi-honest model. Lower bounds for implementations of OT in the semi-honest model imply similar bounds in the malicious model (cf. Section~\ref{subsec:sec:mal:ot} and Appendix~\ref{appendix:sec:mal}). The bounds for implementations of  $\OT{1}{n}{k}$ (Theorem \ref{thm:impossibility}) imply the following corollary that gives a general bound on the conversion rate between different variants of OT.

\begin{corollary} \label{cor:main2}
For any reduction that implements $M$ instances of $\OT{1}{N}{K}$ from $m$ instances of $\OT{1}{n}{k}$ in the semi-honest model with an error of at most $\eps$, we have
\begin{align*}
\frac{m}{M} \geq \max\left( \frac{(N-1) K}{(n-1) k}, \frac{K}{k},\frac{\log N}{\log n}\right) -  7 N K  \cdot (\eps + h(\eps))\;.
\end{align*}
\end{corollary}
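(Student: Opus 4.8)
The plan is to obtain Corollary~\ref{cor:main2} by specialising Theorem~\ref{thm:impossibility} to the situation where the trusted distributed randomness consists of $m$ independent copies of a randomised version of $\OT{1}{n}{k}$. The first step is to put an arbitrary semi-honest reduction into the normal form assumed by Theorem~\ref{thm:impossibility}: since one instance of $\OT{1}{n}{k}$ can be realised perfectly and input-independently from one instance of its randomised variant together with two classical messages (OT precomputation), any protocol that calls $m$ instances of $\OT{1}{n}{k}$ can be rewritten as a protocol that first receives, as trusted distributed randomness, the outputs of $m$ independent randomised $\OT{1}{n}{k}$'s and then exchanges classical messages only. This rewriting preserves correctness and semi-honest security, so the rewritten protocol still implements $M$ instances of $\OT{1}{N}{K}$ with error at most $\eps$, and Theorem~\ref{thm:impossibility} applies to it with Alice's share $X$ being the $mn$ uniform $k$-bit strings and Bob's share $Y$ being the $m$ uniform indices together with the $m$ selected strings.

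The second step is to evaluate, for this resource, the quantities occurring in the bounds of Theorem~\ref{thm:impossibility}. For $m$ independent copies of randomised $\OT{1}{n}{k}$ one has
\begin{align*}
H(X\mid Y) = m(n-1)k, \qquad I(X;Y) = mk, \qquad H(Y\mid X) = m\log n ,
\end{align*}
because conditioned on Bob's view the $n-1$ unselected strings of each instance stay uniform, only the selected strings are shared, and the choices are uniform and independent of Alice's strings; the smooth conditional min-entropy of $X$ given $Y$ equals $m(n-1)k$ up to lower-order terms as well, so the sharper min-entropy form of the first bound gives nothing extra here. Since for the ideal target these same quantities have values $M(N-1)K$, $MK$ and $M\log N$, substituting into the three bounds of Theorem~\ref{thm:impossibility} (sender conditional entropy, mutual information, and the bound forcing the receiver's choice to be hidden) yields $m(n-1)k \ge M(N-1)K - \delta_1$, $mk \ge MK - \delta_2$ and $m\log n \ge M\log N - \delta_3$, where $\delta_i$ is the error term of the $i$-th bound, expected to be of order $MNK(\eps+h(\eps))$.

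The third step is arithmetic. Dividing these inequalities by $M(n-1)k$, $Mk$ and $M\log n$ and using $(n-1)k\ge1$, $k\ge1$, $\log n\ge1$ (as $n\ge2$), one gets $\tfrac{m}{M}\ge r_i - c_i NK(\eps+h(\eps))$ for $i=1,2,3$, where $r_1=\tfrac{(N-1)K}{(n-1)k}$, $r_2=\tfrac{K}{k}$, $r_3=\tfrac{\log N}{\log n}$ and $c_i$ is the constant inherited from Theorem~\ref{thm:impossibility}. Since $\tfrac{m}{M}$ is at least each of the three right-hand sides it is at least their maximum, and $\max_i\big(r_i-c_i NK(\eps+h(\eps))\big)\ge\max_i r_i-(\max_i c_i)\,NK(\eps+h(\eps))$; verifying $\max_i c_i\le 7$ gives exactly the stated bound.

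The only step that is not a routine substitution is the first: checking that a reduction which may query $\OT{1}{n}{k}$ adaptively, interleaved with other communication, can without loss of generality be brought into the ``distributed randomness plus classical communication'' form required by Theorem~\ref{thm:impossibility}. Once that normal form is in place, the entropy computation for i.i.d.\ randomised OT and the bookkeeping of the constant are straightforward.
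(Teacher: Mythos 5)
Your proposal is correct and takes essentially the same route as the paper: replace the $m$ calls to $\OT{1}{n}{k}$ by the equivalent distributed randomness $P_{UV}$ (OT precomputation), note that $\cHS{U}{V}=m(n-1)k$, $\cHS{V}{U}=m\log n$, $\IS{U}{V}=mk$, plug these into the three bounds of Theorem~\ref{thm:impossibility} applied to the $M$ target instances of $\OT{1}{N}{K}$, and divide through. The only things you hand-wave that the paper also leaves implicit are the final constant bookkeeping (checking that each divided error term is at most $7NK(\eps+h(\eps))$, using $(n-1)k\ge 1$, $k\ge 1$, $\log n\ge 1$) and the fact that the third bound really concerns $\cIS{U}{V}{\C{U}{V}}\le\IS{U}{V}$, which is harmless since OT randomness has trivial common part.
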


Corollary \ref{cor:main2} generalizes the lower bounds from \cite{DodMic99,WolWul05,WolWul08} to the statistical case and is strictly stronger than the impossibility bounds from \cite{AhlCsi07}. If we let $M=m+1$, $N=n=2$ and $K=k=1$, we obtain a stronger version of  Theorem~3 from \cite{Beaver96} which states that OT cannot be extended. Note that the impossibility results for perfectly secure implementations of randomized two-party primitives of \cite{PP10,PP11} deliver stronger bounds in general (cf. Example 4.1 in \cite{PP11}), and according to the authors these results should also generalize to the case of a small statistical error. However, in contrast to our results they are restricted to randomized primitives only and do not apply to general two-party functions.

Our lower bounds show that the following protocols are (close to) optimal in the sense that they use the minimal number of instances of the given primitive.

\begin{itemize}
 \item The protocol in \cite{BrCrSa96,DodMic99} which uses $\frac{N-1}{n-1}$ instances of $\OT{1}{n}{k}$ to implement $\OT{1}{N}{k}$ is optimal.
 \item The protocol in \cite{WolWul05} which uses $t$ instances of $\OT{1}{n}{k n^{t-1}}$ to implement $\OT{1}{n^t}{k}$ is optimal.
 \item In the semi-honest model, the trivial protocol that implements $\OT{1}{2}{k}$ from $k$ instances of 
 $\OT{1}{2}{1}$ is optimal. In the malicious case, the protocol in \cite{CreSav06} uses asymptotically (as $k$ goes to infinity) the same amount of instances and is therefore asymptotically optimal.
 \item The protocol in \cite{savvides:diss} that implements  $\OT{1}{2}{k}$ from $\RabinOT{\frac{1}{2}}{1}$ in the malicious model is asymptotically optimal.
\end{itemize}

While previous results suggest that quantum protocols are not more efficient than classical protocols for reductions between different variants of oblivious transfer, we present in Section~\ref{sec:quantumReductions} a statistically secure protocol that violates the classical bounds and the bound for perfectly secure quantum protocols by an arbitrarily large factor. More precisely, we prove that, in the quantum setting, string oblivious transfer can be reversed much more efficiently than by any classical protocol.
\noindent
We show that a weaker lower bound for quantum reductions holds also for quantum protocols in the statistical setting (Theorem \ref{thm:imposs:rand}). This result implies in particular that quantum protocols cannot extend oblivious transfer, i.e., there exists a constant $c>0$ such that any quantum reduction of $m+1$ instances of $\OT{1}{2}{1}$ to $m$ instances of $\OT{1}{2}{1}$ must have an error of at least~$\frac{c}{m}$. 
Finally, we also derive a lower bound on the number of commitments (Theorem~\ref{thm:imposs1}) and on the total number of bits the players need to commit to (Theorem~\ref{thm:imposs:com}) in any $\eps$-secure implementation of $\OT{1}{2}{k}$ from commitments.

\begin{corollary}\label{cor:imposs:com}
A protocol that implements $\OT{1}{2}{k}$, using commitments only, with an error of at most $0<\eps\leq 0.002$ must use at least $\log(1/\eps)-6$ individual commitments and needs to commit to at least
$(1- 3 \sqrt{\eps}) \cdot k - 3 h(\sqrt{\eps})$
bits in total. 
\end{corollary}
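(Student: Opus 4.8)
The plan is to obtain Corollary~\ref{cor:imposs:com} as a direct consequence of Theorem~\ref{thm:imposs1} and Theorem~\ref{thm:imposs:com}: the first bounds the number of individual commitments used by any $\eps$-secure implementation of $\OT{1}{2}{k}$ from commitments, and the second bounds the total number of committed bits. The two assertions of the corollary are independent --- one comes from each theorem --- so the real work is only to absorb the lower-order error terms into the clean expressions stated, using the hypothesis $0<\eps\le 0.002$.

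For the first assertion, I would apply Theorem~\ref{thm:imposs1}, which I expect to give a lower bound of the shape $\ell\ge\log(1/\eps)-c$ on the number $\ell$ of commitments (equivalently, an upper bound $\eps\ge 2^{-(\ell+c)}$ on the achievable error), for some absolute constant $c$ and possibly with a small correction term that tends to $0$ as $\eps\to 0$. It then suffices to check that for $\eps\le 0.002$ this constant together with any correction is at most $6$; since $\log(1/0.002)\approx 8.97$, the resulting bound $\log(1/\eps)-6$ is strictly positive on the whole range, so the estimate has room to spare.

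For the second assertion, I would apply Theorem~\ref{thm:imposs:com}, which I expect to lower bound the total number of committed bits by an expression of the form $(1-\alpha(\eps))\,k-\beta(\eps)$, where $\alpha$ and $\beta$ are explicit functions built from the correctness error and the security error against the receiver --- the $\sqrt{\eps}$ behaviour being typical of the Fano-type / smooth-entropy step used to relate ``what a dishonest receiver could extract'' to the committed data. The remaining task is a one-variable monotonicity check showing $\alpha(\eps)\le 3\sqrt{\eps}$ and $\beta(\eps)\le 3h(\sqrt{\eps})$ for $\eps\le 0.002$, i.e.\ collecting the various $\eps$, $\sqrt{\eps}$ and $h(\cdot)$ contributions into the two stated terms.

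The main obstacle is bookkeeping rather than any single inequality: the underlying theorems are presumably phrased with separate parameters for correctness and for each security condition (and for a fixed adversarial model), so one must first argue that an overall $\eps$-secure protocol meets each of these requirements with parameter at most $\eps$ (or a small constant multiple thereof), and then track how these propagate through the two bounds without the constants exceeding $6$ and $3$. The hypothesis $\eps\le 0.002$ is exactly what makes this propagation go through while keeping both bounds non-vacuous, so I would present the proof simply as ``combine Theorems~\ref{thm:imposs1} and~\ref{thm:imposs:com} and simplify the error terms under $\eps\le 0.002$,'' with the elementary one-variable estimates deferred to a short computation.
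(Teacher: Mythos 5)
Your approach is exactly the paper's: the first bound is a one-line rearrangement of Theorem~\ref{thm:imposs1} (from $\eps\ge 2^{-\kappa}/36$ one gets $\kappa\ge\log(1/\eps)-\log 36\ge\log(1/\eps)-6$), and the second is the verbatim conclusion of Theorem~\ref{thm:imposs:com}. Note that Theorem~\ref{thm:imposs:com} already outputs $(1-3\sqrt{\eps})k-3h(\sqrt{\eps})$, so the $\alpha,\beta$ simplification you anticipate is not needed.
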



\section{Preliminaries}
We denote the distribution of a random variable $X$ by $P_X(x)$. Given the distribution $P_{XY}$ over $\mX \times \mY$, the marginal distribution is denoted by $P_{X}(x) := \sum_{y \in \mY} P_{XY}(x,y)$. 
For every $y \in \mY$ with $P_Y(y)>0$, the conditional distribution $P_{X \mid Y}(x,y):=P_{XY}(x,y)/P_Y(y)$ over $\mX \times \mY$ defines a distribution $P_{X \mid Y=y}$ with $P_{X \mid Y=y}(x)=P_{X|Y}(x,y)$ over $\mX$. We say that $X$, $Y$ and $Z$ form a Markov chain, denoted by $\Markov{X}{Y}{Z}$, if $X$
and $Z$ are independent given $Y$, which means that $P_{X|Y =y} = P_{X|Y =y;Z=z}$ 
for all $y,z$. Given an event $\Omega$ and random variables $X$ and $Y$ with a joint distribution $P_{\Omega XY}$, we use the notation $P_{X\Omega|Y=y}$ for the sub-normalized distribution with $P_{X\Omega|Y=y}(x):=P_{X|Y=y}(x)P_{\Omega|X=x,Y=y}(1)\;.$ We will also use the shorthand notation $P_{\Omega|X=x}$ to denote the probability $P_{\Omega|X=x}(1)$. We use the convention that $P_{X\Omega|Y=y}(x)=0$ if $P_Y(y)=0$.

The \emph{statistical distance} between the distributions $P_X$ and $P_{X'}$ over the domain $\cal{X}$ is defined as the maximum, over all (inefficient) distinguishers $\delta:\mX \rightarrow \{0,1\}$, of the distinguishing advantage:
\[\dis(P_X,P_{X'}) :=\max_{\delta}\mid \Pr[\delta(X) =1] - \Pr[\delta(X') = 1]\mid.\]
If $\dis(P_X,P_{X'}) \leq \eps$, we may also say that $P_X$ is $\eps$-close to $P_{X'}$. The support of a distribution $P_X$ over $\mX$ is defined as $\supp{P_X}:=\{x \in \mX:~P_X(x)>0\}$. If $x=(x_1,\dots,x_n)$ and  $T:=\{i_{1},\ldots,i_{k}\}\subseteq \{1,2,\ldots,n\}$, then $x_T$ denotes the sub-string $(x_{i_{1}},x_{i_{2}}, \ldots,x_{i_{k}})$ of $x$. If $x,y \in \bits n$, then $x \oplus y$ denotes the bitwise XOR of $x$ and $y$. 

\subsection{Information Theory}
The \emph{conditional Shannon entropy} of $X$ given $Y$ is defined as\footnote{All logarithms 
are binary.}
\[ \cHS{X}{Y} := - \sum_{(x,y) \in \supp{P_{XY}}} P_{XY}(x,y) \log P_{X \mid Y}(x , y) \;.\]
The \emph{mutual information} of $X$ and $Y$ given $Z$ is defined as
\[ \CIshannon{X}{Y}{Z} := \cHS{X}{Z} - \cHS{X}{YZ}\;.\]
Note that $\Markov{X}{Y}{Z}$ if and only if $\CIshannon{X}{Z}{Y} = 0$, i.e., $X$ and $Z$
are conditionally independent given $Y$.
We use the notation
\[ h(p) := -p \log p - (1-p) \log (1-p)\;\]
for the binary entropy function, i.e., $h(p)$ is the Shannon entropy of a binary random variable that takes on one value with probability $p$ and the other with $1-p$. Note that the function $h(p)$ is \emph{concave}, which implies that for any $0 \leq p \leq 1$ and $0 \leq c \leq 1$, we have
\begin{align} \label{eq:h}
 h(c \cdot p) \geq c \cdot h(p)\;. 
\end{align}
We will need the chain-rule
\begin{align} \label{eq:chain}
 \cHS{X Y }{Z} = \cHS{X}{Z}  + \cHS{Y}{X Z}\;,
\end{align}
and the following monotonicity inequalities
\begin{align} \label{eq:mono}
\cHS{X Y}{Z} & \geq \cHS{X}{Z}  \geq \cHS{X}{Y Z}\;, \\
\label{eq:mono-I}\CIshannon{WX}{Y}{Z} & \geq \CIshannon{X}{Y}{Z}\;,\\
\CIshannon{WX}{Y}{Z} & \geq \CIshannon{X}{Y}{ZW}\;.
\end{align}
We will also need
\begin{align} \label{eq:h-average}
  \cHS{X}{Y Z} = \sum_z P_Z(z) \cdot \cHS{X}{Y, Z=z}\;.
\end{align}
$\Markov{X}{Y}{Z}$ implies that
\begin{align} \label{eq:H-markov}
 \cHS{X}{Z} \geq \cHS{X}{YZ} = \cHS{X}{Y}\;.
\end{align}
It is easy to show that if $\Markov{W}{XZ}{Y}$, then
\begin{align} \label{eq:I-markov}
\CIshannon{X}{Y}{Z W} &\leq \CIshannon{X}{Y}{Z}\;\text{and} \\
\label{eq:I-markov2}   \CIshannon{W}{Y}{Z} &\leq \CIshannon{X}{Y}{Z}\;.
\end{align}
We will need the following lemma that we prove in Appendix~\ref{app:lemmas}.
\begin{lemma}
\label{lem:h-smooth}
Let $(X,Y)$, and $(\hat X,\hat Y)$ be random variables distributed according to $P_{XY}$ and $P_{\hat X \hat Y}$, and let $\dis(P_{XY},P_{\hat X \hat Y})\leq \epsilon$. Then 
\begin{align*}
\cHS{\hat X}{\hat Y}&\geq\cHS{X}{Y}-\epsilon \log |\mathcal{X}|-h(\epsilon)\;.
\end{align*}
\end{lemma}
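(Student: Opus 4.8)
The plan is to bound the difference $\cHS{X}{Y}-\cHS{\hat X}{\hat Y}$ using a coupling/continuity argument for conditional entropy, following the standard route through Fano-type or Audenaert-style continuity bounds but adapted to the one-sided nature of the claim. First I would reduce to the case where $P_{XY}$ and $P_{\hat X\hat Y}$ live on a common domain $\mX\times\mY$ (extending the supports if necessary, which only adds probability-zero atoms and changes nothing). The key quantity to control is $\dis(P_{XY},P_{\hat X\hat Y})\le\eps$, which I will use in two ways: once to compare the joint entropies and once to compare the marginal entropies on $\mY$.

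The cleanest approach is to invoke a continuity bound for the (unconditional) Shannon entropy: if two distributions $P$ and $Q$ on a finite alphabet $\mZ$ satisfy $\dis(P,Q)\le\eps$, then $|\Hop(P)-\Hop(Q)|\le\eps\log|\mZ|+\hop(\eps)$ (this is the classical Fannes--Audenaert inequality in its sharp-ish form; the slightly weaker version with exactly this right-hand side is elementary). Applying this with $\mZ=\mX\times\mY$ gives $\Hop(\hat X\hat Y)\ge\Hop(XY)-\eps\log(|\mX||\mY|)-\hop(\eps)$, and applying it with $\mZ=\mY$ gives $\Hop(\hat Y)\le\Hop(Y)+\eps\log|\mY|+\hop(\eps)$. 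Then, using $\cHS{X}{Y}=\Hop(XY)-\Hop(Y)$ and the analogous identity for the hatted variables, subtract: the two $\eps\log|\mY|$ terms cancel, leaving $\cHS{\hat X}{\hat Y}\ge\cHS{X}{Y}-\eps\log|\mX|-2\hop(\eps)$. This is slightly weaker than the claimed bound (a factor $2$ on the $\hop(\eps)$ term), so I would instead be more careful: rather than bounding the $\mY$-marginal entropies separately, treat the pair directly via a single coupling.

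The refined argument: let $\Omega$ be the event that a maximal coupling of $(X,Y)$ and $(\hat X,\hat Y)$ agrees, so $\Pr[\text{disagree}]\le\eps$; introduce an indicator $E$ for disagreement. Condition the entropy $\cHS{\hat X}{\hat Y}$ on $E$: on $\{E=0\}$ the conditional distribution matches that of $(X,Y)$ conditioned on the coupling event, and the chain rule together with $\Hop(E\mid\hat Y)\le\hop(\eps)$ and $\Hop(\hat X\mid\hat Y E)\ge\Pr[E=0]\cdot\cHS{X}{Y,E=0}$ lets me lower-bound $\cHS{\hat X}{\hat Y}$. Combining with the fact that conditioning $(X,Y)$ on an $\eps$-probable event changes $\cHS{X}{Y}$ by at most $\eps\log|\mX|+\hop(\eps)$ (again a short computation with the chain rule and \eqref{eq:h-average}) yields exactly $\cHS{\hat X}{\hat Y}\ge\cHS{X}{Y}-\eps\log|\mX|-\hop(\eps)$.

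The main obstacle I anticipate is getting the constants exactly right — specifically, avoiding the naive doubling of the $\hop(\eps)$ term. The two natural "cheap" routes (bounding joint and marginal entropies separately, or using Fannes directly on conditional entropies) each lose a factor of roughly $2$ somewhere. The careful coupling argument with a single disagreement indicator $E$, handled by one application of the chain rule so that the $\hop(\eps)$ penalty is incurred only once, is what makes the stated bound tight; the bookkeeping there (ensuring the $\log|\mX|$ coefficient, not $\log(|\mX||\mY|)$, appears, since only $X$ — not $Y$ — is being "paid for") is the delicate part and the place I would be most careful in the full write-up.
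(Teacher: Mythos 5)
Your refined coupling argument is the same route as the paper's proof: decompose via a binary agreement/disagreement indicator, use monotonicity of conditional entropy on the hatted side, and the chain rule on the $(X,Y)$ side. However, the bookkeeping you sketch risks exactly the doubling you say you want to avoid. You list $\Hop(E\mid\hat Y)\le\hop(\eps)$ as an ingredient on the hatted side \emph{and} a penalty of $\eps\log|\mX|+\hop(\eps)$ from conditioning $(X,Y)$ on the coupling event; charged literally, those add up to $2\hop(\eps)+\eps\log|\mX|$, not the claimed $\hop(\eps)+\eps\log|\mX|$.

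The resolution, which is what the paper does, is that the hatted side is free: no $\hop(\eps)$ is charged there. One only needs
\[
\cHS{\hat X}{\hat Y}\;\geq\;\cHS{\hat X}{\hat Y E}\;\geq\;(1-\eps)\,\cHS{\hat X}{\hat Y, E=0}\;=\;(1-\eps)\,\cHS{X}{Y, E=0}\,,
\]
which uses only monotonicity of conditional entropy and dropping a nonnegative term; the bound $\Hop(E\mid\hat Y)\le\hop(\eps)$ is never needed. The single $\hop(\eps)$ penalty then appears when relating $(1-\eps)\cHS{X}{Y,E=0}$ back to $\cHS{X}{Y}$ by the chain rule $\cHS{X}{Y}\leq\cHS{XE}{Y}=\cHS{E}{Y}+\cHS{X}{YE}$, together with $\cHS{E}{Y}\leq\hop(\eps)$ and $\cHS{X}{Y,E=1}\leq\log|\mX|$. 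With that one repair your argument coincides with the paper's, which phrases the decomposition via two separate indicators $A$ (for $P_{XY}$) and $B$ (for $P_{\hat X\hat Y}$) with matching conditional laws rather than a single coupled $E$ — an immaterial difference, since a maximal coupling supplies such a pair.
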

Lemma \ref{lem:h-smooth} implies Fano's inequality: For all $X, \hat X \in \mX$ with $\Pr[X \neq \hat X] \leq \eps$, we have
\begin{align} \label{eq:fano}
  \cHS{X}{\hat X} \leq \eps \cdot \log |\mX| + h(\eps)\;.
\end{align}

\subsection{Smooth Entropies}
The \emph{min-entropy} $\hmin{X}{}$ is the negative logarithm of the probability of the most likely element
\[\hmin{X}{}:=-\log\max_x P_X(x)\;.\]
The \emph{max-entropy} is defined as the logarithm of the size of the support of $P_X$
\[\hmax{X}{}:=\log|\supp{P_X}|\;.\]
There is no standard definition of conditional min- or max-entropy. A natural definition of the min-entropy\footnote{This definition has been introduced in \cite{DodisRS04} in the context of cryptography. Furthermore, it corresponds to the definition of quantum conditional min-entropy \cite{Renner2005} for the special case of classical states.}  is the following 
\begin{align*}
 \Chminteps{}{X}{Y}:&=-\log\sum_y P_Y(y)\max_{x} P_{X|Y=y}(x)\\
&=-\log\sum_y \max_{x} P_{XY}(x,y)\;.
\end{align*}
Then $2^{-\Chminteps{}{X}{Y}}$ corresponds to the maximal probability to guess $X$ from $Y$. 
In contrast to Shannon entropy, min- and max-entropies are not robust to small changes in the distribution. Therefore, one often considers smoothed versions of these measures, where the entropy is optimized over a set of distributions that are close in terms of some distance measure. While the concept of smoothed entropies has already been used in the literature on randomness extraction \cite{NZ96}, the term  \emph{smooth entropy} has been introduced in \cite{RenWol05}. There it has been shown that the smoothed conditional min- and max-entropy\footnote{The variant of conditional min-entropy used there is different from the one we consider in this article.} have similar properties as the Shannon entropy, i.e., they satisfy a chain rule, monotonicity and subadditivity. 

\begin{definition}
For random variables $X,Y$ and $\eps \in [0,1)$, we define
\begin{align*}
 \Chmaxeps{\eps}{X}{Y}&:=\min_{\Omega:\Pr[\Omega]\geq 1-\eps}\max_{y}\log|\supp{P_{X\Omega|Y=y}}|\\
 \Chminteps{\eps}{X}{Y}&:=\max_{\Omega:\Pr[\Omega]\geq 1-\eps}-\log\sum_{y}P_{Y}(y) \max_{x} P_{X\Omega|Y=y}(x).
\end{align*}
\end{definition}

\cancel{
Note that the two notions of smooth conditional min-entropy are equivalent up to an additive term $\log(1/\eps)$ when smoothed, i.e., $ \Chminteps{\eps+\eps'}{X}{Y}\geq\Chminteps{\eps}{X}{Y}-\log(1/\eps')$, which follows from Markov's inequality. 
}
In Appendix~\ref{app:smooth} we prove various properties of the entropies $\Chminteps{\eps}{X}{Y}$ and $\Chmaxeps{\eps}{X}{Y}$.
\cancel{
The min-entropy characterizes the amount of uniform randomness that can be extracted from a random variable $X$ using two-universal hashing.
\begin{lemma}[Leftover hash lemma \cite{BeBrRo88,ILL89}]\label{lem:leftover}
\label{Leftover}
Let $f: \mathcal{X} \times \mathcal{S}\rightarrow \{0,1\}^m$ be a two-universal hash function with $m>0$. Let $X$ be a random variable over $\mathcal{X}$ and let $\epsilon > 0$. Then 
\[\dis(f(S,X),S),(U,S)) \leq \frac 1 2 \sqrt{2^{m-\Hmin(X)}}\;.\] 
for $S$ and $U$ independent and uniform over $\mathcal{S}$ and $\{0,1\}^m$.
\end{lemma}


We have for any probability distribution $P_{XY}$ and any $\epsilon>0$,
\begin{align*}
 \hminteps{\eps}{X}\geq \hminteps{\eps}{XY}-\hmax{Y}\;.
\end{align*}
}

\cancel{
\subsection{Interactive Hashing}
There exists a protocol called interactive hashing (IH) between two players, Alice and Bob, where Alice has no input, Bob has input $\tilde W \in \{0, 1\}^n$ and both players output $(W_0,W_1)\in \{0, 1\}^n \times \{0, 1\}^n$, satisfying the following:
\begin{itemize}
  \item Correctness: If both players are honest, then $W_0 \neq W_1$ and there exists a $c \in \{0,1\}$ such that $W_c=\tilde W$. Furthermore, the distribution of $W_{1-c}$ is uniform on $\{0,1\}^n$.
  \item Security for Bob: If Bob is honest, then $W_0 \neq W_1$ and there exists a $c \in \{0,1\}$ such that $W_c=\tilde W$. If Bob chooses $\tilde W$ uniformly at random, then $c$ is uniform and independent of Alice's view.
  \item Security for Alice: If Alice is honest, then for every subset $\mS\subset \{0,1\}^n$
   \[\Pr[W_0 \in \mS\text{ and }W_1 \in \mS]\leq 16\cdot \frac{\mS}{2^n}\]
\end{itemize}
}

\cancel{
\subsection{Min-Entropy Sampling}
In the security proof for the reduction of String-OT from Universal-OT using the protocol from \cite{CreSav06}, we make use of known result about min-entropy-sampling. First, we introduce averaging samplers.
\begin{definition}
 An $(n,\xi,\eps)$-sampler is a probability distribution $P_S$ over subsets $S\subset [n]$ with the property that 
\begin{align*}
 \Pr\left[\frac{1}{|\mS|}\sum_{i \in \mS} \beta_i \leq \frac{1}{n}\sum_{i=1}^n\beta_i-\xi\right]\leq \eps,\text{ for all }(\beta_1,\dots,\beta_n)\in [0,1]^n.
\end{align*}
\end{definition}

The following lemma is a consequence of the Hoeffding-Azuma inequality.
\begin{lemma}[\cite{BabHay05}]\label{lem:sub-sampler}
Let $r<n$ and let $P_\mS$ be the uniform distribution over subsets $\mS \subset [n]$ of size $|S|=r$. This defines a $(n,\xi,e^{-r\xi^2/2})$ sampler for every $r>0$ and $\xi \in [0,1]$.  
\end{lemma}

The following lemma from \cite{Vad03}, which refines a result from \cite{NZ96}, shows that randomly sampling bits from a weak random source preserves the min-entropy rate up to an arbitrarily small additive loss.

\begin{lemma}[\cite{Vad03}]\label{lem:min-ent-sampling}
Let $P_\mS$ be an $(n,\theta,\gamma)$-averaging sampler and $\theta=\tau/\log(1/\tau)$. Then there exists a constant $c>0$ such that for every $\delta$ 
$n$-source $X$ on $\{0,1\}^n$ the random variable $(U_r,X_{\mS})$ is $(\gamma+2^{-c\tau n})$-close to $(A,B)$ where $A$ is uniform $\{0,1\}^r$ 
and for every $a\in \{0,1\}^r$ the random variable $B|_{A=a}$ is a $(\delta-3\tau)t$-source. 
\end{lemma}

\subsection{Universal OT}\severin{muss man noch etwas genauer ausfuehren und sollte vielleicht in die naechste Section.}
A $\alpha n$-Universal OT is the following two-party primitive: Alice has input $x=(x_0,x_1)\in \{0,1\}^{2n}$. If Bob is honest, he has input $c \in \{0,1\}^n$ and gets output  $y=(x_{c_1},\ldots,x_{c_n})$. A malicious Bob can choose a channel $P_{Y|X}$ such that $\chmin{}{X}{Y}\geq \alpha n$ for uniformly random input $X$ and gets the output of the channel $P_{Y|X}$ on input $X=(X_0,X_1)$.
}

\subsection{Primitives and Randomized Primitives}
In the following we consider two-party primitives that take inputs $x$ from Alice and $y$ from Bob and output $\bar x$ to Alice and $\bar y$ to Bob, where $(\bar x, \bar y)$ are distributed according to $P_{\bar X \bar Y \mid XY}$. For simplicity, we identify such a primitive with $P_{\bar X \bar Y \mid XY}$. If the primitive has no input and outputs values $(u,v)$ distributed according to $P_{U V}$, we may simply write $P_{U V}$.
If the primitive is deterministic and only Bob gets an output, i.e., if there exists a function $f:\mX \times \mY \rightarrow \mZ$ such that  $P_{\bar X \bar Y \mid X=x,Y=y}(\perp,f(x,y))=1$ for all $x,y$, then we identify the primitive with the function~$f$.
 
Examples of such primitives are $\OT{t}{n}{k}$, $\RabinOT{p}{k}$, $\EQ{n}$ and $\IP{n}$:

\begin{itemize}
 \item 
 $\OT{t}{n}{k}$ is the primitive where Alice has an input $x = (x_0, \dots, x_{n-1}) \in \{0,1\}^{k \cdot n}$, and Bob has an input $c \subseteq \{0, \dots, n-1\}$ with $|c|=t$.  Bob receives $y = x|_c \in \{0,1\}^{tk}$.
\item
 $\RabinOT{p}{k}$ is the primitive where Alice has an input $x \in \{0,1\}^{k}$. Bob receives $y$ which is equal to $x$ with probability $p$ and $\Delta$ otherwise.
\item
The \emph{equality} function $\EQ{n}:\{0,1\}^n\times \{0,1\}^n\rightarrow \{0,1\}$ is defined as
\begin{align*}
\EQ{n}(x,y):=\begin{cases}
  1, & \text{if } x=y\;,\\
  0, & \text{otherwise }\;.
\end{cases}
\end{align*}
\item The \emph{inner-product-modulo-two} function $\IP{n}:\{0,1\}^n\times \{0,1\}^n \rightarrow \{0,1\}^n$ is defined as \linebreak$\IP{n}(x,y) := \oplus_{i=1}^n x_iy_i$.
\end{itemize}

We often allow a protocol to use a primitive $P_{UV}$ that does not have any input and outputs $u$ and $v$ distributed according to the distribution $P_{UV}$ to the players. This is enough to model reductions to $\OT{t}{n}{k}$ and $\RabinOT{p}{k}$, since these primitives are equivalent to distributed randomness $P_{UV}$, i.e., there exist two protocols that are secure in the semi-honest model: one that generates the distributed randomness using \emph{one} instance of the primitive, and one that implements \emph{one} instance of the primitive using the distributed randomness as input to the two parties. The fact that $\OT{1}{2}{1}$ is equivalent to distributed randomness has been presented in \cite{BBCS92,Beaver95}. The generalization to $\OT{t}{n}{k}$ is straightforward. The randomized primitives are obtained by simply choosing all inputs uniformly at random. For $\RabinOT{p}{k}$, the implementation is straightforward. Hence, any protocol that uses some instances of $\OT{t}{n}{k}$ or $\RabinOT{p}{k}$
can be converted into a protocol that only uses a primitive $P_{UV}$ without any input. 

\subsection{Protocols and Security in the Semi-Honest Model} \label{sec:prot}
We will consider the following model: The two parties use a primitive $P_{UV}$ that has no input and outputs values $(u,v)$ distributed according to $P_{U V}$ to the players. Alice and Bob receive inputs $x$ and $y$. Then, the players exchange messages in several rounds, where we assume that Alice sends the first message. If $i$ is odd, then Alice computes the $i$-th message as a randomized function of all previous messages, her input $x$ and $u$. If $i$ is even, then Bob computes the $i$-th message as a randomized function of all previous messages, his input $y$ and $v$. We assume that the number of rounds is bounded by a constant $t$. By padding the protocol with empty rounds, we can thus assume without loss of generality that the protocol uses $t$ rounds in every execution. After $t$ rounds, 
Bob computes his output $\tilde z$ as a randomized function of $(M,V,y)$, where  $M=(M_1,\ldots,M_t)$ is the sequence of all messages exchanged. Let $P_{XYUV}=P_{X}P_YP_{UV}$, i.e., each party chooses her input independently of $UV$ and the other party's input. Thus, it holds that 
\begin{align}
\cIS{X}{Y}{V}=0\;,\label{eq:markov_first}\\
\cIS{X}{YV}{U}=0\;.\label{eq:markov_second}
\end{align}
Since Bob generates his output $\tilde Z$ from $YVM$, it holds that
\begin{align}\label{equ:markov_prot}
\cIS{\tilde Z}{XU}{YVM} = 0 \;.
\end{align}
Let $M_0:=\bot$ and $M^i := (M_0, \dots, M_i)$. From the definition of the protocol and the distribution of the inputs, we can conclude that
\begin{align}
\cIS{M_{2i+1}}{Y}{M^{2i}XV}&=0\;,\label{ineq:alice_round}\\
\cIS{M_{2i+1}}{YV}{M^{2i}XU}&=0\label{ineq:alice_round2}
\end{align}
for all $i\in\{0,\cdots,(t-1)/2\}$, and 
\begin{align}
\cIS{M_{2i}}{XU}{M^{2i-1}YV}=0\label{ineq:bob_round_g}
\end{align}
for all $i\in\{1,\cdots,(t-1)/2\}$. Applying the monotonicity inequalities for conditional mutual information to equality~\eqref{ineq:bob_round_g} yields the two equalities
\begin{align}
\cIS{M_{2i}}{X}{M^{2i-1}YV}&=0\;,\label{ineq:bob_round}\\
\cIS{M_{2i}}{X}{M^{2i-1}YUV}&=0\;.\label{ineq:bob_round2}
\end{align}
By repeatedly applying inequality~\eqref{eq:I-markov} and the Markov chain relations \eqref{ineq:alice_round} and \eqref{ineq:bob_round} we obtain that
\begin{align}
\cIS{Y}{X}{VM}\leq \cIS{Y}{X}{V}\nonumber\;.
\end{align}
Equality~\eqref{equ:markov_prot} implies $\cIS{\tilde Z}{X}{VMY}=0$. Thus, we can conclude, together with equality~\eqref{eq:markov_first}, that
\begin{align}\label{ineq:markov_first}
\cIS{\tilde Z Y}{X}{VM}\leq \cIS{Y}{X}{VM}\leq \cIS{Y}{X}{V}=0\;.
\end{align}
Similarly, the Markov chain relations~\eqref{ineq:alice_round2} and \eqref{ineq:bob_round2} imply, together with equality~\eqref{eq:markov_second}, that
\begin{align}\label{ineq:markov_second}
\cIS{\tilde ZYV}{X}{UM}\leq\cIS{YV}{X}{UM}\leq \cIS{YV}{X}{U}=0\;.
\end{align}

\cancel{
It is easy to check that inequalities \eqref{eq:I-markov} and \eqref{eq:I-markov2} imply that, for every distribution of the inputs $X$ and $Y$, 
we have $\cIS{\tilde Z}{XU}{YVM}=0$, $\cIS{\tilde ZY}{X}{VM}=0$ and $\cIS{\tilde ZYV}{X}{UM}=0$.}

We will consider the \emph{semi-honest model}, where both players behave honestly, but may save all the information they get during the protocol to obtain extra information about the other player's input or output. A protocol securely implements $f:\mX \times \mY \rightarrow \mZ$ with an error $\eps$, if the entire view of each player can be simulated with an error of at most $\eps$ in an ideal setting, where the players only have black-box access to the primitive $f:\mX \times \mY \rightarrow \mZ$. Note that this simulation is allowed to change neither the input nor the output. This definition of security follows Definition 7.2.1 from \cite{Goldreich04}, but is adapted to the case of computationally unbounded adversaries and statistical indistinguishability. 

\begin{definition}\label{definition:sec}
Let $\Pi$ be a \emph{protocol with black-box access to a primitive $P_{U V}$} that implements a primitive $f:\mX \times \mY \rightarrow \mZ$. The random variables $View_A^{\Pi}(x,y)$ and $View_B^{\Pi}(x,y)$ denote the views of Alice and Bob on input $(x,y)$ defined as $(x,u,m_1,\dots,m_t,r_A)$ and $(y,v,m_1,\dots,m_t,r_B)$, respectively, where $r_A$  ($r_B$) is the private randomness of Alice (Bob), $m_i$ represents the $i$-th message and $u,v$ is the output from $P_{UV}$. $Out_B^{\Pi}(x,y)$ denotes the output (which is implicit in the view) of Bob on input $(x,y)$. The protocol is secure in the semi-honest model with an error of at most $\eps$, if there exist two randomized functions $S_A$ and $S_B$, called the simulators\footnote{We do not require the simulator to be efficient.}, such that for all $x$ and~$y$:
\begin{align*}
\dis((View_A^{\Pi}(x,y),Out_B^{\Pi}(x,y)),(S_A(x),z))&\leq \eps\;,\\
\dis((View_B^{\Pi}(x,y),Out_B^{\Pi}(x,y)),(S_B(y,z),z))&\leq \eps\;,
\end{align*}
where $z=f(x,y)$.
\end{definition}

Note that security in the semi-honest model does not directly imply security in the malicious model, as the simulator is allowed to change the input/output in the malicious model, while he is not allowed to do so in the semi-honest model. We will, therefore, also consider security in the \emph{weak semi-honest model}, which is implied both by security in the semi-honest model and by security in the malicious model. Here, the simulator is allowed to change the input to the ideal primitive and change the output from the ideal primitive. Thus, in order to show impossibility of certain protocols in the malicious and in the semi-honest model, it is sufficient to show impossibility in the weak semi-honest model.
\cancel{We also consider the \emph{weak semi-honest} model, where the players also follow the protocol honestly, but the simulator is in contrast to the semi-honest model allowed to change the input to the ideal primitive and change the output from the ideal primitive. Note that both security in the semi-honest model and security in the malicious model imply security in the weak semi-honest model. Thus, in order to show impossibility of certain protocols in the malicious and in the semi-honest model, it is sufficient to show impossibility in the weak semi-honest model.}

\subsection{Sufficient Statistics}
Intuitively speaking, the sufficient statistics of $X$ with respect to $Y$, denoted $\K{X}{Y}$, is the part of $X$ that is correlated with $Y$.

\begin{definition}
Let $X$ and $Y$ be random variables, and let $f(x)=P_{Y|X=x}$. The sufficient statistics of $X$ with respect to $Y$ is defined as $ \K{X}{Y}=f(X)$.
\end{definition}
It is easy to show (see for example \cite{FiWoWu04}) that for any $P_{XY}$, we have
$\Markov{X}{\K{X}{Y}}{Y}$.
This immediately implies that any protocol with access to a primitive $P_{UV}$ can be transformed into a protocol with access to $P_{\K{U}{V},\K{V}{U}}$ (without compromising the security) because the players can compute  $P_{UV}$ from $P_{\K{U}{V},\K{V}{U}}$ privately. Thus, in the following we only consider primitives $P_{UV}$ where $U=\K{U}{V}$ and $V=\K{V}{U}$.

\subsection{Common Part}

The common part was first introduced in \cite{GacKoe73}. In
a cryptographic context, it was used in \cite{WolWul04}. Roughly speaking, the common part $\C{X}{Y}$ of $X$ and $Y$ is the maximal
element of the set of all random variables (i.e., the {\em finest\/} random variable) 
that can be generated both
from $X$ and from $Y$ without any error. For example, if $X=(X_0,X_1) \in \{0,1\}^2$ and $Y=(Y_0,Y_1) \in \{0,1\}^2$,
and we have $X_0 = Y_0$ and $\Pr[X_1 \neq Y_1] = \eps > 0$, then the common part of $X$ and $Y$ is
equivalent to $X_0$. 

\begin{definition}
  Let $X$ and $Y$ be random variables with distribution $P_{XY}$. Let $\mX:=\Supp(P_X)$ and $\mY:=\Supp(P_Y)$. Then $\C{X}{Y}$, the \emph{common part} of $X$ and $Y$, is constructed in the following way:

  \begin{itemize}

  \item Consider the bipartite graph $G$ with vertex set $\mX\cup\mY$,
  and where two vertices $x\in\mX$ and $y\in\mY$ are connected by an
  edge if $P_{XY}(x,y) > 0$ holds.

  \item Let $f_X\, :\, \mX \rightarrow 2^{\mX\cup\mY}$ be the function
  that maps a vertex $v \in \mX$ of $G$ to the set of vertices in the
  connected component of $G$ containing $v$. Let $f_Y\, :\,
  \mY\rightarrow 2^{\mX\cup\mY}$ be the function that does the same
  for a vertex $w \in \mY$ of $G$.

  \item $\C{X}{Y} :\equiv f_X(X) \equiv f_Y(Y)$\;.

\end{itemize}
\end{definition}

\section{Impossibility Results for Classical Secure Function Evaluation}\label{sec:sfe-lower-bounds}

Let a protocol be an $\eps$-secure implementation of a primitive $f:\mX \times \mY \rightarrow \mZ$ in the semi-honest model. Let $P_{XY}$ be the input distribution 
and let $M$ be the whole communication during the execution of the protocol. Then the security of the protocol implies the following lemma that we will use in our proofs.
\begin{lemma}
\label{lem:entropyLowerBound}
\[\cHS{X}{VM} \geq \cHS{X}{Yf(X,Y)}-\eps \log |\mathcal{X}|- h(\eps)\;.\]
\end{lemma}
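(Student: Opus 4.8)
The plan is to exploit the security definition for Bob together with the basic information-theoretic inequalities collected in the preliminaries. Write $z = f(x,y)$ and let $\tilde Z = Out_B^\Pi(X,Y)$ be Bob's actual output. The starting point is the observation, recorded just after the protocol description, that for every input distribution $P_{XY}$ we have the Markov-type relations $\cIS{\tilde Z Y V}{X}{U M}=0$ and in particular $\cIS{\tilde Z Y}{X}{V M}=0$, so that $X$ is independent of $(\tilde Z, Y)$ given $(V,M)$. From this I would first deduce, via \eqref{eq:H-markov} (or directly \eqref{eq:mono}), that
\begin{align*}
 \cHS{X}{VM} = \cHS{X}{VM \tilde Z Y} \leq \cHS{X}{\tilde Z Y}\;,
\end{align*}
wait --- that inequality goes the wrong way; instead the Markov relation $\Markov{X}{VM}{Y\tilde Z}$ gives $\cHS{X}{VM} = \cHS{X}{VM\,Y\tilde Z} \le \cHS{X}{Y\tilde Z}$, which is still the wrong direction. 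So the correct route is the opposite one: use that conditioning on \emph{more} variables decreases entropy only in the direction we do not want, hence we must bring in $\tilde Z$ as something Bob can compute. Concretely, $\tilde Z$ is a function of $(M,V,Y)$, so $\cHS{X}{VMY} = \cHS{X}{VMY\tilde Z}$, and then by \eqref{eq:mono} we drop $Y$ on the left, obtaining $\cHS{X}{VM} \ge \cHS{X}{VMY}$; this is again the wrong direction for a lower bound. The genuinely useful step is therefore the Markov relation the other way: since $\Markov{X}{(V,M)}{(Y,\tilde Z)}$, equation \eqref{eq:H-markov} yields $\cHS{X}{Y\tilde Z} = \cHS{X}{Y\tilde Z\,VM}$, and combined with $\cHS{X}{Y\tilde Z\,VM} \le \cHS{X}{VM}$ (monotonicity, dropping $Y\tilde Z$) we get exactly
\begin{align*}
 \cHS{X}{VM} \ge \cHS{X}{Y\tilde Z}\;.
\end{align*}

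The remaining task is to replace $\tilde Z$ by the ideal output $z=f(X,Y)$ at the cost of the error terms. This is where the security of the protocol and Lemma~\ref{lem:h-smooth} enter. By the security guarantee for Bob, $\dis\big((Out_B^\Pi(x,y),View_B^\Pi(x,y)),(z,S_B(y,z))\big) \le \eps$ for all $x,y$, and in particular the joint distribution of $(X,Y,\tilde Z)$ is $\eps$-close to that of $(X,Y,f(X,Y))$ (one averages the per-$(x,y)$ bound over $P_{XY}$, or more simply notes that the pair $(\tilde Z \text{ alongside } X,Y)$ is governed by this closeness since $X,Y$ are inputs). Apply Lemma~\ref{lem:h-smooth} with the pair $(X, (Y,\tilde Z))$ playing the role of $(\hat X,\hat Y)$ and $(X,(Y,f(X,Y)))$ playing the role of $(X,Y)$: this gives
\begin{align*}
 \cHS{X}{Y\tilde Z} \ge \cHS{X}{Y f(X,Y)} - \eps\log|\mX| - h(\eps)\;.
\end{align*}
Chaining the two displayed inequalities yields the claim.

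The main obstacle I anticipate is getting the direction of the entropy inequalities right --- as the false starts above illustrate, one must be careful to invoke the Markov relation $\Markov{X}{VM}{Y\tilde Z}$ (which holds because $\tilde Z$ is computed by Bob from $(M,V,Y)$ and the protocol transcript satisfies $\cIS{\tilde Z Y}{X}{VM}=0$) in the form \eqref{eq:H-markov}, namely $\cHS{X}{Y\tilde Z} = \cHS{X}{Y\tilde Z VM} \le \cHS{X}{VM}$, rather than trying to push entropy the other way. A secondary point to check carefully is that the closeness of views really does give closeness of the joint distribution $P_{XY\tilde Z}$ to $P_{XY f(X,Y)}$: since the simulator $S_B$ takes $(y,z)$ as input and the security statement bounds $\dis\big((z,S_B(y,z)),(Out_B^\Pi(x,y),View_B^\Pi(x,y))\big)$ pointwise in $(x,y)$, one needs that on the ideal side the output is exactly $z=f(x,y)$ with the correct input marginals, which it is by construction; averaging over $P_{XY}$ then preserves the $\eps$ bound on statistical distance. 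Everything else is a direct application of Lemma~\ref{lem:h-smooth} and monotonicity.
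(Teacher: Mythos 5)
Your reduction to $\cHS{X}{Y\tilde Z}$ is where the argument breaks. The observation $\cIS{\tilde Z Y}{X}{VM}=0$ gives the Markov chain $\Markov{X}{VM}{Y\tilde Z}$, and \eqref{eq:H-markov} applied to this chain yields $\cHS{X}{VM}=\cHS{X}{VM\,Y\tilde Z}\le\cHS{X}{Y\tilde Z}$, i.e.\ the \emph{opposite} of what you need. Your step ``$\cHS{X}{Y\tilde Z}=\cHS{X}{Y\tilde Z\,VM}$'' would require the different chain $\Markov{X}{Y\tilde Z}{VM}$, i.e.\ $\cIS{X}{VM}{Y\tilde Z}=0$, which is not implied by the protocol structure and is false in general: the view $(V,M)$ typically leaks \emph{more} about $X$ than Bob's output $(Y,\tilde Z)$ does, so $\cHS{X}{VM}$ should be smaller, not larger. (This matches intuition — without any security assumption there is no reason the real view should be no more informative than the output.) The lower bound on $\cHS{X}{VM}$ must therefore come from security for Bob, not from the real-world Markov structure alone.

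The paper does exactly this: it invokes the simulator $S_B$ once, observing that $\dis(P_{XYVM},P_{XYS_B(Y,f(X,Y))})\le\eps$, applies Lemma~\ref{lem:h-smooth} to the pair $(X,VM)$ versus $(X,S_B(Y,f(X,Y)))$ to get $\cHS{X}{VM}\ge\cHS{X}{S_B(Y,f(X,Y))}-\eps\log|\mX|-h(\eps)$, and then uses \eqref{eq:H-markov} for the \emph{ideal-world} Markov chain $\Markov{X}{Yf(X,Y)}{S_B(Y,f(X,Y))}$ (which holds because $S_B$ only sees $(Y,f(X,Y))$ and private randomness) to conclude $\cHS{X}{S_B(Y,f(X,Y))}\ge\cHS{X}{Yf(X,Y)}$. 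Your step~2 (that $P_{XY\tilde Z}$ is $\eps$-close to $P_{XYf(X,Y)}$, and then Lemma~\ref{lem:h-smooth}) is fine, but it cannot be stitched to a valid step~1: with the correct direction $\cHS{X}{VM}\le\cHS{X}{Y\tilde Z}$, the chain of inequalities no longer connects to give the lemma. The fix is to route the smoothing step through the simulator's output rather than through the protocol's output $\tilde Z$, as in the paper.
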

\begin{proof}
The security of the protocol implies that there exists a randomized function $S_B$, the simulator, such that $\dis(P_{X Y S_B(Y,f(X,Y))},P_{X Y VM})\leq \eps$. We can use
Lemma \ref{lem:h-smooth} and \eqref{eq:H-markov} to obtain
\begin{align*}
\cHS{X}{VM} &\geq \cHS{X}{S_B(Y,f(X,Y))}-\eps \log |\mathcal{X}|- h(\eps)\\
&\geq \cHS{X}{ Y f(X,Y)}-\eps \log |\mathcal{X}|- h(\eps)\;.
\end{align*}

\end{proof}

We will now give lower bounds for information-theoretically secure implementations of functions $f:\mX \times \mY \rightarrow \mZ$ from a primitive $P_{UV}$ in the semi-honest model. Let $f:\mX \times \mY \rightarrow \mZ$ be a function such that
\begin{align} \label{nonTrivialF-1}
\forall x\neq x' \in \mX\;  \exists y \in \mY: \; f(x,y) \neq f(x',y)\;.
\end{align}

This means that it is possible to compute $x$ from the set $\{(f(x,y),y):~y\in\mY\}$ for any $x$. In any secure implementation of $f$, Alice does not learn which $y$ Bob has chosen, but has to make sure that Bob can compute $f(x,y)$ for any $y$. This implies that she cannot hold back any information about $x$. The statement of Lemma \ref{lem:entropyUpperBound-n} formally captures this intuition. 

Unless otherwise specified, we assume that Alice and Bob choose their inputs $X$ and $Y$ uniformly at random  and independent of everything else in the following.
\begin{lemma}\label{lem:entropyUpperBound-n}
For any protocol that is an $\eps$-secure implementation of a function $f:\mX \times \mY \rightarrow \mZ$ that satisfies \eqref{nonTrivialF-1} in the semi-honest model, we have for any~$y \in \mY$
\begin{align*} 
   \cHS{X}{UM,Y=y} \leq  (3|\mY|-2)(\eps\log|\mZ| + h(\eps))\;
\end{align*}
\end{lemma}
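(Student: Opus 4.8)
\textbf{Proof proposal for Lemma~\ref{lem:entropyUpperBound-n}.}
The plan is to bound $\cHS{X}{UM,Y=y}$ by relating it, via the security guarantee, to what Alice learns in the \emph{ideal} setting where she only gets to interact with the black-box $f$. First I would fix $y \in \mY$ and note that by condition \eqref{nonTrivialF-1}, for every pair $x \neq x'$ there is some $y' = y'(x,x')$ separating them; iterating, Alice's input $X$ is a deterministic function of the list $(f(X,y))_{y \in \mY}$, so $\cHS{X}{(f(X,y))_{y\in\mY}} = 0$. The idea is to ``feed'' Bob's inputs to the black box one at a time and argue that after Alice has (in her head, using the simulator) seen the outputs for all $y' \in \mY$, her residual uncertainty about $X$ is small; then use a chain-rule/telescoping argument over the $|\mY|$ values of $y'$ to accumulate the error, paying $\eps\log|\mZ| + h(\eps)$ for each of them.

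The key steps, in order: (i) Invoke security for Alice (Definition~\ref{definition:sec}): there is a simulator $S_A$ with $\dis\bigl((View_A^\Pi(x,y), Out_B^\Pi(x,y)), (S_A(x), z)\bigr) \le \eps$ where $z = f(x,y)$. The point is that $View_A$ contains $(U,M)$, but crucially $S_A(x)$ depends only on Alice's own input $x$, not on $y$ — so the simulated view is (up to $\eps$) \emph{independent of which $y$ Bob used}. (ii) Use this to show that for any two Bob-inputs $y, y'$, the real joint distributions of $(X, U, M)$ conditioned on $Y=y$ versus $Y=y'$ are $O(\eps)$-close in statistical distance, since both are $\eps$-close to the $y$-independent $(X, S_A(X))$. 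Hence $\cHS{X}{UM, Y=y}$ and $\cHS{X}{UM,Y=y'}$ differ by at most $\eps\log|\mX| + h(\eps)$ (Lemma~\ref{lem:h-smooth}) — but since I want the bound in terms of $\log|\mZ|$ not $\log|\mX|$, I would instead not compare across $y$'s directly but build up information about $X$. (iii) The cleaner route: consider the random variable $Z_{y'} := Out_B^\Pi(X, y')$ for each $y' \in \mY$. Correctness (Definition~\ref{definition:sec}, the Bob side) gives $\Pr[Z_{y'} \neq f(X,y')] \le \eps$. By the simulator-for-Alice argument, $(U,M)$ computed in the execution with $Y=y$ together with the \emph{simulated} continuations lets Alice approximate each $Z_{y'}$: formally, since $S_A(X)$ is $\eps$-close to $View_A$ conditioned on any $Y$, and from $S_A(X)$ one can (by running the ideal game) produce a guess for $f(X,y')$ that is $\eps$-close to $Z_{y'}$, we get that there is a function $g_{y'}$ with $\Pr[g_{y'}(U,M) \neq f(X,y')] \le 2\eps$ or so — conditioned on $Y=y$. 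Then Fano \eqref{eq:fano} gives $\cHS{f(X,y')}{UM, Y=y} \le 2\eps\log|\mZ| + h(2\eps)$, and a union/chain-rule bound over all $y' \in \mY$ combined with $\cHS{X}{(f(X,y'))_{y'}} = 0$ yields $\cHS{X}{UM,Y=y} \le \sum_{y'} \cHS{f(X,y')}{UM,Y=y} \le |\mY|\bigl(2\eps\log|\mZ| + h(2\eps)\bigr)$, which after using concavity \eqref{eq:h} to rewrite $h(2\eps) \le 2h(\eps)$-type bounds and tracking constants should give the claimed $(3|\mY|-2)(\eps\log|\mZ| + h(\eps))$.

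The main obstacle I anticipate is step (iii): making precise how Alice, who in the real execution with $Y=y$ sees $(U,M)$, can ``simulate'' the outcome $f(X,y')$ for a \emph{different} $y'$. The subtlety is that the simulator $S_A$ produces a fake view, not a real one, so one cannot literally run the protocol forward from $(U,M)$ with a substituted $y'$ — instead one argues at the level of distributions: $P_{X U M \mid Y=y}$ is $\eps$-close to $P_{X S_A(X)}$, which is $\eps$-close to $P_{XUM \mid Y=y'}$, and in the latter the output $\tilde Z$ (a function of $U,M,\ldots$, but note $\tilde Z$ depends on $V$ and $y'$, not just $U,M$!) is $\eps$-close to $f(X,y')$. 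So the guessing function for $f(X,y')$ must be extracted carefully — possibly one should track $\cHS{X}{UM}$ versus $\cHS{X}{U M V}$ and use the Markov/independence relations $\cIS{\tilde Z Y V}{X}{UM}=0$ noted in Section~\ref{sec:prot}, which say that $X$ is independent of $(V,Y,\tilde Z)$ given $(U,M)$ — this is exactly the lever that lets one ``change $y$'' without changing $X$'s conditional distribution. Getting the constant to come out as $3|\mY|-2$ rather than something larger will require being economical: presumably one telescopes $\cHS{X}{UM,Y=y}$ directly through a sequence of at most $|\mY|-1$ separating inputs (not all of $\mY$) together with a small number of extra $\eps$-perturbation steps, accounting for the $3$ and the $-2$.
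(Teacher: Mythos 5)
Your high-level plan matches the paper's: write $X$ as a function of $(f(X,y'))_{y'\in\mY}$, so that $\cHS{X}{UM,Y=y}\le\sum_{y'\in\mY}\cHS{f(X,y')}{UM,Y=y}$, and then bound each summand using the simulator-for-Alice closeness together with correctness. Steps (i) and (ii) are correct. But step (iii), as you wrote it, has a genuine gap.

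The ``simulated continuations'' construction does not yield a guessing function $g_{y'}(U,M)$. The simulator $S_A$ takes Alice's \emph{input} $x$ and produces a fresh transcript-like object; it is not a map from $(U,M)$ to anything, so $S_A(X)$ cannot be used to define a function of $(U,M)$. Worse, even given $S_A(X)$, there is no way to ``run the ideal game and produce a guess for $f(X,y')$'': in the ideal setting for this primitive Alice receives no output at all (only Bob gets $f(x,y)$), so $S_A$ sees nothing that reflects $f(X,y')$. The chain $S_A(X) \to Z_{y'} \to f(X,y')$ simply isn't there. You correctly anticipated this obstacle, but the fix you sketch does not resolve it either: the Markov relation $\cIS{\tilde Z Y V}{X}{UM}=0$ gives $P_{X\mid UM,\,Y=y}=P_{X\mid UM}$ for every $y$, but $\cHS{X}{UM,Y=y}$ still depends on $y$ through the weighting $P_{UM\mid Y=y}$; the Markov chain alone is \emph{not} ``the lever that lets one change $y$.'' The $2\eps$-closeness from the simulator is what controls that weighting change, exactly as you set up in step (ii).

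The way the argument actually closes is a combination you have all the pieces for but did not assemble. First, for the \emph{diagonal} term $y'=y'$ one does not need any closeness: $\cIS{X}{Z}{UM,Y=y'}=0$ (from the independence relations of Section~\ref{sec:prot}) implies, via \eqref{eq:H-markov}, that $\cHS{f(X,y')}{UM,Y=y'}\le\cHS{f(X,y')}{Z,Y=y'}$, and $\Pr[Z\neq f(X,y')\mid Y=y']\le\eps$ gives, via Fano \eqref{eq:fano}, the bound $\eps\log|\mZ|+h(\eps)$. (This replaces your guessing-function step; no function of $(U,M)$ is ever built.) Second, for $y'\neq y$ one transfers this bound from the conditioning $Y=y'$ to the conditioning $Y=y$ using $\dis(P_{XMU\mid Y=y},P_{XMU\mid Y=y'})\le 2\eps$ from step (ii) and Lemma~\ref{lem:h-smooth}, paying an extra $2\eps\log|\mZ|+h(2\eps)\le 2\eps\log|\mZ|+2h(\eps)$, for a per-term bound of $3(\eps\log|\mZ|+h(\eps))$. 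Summing one diagonal term at cost $1$ and $|\mY|-1$ off-diagonal terms at cost $3$ gives exactly $(3|\mY|-2)(\eps\log|\mZ|+h(\eps))$; this accounting is where the $-2$ comes from, which your uniform $|\mY|(2\eps\log|\mZ|+h(2\eps))$ bookkeeping would not reproduce.
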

\begin{proof}
There exists a randomized function $S_A$, the simulator, such that \[\dis (P_{XMU\mid Y=y},P_{X S_A(X)}) \leq \eps\] for all $y \in \mY$. Therefore, the triangle inequality implies that for any $y,y'$
\begin{align} \label{lemma:entropyUpperBound:dist}
\dis (P_{XMU\mid Y=y},P_{XMU\mid Y =y'}) \leq 2\eps \;.
\end{align}
If Bob's input is fixed to any $y\in \mY$ and Alice's input $X$ is still uniform, then the resulting distribution of $XZUM$ is $P_{XZUM|Y=y}$. Thus, we can apply the Markov chain relation~\eqref{ineq:markov_second} and the monotonicity of the conditional mutual information to obtain the equality $\cIS{X}{Z}{UM,Y=y}=0$ for all $y$. Furthermore, we have $\Pr[Z \neq f(X,Y) \mid Y=y]\leq \eps$. Thus, it follows from \eqref{eq:H-markov} and \eqref{eq:fano} that 
\begin{align}\label{ineq:entropyUpperBound}
\cHS{f(X,y)}{ UM, Y=y}&\leq \cHS{f(X,y) }{ Z, Y=y}\nonumber\\
& \leq \eps\cdot\log|\mZ|+h(\eps) \;.
\end{align}
Together with (\ref{lemma:entropyUpperBound:dist}) and Lemma \ref{lem:h-smooth} 
this implies that for any $y,y'$
\begin{align*}
\cHS{f(X,y)}{  UM, Y=y'}
&\leq 3\eps \log|\mZ| + h(\eps) + h(2\eps)\\
&\leq 3(\eps \log|\mZ|+ h(\eps))\;,
\end{align*}
where the second inequality follows from (\ref{eq:h}).
Since $X$ can be computed from the values $f(X,y_1), \dots, f(X,y_{|\mY|})$, we obtain
\begin{align*}\cHS{X}{ UM,Y=y}&\leq \cHS{f(X,y_1), \dots f(X,y_{|\mY|})}{ UM,Y=y} \\
 &\leq \sum_{y' \in \mY} \cHS{f(X,y')}{ UM,Y=y}  \\
 &\leq (3|\mY|-2)(\eps\log|\mZ|  + h(\eps))\;,
\end{align*}
where we used (\ref{eq:mono}) in the first and (\ref{eq:chain}) and (\ref{eq:mono}) in the second inequality.
\end{proof}

\begin{theorem} \label{thm:H-imposs-n}
Let $f:\mX \times \mY \rightarrow \mZ$ be a function that satisfies \eqref{nonTrivialF-1}. If there exists a protocol that implements $f$ from a primitive $P_{UV}$ with an error $\eps$ in the semi-honest model, then
\begin{align*}
\cHS{U}{V} \geq &\max_{y}\cHS{X}{f(X,y)}\\
&~-(3|\mY|-1)(\eps\log|\mZ| + h(\eps))-\eps \log |\mathcal{X}|\;.
\end{align*}
\end{theorem}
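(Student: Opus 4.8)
The plan is to combine the two preceding lemmas with the basic information-theoretic inequalities established in the preliminaries. The idea is that $\cHS{X}{VM}$ measures how much uncertainty Bob still has about $X$ after seeing his part of the resource and the transcript, while $\cHS{X}{UM,Y=y}$ measures how little uncertainty Alice-side information leaves about $X$; the gap between these two quantities must be "paid for" by the resource $P_{UV}$, and that is exactly $\cHS{U}{V}$ up to the message $M$. Concretely, I would fix an arbitrary $y\in\mY$ and write a chain of inequalities starting from $\cHS{U}{V}$.

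First I would note that since $M$ is produced by local computations from $(U,X)$ and $(V,Y)$ together with private randomness, and the resource $P_{UV}$ is independent of $(X,Y)$ and the private coins, one has the Markov-type relations recorded in Section~\ref{sec:prot}; in particular $\cHS{U}{V}\geq\cHS{U}{VM,Y=y}$ by monotonicity \eqref{eq:mono} (conditioning on more can only decrease entropy, and here the extra conditioning is on $M$ and the event $Y=y$, which is consistent because $U$ is independent of $Y$). Actually the cleaner route is: $\cHS{U}{V}\ge \cHS{U}{VM Y}$, and by \eqref{eq:h-average} this is the average over $y$ of $\cHS{U}{VM,Y=y}$, so it is at least $\min_y\cHS{U}{VM,Y=y}$; but since we want a bound for the maximizing $y$ in the final statement, I would instead keep a fixed $y$ and argue $\cHS{U}{V}\ge\cHS{U}{VM,Y=y}$ directly using that $U\leftrightarrow V\leftrightarrow (VM,Y=y)$-type independence holds. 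Then apply the chain rule \eqref{eq:chain} in the form
\[
\cHS{XU}{VM,Y=y} = \cHS{U}{VM,Y=y} + \cHS{X}{UVM,Y=y}
\]
and also
\[
\cHS{XU}{VM,Y=y} = \cHS{X}{VM,Y=y} + \cHS{U}{XVM,Y=y} \ge \cHS{X}{VM,Y=y}.
\]
Combining, $\cHS{U}{VM,Y=y} \ge \cHS{X}{VM,Y=y} - \cHS{X}{UVM,Y=y} \ge \cHS{X}{VM,Y=y} - \cHS{X}{UM,Y=y}$, where the last step is monotonicity \eqref{eq:mono} dropping $V$ from the conditioning — wait, that goes the wrong way; dropping a variable from the conditioning increases entropy, so $\cHS{X}{UVM,Y=y}\le\cHS{X}{UM,Y=y}$, which is exactly what I need.

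Now I would bound the two surviving terms. For $\cHS{X}{VM,Y=y}$: since $X$ is independent of $Y$, $\cHS{X}{VM,Y=y}$ relates to $\cHS{X}{VM}$ — more carefully, using the security-derived Lemma~\ref{lem:entropyLowerBound}, $\cHS{X}{VM}\ge\cHS{X}{Yf(X,Y)}-\eps\log|\mX|-h(\eps)$, and then I would pass from the $Y$-averaged quantity to a single slice $y$ by \eqref{eq:h-average}, picking up $\cHS{X}{Yf(X,Y)}=\tfrac1{|\mY|}\sum_{y}\cHS{X}{f(X,y)}$ and then using $\max_y\cHS{X}{f(X,y)}\ge\cHS{X}{Yf(X,Y)}$; this is where the $\max_y$ in the theorem comes from. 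For $\cHS{X}{UM,Y=y}$: this is exactly Lemma~\ref{lem:entropyUpperBound-n}, which gives $\le(3|\mY|-2)(\eps\log|\mZ|+h(\eps))$. Assembling everything yields $\cHS{U}{V}\ge\max_y\cHS{X}{f(X,y)} - (3|\mY|-2)(\eps\log|\mZ|+h(\eps)) - (\eps\log|\mZ|+h(\eps)) - \eps\log|\mX|$, which is the claimed bound with the $(3|\mY|-1)$ coefficient.

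The main obstacle I anticipate is bookkeeping the conditioning on the event $Y=y$ correctly: I must make sure that the Markov/independence facts from Section~\ref{sec:prot} (namely $\cIS{\tilde ZYV}{X}{UM}=0$ and its relatives, plus independence of $(U,V)$ from $(X,Y)$) genuinely license the step $\cHS{U}{V}\ge\cHS{U}{VM,Y=y}$ and the dropping of $V$ above, rather than something subtly weaker. In particular I should verify that $\cHS{U}{V}=\cHS{U}{VY}$ (independence of $U$ from $Y$) and then that conditioning additionally on $M$ and restricting to $Y=y$ only decreases entropy, which follows from \eqref{eq:mono} and \eqref{eq:h-average}. Once that slice-conditioning is handled cleanly, the rest is a mechanical concatenation of \eqref{eq:chain}, \eqref{eq:mono}, Lemma~\ref{lem:entropyLowerBound}, and Lemma~\ref{lem:entropyUpperBound-n}.
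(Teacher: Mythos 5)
Your algebraic skeleton is the same as the paper's: fix $y$, write $\cHS{X}{VM,Y=y}$ via the chain rule as $\cHS{U}{VM,Y=y}+\cHS{X}{UVM,Y=y}-\cHS{U}{XVM,Y=y}$, drop the nonnegative subtracted term, bound $\cHS{X}{UVM,Y=y}\le\cHS{X}{UM,Y=y}$ by monotonicity and apply Lemma~\ref{lem:entropyUpperBound-n}, and bound $\cHS{U}{VM,Y=y}\le\cHS{U}{V}$ using the independence of $(U,V)$ from $Y$ together with monotonicity. All of that is correct and is how the paper proceeds.

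There is, however, a genuine gap in how you handle the remaining term $\cHS{X}{VM,Y=y}$. You propose to start from the $Y$-averaged statement of Lemma~\ref{lem:entropyLowerBound}, namely $\cHS{X}{VM}\ge\cHS{X}{Yf(X,Y)}-\eps\log|\mX|-h(\eps)$, and then ``pass to a single slice $y$'' via \eqref{eq:h-average}, finishing with $\max_y\cHS{X}{f(X,y)}\ge\cHS{X}{Yf(X,Y)}$. This does not work, for two reasons. First, $\cHS{X}{VM}$ upper-bounds the average $\sum_y P_Y(y)\cHS{X}{VM,Y=y}=\cHS{X}{VMY}$ (by \eqref{eq:mono}), so a lower bound on $\cHS{X}{VM}$ says nothing about any individual $\cHS{X}{VM,Y=y}$, in particular not about the one for the maximizing $y$. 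Second, the inequality $\max_y\cHS{X}{f(X,y)}\ge\cHS{X}{Yf(X,Y)}$ points the wrong way: combined with the averaged lemma it would at best yield a bound in terms of $\cHS{X}{Yf(X,Y)}$, which is \emph{weaker} than the claimed bound in terms of $\max_y\cHS{X}{f(X,y)}$. The correct move is not to average at all: the semi-honest security definition (Definition~\ref{definition:sec}) holds for every fixed pair $(x,y)$, so for each fixed $y$ one has $\dis(P_{X\,S_B(y,f(X,y))},P_{XVM\mid Y=y})\le\eps$ directly; applying Lemma~\ref{lem:h-smooth} and \eqref{eq:H-markov} to this per-$y$ statement gives $\cHS{X}{VM,Y=y}\ge\cHS{X}{f(X,y)}-\eps\log|\mX|-h(\eps)$, which is exactly the per-slice bound you need. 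That is what the paper's proof (somewhat tersely) invokes when it cites Lemma~\ref{lem:entropyLowerBound} at the slice $Y=y$. Once you replace your averaging step by this direct per-$y$ derivation, the rest of your argument assembles into the stated bound with the $(3|\mY|-1)$ coefficient.
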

\begin{proof}
Let $y \in \mY$. By Lemma~\ref{lem:entropyUpperBound-n} and inequality~\eqref{eq:mono}, we conclude that
\begin{align*}
 \cHS{X}{ UVM,Y=y}&\leq \cHS{X}{ UM,Y=y}\\
&\leq (3|\mY|-2)(\eps \log|\mZ| + h(\eps))\;.
\end{align*}
We can use \eqref{eq:mono}, \eqref{eq:chain} and Lemma \ref{lem:h-smooth} to obtain
\begin{align*}
&\cHS{X}{VM,Y=y}\\
&~~~=\cHS{U}{VM,Y=y}+\cHS{X}{UVM,Y=y}\\
&~~~~~- \cHS{U}{XVM,Y=y}\\
&~~~\leq\cHS{U}{VM,Y=y}+(3|\mY|-2)(\eps\log|\mZ| + h(\eps))\\
&~~~\leq\cHS{U}{V} +(3|\mY|-2)(\eps\log|\mZ| + h(\eps))\;.
\end{align*}
By applying Lemma~\ref{lem:entropyLowerBound} to the case where Alice's input $X$ is uniform and Bob's input is fixed to $y$, we know that 
\begin{align*}
\cHS{X}{f(X,y)}-\eps \log |\mathcal{X}|- h(\eps)&\leq \cHS{X}{VM,Y=y}\;.
\end{align*}
The statement follows by maximizing over all $y$. 
\end{proof}

Note that in~\eqref{ineq:entropyUpperBound} the term $\log |\mZ|$ could be replaced by \[d_f:=\log \max_y |\{f(x,y):~x \in \mX\}|\leq \log \min(|\mZ|,|\mX|).\] The resulting bound,
\begin{align*}
\cHS{U}{V} \geq &\max_{y}\cHS{X}{f(X,y)}\\&-(3|\mY|-1)(\eps\cdot d_f + h(\eps))-\eps \log |\mathcal{X}|\;,
\end{align*}
is stronger in general, but does not lead to improved results for the examples considered here. 

If the domain $|\mY|$ of a function is large, Theorem~\ref{thm:H-imposs-n} may only imply a rather weak bound. A simple way to improve this bound is to restrict the domain of $f$, i.e., to consider a function $f'(x,y): \mX' \times \mY' \rightarrow \mZ$ where $\mX' \subset \mX$ and $\mY' \subset \mY$ with $f'(x,y) = f(x,y)$ that still satisfies condition \eqref{nonTrivialF-1}. Clearly, if $f$ can be computed from a primitive $P_{UV}$ with an error $\eps$ in the semi-honest model, then $f'$ can be computed with the same error. Thus, any lower bound for $f'$ implies a lower bound for $f$.

\cancel{
The following corollaries for $\OT{t}{n}{k}$, $\EQ{n}$ and $\IP{n}$ follow immediately from Theorem \ref{thm:H-imposs-n}. 
}

\begin{corollary} \label{cor:H-imposs-n}
For any implementation of $m$ independent instances of $\OT{t}{n}{k}$ from a primitive $P_{UV}$ that is $\eps$-secure in the semi-honest model, the following lower bound must hold:
\begin{align*}
\cHS{U}{V} &\geq ((1-\eps)n-t)km -\left (3\lceil n/t \rceil -1\right )(\eps mtk+h(\eps))\;.
\end{align*}
\cancel{
\severin{\begin{align*}
\cHS{U}{V} &\geq (n-t)k-\left (3\lceil n/t \rceil -2\right )(\eps tk+h(\eps))\\
&~~-\eps nk-h(\eps)\\
&\geq (n-t)k-\left (4n+t)\eps k-(3\lceil n/t \rceil -1\right )h(\eps)\\
&\geq (n-t)k- 5kn\eps-(3\lceil n/t \rceil -1)h(\eps)\\
&\geq (n-t)k- 5kn\eps-3nh(\eps)\;.
\end{align*}}
}
\end{corollary}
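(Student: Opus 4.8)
The plan is to specialize Theorem~\ref{thm:H-imposs-n} to the function $f = \OT{t}{n}{k}$, viewed as a one-sided output function $f:\mX\times\mY\to\mZ$ where Alice's input is $x=(x_0,\dots,x_{n-1})\in\{0,1\}^{kn}$, Bob's input is a $t$-subset $c\subseteq\{0,\dots,n-1\}$, and Bob's output is $x|_c$. First I would check that $\OT{t}{n}{k}$ satisfies the non-triviality condition~\eqref{nonTrivialF-1}: for distinct $x\neq x'$ there is some coordinate $i$ where the $k$-bit blocks differ, and any $t$-set $c$ containing $i$ separates them, so the condition holds. Then for an $m$-fold parallel implementation the relevant function is $f^{\times m}$ with input domains $\mX^m$, $\mY^m$, $\mZ^m$, so $|\mX|=2^{knm}$, $|\mZ|=2^{tkm}$, and $|\mY|=\binom{n}{t}^m$. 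Plugging into Theorem~\ref{thm:H-imposs-n} gives
\[
\cHS{U}{V}\geq \max_{y}\cHS{X}{f^{\times m}(X,y)} - (3\textstyle\binom{n}{t}^m-1)(\eps\cdot tkm + h(\eps)) - \eps\cdot knm.
\]

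The main quantitative work is twofold. First, I need a good lower bound on $\max_y \cHS{X}{f^{\times m}(X,y)}$ for uniform $X$: for a fixed $y$ consisting of $m$ choice-sets, each instance reveals $tk$ of the $nk$ bits of that block, leaving $(n-t)k$ bits uniform and independent given the output, so $\cHS{X}{f^{\times m}(X,y)} = (n-t)km$ exactly — this already handles the ``$(n-t)k$'' part, but the corollary states $((1-\eps)n-t)km$, which is weaker by $\eps n k m$, so I can simply absorb the $-\eps knm$ error term from the theorem into the main term, turning $(n-t)km - \eps knm$ into $((1-\eps)n-t)km$. Second, the factor $3\binom{n}{t}^m-1$ from Theorem~\ref{thm:H-imposs-n} is doubly exponential in $m$ and far too large; the key trick is the domain-restriction remark following Theorem~\ref{thm:H-imposs-n}: I would restrict Bob's input set from all of $\mY^m$ to a small subfamily $\mY'\subseteq\mY^m$ that still separates all of $\mX^m$ (still satisfies~\eqref{nonTrivialF-1}) while also preserving $\max_y\cHS{X}{f^{\times m}(X,y)}=(n-t)km$.

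The crux, then, is choosing $\mY'$ of size about $\lceil n/t\rceil$ (not $\lceil n/t\rceil^m$!) that works simultaneously. The idea: partition $\{0,\dots,n-1\}$ into $\lceil n/t\rceil$ blocks each of size $\le t$ (padding the last), and for each of these $\lceil n/t\rceil$ blocks $B$ take the single element $y_B=(B,B,\dots,B)\in\mY^m$ that uses $B$ (or a $t$-superset of $B$) as the choice-set in every one of the $m$ instances. The union of these blocks over all $\lceil n/t\rceil$ choices covers every coordinate, so from the outputs $\{f^{\times m}(x,y_B)\}_B$ one can reconstruct every coordinate block of every instance, hence recover $x$; thus~\eqref{nonTrivialF-1} holds for the restricted function with $|\mY'|=\lceil n/t\rceil$. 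And for each individual $y_B$, since $y_B$ is a product of single $t$-sets, $\cHS{X}{f^{\times m}(X,y_B)}\ge (n-t)km$ still (the output reveals at most $tkm$ bits of the $nkm$ uniform bits). Now applying the theorem to this restricted function with $|\mY'|=\lceil n/t\rceil$, $|\mZ|=2^{tkm}$, $|\mX|=2^{knm}$ yields
\[
\cHS{U}{V}\geq (n-t)km - (3\lceil n/t\rceil - 1)(\eps tkm + h(\eps)) - \eps knm,
\]
and combining $(n-t)km-\eps knm = ((1-\eps)n-t)km$ gives exactly the claimed bound. The main obstacle I anticipate is purely bookkeeping: making sure the padded last block of size $<t$ is handled consistently (one can always enlarge any choice-set to size exactly $t$ without losing the separation or the entropy bound), and confirming that the per-$y$ entropy lower bound $(n-t)km$ survives for these structured $y$'s — both are routine once the product structure is made explicit.
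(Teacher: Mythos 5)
Your proposal is correct and matches the paper's proof: the paper likewise restricts Bob to a family of $\lceil n/t\rceil$ $t$-subsets covering $\{0,\dots,n-1\}$, with the same choice-set used in all $m$ instances (so $|\mY'|=\lceil n/t\rceil$), and then applies Theorem~\ref{thm:H-imposs-n} with $\log|\mZ|=tkm$, $\log|\mX|=nkm$, and $\max_y\cHS{X}{f(X,y)}=(n-t)km$. You have simply spelled out the bookkeeping (the $(n-t)km-\eps nkm=((1-\eps)n-t)km$ rearrangement and the constant-choice-set construction) that the paper leaves implicit.
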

\begin{proof}
We can choose subsets $C_i \subseteq \{0,\dots,n-1\}, \text{ with } 1\leq i \leq \lceil n/t \rceil$, of size $t$ such that $\bigcup_{i=1}^{\lceil n/t \rceil}C_i=\{0,\dots,n-1\}$, and restrict Bob to choose one of these sets as input for every instance of OT. It is easy to check that condition \eqref{nonTrivialF-1} is satisfied. 
The statement follows from Theorem \ref{thm:H-imposs-n}. 
\end{proof}

For our next lower-bound, the function $f$ must satisfy the following property.
 Let $f:\mX \times \mY \rightarrow \mZ$ be a function such that there exist $y_1 \in \mY$ such that
\begin{align} \label{conditionY1}
 \forall x \neq x' \in \mX: f(x,y_1) \neq f(x',y_1)\;,
\end{align}
and $y_2 \in \mY$ such that
\begin{align} \label{conditionY2}
 \forall x,x' \in \mX: f(x,y_2)=f(x',y_2)\;.
\end{align}
Therefore, Bob will receive Alice's whole input if his input is $y_1$, and will get no information about Alice's input if his input is $y_2$. This property can for example be satisfied by restricting Alice's input in $\OT{t}{n}{k}$, as we will see in Corollary~\ref{cor:I-imposs-n}.

Let Alice's input $X$ be uniformly distributed. Loosely speaking, the security of the protocol implies that the communication gives (almost) no information about Alice's input $X$ if Bob's input is $y_2$. But the communication must be (almost) independent of Bob's input, otherwise Alice could learn Bob's input. Thus, Alice's input $X$ is uniform with respect to the whole communication even when Bob's input is $y_1$. Let now Bob's input be fixed to $y_1$ and let $M$ be the whole communication. The following lower bound can be proved using the given intuition.  

\begin{lemma}\label{lem:almostUniform-n}
\begin{align*}
H(f(X,y_1)|M,\C{U}{V}&,Y=y_1)\\
&\geq \log|\mX| - 6(\eps \log|\mX|+h(\eps))\;.
\end{align*}
\end{lemma}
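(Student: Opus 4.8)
The plan is to make the informal argument rigorous in three moves, controlling the three ``almost'' steps by the security guarantees.

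\emph{Step 1: the communication is almost independent of Bob's input.} Since the protocol is $\eps$-secure, Alice's view $(X,U,M)$ conditioned on $Y=y_2$ is $\eps$-close to the simulator output $P_{X S_A(X)}$, and the same holds conditioned on $Y=y_1$. By the triangle inequality, as in \eqref{lemma:entropyUpperBound:dist}, we get $\dis(P_{XUM\mid Y=y_1},P_{XUM\mid Y=y_2})\leq 2\eps$, and hence also $\dis(P_{XMC\mid Y=y_1},P_{XMC\mid Y=y_2})\leq 2\eps$ where $C:=\C{U}{V}$ is a function of $U$ (and of $V$). So it suffices to lower bound $H(f(X,y_1)\mid M,C,Y=y_2)$ and then pay a smoothing cost via Lemma~\ref{lem:h-smooth}; note $f(X,y_1)$ ranges over a set of size $|\mX|$ by \eqref{conditionY1}, so the penalty is $2\eps\log|\mX|+h(2\eps)$.

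\emph{Step 2: when Bob's input is $y_2$, the transcript reveals almost nothing about $X$.} Here the relevant guarantee is Bob's simulator: $(Z,\mathrm{View}_B)$ is $\eps$-close to $(Z,S_B(y_2,Z))$, and by \eqref{conditionY2} the value $Z=f(X,y_2)$ is constant, so $\mathrm{View}_B=(V,M,\dots)$ conditioned on $Y=y_2$ is essentially independent of $X$. More precisely, I would argue that $\dis(P_{X V M\mid Y=y_2}, P_{X\mid Y=y_2}\times P_{VM\mid Y=y_2})$ is $O(\eps)$, using that the simulated view depends only on the (constant) output. Since $C=\C{U}{V}$ is computable from $V$, the pair $(M,C)$ conditioned on $Y=y_2$ is $O(\eps)$-close to being independent of $X$; with $X$ uniform on $\mX$, Lemma~\ref{lem:h-smooth} then gives $H(X\mid M,C,Y=y_2)\geq \log|\mX| - O(\eps)(\log|\mX|+1)$ for an appropriate constant. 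Since $f(X,y_1)$ determines $X$ and vice versa by \eqref{conditionY1}, $H(f(X,y_1)\mid M,C,Y=y_2)=H(X\mid M,C,Y=y_2)$.

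\emph{Step 3: combine.} Chaining Step~2's bound through the $2\eps$-closeness from Step~1 (again Lemma~\ref{lem:h-smooth}, with alphabet size $|\mX|$) yields $H(f(X,y_1)\mid M,C,Y=y_1)\geq \log|\mX| - 6(\eps\log|\mX|+h(\eps))$ after collecting constants and using \eqref{eq:h} to fold the various $h(c\eps)$ terms into $h(\eps)$. I expect the main obstacle to be Step~2: making the ``the view is independent of $X$'' claim precise requires carefully tracking which random variables the simulator $S_B$ is allowed to depend on, and arguing that closeness of joint distributions $(Z,\mathrm{View}_B)\approx(Z,S_B(y_2,Z))$, combined with $Z$ constant, really does give near-independence of the \emph{protocol} view from $X$ — and then that passing to the sub-variable $(M,\C{U}{V})$ preserves this. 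The rest is bookkeeping with the triangle inequality, Lemma~\ref{lem:h-smooth}, and the concavity inequality \eqref{eq:h}.
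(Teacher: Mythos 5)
Your proposal matches the paper's proof in its essential ideas: $S_A$'s security to show the joint law of $(X,M,\C{U}{V})$ is $2\eps$-close across $y_1$ and $y_2$, $S_B$'s security plus~\eqref{conditionY2} (and the observation that $\C{U}{V}$ is computable from $V$ as well as from $U$) to show near-independence of $X$ from $(M,\C{U}{V})$ at $y_2$, and then Lemma~\ref{lem:h-smooth} together with concavity of $h$. The only difference is organizational: the paper chains the three closeness bounds into a single $6\eps$ distance at $y_1$ and applies Lemma~\ref{lem:h-smooth} once, whereas your two applications of Lemma~\ref{lem:h-smooth} (once at $y_2$, once to transfer from $y_2$ to $y_1$), if carried out with the explicit $2\eps$ bounds from each step, would actually yield the slightly sharper constant $4(\eps\log|\mX|+h(\eps))$.
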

\begin{proof}
Let $g_U,g_V$ be the functions that compute the common part of $P_{UV}$. As in inequality~\eqref{lemma:entropyUpperBound:dist} in the proof of Lemma~\ref{lem:entropyUpperBound-n}, we obtain that  
\[\dis(P_{XMU\mid Y=y},P_{XMU\mid Y=y'})\leq 2\eps \;,\]
for all $y\neq y' \in \mY$. This implies that 
\begin{align} \label{eq:upper1}
\dis(P_{XMg_U(U)\mid Y=y},P_{XMg_U(U)\mid Y=y'})\leq 2\eps \;,
\end{align}
and 
\begin{align} \label{eq:upper2}
\dis(P_{X}P_{ Mg_U(U)\mid Y=y},P_{X}P_{Mg_U(U) \mid Y=y'})\leq 2\eps \;.
\end{align}
Since the protocol is secure, there exists a simulator $S_B$ such that
\[
 \dis(P_{XM V\mid Y=y_2},P_{X S_B(y_2,f(X,y_2)) })\leq \eps\;.
\]
From the property \eqref{conditionY2}, we can conclude that
\[\dis(P_{XM V \mid Y=y_2},P_X P_{S_B(y_2,f(X,y_2))})\leq \eps.\]
From security of the protocol (Definition  \ref{definition:sec}) we know that $\dis(P_{S_B(y_2,f(X,y_2))},P_{M V \mid Y=y_2})\leq\eps$, which immediately implies $\dis(P_X P_{S_B(y_2,f(X,y_2))},P_XP_{M V \mid Y=y_2})\leq\eps$. Therefore, we can use the triangle inequality to derive the following upper bound on the distance from uniform of $X$ with respect to  $M g_U(U)$ conditioned on $y_2$:
\begin{align}
\dis(P_{X M g_U(U) \mid Y=y_2}&,P_X P_{M g_U(U) \mid Y=y_2})\nonumber\\
\leq&\;  \dis(P_{X M V \mid Y=y_2},P_X P_{M V \mid Y=y_2}) \nonumber\\
\leq&\;   \dis(P_{XM V \mid Y=y_2},P_X P_{S_B(y_2,f(X,y_2))}) \nonumber\\
&+ \dis(P_X P_{S_B(y_2,f(X,y_2))},P_XP_{M V \mid Y=y_2}) \nonumber\\
 \leq& \;2\eps\;.  \label{eq:upper3}
\end{align}
This implies that a weaker upper bound also holds conditioned on $y_1$ as follows: We can use the triangle inequality again to conclude from \eqref{eq:upper1}, \eqref{eq:upper2} and \eqref{eq:upper3} that
\begin{align*}
\dis(P&_{X Mg_U(U)\mid Y=y_1},P_{X}P_{Mg_U(U) \mid Y=y_1})\\
&~~~~~~\leq  \dis(P_{X Mg_U(U)\mid Y=y_1},P_{X Mg_U(U)\mid Y=y_2})\\
&~~~~~~~~ + \dis(P_{X Mg_U(U)\mid Y=y_2},P_X P_{Mg_U(U)\mid Y=y_2})\\
&~~~~~~~~ + \dis(P_X P_{Mg_U(U)\mid Y=y_2},P_{X} P_{Mg_U(U) \mid Y=y_1})\\
&~~~~~~\leq  \; 6\eps \;.
\end{align*}
Therefore, we obtain 
\begin{align*}
\Hop(f(X,y_1) | M,\C{U}{V},&Y =y_1)\\
&=\Hshannon{X}{  M,\C{U}{V},Y =y_1}\\
&\geq \log|\mX| -6(\eps\log|\mX|- h(\eps))\;,
\end{align*} 
where we used Lemma~\ref{lem:h-smooth}. 
\end{proof}

We use Lemma~\ref{lem:almostUniform-n} to prove the following lower bound on the mutual information of the distributed randomness for implementations of a two-party function $f$ from a primitive $P_{UV}$ in the semi-honest model.

\begin{theorem} \label{thm:I-imposs-n}
Let $f:\mX \times \mY \rightarrow \mZ$ be a function that satisfies \eqref{conditionY1} and \eqref{conditionY2}. 
Then, for any protocol that implements $f$ with an error of at most $\eps$ in the semi-honest model from a primitive $P_{UV}$, the following lower bound must hold:
\begin{align*}
\IS{U}{V}&\geq\cIS{U}{V}{\C{U}{V}}\\
&\geq \log|\mX|-7(\eps \log|\mX|+ h(\eps))\;.
\end{align*}
\end{theorem}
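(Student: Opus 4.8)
The plan is to reduce the claimed bound to the entropy estimate of Lemma~\ref{lem:almostUniform-n} together with a decomposition of $\IS{U}{V}$ through the common part $\C{U}{V}$. First I would record the trivial inequality $\IS{U}{V}\geq\cIS{U}{V}{\C{U}{V}}$, which holds because conditioning on $\C{U}{V}$ — a function of $U$ and also of $V$ — only removes the correlation that is already ``shared'', formally $\IS{U}{V}=\cIS{U}{V}{\C{U}{V}}+\Ishannon{\C{U}{V}}{V}$ and the second term is nonnegative by the chain rule for mutual information. (Alternatively this follows from the fact that $\C{U}{V}$ is a deterministic function of $U$, so $\IS{U}{V}=\Ishannon{U\C{U}{V}}{V}=\Ishannon{\C{U}{V}}{V}+\cIS{U}{V}{\C{U}{V}}\geq\cIS{U}{V}{\C{U}{V}}$.) So the whole task is to lower-bound $\cIS{U}{V}{\C{U}{V}}$.

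Next I would fix Bob's input to $y_1$ and write $M$ for the full transcript. The point is that, conditioned on $Y=y_1$ and on $\C{U}{V}$, the pair $(U,M)$ lets Bob compute $f(X,y_1)$, and by condition \eqref{conditionY1} this value determines $X$; so $H(f(X,y_1)\mid M,\C{U}{V},Y=y_1)=H(X\mid M,\C{U}{V},Y=y_1)$, which Lemma~\ref{lem:almostUniform-n} bounds below by $\log|\mX|-6(\eps\log|\mX|+h(\eps))$. On the other hand, Bob's view never actually ``uses up'' that much uncertainty about $X$ beyond what $U$ already provides relative to $V$: I would argue that conditioned on $Y=y_1$, the Markov-type relations listed after the protocol definition (in particular $\cIS{U\tilde Z Y}{X}{VM}=0$ type statements, and $\Markov{X}{U}{\C{U}{V}}$, $\Markov{X}{V}{\C{U}{V}}$ which come from $U=\K UV$, $V=\K VU$) force
\begin{align*}
 H(X\mid M,\C{U}{V},Y=y_1)\;\leq\; H(X\mid V,\C{U}{V},Y=y_1)+\cHS{U}{V\C{U}{V},Y=y_1}\;,
\end{align*}
and since $X$ with $Y=y_1$ is independent of $(V,\C{U}{V})$ up to the $\eps$-simulation slack for Bob-with-input-$y_1$ — here I would again invoke inequality \eqref{lemma:entropyUpperBound:dist}-style transfer from the $y_2$ case exactly as in the proof of Lemma~\ref{lem:almostUniform-n}, giving $H(X\mid V,\C{U}{V},Y=y_1)\geq\log|\mX|-O(\eps\log|\mX|+h(\eps))$ is not quite what I want; rather I want the reverse, $H(X\mid V,\C{U}{V},Y=y_1)\leq\log|\mX|$ trivially. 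Combining, $\cHS{U}{V\C{U}{V}}\geq\cHS{U}{V\C{U}{V},Y=y_1}\geq \log|\mX|-6(\eps\log|\mX|+h(\eps))-H(X\mid V,\C{U}{V},Y=y_1)$, and the latter entropy is at most $\log|\mX|$ — that only gives a bound on a conditional entropy, not on mutual information. The cleaner route, which I would actually take, is: $\cIS{U}{V}{\C{U}{V}}\geq\cIS{UM}{V}{\C{U}{V}}$? No — I would instead use that $\cIS{U}{V}{\C{U}{V}}\geq H(f(X,y_1)\mid \C{U}{V},V,Y=y_1)-H(f(X,y_1)\mid \C{U}{V},V,U,M,Y=y_1)$ after noting $H(f(X,y_1)\mid\C{U}{V},V,U,M,Y=y_1)=0$ (Bob computes his output) and $H(f(X,y_1)\mid\C{U}{V},V,Y=y_1)\geq H(f(X,y_1)\mid \C{U}{V},M,Y=y_1)-\cHS{U}{V\C{U}{V}}$; the first term is $\geq\log|\mX|-6(\eps\log|\mX|+h(\eps))$ by Lemma~\ref{lem:almostUniform-n}, and I bound $\cHS{U}{V\C{U}{V}}$ by the security-for-Alice argument. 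I expect the bookkeeping that turns Lemma~\ref{lem:almostUniform-n} into a statement about $\cIS{U}{V}{\C{U}{V}}$ rather than a conditional-entropy statement to be the main obstacle, and the resolution is to note that $M$ is a function of $(X,U,Y,r_A)$ and $(Y,V,r_B)$, so conditioned on $Y=y_1$ and $\C{U}{V}$, the transcript $M$ can be simulated from $V$ alone with the help of $U$, yielding $H(f(X,y_1)\mid \C{U}{V},M,Y=y_1)\leq H(f(X,y_1)\mid\C{U}{V},V,Y=y_1)+H(U\mid V,\C{U}{V})$, hence $\cIS{U}{V}{\C{U}{V}}\geq H(U\mid V,\C{U}{V})\geq H(f(X,y_1)\mid\C{U}{V},M,Y=y_1)-H(f(X,y_1)\mid\C{U}{V},V,Y=y_1)$.)

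To keep the argument honest and short, the step I will be most careful about is the direction of the inequality relating $H(f(X,y_1)\mid M,\C{U}{V},Y=y_1)$ to $\cIS{U}{V}{\C{U}{V}}$; the clean identity to exploit is $\cIS{U}{V}{\C{U}{V}} = \cHS{U}{\C{U}{V}} - \cHS{U}{V\C{U}{V}}$ is not it either — rather $\cIS{UM}{X}{V\C{U}{V},Y=y_1}$ vanishes by the protocol's Markov conditions while $(U,M)$ determines $f(X,y_1)$ hence a large part of $X$; unwinding this and subtracting off the $\eps$-terms from Lemma~\ref{lem:almostUniform-n} gives exactly $\log|\mX|-7(\eps\log|\mX|+h(\eps))$, where the extra $(\eps\log|\mX|+h(\eps))$ over the ``$6$'' of Lemma~\ref{lem:almostUniform-n} absorbs one more application of Lemma~\ref{lem:h-smooth}. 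I therefore expect the proof to be: (i) $\IS{U}{V}\geq\cIS{U}{V}{\C{U}{V}}$; (ii) express $\cIS{U}{V}{\C{U}{V}}\geq H(f(X,y_1)\mid M,\C{U}{V},Y=y_1) - \big(\text{terms that Bob's view with input }y_1\text{ contributes beyond }V\big)$ via the Markov relations; (iii) plug in Lemma~\ref{lem:almostUniform-n}; (iv) bound the leftover by $\eps\log|\mX|+h(\eps)$ using the simulator for Alice and Lemma~\ref{lem:h-smooth}. The main obstacle is purely the correct information-theoretic accounting in step (ii); there is no hard new idea, only a careful chain-rule manipulation anchored on the already-proved Lemma~\ref{lem:almostUniform-n}.
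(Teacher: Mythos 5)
Your step (i), the reduction $\IS{U}{V}\geq\cIS{U}{V}{\C{U}{V}}$ via the chain rule and the fact that $\C{U}{V}$ is a function of $U$, is correct and matches the paper. But the rest of the proposal has a genuine gap and a few outright errors.

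The crucial missing idea is the round-by-round induction that relates the transcript-conditioned quantity to the resource quantity. The paper first shows, via $\Markov{X}{UM}{ZYV}$, inequality \eqref{eq:I-markov2}, data processing, Lemma~\ref{lem:almostUniform-n} and Fano, that
\begin{align*}
\cIS{U}{V}{M,\C{U}{V}}&\geq\cIS{f(X,y_1)}{V}{M,\C{U}{V}}\\
&=\cHS{f(X,y_1)}{M,\C{U}{V}}-\cHS{f(X,y_1)}{VM,\C{U}{V}}\\
&\geq\log|\mX|-7(\eps\log|\mX|+h(\eps))\;.
\end{align*}
This only bounds $\cIS{U}{V}{M,\C{U}{V}}$, not $\cIS{U}{V}{\C{U}{V}}$, and the two are not comparable in general. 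The paper closes the gap by observing that each new message $M_{i+1}$ is conditionally independent of the other player's resource share given the history ($\Markov{M^{i+1}}{M^iU}{V}$ when Alice speaks, symmetrically when Bob speaks), so by~\eqref{eq:I-markov} the quantity $\cIS{U}{V}{M^i,\C{U}{V}}$ is nonincreasing in $i$, giving $\cIS{U}{V}{M,\C{U}{V}}\leq\cIS{U}{V}{\C{U}{V}}$. Your proposal has nothing that plays this role; ``the Markov relations'' in your step~(ii) is used only to mean the single-shot $\cIS{\tilde ZY}{X}{VM}=0$-type facts, which do not yield this monotonicity over rounds.

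Beyond the missing induction, several concrete steps you wrote are false: $\cIS{U}{V}{\C{U}{V}}\geq\cHS{U}{V\C{U}{V}}$ does not hold (mutual information is never lower-bounded by conditional entropy of one argument); and the claim that ``$M$ can be simulated from $V$ alone with the help of $U$'' is wrong, since $M$ also depends on $X$, $Y$, $r_A$, $r_B$, so $\cHS{f(X,y_1)}{\C{U}{V},M,Y=y_1}\leq\cHS{f(X,y_1)}{\C{U}{V},V,Y=y_1}+\cHS{U}{V\C{U}{V}}$ is not justified. Finally, your step~(iv) attributes the remaining $\eps\log|\mX|+h(\eps)$ loss to the simulator for Alice, whereas it actually comes from correctness (Fano applied to $\Pr[Z\neq f(X,y_1)]\leq\eps$ together with $\Markov{X}{VM}{Z}$).
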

\begin{proof}
Let Alice's input $X$ be uniformly distributed and Bob's input be fixed to $y_1$. Let $Z$ be Bob's output and $M$ the 
whole communication. Then Lemma~\ref{lem:almostUniform-n} implies that 
\begin{align}\label{eq:thm3-1}
 \cHS{f(X,y_1)}{M,\C{U}{V}}\geq\log|\mX| -6(\eps\log|\mX| - h(\eps))\;.
\end{align}
Since $\Pr[Z \neq f(X,y_1)] \leq \eps$ and $\Markov{X}{VM}{Z}$, it follows from
(\ref{eq:H-markov}) and (\ref{eq:fano}) that
\begin{align}\label{eq:thm3-2}
\cHS{f(X,y_1)}{V M} 
 \leq \cHS{f(X,y_1)}{Z}
 \leq \eps \log|\mX| + h(\eps) \;.
\end{align}
Inequalities \eqref{eq:thm3-1} and \eqref{eq:thm3-2} imply, using $\Markov{X}{UM}{ZYV}$, \eqref{eq:I-markov2} and \eqref{eq:mono-I}, that
\begin{align*}
\cIS{U}{V}{M,\C{U}{V}}
&\geq\cIS{X}{V}{ M,\C{U}{V}}\\
&\geq\cIS{f(X,y_1)}{V}{ M,\C{U}{V}}\\
&=\cHS{f(X,y_1)}{ M,\C{U}{V}}\\
&~~~-\cHS{f(X,y_1)}{VM,\C{U}{V}}\\
&\geq \log|\mX|-7(\eps \log|\mX|- h(\eps))\;. 
\end{align*}
Let $M^i := (M_1, \dots, M_i)$, i.e., the sequence of all messages sent until the $i$-th round. Without loss of generality, let us assume that Alice sends the message of the $(i+1)$-th round. Since, we have $\Markov{M^{i+1}}{M^i U}{V}$, it follows from (\ref{eq:I-markov}) that
\[\cIS{U}{V}{ M^{i+1},\C{U}{V}}  \leq \cIS{U}{V}{ M^i,\C{U}{V}}\;.\]
By induction over all rounds, it holds that
\[\cIS{U}{V}{M,\C{U}{V}}  \leq \cIS{U}{V}{\C{U}{V}}\;.\]
Since $\Markov{\C{U}{V}}{U}{V}$, the statement of the theorem follows.
\end{proof}

The next corollary provides a lower bound on the mutual information for implementations of $\OT{t}{n}{k}$ from a primitive $P_{UV}$. It follows immediately from Theorem \ref{thm:I-imposs-n}.
\begin{corollary}\label{cor:I-imposs-n}
If there exists a  protocol that implements $m$ independent instances of $\OT{t}{n}{k}$  from a primitive $P_{UV}$ with an error of at most $\eps$ in the semi-honest model, then the following lower bounds must hold: If $t \leq \lfloor n/2 \rfloor$, then
\begin{align*}
\cIS{U}{V}{\C{U}{V}} \geq mtk-7(\eps mtk + h(\eps))\;.
\end{align*}
If $t > \lfloor n/2 \rfloor$, then
\begin{align*}
\cIS{U}{V}{\C{U}{V}} \geq m(n-t)k-7(\eps m(n-t)k + h(\eps))\;.
\end{align*}
\end{corollary}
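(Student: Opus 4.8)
The plan is to derive Corollary~\ref{cor:I-imposs-n} from Theorem~\ref{thm:I-imposs-n} by choosing a suitable restriction of $m$ independent copies of $\OT{t}{n}{k}$ to a function $f$ that satisfies conditions~\eqref{conditionY1} and~\eqref{conditionY2}, with $|\mX|$ as large as possible. First I would treat the two cases separately. If $t \leq \lfloor n/2 \rfloor$, I restrict Alice's input so that only the first $tk$ bits $x_0,\dots,x_{t-1}$ (the $t$ strings indexed by $\{0,\dots,t-1\}$) are free and the remaining strings are fixed to, say, $0$; then I let Bob choose between $y_1 = \{0,\dots,t-1\}$, which reveals all of Alice's free input, and $y_2 = \{t,\dots,2t-1\}$ (a set of $t$ indices disjoint from the free block, which exists since $2t \leq n$), which reveals only the fixed strings, i.e.\ nothing about Alice's free input. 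If $t > \lfloor n/2 \rfloor$, the free part of Alice's input has to be smaller: I let the $(n-t)k$ bits indexed by $\{t,\dots,n-1\}$ be free and fix the rest, pick $y_1 = \{t,\dots,n-1\} \cup (\text{enough extra indices to reach size }t)$ which contains the whole free block (possible since $t \geq n-t$ so we can pad with fixed indices), and pick $y_2 = \{0,\dots,t-1\}$ which contains none of the free indices. In both cases I do this independently in each of the $m$ instances, so the composite function has $|\mX| = 2^{mtk}$ or $|\mX| = 2^{m(n-t)k}$ respectively, and conditions~\eqref{conditionY1},~\eqref{conditionY2} are inherited coordinatewise.

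Next I would verify that a protocol $\eps$-securely implementing the $m$ unrestricted copies of $\OT{t}{n}{k}$ from $P_{UV}$ also $\eps$-securely implements the restricted function $f$: the restriction just amounts to Alice and Bob fixing certain coordinates of their inputs locally before invoking the protocol, which cannot increase the simulation error (this is the same observation used right after Theorem~\ref{thm:H-imposs-n} and in the proof of Corollary~\ref{cor:H-imposs-n}). Then Theorem~\ref{thm:I-imposs-n} applies directly and gives
\[
\cIS{U}{V}{\C{U}{V}} \geq \log|\mX| - 7(\eps \log|\mX| + h(\eps))\;,
\]
and substituting $\log|\mX| = mtk$ in the first case and $\log|\mX| = m(n-t)k$ in the second yields exactly the two claimed inequalities.

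I do not expect any serious obstacle here; the corollary is a routine specialization. The only point that needs a little care is exhibiting the sets $y_1, y_2 \subseteq \{0,\dots,n-1\}$ of size exactly $t$ with the right intersection properties relative to the chosen free block of Alice's input — this is where the case split on $t \leq \lfloor n/2\rfloor$ versus $t > \lfloor n/2\rfloor$ comes from, since when $t$ is large the free block cannot be larger than $n-t$ without forcing $y_2$ to overlap it. Once the combinatorial choice is fixed, checking that $f$ restricted this way still satisfies~\eqref{conditionY1} (distinct free inputs give distinct outputs at $y_1$, since $y_1$ reads the entire free block) and~\eqref{conditionY2} (the output at $y_2$ depends only on fixed coordinates) is immediate, and the arithmetic is trivial.
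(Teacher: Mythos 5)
Your proposal is correct and follows essentially the same route as the paper: fix part of Alice's input and restrict Bob's choice to two sets $y_1,y_2$, one covering the free block and one disjoint from it, then invoke Theorem~\ref{thm:I-imposs-n}. The only cosmetic difference is that in the case $t>\lfloor n/2\rfloor$ the paper phrases the restriction as obtaining $\OT{n-t}{2n-2t}{k}$ by fixing $2t-n$ inputs, whereas you directly exhibit $y_1,y_2$; the two formulations are interchangeable.
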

\begin{proof}
In the first case, consider the function that is obtained by setting the first $n-t$ inputs to a fixed value and choosing the remaining $t$ inputs from $\{0,1\}^{tk}$ for every instance of OT. In the second case, we use the fact that $\OT{n-t}{2n-2t}{k}$ can be obtained from $\OT{t}{n}{k}$ by fixing $2t-n$ inputs. Thus, both bounds follow from Theorem \ref{thm:I-imposs-n}.
\end{proof}

An instance of $\OT{1}{2}{1}$ can be implemented from one instance of $\OT{1}{2}{1}$ in the opposite direction \cite{WolWul06}. Therefore, 
it follows immediately from Corollary~\ref{cor:H-imposs-n} that 
\begin{align*}
\cHS{V}{U} \geq 1 - 5h(\eps)-7\eps\;,
\end{align*}
since any violation of this bound would contradict the bound of Corollary~\ref{cor:H-imposs-n}. We will show that a generalization of this bound also holds for $m$ independent copies of $\OT{1}{n}{k}$ for any $n \geq 2$. Note that we can assume that $k=1$. The resulting bound then also implies a bound for $k>1$ because one instance of $\OT{1}{n}{1}$ can be  implemented from one instance of $\OT{1}{n}{k}$. 

\begin{theorem} \label{thm:H-imposs-c}
Let a protocol having access to $P_{UV}$ be an $\eps$-secure implementation of $m$ independent copies of 
$\OT{1}{n}{1}$ in the semi-honest model. Then 
\begin{align*}
\cHS{V}{U} & \geq m\log n - m(4\log n + 7 )(\eps + h(\eps))\;.
\end{align*}
\end{theorem}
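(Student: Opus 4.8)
The plan is to mirror the structure used for Theorem~\ref{thm:I-imposs-n}, but now measuring Bob's uncertainty in terms of the conditional Shannon entropy $\cHS{V}{U}$ rather than the mutual information. The key point is that in a secure implementation of $m$ copies of $\OT{1}{n}{1}$, Bob's choice string must remain hidden from Alice, so the communication $M$ together with Alice's side information $U$ should reveal almost nothing about Bob's choices; on the other hand, Bob's own data $(V,M)$ must determine all of his outputs, and for a random choice this data therefore encodes roughly $m\log n$ bits that Alice cannot have. First I would fix Alice's inputs to all-zero strings (which trivially keeps $\OT{1}{n}{1}$ non-degenerate from Bob's side and lets us invoke the security guarantee against a corrupt Alice), let Bob's choice vector $C=(C_1,\dots,C_m)$ be uniform over $\{0,\dots,n-1\}^m$, and let $Z$ be Bob's output. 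Running the simulator $S_A$ against Alice together with the triangle-inequality argument from Lemma~\ref{lem:entropyUpperBound-n} (cf.\ inequality~\eqref{lemma:entropyUpperBound:dist}) gives that $C$ is $O(\eps)$-close to independent of $(U,M)$, hence by Lemma~\ref{lem:h-smooth} we get a lower bound $\cHS{C}{UM}\geq m\log n - O(m\log n)(\eps+h(\eps))$.

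Next I would establish the reverse estimate on Bob's side: since Bob is honest and correct except with probability $\eps$ per instance, $Z$ equals the selected bits of the (fixed zero) input strings with high probability, but this is not directly useful because the inputs are fixed; instead the right statement is that from $(V,M)$ Bob reconstructs $Z$, and the correctness/security against corrupt Bob forces $(V,M)$ to "contain" $C$ only up to the output — concretely, one uses that $\cHS{C}{VM}$ is small. The cleanest route is: the security against a dishonest Bob (via the simulator $S_B$, or more simply via the Markov chain $\Markov{C}{ZV M}{\text{nothing}}$ together with the fact that in the ideal world Bob's view on fixed all-zero inputs is simulatable from $f(x,C)=0^{mk}$ alone) implies $\cHS{C}{VM}\leq O(m\log n)(\eps + h(\eps))$, by Fano applied after replacing $VM$ with the simulator output. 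Then the chain rule gives
\begin{align*}
\cHS{V}{U}&\geq \cHS{V}{UM}\\
&= \cHS{VC}{UM} - \cHS{C}{VUM}\\
&\geq \cHS{C}{UM} - \cHS{C}{VM}\\
&\geq m\log n - O(m\log n)(\eps+h(\eps))\;,
\end{align*}
using $\cHS{VC}{UM}\geq \cHS{C}{UM}$ and monotonicity $\cHS{C}{VUM}\leq\cHS{C}{VM}$, and then $\cHS{V}{U}\geq\cHS{V}{UM}$ follows from the Markov structure $\Markov{M}{U}{V}$ established in Section~\ref{sec:prot} (Alice sends the first message, and each message is a function of the sender's data). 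Finally I would track the constants carefully to land on the stated $m(4\log n+7)(\eps+h(\eps))$: the $3|\mathcal Y|-2$-type factor from Lemma~\ref{lem:entropyUpperBound-n} becomes a constant here because we only use two values of Bob's input ($C_i$ and a reference), so the blow-up is a small fixed multiple of $\log n$ per instance plus the $h(\eps)$ terms, and one must also absorb an $\eps\log|\mathcal X|$ smoothing loss which is again $O(m\log n)\eps$.

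The main obstacle I expect is the second estimate, i.e.\ correctly arguing that $\cHS{C}{VM}$ is small. Unlike the upper bound on $\cHS{X}{UM}$ in Lemma~\ref{lem:entropyUpperBound-n}, which comes from security against corrupt Alice, here we need to bound how much Bob's \emph{own} view reveals about his \emph{own} choice — which is not a secrecy property but a consequence of correctness and of the fact that with fixed all-zero sender inputs the ideal functionality output is constant, so Bob's simulated view carries no information about $C$; translating this into a quantitative Fano-type bound (per instance, then summing with the chain rule and paying $h(\eps)$ each time) is the delicate bookkeeping step. An alternative that avoids this is to argue directly, as in Theorem~\ref{thm:I-imposs-n}, via the common part and the per-round Markov contraction $\cIS{U}{V}{M^{i+1},\C{U}{V}}\leq\cIS{U}{V}{M^i,\C{U}{V}}$, but since the present theorem is about $\cHS{V}{U}$ rather than mutual information, the entropy-chain-rule computation above is the more natural backbone, with the correctness-based bound on $\cHS{C}{VM}$ being the one piece that must be done with care.
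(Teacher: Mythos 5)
Your skeleton is right and matches the paper's proof at a structural level: bound $\cHS{C}{UM}$ from below via security against a corrupt Alice, bound $\cHS{C}{VM}$ from above, and connect these to $\cHS{V}{U}$ by a chain rule and Markov (per-round) contraction. The part you flag as ``the delicate bookkeeping step'' — the upper bound on $\cHS{C}{VM}$ — is indeed the crux, and that is exactly where your proposal has a genuine gap.

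Fixing Alice's inputs to all-zero strings makes the bound on $\cHS{C}{VM}$ false in general, not merely delicate. With constant inputs, Bob's output is the constant $0$ and carries no information about his choice $C$, so there is no correctness constraint forcing $(V,M)$ to pin down $C$. A protocol whose transcript is pure noise on all-zero inputs is perfectly consistent with $\cHS{C}{VM}=m\log n$, i.e., $(V,M)$ saying nothing at all about $C$. Your attempt to salvage this via security against Bob rests on a misreading of Definition~\ref{definition:sec}: the simulator $S_B$ receives \emph{both} Bob's input $y=C$ and the output $z=f(x,y)$, not just $f(x,C)=0^{mk}$, so ``Bob's view is simulatable from the (constant) output alone'' is not what the definition gives you. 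Security for Alice constrains what $(V,M)$ reveals about Alice's input; it gives you nothing toward $(V,M)$ determining Bob's own input $C$. The quoted Markov chain $\Markov{C}{ZVM}{\text{nothing}}$ is not a well-formed statement and doesn't yield an inequality.

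The paper's actual route is exactly the mechanism you would need and which fixed inputs destroy: Alice's inputs $X^j=(X^j_0,\dots,X^j_{n-1})$ are taken uniform, the auxiliary variables $A_i:=X^j_0\oplus X^j_i$ are introduced, and security against a corrupt Bob is used to argue (Lemma~\ref{lem:h-smooth} plus the triangle inequality, as in \eqref{lemma:entropyUpperBound:dist}) that the joint distribution of $(Y^j,C^j,V,M)$ barely changes when $A$ is varied. One then conditions on $A=(0,\dots,0)$ to see that $Y^j$ is (near-)determined by $(V,M)$, and on $A=a^b$ (the vector picking out the $b$-th bit of the index) to see that $Y^j\oplus C_b$ is (near-)determined by $(V,M)$ as well; combining these over all $b\in\{0,\dots,\lceil\log n\rceil-1\}$ shows that $C^j$ itself is (near-)determined by $(V,M)$, which is the content of $\cHS{C^j}{VM}\leq 3(\log n+2)(\eps+h(\eps))$. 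This step has no analogue under your choice of all-zero $X$, because all the $A_i$ collapse to $0$ and no bit of $C$ can be read off from $Y^j$. Once this ingredient is in place, the rest of your chain-rule computation and the $\cHS{C}{UM}\geq\cHS{C}{X}-\eps m\log n-h(\eps)$ step via $S_A$ are fine and agree with the paper.
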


\begin{proof}
Let Alice and Bob choose their inputs $X=(X^1,\dots,X^m)\in\{0,1\}^{mn}$, where $X^i=(X^i_0,\ldots,X^i_{n-1})$, and $C=(C^1,\dots,C^m) \in \{0, \dots, n-1\}^m$ uniformly at random. Let $Y=(Y^1,\dots,Y^m)$ be the output of Bob at the end of the protocol. Let $j \in \{1,\dots,m\}$. First, we consider the  $j$th instance of $\OT{1}{n}{1}$. Let $A_i := X^j_0 \oplus X^j_i$, for $i \in \{1, \dots, n-1\}$. From the security of the protocol follows that there exists a randomized function $S_B(c,x_c)$ such that for all $a = (a_1,\dots,a_{n-1}) \in \{0,1\}^{n-1}$, 
\[ \dis(P_{YCVM \mid A=a},P_{X_C C S_B(C,X_C)}) \leq \eps\;.\]
Hence, the triangle inequality implies that 
\begin{align} \label{eq:eq1}
 \dis(P_{Y^jC^jVM \mid A=a},&P_{Y^jC^jVM \mid A=a'})\nonumber\\
&\leq\dis(P_{YCVM \mid A=a},P_{YCVM \mid A=a'})\nonumber\\
 &\leq 2\eps\;
\end{align}
holds for all $a, a'$. We have $\Pr[Y^j \neq X^j_C \mid A=a] \leq \eps$ for all $a$. 
If $A = (0, \dots, 0)$, 
we have
$X^j_C = X^j_0$. Since $\Markov{X^j}{VM}{Y^j}$, it follows from (\ref{eq:mono}) and (\ref{eq:fano})  that
\begin{align} \label{eq:H-inf:Y-bound}
 \cHS{Y^j}{VM, A=(0, \dots, 0)}&\leq \cHS{Y^j}{X^j, A=(0, \dots, 0)}\nonumber \\
 &\leq \cHS{Y^j }{ X^j_0, A=(0, \dots, 0)}\nonumber\\
 &\leq \eps + h(\eps)\;.
\end{align}
Now, we map $C^j$ to a bit string of size $\lceil \log n \rceil$. Let $C_b$ be the $b$-th bit of that bit string, where $b \in \{0, \dots, \lceil \log n \rceil - 1\}$. Let $a^b = (a^b_1, \dots, a^b_{n-1})$, where $a^b_i = 1$ if and only if the $b$-th bit of the binary representation of $i$ is $1$. Conditioned on $A = a^b$, we have $X^j_C = X^j_0 \oplus C_b$. It follows from $\Markov{X^j}{VM}{Y^jC^j}$, (\ref{eq:mono}) and (\ref{eq:fano})  that
\begin{align}\label{eq:H-inf:YC-bound}
\cHS{Y^j \oplus C_b}{ VM, A=a^b} &\leq \cHS{Y^j \oplus C_b}{ X^j_0, A=a^b}\nonumber \\
& \leq \eps + h(\eps)\;.
\end{align}
By Lemma \ref{lem:h-smooth}, \eqref{eq:eq1} and \eqref{eq:H-inf:Y-bound}, we obtain
\[\cHS{Y^j}{VMA} \leq \eps + h(\eps) + 2\eps + h(2\eps) \leq 3 \eps + 3 h(\eps).\]
It follows from Lemma \ref{lem:h-smooth}, \eqref{eq:eq1} and \eqref{eq:H-inf:YC-bound} that for all $b$
\[\cHS{Y^j \oplus C_b}{VMA} \leq 3 \eps + 3 h(\eps) \;.\]
Since $(C^j,Y^j)$ can be calculated from $(Y^j,Y^j \oplus C_0, \dots, Y^j \oplus C_{\lceil \log n \rceil - 1})$, this implies that
\[\cHS{C^j Y^j}{VMA} \leq 3 (\lceil \log n \rceil + 1 )(\eps + h(\eps)) \;.\]
The Markov chain $\Markov{A}{VM}{C^jY^j}$, $\lceil \log n \rceil \leq \log n + 1$  and inequality \eqref{eq:mono} imply that
\[\cHS{C^j}{VM} \leq 3 (\log n + 2 )(\eps + h(\eps))\;.\]
Thus we can use \eqref{eq:chain} and \eqref{eq:mono} to obtain
\begin{align*}
\cHS{C}{VM} &\leq \sum_{j=1}^n \cHS{C^j}{VM}\\
&\leq 3m(\log n + 2 )(\eps + h(\eps))\;. 
\end{align*}
We can use \eqref{eq:chain}, \eqref{eq:mono} and Lemmas~\ref{lem:h-smooth}  to obtain 
\begin{align*}
\cHS{C}{UM}&=\cHS{V}{UM}+\cHS{C}{UVM} - \cHS{V}{CUM}\\
&\leq \cHS{V}{UM}+3m(\log n + 2 )(\eps + h(\eps))\\
&\leq \cHS{V}{U}+3m(\log n + 2 )(\eps + h(\eps))\;.
\end{align*}
The security of the protocol implies that there exists a randomized function $S_A$ such that $\dis(P_{CS_A(X)},P_{CUM})\leq \eps$. Using
Lemma \ref{lem:h-smooth} and inequality \eqref{eq:H-markov}, we obtain that
\begin{align*}
\cHS{C}{UM} &\geq \cHS{C}{S_A(X)}-\eps m\log n - h(\eps)\\
&\geq \cHS{C}{X}-\eps m\log n- h(\eps)
\end{align*}
\end{proof}

Altogether, Corollary \ref{cor:H-imposs-n}, Corollary~\ref{cor:I-imposs-n} and Theorem~\ref{thm:H-imposs-c} prove the following theorem. 
\begin{theorem}\label{thm:impossibility}
If there exists a protocol having access to $P_{UV}$ that implements $m$ instances of 
$\OT{1}{n}{k}$ with an error of at most $\eps$ in the semi-honest model, then 
\begin{align*}
\cHS{U}{V}& \geq m (n-1) k - (4n-1) (\eps mk + h(\eps))\;, \\
\cHS{V}{U} & \geq m\log n - m(4\log n + 7 )(\eps + h(\eps))\;,\\
\cIS{U}{V}{\C{U}{V}} & \geq m k-7\eps m k-7 h(\eps)\;.
\end{align*}
\end{theorem}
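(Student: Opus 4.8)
The plan is to derive the three displayed inequalities directly from Corollary~\ref{cor:H-imposs-n}, Corollary~\ref{cor:I-imposs-n} and Theorem~\ref{thm:H-imposs-c}, specializing each to the case $t=1$ and then tidying up the error terms. Throughout one uses $n\ge 2$ (for $n=1$ the primitive $\OT{1}{n}{k}$ is trivial), so in particular $\lfloor n/2\rfloor\ge 1$.

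For the bound on $\cHS{U}{V}$, I would invoke Corollary~\ref{cor:H-imposs-n} with $t=1$, so that $\lceil n/t\rceil=n$, giving
\[
\cHS{U}{V}\ \geq\ \bigl((1-\eps)n-1\bigr)km - (3n-1)\bigl(\eps mk + h(\eps)\bigr).
\]
Rewriting $\bigl((1-\eps)n-1\bigr)km = (n-1)km - \eps nkm$, the right-hand side equals $(n-1)km - (4n-1)\eps mk - (3n-1)h(\eps)$, which is at least $m(n-1)k - (4n-1)\bigl(\eps mk + h(\eps)\bigr)$ since $h(\eps)\ge 0$ and $3n-1\le 4n-1$. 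That is the first claimed bound.

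The bound on $\cHS{V}{U}$ is exactly the conclusion of Theorem~\ref{thm:H-imposs-c}; the reduction of the general-$k$ case to $k=1$ needed to apply it is already supplied by the remark immediately preceding that theorem (one instance of $\OT{1}{n}{1}$ is obtained from one instance of $\OT{1}{n}{k}$ by fixing bits of Alice's input), so there is nothing more to do here. Finally, for the bound on $\cIS{U}{V}{\C{U}{V}}$ I would apply Corollary~\ref{cor:I-imposs-n} with $t=1$; since $1\le\lfloor n/2\rfloor$ the first case of that corollary applies and yields $\cIS{U}{V}{\C{U}{V}}\ \geq\ mk - 7\bigl(\eps mk + h(\eps)\bigr) = mk - 7\eps mk - 7h(\eps)$, as required.

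There is essentially no obstacle: the only step needing care is the arithmetic in the first inequality, namely verifying that the extra $\eps nkm$ loss coming from the $(1-\eps)n$ factor of Corollary~\ref{cor:H-imposs-n} can be folded into the $(4n-1)\eps mk$ term while the coefficient of $h(\eps)$ is not increased (it in fact drops from $4n-1$ to $3n-1$). Everything else is a verbatim citation of the three previously established statements.
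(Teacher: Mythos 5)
Your proposal is correct and follows exactly the paper's own route: the paper simply states that Corollary~\ref{cor:H-imposs-n}, Corollary~\ref{cor:I-imposs-n} and Theorem~\ref{thm:H-imposs-c} together yield Theorem~\ref{thm:impossibility}, and you supply the (straightforward) specialization to $t=1$ plus the arithmetic check. Your verification that the $\eps n k m$ loss absorbs into the $(4n-1)\eps m k$ term while the $h(\eps)$ coefficient only drops is a useful sanity check of the constants, but the argument is identical to the paper's.
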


Since $m$ instances of $\OT{1}{n}{k}$ are equivalent to a primitive $P_{UV}$ with $\cHS{U}{V} = m (n-1) k$, $\IS{U}{V} = m k$ and  $\cHS{V}{U} = m \log n$, any protocol that implements $M$ instances of $\OT{1}{N}{K}$ from $m$ instances of $\OT{1}{n}{k}$ with an error of at most $\eps$ needs to satisfy the following inequalities:
\begin{align*}
m (n-1) k &\geq M(N-1) K - (4N-1) (\eps M K + h(\eps))\;,\\
m k &\geq MK-7\eps MK-7 h(\eps)\;,\\
m \log n &\geq M\log N - M(4\log N + 7 )(\eps + h(\eps))\;.
\end{align*}
Thus, we get Corollary~\ref{cor:main2}.

We will now use the proof of Theorem~\ref{thm:H-imposs-n} and the smooth entropy formalism to derive a lower bound on the conditional min-entropy for information-theoretically secure implementations of functions $f:\mX \times \mY \rightarrow \mZ$ from a primitive $P_{UV}$ in the semi-honest model. As a motivation, consider the following question: is it possible to $\eps$-securely implement $\OT{1}{2}{K}$ from  $\RabinOT{1/2}{k}$? Corollary \ref{cor:H-imposs-n} only tells us that $K$ must be smaller than or equal to $k/2$. Our lower bound on the conditional smooth min-entropy, however, implies that there is no such implementation if $K\geq 2$ and $0\leq \eps<0.25$, independently of $k$. 

Let $f:\mX \times \mY \rightarrow \mZ$ be a function that satisfies~\eqref{nonTrivialF-1}. Let Alice and Bob choose their inputs $X$ and $Y$ uniformly at random and let $M$ be the whole communication during the protocol. For the rest of this section, we assume that all parameters are sufficiently small such that the smoothing parameters of the smooth entropies are always in $[0,1)$.

\begin{lemma}\label{lem:max-entropyUpperBound-n}
If there exists an $\eps$-secure implementation of $f:\mX \times \mY \rightarrow \mZ$ from a primitive $P_{UV}$ in the (weak) semi-honest model, then
\begin{align*} 
   \Chmaxeps{3|\mY|\eps}{X}{UM,Y=y} =0\;.
\end{align*}
\end{lemma}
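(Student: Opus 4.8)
The plan is to mirror the structure of the proof of Lemma~\ref{lem:entropyUpperBound-n}, but replace every use of Lemma~\ref{lem:h-smooth} and Fano's inequality with the corresponding statements about the smooth max-entropy $\Chmaxeps{\cdot}{\cdot}{\cdot}$, so that the result is a statement that the smoothed max-entropy vanishes rather than that the Shannon entropy is small. First I would recall the simulator $S_A$ for Alice guaranteed by security: $\dis(P_{XMU\mid Y=y},P_{X S_A(X)})\leq\eps$ for every $y\in\mY$, and hence, by the triangle inequality, $\dis(P_{XMU\mid Y=y},P_{XMU\mid Y=y'})\leq 2\eps$ for all $y,y'\in\mY$. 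The key correctness fact is that $\cIS{X}{Z}{UM,Y=y}=0$ together with $\Pr[Z\neq f(X,y)\mid Y=y]\leq\eps$; from these, for a fixed $y$, the conditional distribution $P_{f(X,y)\mid UM,Y=y}$ is, with probability at least $1-\eps$ over $(UM)$, concentrated on the single value dictated by $Z$. In smooth-max-entropy language this says $\Chmaxeps{\eps}{f(X,y)}{UM,Y=y}=0$: there is an event $\Omega_y$ of probability $\geq 1-\eps$ (namely $Z=f(X,y)$) such that conditioned on $\Omega_y$ and on any value of $(UM)$, the support of $f(X,y)$ has size $1$.

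Next I would transport this bound from conditioning on $Y=y$ to conditioning on $Y=y'$ for an arbitrary other input $y'$. Here I use a smoothing-under-statistical-distance property of $\Chmaxeps{\cdot}{\cdot}{\cdot}$ — analogous to Lemma~\ref{lem:h-smooth} but for max-entropy — which should be among the properties of $\Chmaxeps{\eps}{X}{Y}$ proved in Appendix~\ref{app:smooth}: if $\dis(P_{XY},P_{\hat X\hat Y})\leq\delta$ then $\Chmaxeps{\eps+\delta}{\hat X}{\hat Y}\leq\Chmaxeps{\eps}{X}{Y}$. Combined with the $2\eps$-closeness of $P_{f(X,y)\,UM\mid Y=y}$ and $P_{f(X,y)\,UM\mid Y=y'}$, this gives $\Chmaxeps{3\eps}{f(X,y)}{UM,Y=y'}=0$ for every pair $y,y'$. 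In particular, fixing the conditioning input to be the $y$ of the statement, I get $\Chmaxeps{3\eps}{f(X,y')}{UM,Y=y}=0$ for each of the $|\mY|$ choices of $y'$.

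Finally I would combine these $|\mY|$ vanishing max-entropies into a single statement about $X$ itself, using that $X$ is a deterministic function of the vector $(f(X,y_1),\dots,f(X,y_{|\mY|}))$ by hypothesis~\eqref{nonTrivialF-1}. The relevant tool is a subadditivity-type property of smooth max-entropy: the smooth max-entropy of a tuple is at most the sum of the individual smooth max-entropies, with the smoothing parameters adding, i.e. $\Chmaxeps{\eps_1+\dots+\eps_{|\mY|}}{f(X,y_1),\dots,f(X,y_{|\mY|})}{UM,Y=y}\leq\sum_{j}\Chmaxeps{\eps_j}{f(X,y_j)}{UM,Y=y}$; and then $X$ being a function of that tuple can only decrease the max-entropy. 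Taking each $\eps_j=3\eps$ yields $\Chmaxeps{3|\mY|\eps}{X}{UM,Y=y}\leq 0$, hence $=0$, which is the claim. The main obstacle I anticipate is making sure the smooth-max-entropy lemmas I invoke (invariance under small statistical distance with an additive hit to the smoothing parameter, subadditivity of max-entropy with additive smoothing parameters, and monotonicity under functions of the argument) are exactly the ones established in Appendix~\ref{app:smooth} and that the smoothing parameters stay in $[0,1)$ — which is exactly the standing assumption stated just before the lemma; the information-theoretic content beyond that is essentially the same bookkeeping as in Lemma~\ref{lem:entropyUpperBound-n}.
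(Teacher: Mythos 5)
Your proposal follows the paper's overall structure closely---security gives $\dis(P_{XMU\mid Y=y},P_{XMU\mid Y=y'})\leq 2\eps$, one then shows $\Chmaxeps{\eps}{f(X,y)}{UM,Y=y}=0$, transfers it to other $y'$ at a cost $2\eps$ on the smoothing parameter, and finishes by subadditivity of the smooth max-entropy plus the fact that $X$ is a function of $(f(X,y_1),\dots,f(X,y_{|\mY|}))$. Those last two steps, and the distance-smoothing and subadditivity lemmas you invoke, are exactly what the paper uses (Lemmas~\ref{lem:mono-max} and~\ref{lem:sub-add}, plus the implicit invariance of $\Chmaxeps{}{\cdot}{\cdot}$ under small statistical distance with an additive hit to the smoothing parameter).

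However, your justification of the base step $\Chmaxeps{\eps}{f(X,y)}{UM,Y=y}=0$ is flawed as stated. You take $\Omega_y=\{Z=f(X,y)\}$ and claim that conditioned on $\Omega_y$ and any $(UM)=(u,m)$ the support of $f(X,y)$ is a singleton. But $Z$ is a randomized function of $(V,M,Y)$; given $(UM)=(u,m)$ it is generally \emph{not} a fixed value. By the Markov chain $\Markov{f(X,y)}{UM}{Z}$ (given $Y=y$), the sub-normalized distribution is
$P_{f(X,y)\Omega_y\mid UM=um,Y=y}(z)=P_{f(X,y)=z\mid um}\cdot P_{Z=z\mid um}$,
whose support is the intersection of the supports of $f(X,y)\mid um$ and $Z\mid um$ and can have size $\geq 2$ for some $um$ (even if such $um$ has tiny probability, the max in the definition of $\Chmaxeps{\eps}{\cdot}{\cdot}$ picks it up). The paper avoids this by first noting $\Chmaxeps{\eps}{f(X,y)}{Z,Y=y}=0$---which your event \emph{does} show, because conditioning on $Z=z$ and $\Omega_y$ forces $f(X,y)=z$---and then applying the data-processing inequality (Lemma~\ref{lem:markov-max}) together with monotonicity (Lemma~\ref{lem:mono-max}) and the Markov chain $\Markov{f(X,y)}{UM}{Z}$ to conclude $\Chmaxeps{\eps}{f(X,y)}{UM,Y=y}\leq\Chmaxeps{\eps}{f(X,y)}{Z,Y=y}=0$. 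Alternatively, you can repair your direct argument by choosing a different event: with $\hat f(um):=\arg\max_z P_{f(X,y)=z\mid UM=um,Y=y}$, the event $\Omega'_y:=\{f(X,y)=\hat f(UM)\}$ gives singleton support for every $um$, and $\Pr[\Omega'_y\mid Y=y]\geq\Pr[Z=f(X,y)\mid Y=y]\geq 1-\eps$ because the MAP estimator beats any randomized guesser of $f(X,y)$ from $UM$, which $Z$ is. Either fix closes the gap; the remainder of your argument then goes through exactly as in the paper.
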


\begin{proof}
Since the protocol is secure for Bob, there exists a randomized function $S_A$ such that \[\dis(P_{XMU\mid Y=y},P_{X S_A(X)}) \leq \eps\] for all $y \in \mY$.  Therefore, for any $y,y'$
\begin{align} \label{lemma:max-entropyUpperBound:dist}
\dis (P_{XMU\mid Y=y},P_{XMU\mid Y =y'}) \leq 2\eps\;.
\end{align}
It holds that $\cIS{X}{Z}{UM,Y=y}=0$. Furthermore, we have $\Pr[Z \neq f(X,Y) \mid Y=y]\leq \eps$. Thus, Lemmas~\ref{lem:markov-max} and \ref{lem:mono-max} imply that
\begin{align}
\Chmaxeps{\eps}{f(X,y)}{ UM, Y=y}&\leq \Chmaxeps{\eps}{f(X,y)}{ Z, Y=y}= 0.
\end{align}
Together with \eqref{lemma:max-entropyUpperBound:dist}, this implies that for any $y,y'$
\begin{align*}
\Chmaxeps{3\eps}{f(X,y)}{UM, Y=y'}=0\;.
\end{align*}
Since $X$ can be computed from the values $f(X,y_1), \dots, f(X,y_{|\mY|})$, we obtain
\begin{align*}
&\Chmaxeps{3|\mY|\eps}{X}{ UM,Y=y}\\
&~~~~~~~~~~~~~\leq \Chmaxeps{3|\mY|\eps}{f(X,y_1), \dots ,f(X,y_{|\mY|})}{ UM,Y=y} \\
&~~~~~~~~~~~~~ \leq \sum_{y' \in \mY} \Chmaxeps{3\eps}{f(X,y')}{UM,Y=y}  \\
 &~~~~~~~~~~~~~=0\;.
\end{align*}
where we used Lemma \ref{lem:mono-max} and the subadditivity of the max-entropy (Lemma~\ref{lem:sub-add}).
\end{proof}

Let $P_{XY}$ be the input distribution to the ideal primitive. Then the security of the protocol implies the following lemma. 
\begin{lemma}
For any protocol that is an $\eps$-secure implementation of $f:\mX \times \mY \rightarrow \mZ$ from a primitive $P_{UV}$ in the semi-honest model,
\label{lem:min-entropyLowerBound}
\[\Chminteps{\eps+\eps'}{X}{VM} \geq \Chminteps{\eps'}{X}{Y f(X,Y)}\;,\]
for any $\eps'\geq 0$.
\end{lemma}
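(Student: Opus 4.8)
The plan is to mirror the structure of the proof of Lemma~\ref{lem:entropyLowerBound}, but to work with smooth min-entropy instead of Shannon entropy. First I would invoke the security of the protocol for Bob: there exists a simulator $S_B$ such that $\dis(P_{XYS_B(Y,f(X,Y))},P_{XYVM})\leq\eps$. Since restricting to a subsystem (here, dropping $Y$ and keeping $X$ together with $(V,M)$) cannot increase the statistical distance, we also get $\dis(P_{XS_B(Y,f(X,Y))},P_{XVM})\leq\eps$, where on the left we view $S_B(Y,f(X,Y))$ jointly with $Y$ if convenient, or just as a random variable conditioning $X$.

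Second, I would use the ``smoothing via closeness'' property of the conditional min-entropy proved in the appendix (this is exactly the analogue of Lemma~\ref{lem:h-smooth} for $\Hop_{\emin}$): if two joint distributions are $\eps$-close, then the $(\eps+\eps')$-smooth min-entropy of one is at least the $\eps'$-smooth min-entropy of the other. Applying this to $P_{XVM}$ and $P_{X\,S_B(Y,f(X,Y))}$ gives
\begin{align*}
\Chminteps{\eps+\eps'}{X}{VM}\geq \Chminteps{\eps'}{X}{S_B(Y,f(X,Y))}\;.
\end{align*}
Third, I would observe that $S_B(Y,f(X,Y))$ is a randomized function of $(Y,f(X,Y))$, so there is a Markov chain $\Markov{X}{(Y,f(X,Y))}{S_B(Y,f(X,Y))}$, and hence by the data-processing / monotonicity property of conditional min-entropy under local processing of the conditioning system (the analogue of \eqref{eq:H-markov}, again established in the appendix as one of the basic properties of $\Chminteps{\cdot}{\cdot}$), we have
\begin{align*}
\Chminteps{\eps'}{X}{S_B(Y,f(X,Y))}\geq \Chminteps{\eps'}{X}{Y\,f(X,Y)}\;.
\end{align*}
Chaining the two displayed inequalities yields the claim.

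The only genuine subtlety — and the step I would be most careful about — is making sure the two properties I am invoking (robustness of smooth min-entropy under $\eps$-closeness with the additive $\eps+\eps'$ bookkeeping, and monotonicity under processing the conditioning register) are indeed among the properties of $\Chminteps{\cdot}{\cdot}$ collected in Appendix~\ref{app:smooth}; both are standard for this variant of conditional min-entropy and are stated there, so no new work is needed. A minor point is the convention for subnormalized distributions when smoothing: I would keep the event $\Omega$ on Bob's side consistent across the chain, which is automatic since $S_B$ only processes the conditioning variables and never touches $X$, so the optimal smoothing event for the $X$-vs-$Yf(X,Y)$ pair is also a valid smoothing event after post-processing.
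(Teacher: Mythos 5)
Your proof follows the paper's own argument exactly: invoke the simulator $S_B$, pass from $P_{XVM}$ to $P_{X\,S_B(Y,f(X,Y))}$ at a cost of one additional $\eps$ in the smoothing parameter, and then apply the data-processing/monotonicity property (Lemma~\ref{lem:markov-t} together with Lemma~\ref{lem:mono2}) for the Markov chain $\Markov{X}{Yf(X,Y)}{S_B(Y,f(X,Y))}$. One small caveat: the robustness of $\Chminteps{\cdot}{\cdot}{\cdot}$ under $\eps$-closeness that you cite as ``proved in the appendix'' is in fact not stated there as a lemma --- the paper also uses it tacitly in this very proof --- so you should verify it directly from the definition (via a coupling argument on the smoothing event) rather than attribute it to Appendix~\ref{app:smooth}.
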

\begin{proof}
The security of the protocol implies that there exists a randomized function $S_B$, the simulator, such that $\dis(P_{X Y S_B(Y,f(X,Y))},P_{X Y VM})\leq \eps$. Therefore, we obtain
\begin{align*}
\Chminteps{\eps+\eps'}{X}{VM} &\geq \Chminteps{\eps'}{X}{S_B(Y,f(X,Y))}\\
&\geq \Chminteps{\eps'}{X }{ Y f(X,Y)}\;,
\end{align*}
where we used Lemma~\ref{lem:markov-t} in the second inequality.
\end{proof}

\begin{theorem} \label{thm:Hmin-imposs-n}
Let $f:\mX \times \mY \rightarrow \mZ$ be a function that satisfies \eqref{nonTrivialF-1}. If there exists a protocol having access to a primitive $P_{UV}$ that implements $f$ with an error of at most $\eps$ in the semi-honest model, then
\begin{align*}
\Chminteps{(3|\mY|+1)\eps+\eps'}{U}{V} \geq \max_{y}\Chminteps{\eps'}{X}{f(X,y)}\;,
\end{align*}
for any $\eps'\geq 0$.
\end{theorem}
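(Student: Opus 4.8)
The plan is to transcribe the proof of Theorem~\ref{thm:H-imposs-n} almost verbatim, replacing every appeal to a property of the Shannon entropy (chain rule, monotonicity) by the corresponding property of the smooth min- and max-entropies established in Appendix~\ref{app:smooth}, and replacing Lemmas~\ref{lem:entropyUpperBound-n} and~\ref{lem:entropyLowerBound} by their smooth counterparts, Lemma~\ref{lem:max-entropyUpperBound-n} and Lemma~\ref{lem:min-entropyLowerBound}. Throughout, fix $y\in\mY$, let Alice and Bob hold uniformly random inputs $X$ and $Y$, let $M$ be the transcript, and let $\eps'\geq 0$ be arbitrary.

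First I would record the two ``endpoints''. From Lemma~\ref{lem:max-entropyUpperBound-n} and the fact that conditioning on more can only decrease the max-entropy (Lemma~\ref{lem:mono-max}) we obtain the upper endpoint $\Chmaxeps{3|\mY|\eps}{X}{UVM,Y=y}\leq\Chmaxeps{3|\mY|\eps}{X}{UM,Y=y}=0$. For the lower endpoint I would run the argument of Lemma~\ref{lem:min-entropyLowerBound} with Bob's input fixed to $y$: averaging the (per-$(x,y)$) security guarantee over uniform $X$ gives a simulator $S_B$ with $\dis(P_{X\,S_B(y,f(X,y))},P_{XVM\mid Y=y})\leq\eps$; since $S_B(y,f(X,y))$ is a randomized function of $f(X,y)$, the same reasoning as in the proof of Lemma~\ref{lem:min-entropyLowerBound} (using data processing, Lemma~\ref{lem:markov-t}) yields $\Chminteps{\eps+\eps'}{X}{VM,Y=y}\geq\Chminteps{\eps'}{X}{f(X,y)}$.

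The core step is the smooth analogue of the chain-rule manipulation $\cHS{X}{VM,Y=y}\leq\cHS{U}{V}+\cHS{X}{UVM,Y=y}$ used in Theorem~\ref{thm:H-imposs-n}. I would chain three facts from Appendix~\ref{app:smooth}: (i) adding a system does not decrease the conditional smooth min-entropy, $\Chminteps{\eps+\eps'}{X}{VM,Y=y}\leq\Chminteps{\eps+\eps'}{UX}{VM,Y=y}$; (ii) the chain rule for the smooth entropies, $\Chminteps{\eps+\eps'}{UX}{VM,Y=y}\leq\Chminteps{(3|\mY|+1)\eps+\eps'}{U}{VM,Y=y}+\Chmaxeps{3|\mY|\eps}{X}{UVM,Y=y}$, where the max-entropy term vanishes by the upper endpoint; (iii) monotonicity under dropping $M$ from the conditioning, together with $P_{UV\mid Y=y}=P_{UV}$ (because $Y$ is independent of $(U,V)$), giving $\Chminteps{(3|\mY|+1)\eps+\eps'}{U}{VM,Y=y}\leq\Chminteps{(3|\mY|+1)\eps+\eps'}{U}{V}$. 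Stringing these together yields $\Chminteps{(3|\mY|+1)\eps+\eps'}{U}{V}\geq\Chminteps{\eps+\eps'}{X}{VM,Y=y}$, and combining with the lower endpoint and maximizing over $y\in\mY$ proves the theorem.

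The main obstacle I expect is the smoothing bookkeeping in step~(ii): one needs a chain rule for these specific notions $\Chminteps{\cdot}{\cdot}{\cdot}$ and $\Chmaxeps{\cdot}{\cdot}{\cdot}$ in which the smoothing parameters simply add, with no extra additive $\log(1/\cdot)$ loss, so that the $3|\mY|\eps$ coming from Lemma~\ref{lem:max-entropyUpperBound-n} and the $\eps+\eps'$ coming from Lemma~\ref{lem:min-entropyLowerBound} combine to exactly $(3|\mY|+1)\eps+\eps'$. Equivalently one can avoid an explicit chain rule: since $\Chmaxeps{3|\mY|\eps}{X}{UM,Y=y}=0$ means that, outside an event of probability $3|\mY|\eps$ conditioned on $Y=y$, the value $X$ is a deterministic function $g(U,M)$, applying the data-processing inequality for the smooth min-entropy (Lemma~\ref{lem:markov-t}) to this (sub-normalized) description and paying the $3|\mY|\eps$ as extra smoothing gives $\Chminteps{(3|\mY|+1)\eps+\eps'}{U}{VM,Y=y}\geq\Chminteps{\eps+\eps'}{X}{VM,Y=y}$ directly. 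Either way, isolating the exactly-right appendix statement is where the real work lies; everything else is a line-by-line copy of the proof of Theorem~\ref{thm:H-imposs-n}.
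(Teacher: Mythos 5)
Your proposal is correct and follows essentially the same route as the paper. The paper also fixes $y$, uses Lemma~\ref{lem:max-entropyUpperBound-n} together with max-entropy monotonicity to get $\Chmaxeps{3|\mY|\eps}{X}{UVM,Y=y}=0$, applies the (conditioned-on-$Y=y$) version of Lemma~\ref{lem:min-entropyLowerBound} for the lower endpoint, and for the middle step invokes Lemma~\ref{lem:mutual-inf2}, which is exactly the additive-smoothing ``mutual-information'' inequality $\Chminteps{\eps}{X}{Z}-\Chmaxeps{\eps'}{X}{YZ}\leq\Chminteps{\eps+\eps'}{Y}{Z}$ you were hoping exists; your steps (i) and (ii) together are precisely this lemma (indeed the remark after its proof states the strengthened $\Hmin^\eps(XY|Z)-\Hmax^{\eps'}(X|YZ)\leq\Hmin^{\eps+\eps'}(Y|Z)$ variant you phrased as a chain rule). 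Your step~(iii) matches the paper's final inequality in the chain, using min-entropy monotonicity (Lemma~\ref{lem:mono2}) plus independence of $Y$ from $(U,V)$ to drop $M$ and the conditioning on $Y=y$.
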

\begin{proof}
Let $y \in \mY$. It follows from Lemmas~\ref{lem:max-entropyUpperBound-n} and \ref{lem:mono2} that
\[\Hmax^{3|\mY|\eps}(X|UVM,Y=y)\leq \Hmax^{3|\mY|\eps}(X|UM,Y=y)=0\;.\]
Therefore, Lemma \ref{lem:mono2} and \ref{lem:mutual-inf2} implies that 
\begin{align*}
\Chminteps{\eps+\eps'}{X}{VM,Y=y}&-\Chmaxeps{3|\mY|\eps}{X}{UVM,Y=y}\\&\leq \Chminteps{(3|\mY|+1)\eps+\eps'}{U}{VM,Y=y}\\
&\leq \Chminteps{(3|\mY|+1)\eps+\eps'}{U}{V}\;.
\end{align*}
We can use Lemma~\ref{lem:min-entropyLowerBound} to obtain
\begin{align*}
\Chminteps{\eps'}{X}{f(X,y)}&\leq \Chminteps{\eps+\eps'}{X}{ VM,Y=y}\;.
\end{align*}
The statement follows by maximizing over all $y$. 
\end{proof}
\cancel{
Note that the proof holds for any definition of the smoothed conditional min-entropy such that Lemmas~\ref{lem:mutual-inf},~\ref{lem:mono} and~\ref{lem:markov} hold. Thus, Lemmas~\ref{lem:mutual-inf2},~\ref{lem:mono2} and~\ref{lem:markov2} imply the following corollary.
\begin{corollary} \label{cor:Hmin-imposs-n}
Let $f:\mX \times \mY \rightarrow \mZ$ be a function that satisfies \eqref{nonTrivialF-1}. Let a protocol having access to $P_{UV}$ be an $\eps$-secure implementation of $f$ in the semi-honest model. Then
\begin{align*}
\Chminteps{(3|\mY|+1)\eps}{U}{V} \geq \max_{y}\CHmint(X \mid f(X,y))\;.
\end{align*}
\end{corollary}
}

\subsection{Lower Bounds for Protocols implementing OT}\label{lower-bounds:ot}

\begin{corollary}\label{cor:ot-imposs}
Any protocol that implements $M$ instances of $\OT{1}{N}{K}$ from  $m$ instances of $\OT{1}{n}{k}$ with an error of at most $0\leq \eps < \frac{1}{2(3n+1)}$ in the semi-honest model must satisfy
\cancel{
\begin{align*}
 m(n-1)k\geq M(N-1)K-\log\left(\frac{1}{1-(3n+1)\eps}\right)\;.
\end{align*}
In particular, if $0\leq \eps \leq \frac{1}{2(3n+1)}$, then 
}
\begin{align*}
 m(n-1)k\geq M(N-1)K-(6n+2)\eps\;.
\end{align*}
\end{corollary}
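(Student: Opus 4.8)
The plan is to instantiate Theorem~\ref{thm:Hmin-imposs-n} with an appropriate restriction of the target primitive and then evaluate the resulting min-entropies for $m$ independent copies of $\OT{1}{n}{k}$. First I would recall that $m$ instances of $\OT{1}{n}{k}$ are equivalent in the semi-honest model to the input-free primitive $P_{UV}$ obtained by choosing all inputs uniformly at random; for this primitive $\Chminteps{0}{U}{V}$ equals $m(n-1)k$, since given Bob's choice indices and his received strings, the most likely value of the remaining $m(n-1)k$ uniform bits of Alice's input has probability $2^{-m(n-1)k}$, and the smooth version with a small parameter only helps. Next I would set up the function $f$ to be the restriction of $M$ copies of $\OT{1}{N}{K}$ in which Alice fixes the first $N-1$ of the $N$ strings (in each copy) to a constant and only her last string ranges over $\{0,1\}^{MK}$ effectively — more precisely, I restrict so that $\mX = \{0,1\}^{M(N-1)K}$ viewed as the free part of Alice's inputs, Bob's index set $\mY$ is the product of the per-copy index sets, and condition~\eqref{nonTrivialF-1} is met because for each bit position of Alice's free input there is a choice of Bob's indices that reveals it. For this $f$ and uniform $X$ we have $\Chminteps{0}{X}{f(X,y)} = M(N-1)K$ for the index $y$ that points at all the free strings, so $\max_y \Chminteps{\eps'}{X}{f(X,y)} \geq M(N-1)K$ already at $\eps'=0$.

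The main calculation is then: apply Theorem~\ref{thm:Hmin-imposs-n} with $\eps' = 0$ to get
\begin{align*}
\Chminteps{(3|\mY|+1)\eps}{U}{V} \geq M(N-1)K\;,
\end{align*}
and combine this with the upper bound $\Chminteps{(3|\mY|+1)\eps}{U}{V} \leq \Chminteps{0}{U}{V} = m(n-1)k$ coming from monotonicity of the smooth min-entropy in the smoothing parameter, \emph{plus} a correction term quantifying how much smoothing can increase $\Chminteps{}{U}{V}$ above $m(n-1)k$. Here I need a bound of the form $\Chminteps{\delta}{U}{V} \leq \Chminteps{0}{U}{V} + \log(1/(1-\delta))$, which holds because smoothing by an event $\Omega$ of probability $\geq 1-\delta$ can decrease $\sum_y \max_x P_{X\Omega|Y=y}(x)$ by at most a factor $(1-\delta)$ (each term drops by at most the mass removed, and the total mass removed is at most $\delta$, but more carefully one uses that the sub-normalized maxima sum to at least $(1-\delta)\sum_y \max_x P_{XY}(x,y)$). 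With $|\mY| = n^{Mm}$-type size — actually $|\mY|$ is the number of Bob index tuples in the \emph{restricted} game, which for the restriction giving $\mX=\{0,1\}^{M(N-1)K}$ is just a single tuple together with enough others to satisfy~\eqref{nonTrivialF-1}; arranging $|\mY| \leq 2n$ (or a similarly small constant times $n$) per the restriction used in Corollary~\ref{cor:H-imposs-n}, the smoothing parameter is $\delta = (3|\mY|+1)\eps \leq (6n+1)\eps$ or so, giving
\begin{align*}
m(n-1)k \geq M(N-1)K - \log\!\left(\frac{1}{1-(6n+1)\eps}\right)\;,
\end{align*}
and then using $\log(1/(1-x)) \leq 2x$ for $x \in [0,1/2]$ (valid since $(6n+1)\eps < 1/2$ under the hypothesis $\eps < 1/(2(3n+1))$, noting the per-copy $|\mY|$ is tuned to make $3|\mY|+1 \leq 3n+1$) yields the stated $M(N-1)K - (6n+2)\eps$.

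The step I expect to be the main obstacle is pinning down exactly which restriction of $\OT{1}{N}{K}$ to use so that simultaneously (i) condition~\eqref{nonTrivialF-1} holds, (ii) $\max_y \Chminteps{0}{X}{f(X,y)}$ is as large as $M(N-1)K$, and (iii) $|\mY|$ is small enough that $3|\mY|+1 \leq 3n+1$, so that the smoothing parameter matches $(3n+1)\eps$ and the hypothesis $\eps < 1/(2(3n+1))$ is exactly what is needed. I believe the right choice mirrors the restriction in Corollary~\ref{cor:I-imposs-n}: fix all but one of the $N$ strings in each $\OT{1}{N}{K}$ copy, let Bob's restricted input per copy range over just the $n$ indices (or the two relevant ones), so $|\mY|$ is essentially $n$, and take $X$ to be the concatenation of the free strings. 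The only remaining care is the precise constant: tracking $3|\mY|+1$ versus $6n+2$ and the factor-of-two bound on $\log(1/(1-x))$ should close the gap, but I would double-check whether the "$+1$" in $(3|\mY|+1)\eps$ forces $|\mY|\le n$ exactly or whether a slightly larger $|\mY|$ still fits inside the $(6n+2)\eps$ slack. The monotonicity and the "smoothing increases min-entropy by at most $\log(1/(1-\delta))$" lemmas are the ones from Appendix~\ref{app:smooth} referenced in the theorem's proof, so I would cite those rather than reprove them.
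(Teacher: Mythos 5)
Your overall strategy matches the paper's proof: apply Theorem~\ref{thm:Hmin-imposs-n} with $\eps'=0$ to obtain a lower bound $\Chminteps{\delta}{U}{V}\geq M(N-1)K$, then upper bound $\Chminteps{\delta}{U}{V}$ for the OT randomness by $m(n-1)k+\log\frac{1}{1-\delta}$, and finally trade $\log\frac{1}{1-\delta}$ for a linear term using $\log(1/x)\leq 2(1-x)$ on $[1/2,1]$. The paper does exactly this.

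However, the step where you justify the upper bound has a genuine gap. You assert, as a general lemma, that $\Chminteps{\delta}{U}{V}\leq \Chminteps{0}{U}{V}+\log\frac{1}{1-\delta}$, and argue that the sub-normalized maxima satisfy $\sum_v\max_u P_{UV\Omega}(u,v)\geq(1-\delta)\sum_v\max_u P_{UV}(u,v)$. This multiplicative bound is \emph{false} for general distributions. For example, take $U,V\in\{0,1\}$ with $P_{UV}(0,0)=0.5$, $P_{UV}(1,0)=0.3$, $P_{UV}(0,1)=P_{UV}(1,1)=0.1$, so $\sum_v\max_u P_{UV}(u,v)=0.6$. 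Choosing $\Omega$ with $P_{\Omega\mid U=0,V=0}=0.8$ and $P_{\Omega\mid U=u,V=v}=1$ otherwise gives $\Pr[\Omega]=0.9$ (so $\delta=0.1$) but $\sum_v\max_u P_{UV\Omega}(u,v)=0.4+0.1=0.5<(1-\delta)\cdot 0.6=0.54$. The true general bound is only additive, $\rminTE{\delta}{U}{V}\geq\rminTE{0}{U}{V}-\delta$, which is too weak here. What rescues the argument is a structural property of the OT randomness: given $V=v$, $U$ is \emph{exactly uniform} over a set of size $2^{m(n-1)k}$, so
\begin{align*}
\max_u P_{U\Omega\mid V=v}(u)\ \geq\ 2^{-m(n-1)k}\sum_u P_{U\Omega\mid V=v}(u)\ =\ 2^{-m(n-1)k}\,P_{\Omega\mid V=v}\;,
\end{align*}
and averaging over $v$ against $P_V$ gives $\rminTE{\delta}{U}{V}\geq(1-\delta)\,2^{-m(n-1)k}$, hence $\Chminteps{\delta}{U}{V}\leq m(n-1)k-\log(1-\delta)$. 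The paper states this by exhibiting the optimal smoothing event $P_{\Omega\mid U=u,V=v}=1-\delta$. You should replace your general-purpose lemma by this OT-specific flatness argument.

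A secondary remark: the $\lvert\mY\rvert$ appearing in Theorem~\ref{thm:Hmin-imposs-n} is the size of Bob's (restricted) input domain for the \emph{target} primitive $f$, which for $M$ parallel instances of $\OT{1}{N}{K}$ can be reduced to $N$ by coordinating the per-copy choice indices, not to anything related to the resource parameter $n$. Your phrases ``just the $n$ indices'' and ``$3\lvert\mY\rvert+1\leq 3n+1$'' conflate the target's $N$ with the resource's $n$; this same $n$-versus-$N$ confusion appears in the paper's own statement and proof of the corollary (almost certainly a typographical slip), so your uncertainty here is understandable, but the correct dependence of the smoothing parameter is on $N$, not on $n$.
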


\begin{proof}
From Theorem \ref{thm:Hmin-imposs-n} follows that 
\begin{align}\label{eq:min-ent-low}
 \Chminteps{(3n+1)\eps}{U}{V} \geq M(N-1)K\;.
\end{align}
For the distribution $P_{UV}$ of randomized OTs, the entropy $\Chminteps{\bar \eps}{U}{V}$ with $0\leq \bar \eps < 1$ is maximized by the event $\Omega$ with $P_{\Omega|U=u,V=v}=1-\bar \eps$ for all $u,v$ in the support of $P_{UV}$. Therefore, we have 
\begin{align}\label{eq:min-ent-up}
\Chminteps{(3n+1)\eps}{U}{V}&\leq -\log(2^{-m(n-1)k}(1-(3n+1)\eps))\nonumber\\
&= m(n-1)k-\log(1-(3n+1)\eps)\;.
\end{align}
The statement follows from the fact that $\log(1/\eps)\leq 2(1-\eps)$ for $1/2\leq\eps\leq 1$.
\end{proof}

This corollary implies that there is no protocol that extends $\OT{1}{2}{1}$ in the semi-honest model.
\begin{corollary}
Any protocol that implements $m+1$ instances of $\OT{1}{2}{1}$ in the semi-honest model using $m$ instances of $\OT{1}{2}{1}$ must have an error 
$ \eps \geq 1/14.$
\end{corollary}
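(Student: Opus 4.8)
The plan is to derive this as a direct numerical instantiation of Corollary~\ref{cor:ot-imposs}. We are looking at implementing $M = m+1$ instances of $\OT{1}{2}{1}$ from $m$ instances of $\OT{1}{2}{1}$, so we set $N = n = 2$ and $K = k = 1$. Corollary~\ref{cor:ot-imposs} applies provided the error satisfies $0 \leq \eps < \frac{1}{2(3n+1)} = \frac{1}{14}$, and under that hypothesis it gives
\begin{align*}
 m(n-1)k \;\geq\; M(N-1)K - (6n+2)\eps\;,
\end{align*}
which with our parameters reads $m \geq (m+1) - 8\eps$, i.e. $8\eps \geq 1$, i.e. $\eps \geq 1/8$.

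First I would observe that this conclusion, $\eps \geq 1/8$, together with the standing hypothesis $\eps < 1/14$ of the corollary, is a contradiction (since $1/8 > 1/14$). Hence no protocol with error $\eps < 1/14$ can exist, which is precisely the statement that any such protocol must have error $\eps \geq 1/14$. So the argument is a one-line proof by contradiction: assume a protocol with $\eps < 1/14$ exists; apply Corollary~\ref{cor:ot-imposs} with $M=m+1$, $N=n=2$, $K=k=1$; obtain $\eps \geq 1/8$; contradiction. I would write it exactly in that order.

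The only subtlety — and the one place I would double-check rather than call routine — is making sure the hypothesis ranges line up: Corollary~\ref{cor:ot-imposs} as stated requires the strict inequality $\eps < \frac{1}{2(3n+1)}$, and for $n=2$ that threshold is exactly $\frac{1}{14}$, which is why the bound in the final corollary comes out as $\eps \geq 1/14$ and not something sharper like $\eps \geq 1/8$. In other words, the corollary's own applicability window is what caps the stated lower bound; one cannot conclude $\eps \geq 1/8$ unconditionally from this route because Corollary~\ref{cor:ot-imposs} is only invoked inside the regime $\eps < 1/14$. I expect no real obstacle here beyond being careful that $6n+2 = 14$ when $n=2$ (since $(N-1)K = 1$ and the extra instance contributes the ``$+1$'' that forces $M > m$), so that the arithmetic $m \geq m + 1 - 14\eps$ — wait, I should recompute: with $n=2$, $(6n+2)\eps = 14\eps$, giving $m \geq (m+1) - 14\eps$, hence $\eps \geq 1/14$ directly. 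I would present the proof with this cleaner arithmetic: $m(n-1)k = m$, $M(N-1)K = m+1$, $(6n+2)\eps = 14\eps$, so $m \geq m+1-14\eps$ yields $\eps \geq \frac{1}{14}$.

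Thus the proof is: \textit{Suppose such a protocol exists with error $\eps$. If $\eps < \frac{1}{14} = \frac{1}{2(3\cdot 2+1)}$, Corollary~\ref{cor:ot-imposs} applied with $M=m+1$, $N=n=2$, $K=k=1$ gives $m \geq (m+1) - 14\eps$, i.e. $\eps \geq \frac{1}{14}$, a contradiction. Hence $\eps \geq \frac{1}{14}$.}
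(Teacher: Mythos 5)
Your final argument is correct and is exactly the route the paper intends: the corollary is stated in the paper as an immediate consequence of Corollary~\ref{cor:ot-imposs} with $M=m+1$, $N=n=2$, $K=k=1$, giving $m \geq (m+1)-14\eps$ and hence $\eps \geq 1/14$ (with the applicability threshold $\frac{1}{2(3n+1)}=\frac{1}{14}$ coinciding with the derived bound). The detour through ``$8\eps$'' is an arithmetic slip that you caught and corrected yourself; the cleaned-up one-line version at the end is right and should be all you write.
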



\subsection{Lower Bounds for Equality Function}
\begin{corollary}\label{cor:eq}
Let a protocol having access to a $P_{UV}$ be an $\eps$-secure implementation of $\EQ{n}$ in the semi-honest model. Then 
\[\cHS{U}{V}\geq (1-\eps)k-(3\cdot 2^k-1)(\eps + h(\eps))-1\;,\]
and
\[\Hmin^{(3\cdot2^k+1)\eps}(U|V)\geq k-1\;,\]
for all $0<k\leq n$. If  $0\leq \eps \leq 1/(6\cdot2^k+2)$ and  $P_{UV}$ is equivalent to $m$ instances of $\OT{1}{2}{1}$, then
\cancel{
\[m \geq k-\log\left(\frac{1}{1-(3\cdot2^k+1)\eps}\right),\]
for all $0<k\leq n$. In particular, if $0\leq \eps \leq \frac{1}{2(3\cdot2^k+1)}$, then }
\begin{align*}
 m \geq k-1-(6\cdot 2^k+2)\eps\;,
\end{align*}
for all $0<k\leq n$.
\end{corollary}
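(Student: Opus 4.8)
The plan is to apply the general bounds for function evaluation --- Theorem~\ref{thm:H-imposs-n} for the Shannon-entropy statement and Theorem~\ref{thm:Hmin-imposs-n} for the min-entropy statement --- to a suitably restricted version of $\EQ{n}$, and then specialize $P_{UV}$ to $m$ copies of $\OT{1}{2}{1}$. First I would restrict the equality function to inputs of length $k \leq n$: consider $f = \EQ{k}$ viewed as a function $\{0,1\}^k \times \{0,1\}^k \to \{0,1\}$. Any $\eps$-secure implementation of $\EQ{n}$ yields an $\eps$-secure implementation of $\EQ{k}$ (fix the trailing $n-k$ coordinates of both inputs to zero), so lower bounds for $\EQ{k}$ transfer. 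The function $\EQ{k}$ satisfies \eqref{nonTrivialF-1} because for $x \neq x'$ one can take $y = x$, giving $f(x,x)=1 \neq 0 = f(x',x)$; and it has $|\mY| = 2^k$, $|\mZ| = 2$, $|\mX| = 2^k$.

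For the Shannon bound, I would plug these parameters into Theorem~\ref{thm:H-imposs-n} with $y$ chosen so that $\cHS{X}{f(X,y)}$ is as large as possible. Taking $y=y_0$ fixed and $X$ uniform over $\{0,1\}^k$, the output $f(X,y_0) = \EQ{k}(X,y_0)$ is $1$ with probability $2^{-k}$, so $\cHS{X}{f(X,y_0)} = H(X) - I(X; f(X,y_0)) = k - (\text{something at most }1)$; more carefully, $H(X \mid f(X,y_0)) = k - h(2^{-k}) - 2^{-k} \cdot 0 \geq k - 1$, and in fact one can write it as $(1-2^{-k})\cdot(\text{a}) + \ldots$ --- the cleanest bound to quote is $\cHS{X}{f(X,y_0)} \geq (1-\eps)k - 1$ after also accounting for the $\Pr[Z\neq f]\leq\eps$ slack, matching the corollary's statement; I would double-check this constant but it is a routine computation with $H(X\mid f(X,y_0)) = \Pr[f=0]\log(2^k-1) + h(2^{-k})$. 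With $|\mY| = 2^k$ the penalty term becomes $(3\cdot 2^k - 1)(\eps\log 2 + h(\eps)) = (3\cdot 2^k-1)(\eps + h(\eps))$ and the $\eps\log|\mX| = \eps k$ term gets absorbed, yielding exactly $\cHS{U}{V}\geq (1-\eps)k - (3\cdot2^k-1)(\eps+h(\eps)) - 1$.

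For the min-entropy bound I would similarly invoke Theorem~\ref{thm:Hmin-imposs-n} with $\eps' = 0$: with $|\mY| = 2^k$ it gives $\Chminteps{(3\cdot2^k+1)\eps}{U}{V} \geq \max_y \Chminteps{0}{X}{f(X,y)}$, and for $y=y_0$ and $X$ uniform, $2^{-\Chminteps{0}{X}{f(X,y_0)}} = \sum_z \max_x P_{Xf}(x,z) = 2^{-k} + (2^k-1)2^{-k} \cdot \ldots$; the guessing probability of $X$ given the equality bit is $2^{-k}\cdot 1 + (1-2^{-k})\cdot\frac{1}{2^k-1}\cdot(2^k-1)$-ish --- concretely it is $\frac{2}{2^k}$ roughly, giving $\Chminteps{0}{X}{f(X,y_0)} \geq k-1$. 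Then specializing $P_{UV}$ to $m$ instances of $\OT{1}{2}{1}$, I use the same upper-bound argument as in the proof of Corollary~\ref{cor:ot-imposs}: the optimal smoothing event $\Omega$ is uniform on the support, so $\Chminteps{\bar\eps}{U}{V} \leq m - \log(1-\bar\eps)$ with $\bar\eps = (3\cdot2^k+1)\eps$, and combining with $\log(1/\eps)\leq 2(1-\eps)$ for $\eps\in[1/2,1]$ (equivalently $-\log(1-\bar\eps)\leq 2\bar\eps$ in the relevant range, after checking the hypothesis $0\leq\eps\leq 1/(6\cdot2^k+2)$ keeps $\bar\eps\leq 1/2$) gives $m \geq k-1 - 2(3\cdot2^k+1)\eps = k-1-(6\cdot2^k+2)\eps$.

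The main obstacle --- really the only nontrivial point --- is getting the additive constants and the precise form of $\cHS{X}{f(X,y_0)}$ and $\Chminteps{0}{X}{f(X,y_0)}$ right, since the $-1$ in both bounds has to come out exactly and the factor of $2$ in $(6\cdot2^k+2)\eps$ depends on invoking $\log(1/\eps)\le 2(1-\eps)$ over the correct range; everything else is a direct substitution into the already-proved theorems. I would also need to confirm that $\EQ{k}$ (not some larger restriction) is the right function to restrict to so that $|\mY|=2^k$ rather than $2^n$, which is what keeps the penalty term from blowing up.
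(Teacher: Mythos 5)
Your proposal is correct and follows the same route as the paper: restrict both players' inputs to a subset of size $2^k$ (the paper phrases this as "restrict the input domains of both players to the same subsets of size $2^k$," which is exactly your "fix the trailing $n-k$ coordinates"), verify condition \eqref{nonTrivialF-1}, and then plug $|\mX|=|\mY|=2^k$, $|\mZ|=2$ into Theorems~\ref{thm:H-imposs-n} and~\ref{thm:Hmin-imposs-n}. Your computations $\cHS{X}{f(X,y_0)}=k-h(2^{-k})\geq k-1$ and $\Chminteps{0}{X}{f(X,y_0)}=k-1$ are exact, and the final step specializing to $m$ copies of $\OT{1}{2}{1}$ via the bound $\Chminteps{\bar\eps}{U}{V}\leq m-\log(1-\bar\eps)$ and $\log(1/(1-\bar\eps))\leq 2\bar\eps$ for $\bar\eps\leq 1/2$ is the same argument used in Corollary~\ref{cor:ot-imposs}, correctly applied.
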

\begin{proof}
We can restrict the input domains of both players to the same subsets of size $2^k$. Condition \eqref{nonTrivialF-1} will still be satisfied. Thus, the corollary follows immediately from Theorems~\ref{thm:H-imposs-n} and \ref{thm:Hmin-imposs-n}. 
\end{proof}
There exists a secure reduction of $\EQ{n}$ to $\EQ{k}$ (\cite{BeiMal04}): Alice and Bob compare $k$ inner products of their inputs with random strings using $\EQ{k}$. This protocol is secure in the semi-honest model with an error of at most $2^{-k}$. Since there exists a circuit to implement $\EQ{k}$ with $k$ XOR and $k$ AND gates, it follows from \cite{GolVai87} that $\EQ{k}$ can be securely implemented using $k$ instances of $\OT{1}{4}{1}$ or $3k$ instances of $\OT{1}{2}{1}$ in the semi-honest model. Since $m$ instances of $\OT{1}{2}{1}$ are equivalent to a primitive $P_{UV}$ with $H(U|V)=m$, the bound of Corollary~\ref{cor:eq} is optimal up to a factor of~$3$. 
We can improve the above construction with the following protocol that computes additive shares of $(x_1 \oplus y_1) \wedge (x_2 \oplus y_2)$ using two instances of $\OT{1}{2}{1}$: Alice chooses two random bits $r_1,r_2$ and inputs $r_1,r_1 \oplus x_1$ to the first and $r_2,r_2 \oplus x_2$ to the second instance. Bob uses $y_2$ as the choice bit for the first and $y_1$ as the choice bit for the second instance of OT. Bob receives two outputs $z_1=r_1 \oplus x_1y_2$ and $z_2=r_2 \oplus x_2y_1$. Setting $a=r_1\oplus r_2\oplus x_1x_2$ and $b=z_1\oplus z_2 \oplus y_1y_2$, we have $a \oplus b=x_1x_2\oplus y_1y_2\oplus x_1y_2\oplus x_2y_1=(x_1 \oplus y_1) \wedge (x_2 \oplus y_2)$. Thus, we can compute $\EQ{k}$ with $2(k-1)$ instances of $\OT{1}{2}{1}$. 


\subsection{Lower Bounds for Inner Product Function}
\begin{corollary}\label{cor:ip}
Let a protocol having access to a primitive $P_{UV}$ be an $\eps$-secure implementation of the inner product function $\IP{n}$ in the semi-honest model. Then it holds that
\begin{align*}
 \cHS{U}{V}&\geq n-1-4n(\eps+h(\eps))
\end{align*}
and
\begin{align*}
 \Chminteps{(3k+1)\eps}{U}{V}&\geq n-1\;.
\end{align*}
If  $P_{UV}$ is equivalent to $m$ instances of $\OT{1}{2}{1}$ and $0\leq \eps < 1/(6n+2)$, then
\cancel{
\[m \geq n-1-\log\left(\frac{1}{1-(3n+1)\eps}\right).\] 
In particular, if $0\leq \eps \leq \frac{1}{2(3n+1)\eps}$, then 
}
\begin{align*}
 m \geq n-1-(6n+2)\eps\;.
\end{align*}
\end{corollary}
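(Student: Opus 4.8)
The plan is to obtain Corollary~\ref{cor:ip} from Theorems~\ref{thm:H-imposs-n} and~\ref{thm:Hmin-imposs-n} in the same way Corollary~\ref{cor:eq} was obtained from them for $\EQ{n}$, by applying them to a suitable restriction of $\IP{n}$. The key observation is that $\IP{n}(x,e_i)=x_i$ for the $i$-th unit vector $e_i$. Hence I would restrict Bob's input domain to $\mY':=\{e_1,\dots,e_n\}$ while keeping Alice's domain $\mX'=\{0,1\}^n$, obtaining a function $f'$ with $|\mY'|=n$, $|\mX'|=2^n$ and output alphabet of size $|\mZ|=2$. This $f'$ still satisfies condition~\eqref{nonTrivialF-1}: for $x\neq x'$ pick a coordinate $i$ on which they differ, and then $f'(x,e_i)=x_i\neq x'_i=f'(x',e_i)$. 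Since any $\eps$-secure implementation of $\IP{n}$ is in particular an $\eps$-secure implementation of $f'$ (Bob simply never uses inputs outside $\mY'$), every lower bound proved for $f'$ transfers to $\IP{n}$.

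Next I would evaluate the two entropies of Alice's input that appear in the theorems. With $X$ uniform on $\{0,1\}^n$ and $f'(X,e_i)=X_i$ a single bit of $X$, we have $\cHS{X}{f'(X,e_i)}=n-1$ and, unsmoothed, $\Chminteps{0}{X}{f'(X,e_i)}=n-1$ (the best guess of $X$ given one of its bits succeeds with probability $2^{-(n-1)}$). Substituting $|\mY|=n$, $\log|\mZ|=1$, $\log|\mX|=n$ into Theorem~\ref{thm:H-imposs-n} gives $\cHS{U}{V}\geq (n-1)-(3n-1)(\eps+h(\eps))-\eps n$, and the crude estimate $(3n-1)(\eps+h(\eps))+\eps n\leq 4n(\eps+h(\eps))$ yields the first inequality of the corollary. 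Substituting the same parameters with smoothing parameter $\eps'=0$ into Theorem~\ref{thm:Hmin-imposs-n} gives the second inequality, $\Chminteps{(3n+1)\eps}{U}{V}\geq n-1$.

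For the $\OT{1}{2}{1}$ bound I would mimic the proof of Corollary~\ref{cor:ot-imposs}: when $P_{UV}$ is $m$ independent copies of (randomized) $\OT{1}{2}{1}$, the smooth min-entropy $\Chminteps{\bar\eps}{U}{V}$ is maximized by the event $\Omega$ with $P_{\Omega\mid U=u,V=v}=1-\bar\eps$ on the support of $P_{UV}$, giving $\Chminteps{\bar\eps}{U}{V}\leq m-\log(1-\bar\eps)$. Combined with $\Chminteps{(3n+1)\eps}{U}{V}\geq n-1$ this yields $m\geq n-1+\log(1-(3n+1)\eps)$, and since $\log(1/t)\leq 2(1-t)$ for $t\in[1/2,1]$, the hypothesis $\eps<1/(6n+2)$ (which forces $1-(3n+1)\eps>1/2$) converts this into $m\geq n-1-(6n+2)\eps$. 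There is no genuine obstacle here; the only point requiring care is the choice of restriction, and the unit vectors are the natural one, since they make $|\mY'|$ as small as a single-bit-output function allows ($n$) while keeping each $f'(x,e_i)$ informative about a full bit of $x$. Everything else is direct substitution into the two theorems plus the elementary logarithm estimate already invoked for Corollary~\ref{cor:ot-imposs}.
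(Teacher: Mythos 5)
Your proposal is correct and takes essentially the same route as the paper: restrict Bob's input to the unit vectors $e_i$, observe that the restricted function still satisfies condition~\eqref{nonTrivialF-1}, and then apply Theorems~\ref{thm:H-imposs-n} and~\ref{thm:Hmin-imposs-n} together with the $\OT{1}{2}{1}$ accounting from Corollary~\ref{cor:ot-imposs}. Your write-up in fact spells out the parameter substitutions and the final logarithm estimate more explicitly than the paper's one-line proof does.
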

\begin{proof}
Let $e_i \in \{0,1\}^n$ be the string that has a one at the $i$-th position and is zero otherwise. Let $\mS:=\{e_i:~1\leq i \leq n\}$. Then the protocol is an $\eps$-secure implementation of the restriction of the inner-product function to inputs from $\{0,1\}^n \times \mS$. Since this restricted function satisfies condition \eqref{nonTrivialF-1}, the statement follows from Theorem \ref{thm:H-imposs-n}.
\end{proof}
If $\eps\leq 1/(8n)$, then it immediately follows from Corollary \ref{cor:ip} that we need at least $n-2$ calls to  $\OT{1}{2}{1}$ to compute $\IP{n}$ with an error of at most $\eps$. Consider the following protocol from \cite{BeiMal04} that is adapted to $\OT{1}{2}{1}$: Alice chooses $r=(r_1,\dots,r_{n-1})$ uniformly at random and sets $r_n:=\oplus_{i=1}^{n-1}r_i$. Then, for each $i$, Alice inputs $a_{i,0}:=r_i$ and $a_{i,1}:=x_i\oplus r_i$ to the OT and Bob inputs $y_i$. Bob receives $z_i$ from the OTs and outputs $\oplus _{i=1}^{n}z_i$. Since $\oplus _{i=1}^{n}z_i=\oplus_{i=1}^{n}(x_iy_i\oplus r_i)=(\oplus_{i=1}^{n}x_iy_i)\oplus(\oplus_{i=1}^{n} r_i)=\oplus_{i=1}^{n}x_iy_i=\IP{n}(x,y)$, the protocol is correct. The security for Alice follows from the fact that $z_1,\dots,z_n$ is a uniformly random string subject to $\oplus _{i=1}^{n}z_i=\IP{n}(x,y)$. Thus, there exists a perfectly secure protocol that computes $\IP{n}$ from $n$ instances of $\OT{1}{2}{1}$. Hence, Corollary~\ref{cor:ip} is almost tight.

\subsection{Lower Bounds for Protocols implementing OLFE}
We will now show that Theorems \ref{thm:H-imposs-n} and \ref{thm:I-imposs-n} also imply bounds for \emph{oblivious linear function evaluation} ($\OLFE{q}$), which is defined as follows:
\begin{itemize}
\item 
For any finite field $GF(q)$ of size $q$, $\OLFE{q}$ is the primitive where Alice has an input $a,b \in GF(q)$ and Bob has an input $c \in GF(q)$. Bob receives $d = a + b \cdot c \in GF(q)$.
\end{itemize}
\begin{corollary} \label{cor:olfe1}
Let a protocol having access to $P_{UV}$ be an $\eps$-secure implementation of $m$ instances of
$\OLFE{q}$ in the semi-honest model. Then
\begin{align}
\cHS{U}{V} & \geq m \log q - 5 (\eps m \log q + h(\eps))\;,\label{olfe:first}\\
\cHS{V}{U} & \geq m \log q - 5 (\eps m \log q + h(\eps))\;,\label{olfe:second}\\
\cIS{U}{V}{\C{U}{V}} & \geq m \log q - 7(\eps m \log q + h(\eps))\label{olfe:third}\;.
\end{align}
\end{corollary}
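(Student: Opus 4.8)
The plan is to obtain each of~\eqref{olfe:first}--\eqref{olfe:third} from Theorems~\ref{thm:H-imposs-n} and~\ref{thm:I-imposs-n} by applying them to a restriction of the $m$-fold $\OLFE{q}$ functionality. Writing Alice's input as $(a_1,b_1,\dots,a_m,b_m)\in GF(q)^{2m}$, Bob's input as $(c_1,\dots,c_m)\in GF(q)^m$, and Bob's output as $(a_1+b_1c_1,\dots,a_m+b_mc_m)$, the first step is to fix all intercepts $a_i$ to $0$: a protocol that $\eps$-securely implements $m$ instances of $\OLFE{q}$ in particular $\eps$-securely implements the restricted function $f'(\mathbf b,\mathbf c):=(b_1c_1,\dots,b_mc_m)$ with $\mathbf b\in GF(q)^m$, so any lower bound proved for $f'$ transfers to $\OLFE{q}$.

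For~\eqref{olfe:first} I would further restrict Bob to the two inputs $\mathbf 0$ and $\mathbf 1$, so that $|\mY'|=2$ and $|\mX'|=|\mZ'|=q^m$. Condition~\eqref{nonTrivialF-1} holds because $f'(\cdot,\mathbf 1)$ is the identity, and since $f'(\cdot,\mathbf 0)\equiv\mathbf 0$ we have, for uniform $\mathbf B$, $\max_{y}\cHS{\mathbf B}{f'(\mathbf B,y)}\ge\cHS{\mathbf B}{f'(\mathbf B,\mathbf 0)}=m\log q$. Inserting these values into Theorem~\ref{thm:H-imposs-n} then yields~\eqref{olfe:first} (the constant $5$ corresponds to the sharper form of the bound that is actually derived in the proof of that theorem, where the coefficient of $\eps\log|\mZ'|$ is $3|\mY'|-2$). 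For~\eqref{olfe:third} I would instead keep Bob's full input domain $GF(q)^m$ and note that $y_1:=\mathbf 1$ witnesses~\eqref{conditionY1}, since $f'(\cdot,\mathbf 1)$ is injective, while $y_2:=\mathbf 0$ witnesses~\eqref{conditionY2}, since $f'(\cdot,\mathbf 0)$ is constant; Theorem~\ref{thm:I-imposs-n} with $|\mX'|=q^m$ then gives~\eqref{olfe:third}, and with it the bound $\IS{U}{V}\ge\cIS{U}{V}{\C{U}{V}}$ in the statement.

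For the remaining bound~\eqref{olfe:second} I would use that $\OLFE{q}$ is self-reverse, in the same spirit as the self-reversibility of $\OT{1}{2}{1}$ that was invoked earlier \cite{WolWul06}: one instance of $\OLFE{q}$ together with a single field element of communication implements one instance of $\OLFE{q}$ with the roles of sender and receiver interchanged. Indeed, a new receiver with choice $c'$ picks a uniformly random $r\in GF(q)$ and runs the given protocol as the old sender on input $(r,c')$; the new sender, holding $(a',b')$, runs it as the old receiver on input $b'$, obtains $r+b'c'$, and sends $a'+(r+b'c')$, from which the new receiver subtracts $r$ to recover $a'+b'c'$. This reduction is perfect and semi-honest secure ($r$ hides $c'$ from the new sender, and the new receiver learns only her output), so composing it with a protocol that $\eps$-securely implements $m$ instances of $\OLFE{q}$ from $P_{UV}$ produces an $\eps$-secure implementation of $m$ instances of $\OLFE{q}$ from $P_{UV}$ with the two parties' roles exchanged. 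Applying~\eqref{olfe:first} to this protocol, with $U$ and $V$ swapped, yields~\eqref{olfe:second}.

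The entropy bookkeeping in the first two cases is routine once the restriction has been chosen. The step that I expect to require the most care is the reversibility argument behind~\eqref{olfe:second}: one must check that the masked reversal protocol is semi-honest secure for \emph{both} parties, and, more importantly, that composing it with the given protocol preserves the error parameter exactly, so that the already-established bound~\eqref{olfe:first} can be invoked verbatim for the reversed protocol rather than with a weakened constant.
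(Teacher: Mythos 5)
Your proposal is correct and follows the same route as the paper's own one-line proof, which simply cites Theorems~\ref{thm:H-imposs-n} and~\ref{thm:I-imposs-n} for~\eqref{olfe:first} and~\eqref{olfe:third} and the symmetry of $\OLFE{q}$ from~\cite{WolWul06} for~\eqref{olfe:second}. You have supplied the details the paper omits --- the restriction to $(\mathbf b,\mathbf c)\mapsto(b_1c_1,\dots,b_mc_m)$ with $\mathbf c\in\{\mathbf 0,\mathbf 1\}$, the observation that the stated constant $5$ in~\eqref{olfe:first} comes from the sharper coefficient $3|\mY|-2$ available inside the proof of Theorem~\ref{thm:H-imposs-n} rather than the $3|\mY|-1$ in its statement, and an explicit masked-reversal protocol realizing the symmetry of $\OLFE{q}$ --- and all of these check out.
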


\begin{proof}
Inequalities~\eqref{olfe:first} and~\eqref{olfe:third} follow from Theorem~\ref{thm:H-imposs-n} and Theorem~\ref{thm:I-imposs-n}. Furthermore, it has been shown in \cite{WolWul06} that $\OLFE{q}$ is symmetric. Hence, a violation of~\eqref{olfe:second} would imply a violation of the lower bound in~\eqref{olfe:first}.
\end{proof}

\subsection{Lower Bounds for OT in the Malicious Model}\label{subsec:sec:mal:ot}

In Appendix~\ref{appendix:sec:mal}, we show that lower bounds in the semi-honest model imply almost the same bounds in the malicious model. In the following, we generalize these results by allowing a dishonest Bob to additionally receive randomness $V'$.
\cancel{Next, we show that the lower bound for $\OT{1}{2}{n}$ implied by Theorem~\ref{thm:Hmin-imposs-n} also holds in the malicious model, where a dishonest Bob additionally receives randomness $V'$. This generalizes the result of Appendix~\ref{appendix:sec:mal}, where a dishonest Bob only receives $V$.}
 Moreover, the following provides a stronger impossibility result, in the case when $V'$ is trivial, than the one that follows from the combination of Lemma~\ref{lem:malicious:ot} and Theorem~\ref{thm:Hmin-imposs-n}.
\begin{corollary}\label{cor:mal-bound}
 Let a protocol be an $\eps$-secure implementation of $\OT{1}{2}{n}$ in the malicious model from randomness $(U,VV')$. Then
\begin{align*}
 \Chminteps{7\eps}{U}{VV'} \geq  k\;.
\end{align*}
\end{corollary}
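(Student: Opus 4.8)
The plan is to reduce the malicious-model statement to the semi-honest bound already established in Theorem~\ref{thm:Hmin-imposs-n}, via the standard observation that a malicious-model implementation of $\OT{1}{2}{n}$ from randomness $(U,VV')$ is in particular a weak semi-honest implementation when we regard the pair $(V,V')$ as the randomness handed to Bob. Concretely, first I would invoke (the malicious-to-semi-honest reduction analogous to) Lemma~\ref{lem:malicious:ot} from Appendix~\ref{appendix:sec:mal}: any $\eps$-secure malicious implementation yields an $\eps$-secure implementation in the weak semi-honest model where Bob's side information is $\tilde V := (V,V')$. Since $\OT{1}{2}{n}$ is the function with $\mX=\{0,1\}^{2n}$, $\mY=\{0,1\}$, $\mZ=\{0,1\}^n$, and it satisfies condition~\eqref{nonTrivialF-1} when we fix one of Alice's two blocks (so effectively $\mX=\{0,1\}^n$, $|\mY|=2$), Theorem~\ref{thm:Hmin-imposs-n} (which is stated for the weak semi-honest model, as Lemma~\ref{lem:max-entropyUpperBound-n} already records) applies with $|\mY|=2$, giving
\begin{align*}
\Chminteps{7\eps+\eps'}{U}{\tilde V} \geq \max_y \Chminteps{\eps'}{X}{f(X,y)}
\end{align*}
for every $\eps'\geq 0$, since $3|\mY|+1 = 7$.

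Next I would evaluate the right-hand side for the choice $y=y_1$ corresponding to the input bit that selects the block of Alice's input we did not fix. For that choice $f(X,y_1)=X$ determines Alice's whole (restricted, $n$-bit) input, so $\Chminteps{0}{X}{f(X,y_1)} = \Hmin(X) = n$ because $X$ is uniform on $\{0,1\}^n$. Taking $\eps'=0$ then yields $\Chminteps{7\eps}{U}{VV'} \geq n$, which — modulo the cosmetic mismatch between the $n$ in $\OT{1}{2}{n}$ and the $k$ written in the corollary statement (evidently a typo, or a relabeling $k=n$) — is exactly the claimed bound. Throughout, $\Chminteps{\cdot}{U}{VV'}$ refers to the smooth conditional min-entropy of Definition-after-\eqref{eq:fano}, and I would be careful that the smoothing parameters stay in $[0,1)$, which holds for $\eps$ small enough.

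The one genuinely non-routine point is the first step: justifying that a malicious-model protocol (with the extra randomness $V'$ leaked to a dishonest Bob) really does give a weak semi-honest implementation to which Theorem~\ref{thm:Hmin-imposs-n} applies with Bob's randomness being $(V,V')$ rather than just $V$. This is where the "generalization by allowing a dishonest Bob to additionally receive randomness $V'$" mentioned in the text does the work: the security-for-Bob simulator in the malicious model must simulate Bob's entire view including $V'$ from just his input and output, which is precisely the hypothesis $\dis(P_{XMUV'\mid Y=y},P_{X S_A(X)})\leq \eps$ needed to rerun the proof of Lemma~\ref{lem:max-entropyUpperBound-n} with $UM$ replaced by $U M V'$ — wait, more precisely one keeps $U$ on Alice's side and moves $V'$ to Bob's side, so the relevant statement is $\Chmaxeps{3|\mY|\eps}{X}{UM,Y=y}=0$ is unaffected, while $V$ in $\Chminteps{}{U}{VM}$ becomes $VV'$. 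I expect the bulk of the care to be in matching up which variables sit on which side in the chain of lemmas (Lemma~\ref{lem:mono2}, Lemma~\ref{lem:mutual-inf2}, Lemma~\ref{lem:min-entropyLowerBound}) when $V$ is thickened to $VV'$; the arithmetic ($3\cdot 2 + 1 = 7$) and the evaluation $\Hmin(X)=n$ are immediate.
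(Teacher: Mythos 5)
Your high-level skeleton (combine a max-entropy upper bound on Alice's view with a min-entropy lower bound on Bob's view, then push $M$ out via data processing) is the paper's skeleton, but there is a genuine gap in the middle: you claim that Theorem~\ref{thm:Hmin-imposs-n} ``is stated for the weak semi-honest model,'' and you invoke it directly. That is not so. Only Lemma~\ref{lem:max-entropyUpperBound-n} is explicitly stated for the weak semi-honest model; Theorem~\ref{thm:Hmin-imposs-n} (and the Lemma~\ref{lem:min-entropyLowerBound} it relies on) requires the \emph{strong} semi-honest simulator $S_B(Y,f(X,Y))$, which is given Bob's actual input $Y$ and actual output. A malicious or weak semi-honest simulator may query the ideal OT with a choice bit $\tilde C$ of its own choosing and receives $X_{\tilde C}$, so the hypothesis of Lemma~\ref{lem:min-entropyLowerBound} is simply false in the setting of the corollary. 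The paper's proof explicitly reruns that half of the argument with the malicious simulator: it observes that whatever $\tilde C$ the simulator uses, $\Chminteps{}{X}{S_B(c,X_{\tilde C})}\geq\Chminteps{}{X}{X_{\tilde C}}\geq k$, because for $\OT{1}{2}{k}$ learning either block still leaves $k$ bits of min-entropy in the other block. This is the nontrivial structural fact about $\OT{1}{2}{k}$ that makes the corollary go through in the malicious model, and it is exactly what your write-up waves away as ``matching up which variables sit on which side.'' Without it you cannot get the right-hand side $\geq k$; an unqualified appeal to Theorem~\ref{thm:Hmin-imposs-n} is not available.

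Two smaller points. First, your invocation of Lemma~\ref{lem:malicious:ot} is a red herring: that lemma passes to the \emph{strong} semi-honest model at a cost of a factor $(2n{+}1)=5$ on the error, which would give $\Chminteps{35\eps}{U}{VV'}\geq k$, not the claimed $7\eps$. What you actually want is the trivial direction (malicious security restricts to the weak semi-honest model with the \emph{same} $\eps$), but that sends you into the gap above. Second, the plan to first restrict $\mX$ to a single $n$-bit block and then ``pick $y=y_1$ for which $f(X,y_1)=X$'' gives $\Chminteps{0}{X}{f(X,y_1)}=0$, not $n$: if $f(X,y_1)$ determines $X$ then the conditional min-entropy is zero. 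The restriction is unnecessary anyway — $\OT{1}{2}{k}$ already satisfies \eqref{nonTrivialF-1} with $\mX=\{0,1\}^{2k}$, $|\mY|=2$, and $\max_c\Chminteps{0}{X_0X_1}{X_c}=k$. You also repeatedly conflate the roles of $S_A$ (simulates Alice's view, unaffected by $V'$) and $S_B$ (simulates Bob's view, which must now include $V'$), which is where the passage ``security-for-Bob simulator must simulate Bob's entire view'' goes astray; once untangled it matches the paper's use of $S_B$ and the inequality \eqref{ineq:sim_B}.
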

\begin{proof}
We consider only honest players, but allow the simulator to change the inputs to the ideal OT and the outputs from the ideal OT. Lemma~\ref{lem:max-entropyUpperBound-n} holds in the weak semi-honest model and, therefore, also in the malicious model. Thus, we have $\Chmaxeps{6\eps}{X}{UM,C=c} =0$, where $C$ is the choice bit of Bob. The security of the protocol implies that there exists a randomized function $S_B$ such that 
\begin{align}\label{ineq:sim_B}
 \dis(P_{XS_B(C,X_{\tilde C})},P_{XVV'M})\leq \eps, 
\end{align}
where $\tilde C$ is the input to the ideal OT by the simulator. Therefore, we get
\begin{align*}
 \Chminteps{\eps}{X}{VV'M,C=c} &\geq \Chminteps{}{X}{S_B(c,X_{\tilde C})}\nonumber\\
&\geq \Chminteps{}{X}{X_{\tilde C}}\nonumber\\
&\geq k\;.
\end{align*}
As in the proof of Theorem~\ref{thm:Hmin-imposs-n}, this implies
\begin{align*}
k\leq\Chminteps{\eps}{X}{VV'M,C=c}&\leq \Chminteps{7\eps}{U}{VV'}\;.
\end{align*}
\end{proof}

In the same way, we can show that the impossibility result for implementations of $\OT{1}{2}{k}$ that follows from Theorem~\ref{thm:H-imposs-n} also holds in the malicious model.
\begin{corollary} \label{cor:H-imposs-n-mal}
Let a protocol be an $\eps$-secure implementation of $\OT{1}{2}{k}$ in the malicious model from randomness $(U,VV')$. Then
\begin{align*}
\cHS{U}{VV'}\geq k-6(k\eps+h(\eps))\;. 
\end{align*}
\end{corollary}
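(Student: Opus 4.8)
The plan is to mirror the proof of Corollary~\ref{cor:mal-bound}, replacing the max/min-entropy arguments by their Shannon-entropy analogues. We work in the weak semi-honest model (implied by malicious security). Fix honest players, let Alice's input $X=(X_0,X_1)\in\{0,1\}^{2k}$ be uniform and let $C$ be Bob's choice bit; let $M$ be the whole communication. First I would invoke Lemma~\ref{lem:entropyUpperBound-n} (which holds in the weak semi-honest model) applied to $\OT{1}{2}{k}$ with Bob's input restricted so that condition~\eqref{nonTrivialF-1} holds --- concretely, fix $X_{1-c}$ and let $X_c$ range over $\{0,1\}^k$, so that the restricted function has $|\mY|=2$ and $|\mZ|=2^k$ --- to obtain $\cHS{X_c}{UM,C=c}\leq 4(\eps k+h(\eps))$, i.e. conditioned on the choice bit the communication and $U$ reveal (almost) all of the chosen string. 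By the monotonicity inequality~\eqref{eq:mono} this also gives $\cHS{X_c}{UVV'M,C=c}\leq 4(\eps k+h(\eps))$.

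Next I would use the chain rule~\eqref{eq:chain} together with~\eqref{eq:mono} and Lemma~\ref{lem:h-smooth} exactly as in the proof of Theorem~\ref{thm:H-imposs-n}:
\begin{align*}
\cHS{X_c}{VV'M,C=c}&=\cHS{U}{VV'M,C=c}+\cHS{X_c}{UVV'M,C=c}\\
&\quad-\cHS{U}{X_cVV'M,C=c}\\
&\leq\cHS{U}{VV'M,C=c}+4(\eps k+h(\eps))\\
&\leq\cHS{U}{VV'}+4(\eps k+h(\eps))\;,
\end{align*}
where the last step uses $\cHS{U}{VV'M,C=c}\leq\cHS{U}{VV'}$, which follows from the Markov chain $\Markov{U}{VV'}{M}$ conditioned on $C=c$ (as in Section~\ref{sec:prot}, the transcript is independent of Alice's extra randomness given $VV'$; one must be a little careful since $C=c$ is Bob-side conditioning, but the relevant chain still holds after conditioning on $C$). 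Then I would lower-bound $\cHS{X_c}{VV'M,C=c}$: security for Bob means there is a simulator $S_B$ with $\dis(P_{XS_B(C,X_{\tilde C})},P_{XVV'M})\leq\eps$, where $\tilde C$ is the simulator's input to the ideal OT; since $X_{\tilde C}$ is one of the two independent uniform $k$-bit strings and a fortiori has $\cHS{X_c}{X_{\tilde C}}$ at least $k-$(Fano-type term) only when $\tilde C=c$ — actually the clean statement is $\cHS{X_c}{VV'M,C=c}\geq\cHS{X_c}{S_B(c,X_{\tilde C})}-\eps k-h(\eps)$ by Lemma~\ref{lem:h-smooth}, and since the simulator only sees $X_{\tilde C}$ which is independent of $X_c$ unless $\tilde C=c$, one gets $\cHS{X_c}{VV'M,C=c}\geq k-\eps k-h(\eps)$. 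Combining the two bounds yields $\cHS{U}{VV'}\geq k-\eps k-h(\eps)-4(\eps k+h(\eps))\geq k-6(k\eps+h(\eps))$ after slightly loosening constants, using $h$ concave (inequality~\eqref{eq:h}) to fold the $h(\eps)$ and $h(2\eps)$ terms.

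The main obstacle I anticipate is the interface between Bob-side conditioning ($C=c$) and the Markov chain $\Markov{U}{VV'}{M}$ used to drop $M$ from $\cHS{U}{VV'M,C=c}$: one has to check that conditioning on $C$ (Bob's input, which he chooses independently of the randomness $(U,VV')$) preserves the chain $U \leftrightarrow (VV',C) \leftrightarrow M$ round by round, which is true because in each round the new message depends only on the previous transcript together with one party's input and randomness. A second minor point is the precise simulator argument giving $\cHS{X_c}{S_B(c,X_{\tilde C})}\geq k$ when $\tilde C\neq c$ — here one exploits that $X_0,X_1$ are independent and uniform, so $X_{\tilde C}$ carries no information about $X_c$, and when $\tilde C=c$ the inequality is trivial since $X_c\in\{0,1\}^k$ has entropy exactly $k$ and the simulator cannot exceed it — so in all cases $\cHS{X_c}{S_B(c,X_{\tilde C})}=k$, giving the bound uniformly over the simulator's behaviour. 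The rest is bookkeeping of the $\eps\log|\mX|$ and $h(\eps)$ error terms, which I would not grind through in detail.
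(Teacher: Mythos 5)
Your proposal diverges from the paper's argument in one place that is not a matter of taste but a genuine error: you run the whole argument with $X_c$ (the chosen string) in place of $X=(X_0,X_1)$ (the full pair). The lower bound you then need, $\cHS{X_c}{VV'M,C=c}\geq k-\eps k-h(\eps)$, is false, and in fact has the wrong order of magnitude. Conditioned on $C=c$, Bob's view contains $V,M$ (and in your model also $V'$), and correctness forces $\Pr[Y\neq X_c\mid C=c]\leq\eps$ where $Y$ is a function of Bob's view; so Fano gives $\cHS{X_c}{VV'M,C=c}\leq\eps k+h(\eps)$, i.e.\ this entropy is near \emph{zero}, not near $k$. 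Your own discussion exposes the problem: you observe that when $\tilde C=c$ the simulator receives $X_c$ directly, and then you assert ``in all cases $\cHS{X_c}{S_B(c,X_{\tilde C})}=k$'' --- but when $\tilde C=c$ that conditional entropy is $0$, not $k$. And $\tilde C=c$ is exactly what a good simulator will do; a simulator that always chose $\tilde C\neq c$ could not reproduce Bob's output $X_c$ up to error $\eps$. So the inequality you need simply does not hold, and your chain of bounds collapses.

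The paper's proof sidesteps this by keeping the full pair $X=(X_0,X_1)$ throughout (which is also exactly what satisfies condition~\eqref{nonTrivialF-1} for the unrestricted $\OT{1}{2}{k}$ function, with $|\mY|=2$ and $\log|\mZ|=k$). Lemma~\ref{lem:entropyUpperBound-n} then yields $\cHS{X}{UM,C=c}\leq 4(\eps k+h(\eps))$, and on the other side one gets $\cHS{X}{VV'M,C=c}\geq\cHS{X}{S_B(c,X_{\tilde C})}-2k\eps-h(\eps)\geq\cHS{X}{X_{\tilde C}}-2k\eps-h(\eps)\geq k-2k\eps-h(\eps)$: the crucial point is that whatever $\tilde C$ is, $X_{\tilde C}$ is only one of the two independent uniform $k$-bit strings, so $\cHS{X}{X_{\tilde C}}\geq 2k-k=k$. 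Note also the factor $2k$ (not $k$) in the Lemma~\ref{lem:h-smooth} error term, since $\log|\mX|=2k$. Plugging into the same chain-rule decomposition gives $\cHS{U}{VV'}\geq k-6(k\eps+h(\eps))$.

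One smaller remark: the Markov-chain issue you flag (dropping $M$ from $\cHS{U}{VV'M,C=c}$) is a non-issue. Since $C$ is Bob's input, chosen independently of the randomness $(U,VV')$, we have $\cHS{U}{VV',C=c}=\cHS{U}{VV'}$, and then $\cHS{U}{VV'M,C=c}\leq\cHS{U}{VV',C=c}$ is just monotonicity~\eqref{eq:mono}; no round-by-round Markov argument is needed here.
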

\begin{proof}
Since Lemma~\ref{lem:entropyUpperBound-n} also holds in the malicious model, inequality \eqref{eq:mono} implies that
\begin{align*}
 \cHS{X}{ UVV'M,C=c}&\leq \cHS{X}{ UM,C=c}\\
&\leq 4(k\eps+ h(\eps))\;.
\end{align*}
We can use inequalities~\eqref{lem:h-smooth} and \eqref{ineq:sim_B} to obtain
\begin{align*}
 \cHS{X}{VV'M,C=c} &\geq \cHS{X}{S_B(c,X_{\tilde C})}-\eps\cdot  2k-h(\eps)\nonumber\\
&\geq \cHS{X}{X_{\tilde C}}-\eps \cdot 2 k-h(\eps)\nonumber\\
&\geq k-\eps \cdot 2 k-h(\eps)\;.
\end{align*}
As in the proof of Theorem~\ref{thm:H-imposs-n}, this implies
\begin{align*}
\cHS{U}{VV'}\geq k-6(k\eps+h(\eps)). 
\end{align*}
\end{proof}

\cancel{
As an example, we consider the following reduction where the bound of Corollary \ref{cor:mal-bound} can be applied: Let $(U,V)=((X_0,X_1),(C,X_C))$ be random variables corresponding to a randomized $\OT{1}{2}{k}$ and let $V'$ be defined by a conditional distribution $P_{V'|X_0X_1C}=P_{V'|X_C}$ such that $\Chminteps{}{X_{1-C}}{VV'}\geq \alpha k$, where $0<\alpha\leq 1$. We consider the protocol where Alice chooses $f_0,f_1$ from a family of two-universal hash functions from $\{0,1\}^k$ to  $\{0,1\}^{k'}$, outputs $\tilde x_0:=f_0(x_0)$ and $\tilde x_1:=f_0(x_0)$ and sends $f_0,f_1$ to Bob who outputs $(C,f_C(x_C))$. Privacy Amplification \cite{BeBrRo88,ILL89,DodisORS08} implies that this protocol implements a randomized OT with an error of at most $\eps$ if $k' \leq \alpha k-2\log(1/\eps)$. Thus, one can also implement a $\OT{1}{2}{k'}$.
}

Corollary~\ref{cor:mal-bound} can be applied to  implementations of $\OT{1}{2}{k}$ from Universal OT over bits. Universal OT \cite{BrCrWo03} is a weakened variant of Bit-OT where a dishonest Bob can choose a channel $P_{Y|X}$ such that $H(X_0X_1|Y)\geq \alpha$, where $(X_0,X_1) \in \sbin \times \sbin$ are uniform and $Y$ is the output of the channel $P_{Y|X}$, and learns $Y$. One choice of a dishonest Bob is the channel that outputs both inputs with probability $1-\alpha$ and one of the inputs, $X_c$, with probability $\alpha$. This primitive can be implemented from randomness $(U,VV')=((X_0,X_1),(C,X_C,V'))$, where $(U,V)$ corresponds to a randomized $\OT{1}{2}{1}$ and $V'=X_{1-C}$ with probability $1-\alpha$ and $V'=\bot$ otherwise. For this primitve we have $\cHS{U}{VV'}\leq \alpha$ and, therefore, for $n$ independent instances $\cHS{U^n}{(VV')^n} \leq \alpha n$. Thus, Corollary~\ref{cor:mal-bound} implies that $k \leq \alpha n +6(k\eps+h(\eps))$. As Univeral OT is strictly weaker than this 
primitive, the same bound also applies to Universal OT. The protocol proposed in \cite{CreSav06} which implements $\OT{1}{2}{k}$ from $n$ instances of Universal OT asymptotically achieves a rate $k/n$ of $\alpha$ \cite{Win12}. Our lower bound now shows that this is in fact optimal.

\cancel{
An instance of $\alpha n$-UniversalOT can be implemented from distributed randomness $(U,VV')$ such that $U=(X_0,X_1)$ and $(V,V')=((C,X_C),Y)$ where $(X_0,X_1,C,X_C)$ are random variables with the distribution of $n$ randomized Bit-OTs. 
$V'$ is equal to the first $n-\lceil \alpha n \rceil$ bits of $X_{1-C}$. Then $\chmin{}{U}{VV'}=\lceil \alpha n \rceil\geq \alpha n$

\begin{lemma}\label{lem:universal-bound}
Let $0<\alpha\leq 1$. If there exists an $\eps$-secure implementation of $\OT{1}{2}{\ell}$ from an $\alpha n$-UniversalOT, where $\eps \leq 1/14$, then
\begin{align*}
\alpha n\geq \ell-1-14\eps\;.
\end{align*}
\end{lemma}
\begin{proof}
Corollary \ref{cor:mal-bound} implies that 
\begin{align*}
 \Hmin^{7\eps}(U\mid VV') \geq \ell\;.
\end{align*}
The statement then follows from
\begin{align*}
\Chminteps{7\eps}{U}{VV'}&\leq -\log(2^{-\lceil \alpha n \rceil}-7\eps/2^{\lceil \alpha n \rceil})\\
  &=-\log(2^{-\lceil\alpha n\rceil}(1-7\eps))\\
  &=\lceil\alpha n\rceil  -\log(1-7\eps)\\
  &\leq \alpha n + 1 -\log(1-7\eps)\;.
\end{align*}
\end{proof}
}

\section{Quantum Reductions: Reversing String OT Efficiently}\label{sec:quantumReductions}
As the bounds of the last section generalize the known bounds for perfect implementations of OT from \cite{Beaver96,DodMic99,WolWul05,WolWul08} to the statistical case, it is natural to ask whether similar bounds also hold for quantum protocols, i.e., if the bounds presented in \cite{SaScSo09} can be generalized to the statistical case. We give a negative answer to this question by presenting a statistically secure quantum protocol that violates these bounds. Thereto we introduce the following functionality\footnote{In this section we will use the notion commonly used in the UC framework, that is slightly different from the rest of the paper.} $\mF_{\texttt{MCOM}}^{A \rightarrow B,k}$ that can be implemented from $\mF_{\texttt{OT}}^{A \rightarrow B,k}$ (i.e., $\OT{1}{n}{k}$)  as we will show.
 
\begin{definition}[Multi-Commitment]
The functionality $\mF_{\texttt{MCOM}}^{A \rightarrow B,k}$ behaves as follows: Upon (the first) input (\texttt{commit}, b) with $b \in \{0, 1\}^{k}$ from Alice, send \texttt{committed} to Bob. Upon input \texttt{(open,T)} with $T \subseteq [k]$ from Alice send (\texttt{open}, $b_T$) to Bob. All communication/input/output is classical. We call Alice the sender and Bob the receiver. 
\end{definition}

An instance of $\OT{1}{2}{k}$ can be implemented from $m = O(k + \kappa)$ bit commitments with an error of $2^{-\Omega(\kappa)}$ \cite{BBCS92,Yao95,DFLSS09}. In the protocol, Alice sends $m$ BB84-states to Bob who measures them either in the computational or in the Hadamard basis. To ensure that he really measures Bob has to commit to the basis he has measured in and the measurement outcome for every qubit received. Alice then asks Bob to open a small subset $\mT$ 
of these pairs of commitments. OT can then be implemented using further classical processing (see Section \ref{sec:OTtoSC} for a complete description of the protocol). This protocol implements oblivious transfer that is statistically secure in the quantum \emph{universal composability} model \cite{Unruh10}.
Obviously the construction remains secure if we replace the commitment scheme with $\mF_{\texttt{MCOM}}^{A \rightarrow B,2m}$.

Next, we show that $\mF_{\texttt{MCOM}}^{A \rightarrow B,k}$ can be implemented from the oblivious transfer functionality $\mF_{\texttt{OT}}^{A \rightarrow B,k}$ (see \cite{Unruh10} for a definition of  $\mF_{\texttt{OT}}^{A \rightarrow B,k}$) using Protocol MCOMfromOT.

As it is done in the proofs of \cite{Unruh10}, we assume that all communication between the players is over secure channels and we only consider static adversaries.
\begin{lemma}\label{lem:UC-sec}
Protocol MCOMfromOT  is statistically secure and universally composable and realizes $\mF_{\texttt{MCOM}}^{A \rightarrow B,k}$ with an error of $2^{-\kappa/2}$ using $\kappa$ instances of $\mF_{\texttt{OT}}^{A \rightarrow B,k}$.
\end{lemma}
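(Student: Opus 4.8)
The plan is to describe Protocol MCOMfromOT explicitly, then prove universally composable security by exhibiting simulators for a corrupted sender and a corrupted receiver. The natural construction is the standard ``commit via OT'' trick: to commit to $b\in\{0,1\}^k$, Alice and Bob run $\kappa$ independent instances of $\mF_{\texttt{OT}}^{A\rightarrow B,k}$; in instance $j$, Bob picks a uniformly random choice bit $c_j$, and Alice inputs the pair $(s_j, s_j\oplus b)$ where $s_j\in\{0,1\}^k$ is freshly uniform. Bob thus learns one of $s_j$, $s_j\oplus b$ in each instance but, since $c_j$ is hidden from Alice and $s_j$ masks $b$, learns nothing about $b$; this gives the hiding property. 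To open a subset $T\subseteq[k]$, Alice sends all the $s_j$ (restricted to coordinates in $T$) together with $b_T$; Bob checks, for every $j$, that the value he received in instance $j$ equals $(s_j)_T$ if $c_j=0$ and $(s_j\oplus b)_T=(s_j)_T\oplus b_T$ if $c_j=1$. If a cheating Alice tries to open the same commitment to two different strings $b_T\neq b_T'$ on some coordinate, then in each instance $j$ at least one of the two claimed openings is inconsistent with what Bob actually received, and since $c_j$ is uniform and independent of Alice's view this is caught with probability $\ge 1/2$ per instance, hence with probability $\ge 1-2^{-\kappa}$ over all $\kappa$ instances. (A union bound over the at most $k$ coordinates and a modest reworking gives the claimed $2^{-\kappa/2}$ error; the exact constant in the exponent is where I would be slightly careful.)

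\textbf{Simulators.} First I would handle a corrupted receiver (Bob): the simulator, playing the ideal $\mF_{\texttt{MCOM}}$, receives only \texttt{committed} at commit time, so it must simulate the $\kappa$ OT instances without knowing $b$. It does so by running the honest-Alice code with an arbitrary placeholder (say $b=0$), i.e. handing Bob $(s_j, s_j)$-style pairs; since in each OT instance Bob's view is just one uniform string $s_j$ regardless of $b$, this is perfectly indistinguishable at commit time. When the environment causes Alice to open $T$, the ideal functionality hands the simulator $b_T$, and the simulator can now retro-fit: it must present $s_j$-values consistent both with what Bob already ``received'' and with $b_T$. Because Bob received exactly one of the two strings per instance and the other was never revealed, the simulator has complete freedom to choose the unrevealed string so that the XOR relation yields $b_T$ — so the simulated opening is distributed identically to the real one. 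This direction is therefore statistically (in fact perfectly, given ideal OT) sound. Second, for a corrupted sender (Alice): the simulator emulates $\mF_{\texttt{OT}}$ toward Alice, thereby extracting, in each instance $j$, both of Alice's input strings $(a_j^0,a_j^1)$; it computes the ``would-be committed bit'' as the majority, over $j$, of $a_j^0\oplus a_j^1$ (or simply the value forced by consistency) and inputs that $b$ to $\mF_{\texttt{MCOM}}$. At open time it checks Alice's claimed $(s_j)_T, b_T$ against the extracted values exactly as honest Bob would, and forwards \texttt{open} to the ideal functionality iff the check passes; the binding analysis above shows the real receiver accepts an inconsistent opening with probability at most $2^{-\kappa/2}$, which bounds the distinguishing advantage.

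\textbf{Composability and the main obstacle.} Universal composability then follows because all messages travel over secure channels, only static corruption is considered, and each OT call is a single invocation of the ideal $\mF_{\texttt{OT}}^{A\rightarrow B,k}$ — so by the UC composition theorem it suffices that the simulators above work in the $\mF_{\texttt{OT}}$-hybrid model, which is what we have argued. The main obstacle I anticipate is not the conceptual structure (which is routine ``commitment from OT'') but pinning down the binding error bound with the stated constant: one has to argue carefully that a malicious Alice, who may correlate her $\kappa$ OT inputs adaptively with protocol messages, still cannot do better than independent guessing of the $c_j$'s, and then trade the per-coordinate union bound against the per-instance $1/2$ factor to land at $2^{-\kappa/2}$ rather than $2^{-\kappa}$ or $k\cdot 2^{-\kappa}$. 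I would also double-check that opening reveals $s_j$ only on coordinates in $T$ (not all of $[k]$), so that an honest later opening of a larger set is still simulatable — this is why the opening message must carry $(s_j)_T$ coordinate-wise rather than a hash, and it is the detail that makes the receiver-side simulation go through cleanly.
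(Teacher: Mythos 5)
Your proposed protocol differs cosmetically from the paper's Protocol MCOMfromOT: the paper has Alice input two \emph{uniform} strings $x_0^i,x_1^i$ to each OT and then send a separate classical message $m^i := x_0^i \oplus x_1^i \oplus b$, whereas you embed $b$ directly as $(s_j, s_j\oplus b)$. These are equivalent encodings (Bob's information is $y^i$ together with $x_{1-c^i}^i \oplus b$ in either case), and your simulator descriptions match the paper's almost exactly: for a corrupted receiver, choose the delivered OT output uniformly and back-fill the opening message once $b_T$ arrives from $\mF_{\texttt{MCOM}}$; for a corrupted sender, extract each coordinate of $b$ as the majority vote over the $\kappa$ instances and relay to the ideal functionality only if the opening is consistent with the extracted value.

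Where you have a real gap is exactly the spot you flagged: the derivation of the $2^{-\kappa/2}$ binding error. There is no ``union bound over the at most $k$ coordinates'' to trade off against, and trying to invoke one would make the bound worse, not land on $2^{-\kappa/2}$. The correct argument is purely coordinate-local and uses the majority extraction essentially: if Alice tries to open $\tilde b_T$ with $\tilde b_T \neq b_T$, pick a \emph{single} coordinate $j\in T$ with $\tilde b_j \neq b_j$. Because $b_j$ was defined as the majority of the per-instance XORs $x_{0,j}^i \oplus x_{1,j}^i \oplus m_j^i$ (or, in your variant, $a_{j,i}^0\oplus a_{j,i}^1$), there are at least $\kappa/2$ instances $i$ whose XOR disagrees with $\tilde b_j$; in each such instance Alice must present an opening that is wrong on at least one of the two OT inputs at coordinate $j$, and whether she is caught depends on the independently uniform choice bit $c^i$, which she never learns. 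Hence the escape probability is at most $2^{-\kappa/2}$, with the $\kappa/2$ coming from the majority threshold and not from any per-coordinate accounting. Your ``value forced by consistency'' alternative is not adequate on its own, since a cheating Alice can make the per-instance XORs disagree across $i$; majority is what makes the extracted bit well defined and is what forces $\geq\kappa/2$ inconsistent instances.

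One more small point: you worry that ``a malicious Alice, who may correlate her $\kappa$ OT inputs adaptively with protocol messages, still cannot do better than independent guessing of the $c_j$'s.'' In the $\mF_{\texttt{OT}}$-hybrid model this is automatic — the ideal OT never leaks $c^i$ to the sender, so the $c^i$ are uniform and independent conditioned on Alice's entire view at opening time, and the $\kappa/2$ detection events really are independent coin flips from her perspective. There is no adaptivity issue to handle.
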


\begin{figure}[!t]
\begin{framed}
\noindent Protocol \textbf{MCOMfromOT \\}
Inputs: Alice has an input $b=(b_1,\dots,b_k) \in \{0,1\}^{k}$ in \texttt{Commit}. Bob has an input $T\subseteq [k]$ in \texttt{Open}.\\
\texttt{Commit($b$)}:\\
For all $1 \leq i \leq \kappa$: 
\begin{enumerate}
 \item Alice and Bob invoke $\mF_{\texttt{OT}}^{A \rightarrow B,k}$ with random inputs $x^i_0,x^i_1 \in \{0,1\}^{k}$ and $c^i \in \{0,1\}$.
 \item Bob receives $y^i:=x^i_{c^i}$ from  $\mF_{\texttt{OT}}^{A \rightarrow B,k}$.
 \item Alice sends $m^i:=x_0^i\oplus x_1^i\oplus b$ to Bob.
\end{enumerate}
\texttt{Open}(T):
\begin{enumerate}
 \item Alice sends $b_T$, $T$ and $(x^i_{0})_T,(x^i_{1})_T$ for all $1 \leq i \leq \kappa$ to Bob.
 \item If $(m^i)_T=(x_{0}^i\oplus x_{1}^i \oplus b^i)_T$ and $(y^i)_T=(x^i_{c})_T$ for all $1 \leq i \leq \kappa$, Bob accepts and outputs $b_T$, otherwise he rejects.
\end{enumerate}
\end{framed}
\end{figure}

\begin{proof}
The statement is obviously true in the case of no corrupted parties and in the case when both the sender and the receiver are corrupted. We construct for any adversary $\mA$ a simulator $\mS$ that runs a copy of $\mA$ as a black-box. In the case where the sender is corrupted, the simulator $\mS$ can extract the commitment $b$ from the input to $\mF_{\texttt{OT}}^{A \rightarrow B,k}$ and the messages except with probability $2^{-\kappa/2}$ as follows: We define the extracted commitment as $b_i:=\maj(m^1_i \oplus x_{0,i}^1 \oplus x_{1,i}^1,\dots,m^\kappa_i \oplus x_{0,i}^\kappa \oplus x_{1,i}^\kappa)$ for all $1 \leq i \leq k$ where $\maj$ denotes the majority function. Let $\mT$ be a (non-empty) subset of $[k]$ and let $\tilde b\in \{0,1\}^{k}$ such that $\tilde b_{\mT} \neq b_{\mT}$. An honest receiver accepts $\tilde b_{\mT}$ together with $\mT$ in \texttt{Open} with probability at most $2^{-\kappa/2}$ as follows: There must exist $j \in \mT$ such that $b_j \neq \tilde b_j$. Then the sender needs to change either $x^i_{0,j}$ or $x^i_{1,j}$ for at least $\kappa/2$ instances $i$. Thus, the simulator extracts the bit $b$ in the commit phase as specified before and gives $(\texttt{commit},b)$ to $\mF_{\texttt{MCOM}}^{A \rightarrow B,k}$. Upon getting $(\tilde b_{\mT},\mT)$ from the adversary, the simulator gives $(\texttt{open},\mT)$ to $\mF_{\texttt{MCOM}}^{A \rightarrow B,k}$, if $\tilde b_\mT = b_\mT$, otherwise it stops. Therefore, any environment can distinguish the simulation and the real execution with an advantage of at most $2^{-\kappa/2}$. In the case where the receiver is corrupted, the simulator $\mS$, upon getting the message \texttt{committed} from $\mF_{\texttt{MCOM}}^{A \rightarrow B,k}$ and the choice bit $c^i$, chooses the output $y^i$ from  $\mF_{\texttt{OT}}^{A \rightarrow B,k}$ and the message $m^i$ uniformly and independently at random for all $i$. In the open phase the simulator $\mS$ gets $(\mT,b_{\mT})$ and simulates the messages of an honest sender by setting $(x^i_{1-c^i})_\mT:= (m^i)_\mT \oplus (y^i)_\mT \oplus b_\mT$ and $(x^i_{c^i})_\mT:=(y^i)_\mT$ for all $i$. This simulation is perfectly indistinguishable from the real execution.
\end{proof}
Any protocol that is statistically secure in the classical universal composability model \cite{Canetti01} is also secure in the quantum universal composability model \cite{Unruh10}. Together with the proofs from \cite{DFLSS09,Unruh10}, we, therefore, obtain the following theorem.
\begin{theorem}\label{thm:revers-string-ot}
There exists a protocol that implements $\OT{1}{2}{k'}$ with an error $\eps$ from $\kappa = O(\log 1/\eps)$ instances of $\OT{1}{2}{k}$ in the opposite direction where $k' = \Omega(k)$ if $k=\Omega(\kappa)$.
\end{theorem}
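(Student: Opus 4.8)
The plan is to obtain the protocol by composing two known reductions inside the (quantum) universal composability framework. First I would invoke Lemma~\ref{lem:UC-sec}: Protocol MCOMfromOT realizes $\mF_{\texttt{MCOM}}^{A \rightarrow B,\ell}$ from $\kappa$ instances of $\mF_{\texttt{OT}}^{A \rightarrow B,\ell}$ with error $2^{-\kappa/2}$, for any string length $\ell$ (the number of OT calls is $\kappa$, independent of $\ell$). Second, I would use the BB84-based quantum reduction of oblivious transfer to bit commitment of \cite{BBCS92,Yao95}, whose statistical UC-security is established in \cite{DFLSS09,Unruh10}: there is a quantum protocol implementing $\OT{1}{2}{k'}$ from $m = O(k' + \kappa)$ bit commitments with an error $2^{-\Omega(\kappa)}$, and, as noted in the text, it remains secure when the $m$ individual bit commitments are replaced by a single call to $\mF_{\texttt{MCOM}}^{A \rightarrow B,2m}$ (the factor $2$ accounts for committing to one basis bit and one outcome bit per qubit). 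The key observation for the direction is that in this reduction the commitments run from the OT-receiver to the OT-sender, so chaining it on top of MCOMfromOT converts an OT from Alice to Bob into an OT from Bob to Alice, which is the ``opposite direction'' demanded by the statement.

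Concretely, I would instantiate MCOMfromOT with $\ell = 2m$ to realize $\mF_{\texttt{MCOM}}^{A \rightarrow B,2m}$ from $\kappa$ copies of $\mF_{\texttt{OT}}^{A \rightarrow B,2m}$, and then run the BB84-based protocol on top of this multi-commitment to obtain $\OT{1}{2}{k'}$ from Bob to Alice. Since $\OT{1}{2}{2m}$ is trivially obtained from $\OT{1}{2}{k}$ by padding whenever $2m \le k$, and since $m = O(k' + \kappa)$, it suffices to take $k'$ to be a sufficiently small constant multiple of $k$ and to assume $k = \Omega(\kappa)$; then $2m \le k$ and $k' = \Omega(k)$. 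By the UC composition theorem, and using that any protocol statistically UC-secure in the classical model \cite{Canetti01} is also statistically UC-secure against quantum adversaries \cite{Unruh10}, the composed protocol realizes $\OT{1}{2}{k'}$ with total error at most $2^{-\Omega(\kappa)} + 2^{-\kappa/2} = 2^{-\Omega(\kappa)}$, using $\kappa$ instances of $\OT{1}{2}{k}$. Choosing $\kappa = O(\log 1/\eps)$ makes the error at most $\eps$, which is the claimed bound.

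The substantive content — extractability against a malicious committer and equivocability against a malicious receiver — is already packaged in Lemma~\ref{lem:UC-sec} and in the cited analyses of the BB84-based OT protocol, so the remaining work is essentially bookkeeping. The one point that deserves care is that the composition must be performed in the quantum UC model: MCOMfromOT is classical and Lemma~\ref{lem:UC-sec} only proves it secure against classical adversaries, so one must first apply the lifting theorem of \cite{Unruh10} before composing it with the inherently quantum outer protocol; once that is in place the quantum UC composition theorem applies verbatim. The parameter balancing ($2m \le k$, $k' = \Omega(k)$, $\kappa = O(\log 1/\eps)$) is then routine.
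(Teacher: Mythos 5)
Your proposal is correct and follows essentially the same route as the paper: Lemma~\ref{lem:UC-sec} to realize the multi-commitment from $\kappa$ OTs, the BB84-based reduction of \cite{BBCS92,Yao95,DFLSS09} with its individual commitments replaced by a single $\mF_{\texttt{MCOM}}^{A\rightarrow B,2m}$, the lifting of classical statistical UC security to the quantum UC model via \cite{Unruh10}, and UC composition, with the direction reversal coming from the fact that in the BB84 protocol the OT-receiver is the committer. You merely spell out more explicitly than the paper the role relabelling that yields the ``opposite direction'' and the parameter balancing $2m\le k$, $k'=\Omega(k)$, $\kappa=O(\log 1/\eps)$, which the paper leaves implicit.
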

Since we can choose $k \gg \kappa$, this immediately implies that the bound of Corollary~\ref{cor:H-imposs-n} does not hold for quantum protocols. Similar violations can be shown for the other two lower bounds (given in Corrollary~\ref{cor:main2}). For example, statistically secure and universally composable\footnote{Stand-alone statistically secure commitments based on stateless two-party primitives are universally composable~\cite{DGMN08}.} commitments can be implemented from shared randomness $P_{UV}$ that is distributed according to $\RabinOT{p}{}$ at a rate of $\Hshannon{U}{V}=1-p$ \cite{WiNaIm03}. Using Theorem~\ref{thm:optimal-protocol}, one can implement $\mF_{\texttt{OT}}^{B \rightarrow A,k}$ with $k \in \Omega(n(1-p))$ from $n$ copies of $P_{UV}$. Since $\Ishannon{U}{V}=p$, quantum protocols can also violate the bound of Corollary \ref{cor:I-imposs-n}.

It has been an open question whether noiseless quantum communication can increase the commitment capacity~\cite{WiNaIm03}. Our example implies a positive answer to this question.

\section{Impossibility Results for Quantum Oblivious Transfer Reductions}
We consider finite-dimensional Hilbert spaces $\hi{}$. A quantum state $\rho$ is a positive semi-definite operator on $\hi{}$ satisfying $\tr(\rho) = 1$. We use the notation $\rho_{AB}$ for a state on $\hi{A}\kron\hi{B}$ and define the marginal state $\rho_A:=\ptr{B}{\rho_{AB}}$.  We use the symbol $\idA$ to denote either the identity operator on $\hi{A}$ or the identity operator on the states on $\hi{A}$; it should be clear from the context which one is meant. Given a finite set $\mX$ and an orthonormal basis $\{\ket{x} \mid x\in\mX\}$  of a Hilbert space $\hi{$\mX$}$ we can encode a classical probability distribution $P_X$ as a quantum state $\rho_X=\sum_{x\in\mX}P_X (x) \vecstate{x}$. We define the state corresponding to the uniform distribution on $\mX$ as $\tau_{\mX}:=\frac{1}{|\mX|}\sum_{x\in\mX}\vecstate{x}$.
A state $\rho_{XB}$ on $\hi{$\mX$}\kron\hi{B}$ is a classical-quantum or cq-state if it is of the form $\rho_{XB}=\sum_{x \in\mX}p_x\vecstate{x}\kron\rho_B^x$. The \emph{Hadamard transform} is the unitary described by the matrix $H=\tfrac{1}{\sqrt{2}}\left(\begin{smallmatrix} 1&1\\ 1&-1 \end{smallmatrix}\right)$ in the computational basis $\{\ket{0},\ket{1}\}$. For $x,\theta \in \sbin^n$, we write $H^\theta \ket{x}$ for the state $H^\theta \ket{x}=H^{\theta_1}\ket{x_1}\ldots H^{\theta_n}\ket{x_n}$. We also call states of this form \emph{BB84-states}. When speaking of the basis $\theta \in \sbin^n$ we mean the basis $\{H^\theta\ket{x} \mid x\in \sbin^n\}$. For a given basis $\{\ket{x_1},\ldots,\ket{x_d}\}$ we say that we \emph{measure in basis $\mB$} to indicate that we perform a projective measurement given by the operators $P_k=\proj{x_k}$ for all $k\in[d]$.  
The \emph{trace distance} between two quantum states $\rho$ and $\tau$ is defined as
\begin{align*}
 \dis(\rho,\tau):=\max_{\mE}D(\mE(\rho),\mE(\tau))\;.
\end{align*}
where the maximum is over all POVMs and $\mE(\rho)$ is the probability distribution of the measurement outcomes. In particular, for any two cq-states $\rho_{X A}$ and $\sigma_{X A}$, $\dis(\rho_{X A}, \sigma_{XA}) \leq \eps$
implies that for any measurement $G$ on system $A$, we have
\begin{align}\label{eq:guess0}
 \big | \Pr[ G(\rho_A) = X] - \Pr[ G(\sigma_A) = X] \big | \leq \eps\;.
\end{align}
If we choose $\sigma_{X A} := \tau_X \otimes \sigma_A$, this implies that 
\begin{equation} \label{eq:guess}
 \Pr[ G(\rho_A) = X] \leq \frac12 + \eps\;.
\end{equation}

The \emph{conditional von Neumann entropy} is defined as
\[ \chvn{A}{B}{\rho} := \hvn{\rho_{AB}}{} - \hvn{\rho_B}{}\;,\]
where $H(\rho) := \tr( - \rho \log(\rho))$. The Alicki-Fannes inequality \cite{AliFan04} implies that 
\begin{equation} \label{eq:alifan}
 \chvn{A}{B}{\rho} \geq (1-4\eps) \cdot \log |A| - 2h(\eps)\;,
\end{equation}
for any state $\rho_{AB}$ with  $\dis(\rho_{AB}, \tau_A \otimes \rho_{B}) \leq \eps$.
Let $\rho_{XB}$ be a state that is classical on $X$. If there exists a measurement on $B$ with outcome $X'$ such that $\Pr[X' \neq X] \leq \eps$, then
\begin{equation} \label{eq:fano-q}
  \chvn{ X}{B}{\rho} \leq \chvn{X}{X'}{} \leq h(\eps) + \eps \cdot \log |X|\;.
\end{equation}
Let $\rhoABC$ be a tripartite state. Subadditivity and the triangle inequality \cite{AL70} imply that 
\begin{align}\label{eq:cond-mi}
\chvn{A}{BC}{\rho}\geq\chvn{A}{B}{\rho}-2\hvn{C}{\rho}\;.
\end{align}
The conditional entropy $\chvn{A}{B}{\rho}$ can decrease by at most $\log|Z|$ when conditioning on an additional classical system $Z$, i.e., for any tripartite state $\rhoABZ$ that is classical on $Z$ with respect to some orthonormal basis $\{\ket{z}\}_{z \in \mZ}$, we have
\begin{align}\label{chain-rule-c}
 \chvn{A}{BZ}{\rho}\geq \chvn{A}{B}{\rho}-\log |Z|\;.
\end{align}
The next lemma can be obtained by applying the asymptotic equipartition property to the corresponding lemma for the smoothed min-entropy in \cite{WTHR11}. It shows that the entropy $\chvn{A}{BC}{\rho}$ cannot increase too much when a projective measurement is applied to system $C$.

\begin{lemma}
  \label{lem:data-processing-bound}
  Let $\rhoABC$ be a tri-partite state. Furthermore, let $\cM$ be a projective measurement in the basis $\{\ket{z} \}_{z \in \mZ}$ on C and 
  $\idx{\rho}{ABZ} := (\id_{AB} \kron \cM )(\rhoABC)$. 
Then,
  \begin{align*}
  	\chvn{A}{BC}{\rho} \geq \chvn{A}{BZ}{\rho} - \log \abs{Z} \; .
  \end{align*}
\end{lemma}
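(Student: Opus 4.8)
The statement to prove is Lemma~\ref{lem:data-processing-bound}: for a tripartite state $\rhoABC$ with a projective measurement $\cM$ in basis $\{\ket{z}\}_{z\in\mZ}$ applied to $C$, producing $\rhoABZ = (\id_{AB}\kron\cM)(\rhoABC)$, we have $\chvn{A}{BC}{\rho} \geq \chvn{A}{BZ}{\rho} - \log|Z|$. The plan is to derive this as the von Neumann-entropy analogue of the known smooth-min-entropy data-processing bound from \cite{WTHR11}, exactly as the remark preceding the lemma suggests, by invoking the asymptotic equipartition property (AEP).

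First I would recall the relevant smooth-min-entropy statement from \cite{WTHR11}: if $\cM$ is a projective measurement on $C$ and $\rhoABZ$ is the post-measurement state, then $\chmin{\eps}{A}{BC}_\rho \geq \chmin{\eps}{A}{BZ}_\rho - \log|Z|$ (and dually a max-entropy version), where $\chmin{\eps}{A}{B}$ denotes the smooth conditional min-entropy. The key point is that this inequality holds for a single copy and for every smoothing parameter $\eps$. I would then apply it to the $n$-fold tensor product $\rho^{\otimes n}_{ABC}$, with the measurement $\cM^{\otimes n}$ on the $n$ copies of $C$; since $(\cM^{\otimes n})(\rho^{\otimes n}_{ABC}) = (\rhoABZ)^{\otimes n}$, this yields
\begin{align*}
\tfrac{1}{n}\chmin{\eps}{A^n}{B^nC^n}_{\rho^{\otimes n}} \geq \tfrac{1}{n}\chmin{\eps}{A^n}{B^nZ^n}_{\rho^{\otimes n}} - \log|Z|\;.
\end{align*}
Now I would take $n\to\infty$ and invoke the quantum AEP, which states that $\tfrac{1}{n}\chmin{\eps}{A^n}{B^n}_{\rho^{\otimes n}} \to \chvn{A}{B}{\rho}$ as $n\to\infty$ for any fixed $\eps \in (0,1)$. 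Applying this to both sides (on $B^nC^n$ and on $B^nZ^n$ separately) gives precisely $\chvn{A}{BC}{\rho} \geq \chvn{A}{BZ}{\rho} - \log|Z|$, which is the claim.

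The main obstacle is making sure the single-shot ingredient from \cite{WTHR11} is cited in the exact form needed — in particular that the $\log|Z|$ loss there is dimension-dependent but independent of $n$, so it survives the normalization by $n$, and that the measurement in their statement is of the same projective type (rather than a general channel, which could change the dimension bookkeeping). One should also check that the AEP is being applied in a regime where it is valid, i.e.\ with the same notion of smooth min-entropy (the one corresponding to $\chvn{\cdot}{\cdot}{\cdot}$ in the classical-limit sense used elsewhere in this paper) and with a fixed smoothing $\eps$ held constant while $n\to\infty$. Apart from these citation-alignment issues, the argument is a routine lift of a one-shot bound to the i.i.d.\ asymptotic setting, so no further technical work is needed.
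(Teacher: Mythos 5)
Your proposal is correct and is precisely the approach the paper intends: the paper's only explanation of this lemma is the one-sentence remark that it follows by applying the asymptotic equipartition property to the corresponding smooth-min-entropy lemma of \cite{WTHR11}, and your argument is a correct elaboration of that. One small slip in your narration: when the one-shot bound is applied to $\rho^{\otimes n}$, the measurement $\cM^{\otimes n}$ has $|Z|^n$ outcomes, so the additive loss is $n\log|Z|$ (not $\log|Z|$, and thus not ``independent of $n$''); your displayed inequality after dividing by $n$ is nonetheless correct, and this linear growth is exactly what makes the $-\log|Z|$ term survive in the $n\to\infty$ limit.
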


\subsection{Security Definition}

A protocol is an $\eps$-secure implementation of OT in the malicious model if for any adversary $\mA$ attacking the protocol (real setting), there exists a \emph{simulator} $\mS$ using the ideal OT (ideal setting) such that  for all inputs of the honest players the real and the ideal setting can be distinguished with an advantage of at most $\eps$. This definition implies the following three conditions (see also \cite{FS09}):
\begin{itemize}
 \item Correctness: If both players are honest, Alice has random inputs $(X_0,X_1)\in \sbin^k\times\sbin^k$ and Bob has input $c\in\sbin$, then Bob always receives $X_c$ in the ideal setting. This implies that in an $\eps$-secure protocol, Bob must output a value~$Y$, where
\begin{equation} \label{eq:corr}
 \Pr[Y \neq X_c] \leq \eps\;.
\end{equation}

 \item Security for Alice: Let Alice be honest and Bob malicious, and let Alice's input be chosen uniformly at random. In the ideal setting, the simulator must provide the ideal OT with a classical input $C' \in \{0,1\}$. He receives the output $Y$ and then outputs a quantum state $\sigma_B$ that may depend on $C'$ and $Y$. The output of the simulator together with classical values $X_0$, $X_1$ and $C'$ now defines the state $\sigma_{X_0 X_1 B C'}$. Since $X_{1-C'}$ is random and independent of $C'$ and $Y$, we must have
\begin{align} \label{eq:secA1}
\sigma_{X_{1-C'} X_{C'} B C'}=\pi_{\sbin^k} \otimes \sigma_{X_{C'} B C'}
\end{align}
and
\begin{align} \label{eq:secA2}
\dis(\sigma_{X_0 X_1 B},\rho_{X_0 X_1 B}) \leq \eps\;,
\end{align}

where $\rho_{X_0 X_1 B}$ is the resulting state of the protocol.\footnote{The standard security definition of OT considered here requires Bob's choice bit to be fixed at the end of the protocol. To show that a protocol is insecure, it suffices, therefore, to show that Bob can still choose after the termination of the protocol whether he wants to receive $x_0$ or $x_1$. Lo in \cite{Lo97} shows impossibility of OT in a stronger sense, namely that Bob can learn all of Alice's inputs.}
\item
Security for Bob: If Bob is honest and Alice malicious, the simulator outputs a quantum state $\sigma_A$ that is independent of Bob's input $c$. Let $\rho_{A}^c$ be the state that Alice has at the end of the protocol if Bob's input is $c$. The security definition now requires that $\dis(\sigma_A,\rho_{A}^c)\leq \eps$ for $c \in \{0,1\}$. By the triangle inequality, we get
\begin{equation} \label{eq:secB}
\dis(\rho_{A}^0, \rho_A^1) \leq 2\eps\;.
\end{equation}
\end{itemize}

Note that the Conditions \eqref{eq:corr} - \eqref{eq:secB} are only necessary for the security of a protocol, they do \emph{not} imply that a protocol is secure. 

\cancel{
\begin{lemma}\label{lem:uhlmann}
Let $\ket{\psiAB^0}$ and $\ket{\psiAB^1}$ be states with $\dis(\rhoB^0,\rhoB^1)\leq \eps$ where $\rhoB^x=\ptr{A}{\ket{\psiAB^x}\bra{\psiAB^x}}$. Then there exists a unitary $U_{A}$ such that 
\[\dis(\ket{\phiAB^1}\bra{\phiAB^1},\ket{\psiAB^1}\bra{\psiAB^1})\leq \sqrt{2\eps}\]
with $\phiAB^1=(U_{A}\kron \idB)\ket{\psiAB^0}$.
\end{lemma}
\begin{proof}
$\dis(\rhoB^0,\rhoB^1)\leq \eps$ implies $F(\rhoB^0,\rhoB^1)\geq 1-\eps$. From Uhlmann's theorem we know that there exists a unitary $U_A$ such that
\[F(\ket{\phiAB^1}\bra{\phiAB^1},\ket{\psiAB^1}\bra{\psiAB^1})\geq 1-\eps\] where $\ket{\phiAB^1}=(U_{A}\kron \idB)\ket{\psiAB^0}$. Since $\dis(\rho,\tau)\leq \sqrt{1-F(\rho,\tau)^2}$ for any $\rho,\tau \in \normstates{\h}$ \cite{FG99}, we have $\sqrt{1-\dis(\ket{\phiAB^1}\bra{\phiAB^1},\ket{\psiAB^1}\bra{\psiAB^1})^2}\geq 1-\eps$. Hence,
\begin{align*}
\dis(\ket{\phiAB^1}\bra{\phiAB^1},\ket{\psiAB^1}\bra{\psiAB^1})\leq \sqrt{1-(1-\eps)^2}\leq \sqrt{2\eps}
\end{align*}
\end{proof}
}
In the following we present two impossibility results for quantum protocols that implement $\OT{1}{2}{k}$ using a bit commitment functionality or randomness distributed to the players. We consider protocols which are information-theoretically secure. In particular, we assume that the adversary has unlimited memory space and can apply arbitrary quantum operations to his whole quantum system. Our proofs use similar techniques as the impossibility results in \cite{Mayers97,LoChau97,Lo97}. 

We assume that the two parties, Alice  and Bob, have access to a noiseless quantum and a noiseless classical channel. The protocol proceeds in rounds, where in any round of the protocol, the parties may perform an arbitrary quantum operation on the system in their possession. This operation can be conditioned on the available classical information and generates the inputs to the communication channels. The quantum channel transfers a part of one party's system to the other party. The classical channel measures the input in a canonical basis and sends the outcome of the measurement to the receiver. We assume that the total number of rounds of the protocol is bounded by a finite number. Since we can always introduce empty rounds, this corresponds to the assumption that the number of rounds is equal in every execution of the protocol. 

All quantum operations of both parties can be purified by introducing an additional memory space: Any quantum operation $\mE$ can be simulated by adding an ancillary system, applying a unitary on the composite system, and then tracing out part of the remaining system. More precisely, for any TP-CPM~$\mE$ from $\normstates{\hA}$ to $\normstates{\hB}$, there exists a Hilbert space $\hR$, a unitary $U$ acting on $\h_{ABR}$ and a pure state $\sigma_{BR}\in \normstates{\h_{BR}}$  such that
\begin{align}\label{eq:stinespring}
 \mE(\rhoA)=\ptrace{AR}{U (\rhoA \otimes \sigma_{BR}) U^\dagger}.
\end{align}
This is known as the \emph{Stinespring dilation}~\cite{Stine55} of $\mE$. Thus, we can assume that the parties apply in every round of the protocol a unitary to their system conditioned on the information shared over the classical channel. In particular, we can assume that the system remains in a pure state conditioned on the information shared over the classical channel if the initial state of the protocol is pure. Since a malicious player can purify all his quantum operations in the original protocol without being detected, the purified protocol is secure according to our definition if the original protocol is secure.

An important tool in our impossibility proofs is the following technical lemma from \cite{WTHR11}, which generalizes a result already used in \cite{Mayers97,LoChau97,Lo97}.
\begin{lemma}\label{lem:classical-attack}
For $b \in \{0,1\}$, let 
\[\rho_{XX'AB}^b=\sum_xP_b(x)\vecstate{x}_{X}\kron \vecstate{x}_{X'}\kron \vecstate{\psiAB^{x,b}}\]
with $\dis(\rho_{X'B}^0,\rho_{X'B}^1)\leq \eps$. Then there exists a unitary $U_{AX}$ such that 
\[\dis(\rho_{XX'AB}'^1,\rho_{XX'AB}^1)\leq\sqrt{2 \eps} \]
where $\rho_{XX'AB}'^1=(U_{XA}\kron \id_{X'B})\rho_{XX'AB}^0(U_{XA}\kron \id_{X'B})^\dagger$.
\end{lemma}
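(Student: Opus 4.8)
The plan is to prove Lemma~\ref{lem:classical-attack} via Uhlmann's theorem, exactly as one does for the simpler two-state version (the cancelled \texttt{lem:uhlmann} above), but carried out while keeping the classical registers $X$ and $X'$ intact. First I would observe that $\rho_{XX'AB}^b$ is a purification of $\rho_{X'B}^b$: indeed, tracing out $A$ and $X$ from $\rho_{XX'AB}^b$ leaves $\sum_x P_b(x)\vecstate{x}_{X'}\kron\rho_B^{x,b}$, where $\rho_B^{x,b}=\ptr{A}{\vecstate{\psiAB^{x,b}}}$, and the purifying system for $\rho_{X'B}^b$ is the joint system $XA$. So $\rho_{XX'AB}^0$ and $\rho_{XX'AB}^1$ are two purifications, on the common system $XA$, of the two states $\rho_{X'B}^0$ and $\rho_{X'B}^1$ which by hypothesis are $\eps$-close in trace distance.

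Next I would convert the trace-distance bound into a fidelity bound: $\dis(\rho_{X'B}^0,\rho_{X'B}^1)\leq\eps$ implies $F(\rho_{X'B}^0,\rho_{X'B}^1)\geq 1-\eps$ (using $D\geq 1-F$). Uhlmann's theorem then gives a unitary on the purifying system $XA$ — that is, a unitary $U_{XA}$ acting on $\hi{}_{X}\kron\hi{}_A$ — such that the purification $(U_{XA}\kron\id_{X'B})\rho_{XX'AB}^0(U_{XA}\kron\id_{X'B})^\dagger$ of $\rho_{X'B}^0$ has fidelity at least $1-\eps$ with the given purification $\rho_{XX'AB}^1$ of $\rho_{X'B}^1$. (Here I use that $\rho_{XX'AB}^b$ is pure on $XX'AB$ when the $\ket{\psiAB^{x,b}}$ are pure, so these are genuine pure-state purifications and Uhlmann applies directly; if one wants to be careful that $U$ acts only on $XA$ and not on $X'B$, note that $X'B$ is precisely the system being purified, so Uhlmann's freedom is exactly a unitary on $XA$.) Finally, converting fidelity back to trace distance via $D(\rho,\tau)\leq\sqrt{1-F(\rho,\tau)^2}$ for pure states yields
\[
\dis(\rho_{XX'AB}'^1,\rho_{XX'AB}^1)\leq\sqrt{1-(1-\eps)^2}=\sqrt{2\eps-\eps^2}\leq\sqrt{2\eps},
\]
which is the claim, with $\rho_{XX'AB}'^1=(U_{XA}\kron\id_{X'B})\rho_{XX'AB}^0(U_{XA}\kron\id_{X'B})^\dagger$ as stated.

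The only subtle point — and the step I expect to need the most care — is the bookkeeping of which subsystem Uhlmann's unitary is allowed to act on. One must argue that $\rho_{XX'AB}^0$ and $\rho_{XX'AB}^1$ really are purifications of $\rho_{X'B}^0$ and $\rho_{X'B}^1$ \emph{on the same ancilla} $XA$, with the \emph{same} total Hilbert-space dimension, so that Uhlmann's theorem delivers a unitary supported entirely on $XA$; the structure of the states (classical copy $X$ of the index, plus purification $A$ of $\rho_B^{x,b}$) makes this automatic, but it should be spelled out. A minor auxiliary remark is that $U_{XA}$ may be taken to act as a controlled unitary, $U_{XA}=\sum_x \vecstate{x}_X\kron U_A^x$, since it need only rotate $\ket{\psiAB^{x,0}}$ towards $\ket{\psiAB^{x,1}}$ within each branch; this is not needed for the statement but clarifies why the attack is "classical" in the sense that the adversary's operation can be made to commute with a measurement of $X$. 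Everything else is the standard Fuchs--van de Graaf chain of inequalities and requires no real work.
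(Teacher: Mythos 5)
Your plan is the right one in spirit --- Uhlmann's theorem plus Fuchs--van~de~Graaf is exactly what the proof rests on --- but the argument as written has a real gap: $\rho_{XX'AB}^b$ is \emph{not} a pure state. Since $\vecstate{x}=\ket{x}\bra{x}$, the state
\[
\rho_{XX'AB}^b=\sum_x P_b(x)\,\ket{x}\bra{x}_X\otimes\ket{x}\bra{x}_{X'}\otimes\vecstate{\psiAB^{x,b}}
\]
is a proper convex mixture over $x$ whenever $P_b$ is non-degenerate; it is a cq-state, not a rank-one projector. Consequently it is an \emph{extension} of $\rho_{X'B}^b$, not a purification, and Uhlmann's theorem --- which relates two \emph{purifications} of close states by a unitary on the purifying ancilla --- does not apply to it. The sentence where you assert that ``$\rho_{XX'AB}^b$ is pure on $XX'AB$ when the $\ket{\psiAB^{x,b}}$ are pure, so these are genuine pure-state purifications'' is exactly where the argument breaks: purity of each branch state $\ket{\psiAB^{x,b}}$ does not make the mixture pure, and the subtlety you flagged at the end (``the step I expect to need the most care'') is precisely the step that fails.

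The fix --- which is what the paper and \cite{WTHR11} actually do --- is to pass to a genuine purification by adding a further \emph{coherent} copy register $X''$, i.e.\ to define
\[
\ket{\psi_{XX'X''AB}^b}:=\sum_x\sqrt{P_b(x)}\,\ket{x}_X\otimes\ket{x}_{X'}\otimes\ket{x}_{X''}\otimes\ket{\psiAB^{x,b}}\;,
\]
with amplitudes $\sqrt{P_b(x)}$ in a superposition rather than a mixture. This \emph{is} a pure state, and it purifies $\rho_{X'X''B}^b:=\operatorname{tr}_{XA}\bigl(\vecstate{\psi_{XX'X''AB}^b}\bigr)$ over the common ancilla $XA$. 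Because $X''$ is a classical copy of $X'$ in both $\rho_{X'X''B}^0$ and $\rho_{X'X''B}^1$, and both copying a classical register and tracing it out again are CPTP maps, one has $\dis(\rho_{X'X''B}^0,\rho_{X'X''B}^1)=\dis(\rho_{X'B}^0,\rho_{X'B}^1)\leq\eps$. Now your Uhlmann/Fuchs--van~de~Graaf chain applies verbatim to these pure states and yields a unitary $U_{XA}$ on the purifying system with $\dis\bigl((U_{XA}\otimes\id)\vecstate{\psi^0_{XX'X''AB}}(U_{XA}\otimes\id)^\dagger,\vecstate{\psi^1_{XX'X''AB}}\bigr)\leq\sqrt{2\eps}$. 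Finally, tracing out $X''$ recovers $\rho_{XX'AB}^b$, commutes with $U_{XA}\otimes\id_{X'X''B}$, and cannot increase the trace distance, which gives the stated bound. The skeleton you proposed is sound; the step that needs repair is the purification, which must be carried out coherently before Uhlmann is invoked.
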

\cancel{
\begin{proof}
Define $\ket{\psi_{XX'X''AB}^b}:=\sum_x\sqrt{P_b(x)}\ket{x}_X\kron\ket{x}_{X'}\kron\ket{x}_{X''}\kron\ket{\psiAB^{x,b}}$ and let
\[\rho_{X'X''B}^b=\ptrace{XA}{\vecstate{\psi_{XX'X''AB}^b}}.\]
Then
\[\dis(\rho_{X'X''B}^0,\rho_{X'X''B}^1)=\dis(\rho_{X'B}^0,\rho_{X'B}^1)\leq \eps\]
Thus, Lemma \ref{lem:uhlmann} implies the existence of a unitary $U_{XA}$ such that 
\[\dis(\vecstate{\phi_{XX'X''AB}^1},\vecstate{\psi_{XX'X''AB}^1})\leq \sqrt{2\eps}\]
with $\ket{\phi_{XX'X''AB}^1}=(U_{XA}\kron \id_{X'X''B})\ket{\psi_{XX'X''AB}^0}$. The statement then follows from the fact that taking the partial trace over $X''$ cannot increase the trace distance and commutes with the unitary $U_{XA}$.
\end{proof}
}

First, we consider protocols where the players can use a certain number $n$ of ideal bit commitments as a resource to implement an oblivious transfer.

\begin{theorem} \label{thm:imposs:com}
Any protocol that implements a $\OT{1}{2}{k}$ with an error of at most~$\eps$ , where $0\leq \eps \leq 0.002$, from black-box bit commitments, has to use at least $(1- 3 \sqrt{\eps}) \cdot k - 3 h(\sqrt{\eps})$
bit commitments.
\end{theorem}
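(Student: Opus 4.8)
The plan is to adapt the purification-based impossibility argument (as in \cite{Mayers97,LoChau97,Lo97}, via Lemma~\ref{lem:classical-attack}) to a protocol that has black-box access to $n$ ideal bit commitments, and then extract a lower bound on $n$ from an entropy-counting argument. First I would purify the whole protocol: since every quantum operation can be purified (Stinespring, \eqref{eq:stinespring}) and the classical channel only measures in a fixed basis, the joint state at the end of an honest execution is pure conditioned on the transcript. The subtle point is how to model the ideal bit commitment functionality inside this purified picture. I would treat each commitment as follows: in the commit phase, the committing party holds a system whose reduced state on the receiver's side is (up to the revealed ``committed'' flag) independent of the committed bit — this is the hiding property — and in the open phase, the committed bit is revealed and is, by the binding property, determined (up to error) by the committer's side at commit time. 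Each such ideal commitment can be purified by having the committer keep a purifying register; the key constraint is that the receiver's reduced state before opening is $\eps'$-close to being independent of the committed value.

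Next I would run the standard cheating-Bob attack. Let Alice have uniform inputs $(X_0,X_1)\in\sbin^k\times\sbin^k$ and let Bob run the protocol honestly with input $c=0$, obtaining a purified joint pure state $\ket{\psi^0}$ on Alice's system $A$, Bob's system $B$, and a copy register $X'$ holding $X_0$ (so that $\rho^0_{XX'AB}$ has the form in Lemma~\ref{lem:classical-attack} with $X=X_0$). By the security-for-Bob condition, Alice's reduced state is almost independent of $c$: $\dis(\rho^0_A,\rho^1_A)\le 2\eps$, hence also the reduced states restricted to the subsystem of $A$ that Alice would see in the two honest runs are $O(\eps)$-close. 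Applying Lemma~\ref{lem:classical-attack} (with the roles arranged so that the ``$X'B$'' part is the part whose reduced states are close) gives a local unitary on Bob's side mapping the $c=0$ state $O(\sqrt\eps)$-close to the $c=1$ state. Thus a cheating Bob, after an honest $c=0$ run, can locally transform his registers to decode $X_1$ as well: there is a measurement on Bob's final system that outputs $X_1$ with error $O(\sqrt\eps)$. Combined with the honest decoding of $X_0$ (correctness \eqref{eq:corr}), a cheating Bob recovers \emph{both} $X_0$ and $X_1$ with error $O(\sqrt\eps)$ from his final system together with the transcript and the opened commitment values.

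Now comes the entropy count. Write $B_{\mathrm{fin}}$ for Bob's final quantum register, $M$ for the classical transcript, and $D$ for the union of all bits that are \emph{opened} during the protocol (each of the $n$ commitments contributes at most one bit — or more carefully, the committed string, but for a bit commitment at most $1$ bit). Before any openings, Bob's view is essentially independent of the committed contents (hiding), and the decoding above shows $\chvn{X_0X_1}{B_{\mathrm{fin}}MD}{\rho}\le h(O(\sqrt\eps))+O(\sqrt\eps)\cdot 2k$ by \eqref{eq:fano-q}. On the other hand, before the openings $X_0,X_1$ are uniform and essentially independent of $(B,M)$: by security for Alice, Alice's inputs are hidden from Bob, so $\dis(\rho_{X_0X_1 B M}, \tau_{X_0X_1}\otimes\rho_{BM})$ is small, and hence $\chvn{X_0X_1}{BM}{\rho}\ge(1-O(\sqrt\eps))\cdot 2k-2h(O(\sqrt\eps))$ by \eqref{eq:alifan} (Alicki--Fannes). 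The openings $D$ can decrease the conditional von Neumann entropy by at most $\log|D|\le n$ bits, using \eqref{chain-rule-c} (applied $n$ times, or once to the classical register $D$). Chaining these three facts: $(1-O(\sqrt\eps))\,2k - O(h(\sqrt\eps)) - n \le \chvn{X_0X_1}{B_{\mathrm{fin}}MD}{\rho} \le O(h(\sqrt\eps)) + O(\sqrt\eps)\cdot 2k$, which rearranges to $n \ge (1-3\sqrt\eps)\,k - 3h(\sqrt\eps)$ once the constants are tracked carefully and the smallness hypothesis $\eps\le 0.002$ is used to absorb lower-order terms.

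The main obstacle I expect is the careful modelling of the ideal bit-commitment functionality in the purified quantum picture — in particular, justifying rigorously that (i) before opening, the receiver's state is genuinely (close to) independent of the committed value, so that the Alicki--Fannes lower bound on $\chvn{X_0X_1}{BM}{\rho}$ is legitimate, and (ii) the opened values contribute at most $n$ bits to the conditioning, i.e.\ that one cannot ``cheat'' by having the commitments leak more than their nominal size. Handling the binding property for a cheating committer is not needed here since in the attack Alice is honest; but one must be careful that the simulator-based security definitions in Section~V (Conditions \eqref{eq:corr}--\eqref{eq:secB}) indeed give the two ingredients used: correctness plus security-for-Bob for the decoding attack, and security-for-Alice for the entropy lower bound. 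Tracking the $\sqrt\eps$ versus $\eps$ losses through Lemma~\ref{lem:classical-attack} and the Alicki--Fannes and Fano-type bounds, and verifying that the final constants come out as $3$ and $3$, is the routine-but-delicate part.
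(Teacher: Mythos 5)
Your entropy lower bound is incorrect, and as a result the argument would prove something false. You claim that, by security for Alice, $\dis(\rho_{X_0X_1BM}, \tau_{X_0X_1}\otimes\rho_{BM})$ is small and hence $\chvn{X_0X_1}{BM}{\rho}\geq (1-O(\sqrt{\eps}))\cdot 2k - O(h(\sqrt{\eps}))$. But the OT security condition does \emph{not} hide both of Alice's inputs from Bob --- it only hides $X_{1-C'}$ given $C'$, since Bob is \emph{supposed} to learn $X_c$. Lemma~\ref{lem:semihonest} in the paper gives precisely $\dis(\rho_{X_0X_1 B},\,\tau_{X_0}\otimes\rho_{X_1 B}) \leq 5\eps$ when Bob ran with $c=1$: only $X_0$ is (close to) uniform, and only relative to the remainder including $X_1$. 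Your $2k$-scale lower bound would make the theorem's conclusion $n \gtrsim 2k$, which is refuted by the matching protocol of Theorem~\ref{thm:optimal-protocol}, which implements $\OT{1}{2}{k}$ committing to only $O(k+\log 1/\eps)$ bits in total. The qualifier ``before the openings'' in your sentence does not rescue this: the protocol is free to leak $X_c$ to Bob in the clear at any round, so there is no point in time at which $X_0X_1$ must be uniform given $BM$.

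The paper's proof fixes Bob's input to $c=1$ and argues only about $X_0$: by Lemma~\ref{lem:semihonest} and Alicki--Fannes, $\chvn{X_0}{B}{\rho^1}\geq (1-20\eps)k-10h(\eps)$, while the cheating attack (Lemma~\ref{lem:classical-attack}, applied after purifying the commitments into registers $C_A,C_B$ that Bob holds in a modified protocol) upper bounds $\chvn{X_0}{BC_AC_B}{\rho^1}$ by roughly $(\eps+2\sqrt{\eps})k + h(\eps)+h(2\sqrt{\eps})$. The gap is bridged by the observation that conditioning on $C_A$ (an $n_A$-bit classical register) and on $C_B$ (the purification of a measured $n_B$-bit register, via Lemma~\ref{lem:data-processing-bound}) together drops the conditional entropy by at most $n=n_A+n_B$. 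This also corrects a second, more subtle divergence in your proposal: you count only the \emph{opened} commitment contents $D$ towards the entropy loss, but the attack gives Bob access to the purifying/copy registers of \emph{all} commitments, opened or not, and it is the entropy cost of conditioning on all of them that must be bounded by $n$. The paper handles the not-yet-measured quantum part of Bob's commitments with Lemma~\ref{lem:data-processing-bound}, which is exactly the technical tool you would be missing. Your high-level structure (purify, apply the $\sqrt{\eps}$-close unitary attack, count entropy against commitment size, absorb error terms using $\eps\le 0.002$) is the right one, but the entropy-accounting needs the single-input version and the all-commitments register bound to come out to the stated $(1-3\sqrt{\eps})k - 3h(\sqrt{\eps})$.
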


\begin{proof}
Let $n_A$ be the number of bit commitments from Alice to Bob and $n_B$ the number of bit commitments from Bob to Alice used in the protocol and $n=n_A+n_B$. Let Alice choose her inputs $X_0$ and $X_1$ uniformly at random. Let the final state of the protocol on Alice's and Bob's system be $\rho_{AB}^c$, when both players are honest and Bob has input $c \in \{0,1\}$. If Bob is executing the protocol honestly using input $c=1$, he can compute $X_1$ with an error of at most $1-\eps$. Since the protocol is $\eps$-secure for Alice, we can conclude from Lemma~\ref{lem:semihonest} that $\dis(\rho_{X_0 B}^1, \tau_{X_0} \otimes \rho_{ B}^1) \leq 5 \eps\;.$ Equation~\eqref{eq:alifan} implies that 
\begin{align}
 \chvn{X_0}{B}{\rho^1} 
 &\geq (1- 20 \eps) \cdot k - 10 h(\eps)\;.
\end{align}
In the following, we consider a modified protocol that does not use the bit commitment functionality and is not necessarily secure for Alice. In this protocol we make Bob more powerful in the sense that he can simulate the original protocol locally. Thus, the modified protocol is still secure for Bob. Furthermore, the resulting state is pure conditioned on the classical communication. Therefore, we can apply  Lemma~\ref{lem:classical-attack} to derive an upper bound on the entropy of $X_0$ conditioned on Bob's system in the new protocol. Finally, we use the data-processing inequalities for the conditional entropy to show that this entropy can have decreased in the modified protocol by at most the number of commitments, which, together with inequality  \eqref{eq:alifan}, implies the statement of the theorem.

In the modified protocol, Alice, instead of sending bits to the commitment functionality, measures the bits to be committed, stores a copy of each and sends them to Bob, who stores them in a classical register, $C_A$. When one of these commitments is opened, he moves the corresponding bit to his register $B$. Bob simulates the action of the commitment functionality locally as follows: Instead of measuring a register, $Y$, and sending the outcome to the commitment functionality, he applies the isometry $U:\ket{y}_Y\mapsto \ket{yy}_{YY'}$ purifying the measurement of the committed bit and stores $Y'$ in another register, $C_B$. When Bob has to open the commitment, he measures $Y'$ and sends the outcome to Alice over the classical channel. The state of the modified protocol is pure conditioned on the classical communication.  Let $\rho_{ABC}^c$, where $C$ stands for $C_AC_B$, be the final state of this protocol. Note that its marginal state $\rho_{AB}^c$ is the corresponding state at the end of the original 
protocol. Since the protocol is $\eps$-secure for Bob, we have $\dis(\rho_A^0,\rho_A^1) \leq 2\eps$. From Lemma~\ref{lem:classical-attack} follows that there exists a unitary $U_{BC}$ such that Bob can transform the state $\rho^1$ into the state $\rhob^0$ with $\dis(\rho^0, \rhob^0) \leq 2 \sqrt{\eps}$. Since given the state $\rho_{X_0B}^0$, $X_0$ can be guessed from $\rho_B^0$ with probability $1-\eps$, it follows from~\eqref{eq:guess0} that $X_0$ can be guessed from $\rho_{BC}^1$ with a probability of at least $1- \eps - 2 \sqrt{\eps}$. By inequality~\eqref{eq:fano-q}, we obtain
\begin{align}\label{ineq:ent-low-bound}
\chvn{X_0}{B}{\rho^1} & \leq h(\eps) + h(2 \sqrt{\eps}) + (\eps + 2 \sqrt{\eps}) \cdot k\;.
\end{align}
We can use Lemma \ref{lem:data-processing-bound} and inequality~\eqref{chain-rule-c} to conclude that 
\begin{align*}
\chvn{X_0}{BC_AC_B}{\rho^1}\geq \chvn{X_0}{B}{\rho^1}-n\;.
\end{align*}
For $\eps \leq 0.002$, we have $h(\sqrt{\eps})>11h(\eps)$ and $21\eps < \sqrt{\eps}$. This implies the statement.
\end{proof}
Theorem \ref{thm:imposs:com} implies that there exists a constant $c>0$ such that any protocol that implements $m+1$ bit commitments from $m$ bit commitments must have an error of at least $c/m$, i.e., bit commitment cannot be extended by quantum protocols. This result can be generalized in the following sense: For any protocol that implements a single string commitment from a certain number of bit commitments, the length of the implemented string commitments is essentially bounded by the number of used bit commitments, even if the protocol is allowed to have a small constant error \cite{WTHR11}. 

Next, we consider protocols where the two players have access to distributed randomness $P_{UV}$. We can model this primitive as a quantum primitive $\sum_{u,v} \sqrt{P_{UV}(u,v)} \cdot \ket{u,v}_{UV} \otimes \ket{u,v}_E$ that distributes the values $u$ and $v$ to Alice and Bob and keeps the register $E$. 
\begin{theorem}\label{thm:imposs:rand}
To implement a $\OT{1}{2}{k}$ with an error of at most $\eps$, where $0\leq \eps \leq 0.002$, from a primitive $P_{UV}$, we need 
\[\Chmaxeps{}{U}{V}+\Chmaxeps{}{V}{U}\geq  (1- 3 \sqrt{\eps}) \cdot k - 3 h(\sqrt{\eps})\;,\]
and
\[2\hvn{UV}{} \geq  (1- 3 \sqrt{\eps}) \cdot k - 3 h(\sqrt{\eps})\;.\]
\end{theorem}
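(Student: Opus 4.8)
The plan is to run the argument used in the proof of Theorem~\ref{thm:imposs:com}, with the bit‑commitment resource replaced by the primitive $P_{UV}$ in its purified form $\sum_{u,v}\sqrt{P_{UV}(u,v)}\,\ket{u,v}_{UV}\otimes\ket{u,v}_{E}$. Let Alice choose $X_0,X_1\in\sbin^k$ uniformly, and let $\rho^{c}$ denote the state of the honest protocol when Bob's input is $c$. As in that proof I would first establish the lower bound $\chvn{X_0}{B}{\rho^{1}}\ge(1-20\eps)k-10h(\eps)$: by correctness \eqref{eq:corr} honest Bob with input $1$ recovers $X_1$ up to error $\eps$, security for Alice together with Lemma~\ref{lem:semihonest} gives $\dis(\rho^{1}_{X_0B},\tau_{X_0}\otimes\rho^{1}_{B})\le 5\eps$, and \eqref{eq:alifan} yields the claim.

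Next I would set up a modified protocol in which the primitive is replaced by the pure state above, handing $U$ to Alice and $V$ together with the purifying register $E$ to Bob. Honest Alice keeps $U$ coherent and runs her classical program as controlled unitaries, while honest Bob ignores $E$; tracing out $E$ recovers the original protocol, so security for Alice and for Bob is preserved and $\rho^{c}_{AB}$ agrees with the original state, while Bob is only made stronger. Hence the modified protocol is still secure for Bob, it is not necessarily secure for Alice, and — this is why we purified — conditioned on the classical transcript its global state is pure. Security for Bob gives $\dis(\rho^{0}_{A},\rho^{1}_{A})\le 2\eps$, so Lemma~\ref{lem:classical-attack} provides a unitary on Bob's registers mapping $\rho^{1}$ to a state $2\sqrt\eps$‑close to $\rho^{0}$; since honest Bob with input $0$ recovers $X_0$ up to error $\eps$ from his system, $X_0$ can be guessed from Bob's full system $BE$ in $\rho^{1}$ with probability at least $1-\eps-2\sqrt\eps$, and \eqref{eq:fano-q} gives $\chvn{X_0}{BE}{\rho^{1}}\le h(\eps)+h(2\sqrt\eps)+(\eps+2\sqrt\eps)k$.

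It then remains to bound how much the conditional entropy can drop when $E$ is adjoined to Bob's system. For the second inequality I would take $E$ to be the canonical purifying register, so that $\hvn{E}{\rho^{1}}=\hvn{UV}{}$; then \eqref{eq:cond-mi} gives $\chvn{X_0}{BE}{\rho^{1}}\ge\chvn{X_0}{B}{\rho^{1}}-2\hvn{UV}{}$, and combining the three displays and simplifying (for $\eps\le 0.002$, using $21\eps<\sqrt\eps$, $h(\sqrt\eps)>11h(\eps)$ and $h(2\sqrt\eps)<2h(\sqrt\eps)$) yields $2\hvn{UV}{}\ge(1-3\sqrt\eps)k-3h(\sqrt\eps)$. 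For the first inequality I would instead distribute the purification between the two parties: Bob should receive, besides $V$, a register of log‑dimension $\Chmaxeps{}{U}{V}$ recording for each value $v$ of $V$ the index of $u$ inside $\supp{P_{U|V=v}}$, while the additional register needed to keep the global state pure (of log‑dimension $\Chmaxeps{}{V}{U}$) is kept on Alice's side and leaves her marginal classical; a data‑processing argument that measures these registers and applies Lemma~\ref{lem:data-processing-bound} and \eqref{chain-rule-c} then costs at most $\Chmaxeps{}{U}{V}+\Chmaxeps{}{V}{U}$, and the same simplification gives the first inequality.

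The main obstacle I expect is exactly this last construction for the first inequality. One must exhibit a purification of the classical state $\rho_{UV}$, split between Alice and Bob, such that (i) tracing it out returns $\rho_{UV}$, so the honest protocol and its security guarantees are unchanged; (ii) Alice's marginal remains classical, so the modification cannot help a malicious Alice and security for Bob still transfers; and (iii) the registers handed to Bob beyond $V$, and to Alice beyond $U$, have total log‑dimension $\Chmaxeps{}{U}{V}+\Chmaxeps{}{V}{U}$ — and then one must push the data‑processing step through without paying a spurious factor of two. The remaining ingredients (the lower bound on $\chvn{X_0}{B}{\rho^{1}}$, the modified‑protocol construction, and the guessing argument via Lemma~\ref{lem:classical-attack}) are direct transcriptions of the proof of Theorem~\ref{thm:imposs:com}.
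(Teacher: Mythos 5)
Your treatment of the second inequality (the $2\hvn{UV}{}$ bound) coincides with the paper's: purify the primitive, hand the whole purifying register $E=U'V'$ to Bob, derive the lower bound $\chvn{X_0}{B}{\rho^1}\geq(1-20\eps)k-10h(\eps)$ and the upper bound $\chvn{X_0}{BE}{\rho^1}\leq h(\eps)+h(2\sqrt\eps)+(\eps+2\sqrt\eps)k$ exactly as you describe, and close via \eqref{eq:cond-mi} with $\hvn{E}{\rho^1}=\hvn{UV}{}$.

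For the first inequality you propose a genuinely different route — splitting the purification across the two parties — and there are two real gaps. First, the object you need does not exist in general. You want a pure $\ket{\Psi}_{UE_AVE_B}$ with $\ptr{$E_AE_B$}{\proj{\Psi}}=\rho_{UV}$ (a classical, full-rank-on-its-support state) while $\log\dim E_A\leq\Chmaxeps{}{U}{V}$ and $\log\dim E_B\leq\Chmaxeps{}{V}{U}$. Purity forces $\dim(E_AE_B)\geq\mathrm{rank}(\rho_{UV})=|\supp{P_{UV}}|$, but the right-hand side can exceed $2^{\Chmaxeps{}{U}{V}+\Chmaxeps{}{V}{U}}$: take $P_{UV}$ uniform on a $6$-cycle in $\{0,1,2\}\times\{0,1,2\}$, where both conditional supports have size $2$ yet $|\supp{P_{UV}}|=6>4$. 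The natural indexing $\alpha(u,v)=$ ``index of $v$ in $\supp{P_{V|U=u}}$'', $\beta(u,v)=$ ``index of $u$ in $\supp{P_{U|V=v}}$'' already fails to be injective on simpler examples such as $\{(0,0),(1,0),(1,1),(2,1)\}$. Second, and independently of the construction, placing any part of the purification on Alice's side invalidates the very hypothesis you need for Lemma~\ref{lem:classical-attack}. That lemma is applied with the distance condition on Alice's full system, which in your modified protocol is $AE_A$; the original protocol only guarantees $\dis(\rho^0_A,\rho^1_A)\leq2\eps$. The register $E_A$ is correlated with Bob's $V$ in a way Alice cannot sample locally, so it can sharpen her knowledge of $V$ and hence of $c$ through the transcript, and $\dis(\rho^0_{AE_A},\rho^1_{AE_A})$ may be order one even when $\dis(\rho^0_A,\rho^1_A)$ is tiny (e.g.\ if a message of Bob's is $V\oplus c$). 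Classicality of $\rho_{UE_A}$ does not help here. The paper avoids both issues by giving all of $U'V'$ to Bob: Alice's marginal, and thus the bound $\dis(\rho^0_A,\rho^1_A)\leq2\eps$, is literally unchanged, and the entropy cost of adjoining $U'V'$ is paid entirely on Bob's side via Lemma~\ref{lem:data-processing-bound} and inequality~\eqref{chain-rule-c}, yielding $-\max_v\log|\supp{P_{U|V=v}}|-\max_u\log|\supp{P_{V|U=u}}|$.
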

\begin{proof}
 Let the final state of the protocol on Alice's and Bob's system be $\rho_{AB}^c$, when both players are honest and Bob has input $c \in \{0,1\}$. As in the proof of the previous theorem we have
\[\chvn{X_0}{B}{\rho^1} \geq (1- 20 \eps) \cdot k - 2h(5 \eps) \geq (1- 20 \eps) \cdot k - 10 h(\eps)\;.\]
Consider a modified protocol that starts from a state 
\begin{align*}
 \ket{\psi}_{UVU'V'}=\sum_{u,v} \sqrt{P_{UV}(u,v)} \cdot \ket{u,v}_{UV} \kron \ket{u,v}_{U'V'}\;,
\end{align*}
where the systems $V$ and $U'V'$ belong to Bob. Again Bob is more powerful in the modified protocol because he can simulate the state of the original protocol locally. As in \eqref{ineq:ent-low-bound} in the proof of the previous theorem we can, therefore, conclude that
\begin{align*}
\chvn{X_0}{BU'V'}{\rho^1} & \leq h(\eps) + h(2 \sqrt{\eps}) + (\eps + 2 \sqrt{\eps}) \cdot k\;.
\end{align*}
Since measuring register $V'$ and discarding register $U'$ results in the state $\rho^1_{X_0B}$, we can use Lemma \ref{lem:data-processing-bound} and inequality \eqref{chain-rule-c} to obtain
\begin{align*}
\chvn{X_0}{BU'V'}{\rho^1}\geq &\chvn{X_0}{B}{\rho^1}-\max_v\log|\support( P_{U|V=v})|\\
&-\max_u\log|\support(P_{V|U=u})|\; .
\end{align*}
This implies the first statement. The second statement follows from the inequality
\begin{align}
 \chvn{X_0}{BB'}{\rho^1}\geq \chvn{X_0}{B}{\rho^1}-\hvn{B'}{}\;,
\end{align}
which is implied by~\eqref{eq:cond-mi}.
\end{proof}
The theorem immediately implies the following corollary.
\begin{corollary} \label{cor:imposs4}
To implement a $\OT{1}{2}{k}$  with an error of at most $\eps$, where $0\leq \eps \leq 0.002$, from $n$ instances of $\OT{1}{2}{1}$ in either direction, we must have
\[2n \geq  (1- 3 \sqrt{\eps}) \cdot k - 3 h(\sqrt{\eps})\;.\]
\end{corollary}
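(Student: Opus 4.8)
The plan is to obtain this as a direct specialization of Theorem~\ref{thm:imposs:rand}: I would identify $n$ instances of $\OT{1}{2}{1}$ with a concrete distributed-randomness primitive $P_{UV}$ and then evaluate the two max-entropies appearing in the first inequality of that theorem.

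First I would recall, from the subsection on primitives and randomized primitives, that one instance of $\OT{1}{2}{1}$ is equivalent (here in the malicious/UC sense, which is what Theorem~\ref{thm:imposs:rand} uses) to the \emph{randomized} oblivious transfer, i.e.\ to distributed randomness $P_{UV}$ with $U=(X_0,X_1)$ a pair of independent uniform bits on the sender's side and $V=(C,X_C)$ on the receiver's side, $C$ an independent uniform bit. Given $P_{UV}$, one ideal $\OT{1}{2}{1}$ is realized by the standard one-message-each-way derandomization (Bob announces $C\oplus c$, Alice replies with her two inputs masked by $X_0,X_1$ in the order dictated by that bit), which is perfectly secure against malicious parties; conversely a single call to $\OT{1}{2}{1}$ on uniform inputs produces $P_{UV}$. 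Hence any $\eps$-secure implementation of $\OT{1}{2}{k}$ from $n$ instances of $\OT{1}{2}{1}$ gives, with the same error, an implementation of $\OT{1}{2}{k}$ from the $n$-fold product $P_{U^nV^n}=\prod_{i=1}^nP_{U_iV_i}$. If the resource OT runs from Bob to Alice instead, the only effect is to exchange the two registers, and since the quantity in Theorem~\ref{thm:imposs:rand} is symmetric under swapping $U$ and $V$, this direction costs nothing extra — this is what the phrase ``in either direction'' refers to.

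Next I would compute the max-entropies. For a single randomized $\OT{1}{2}{1}$, fixing $v=(c,x_c)$ leaves only $X_{1-c}$ undetermined, so $|\support(P_{U|V=v})|=2$; fixing $u=(x_0,x_1)$, the variable $V$ ranges over the two values $(0,x_0)$ and $(1,x_1)$, so $|\support(P_{V|U=u})|=2$. Since the supports of product distributions factor over coordinates, $\Chmaxeps{}{U^n}{V^n}=\max_v\log|\support(P_{U^n|V^n=v})|=n$ and likewise $\Chmaxeps{}{V^n}{U^n}=n$. Substituting into the first inequality of Theorem~\ref{thm:imposs:rand} yields
\[
 2n=\Chmaxeps{}{U^n}{V^n}+\Chmaxeps{}{V^n}{U^n}\;\geq\;(1-3\sqrt{\eps})\cdot k-3h(\sqrt{\eps})\;,
\]
which is exactly the claimed bound. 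There is essentially no obstacle here — the corollary is an immediate consequence of Theorem~\ref{thm:imposs:rand}; the only step that warrants a sentence of justification is the reduction in the first paragraph, namely that a protocol consuming $n$ ideal bit-OTs (in either direction) can be recast, without increasing the error and while preserving malicious security, as one consuming the distributed randomness $P_{U^nV^n}$, which is precisely the equivalence established earlier in the paper.
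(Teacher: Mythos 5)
Your proof is correct and takes essentially the same approach as the paper, which states only that the corollary is ``immediately implied'' by Theorem~\ref{thm:imposs:rand}. You have correctly supplied the omitted details: the identification of $n$ bit-OTs with the product distributed-randomness $P_{U^nV^n}$, the per-instance computation $\Chmaxeps{}{U_i}{V_i}=\Chmaxeps{}{V_i}{U_i}=1$ (valid for either OT direction, so the mixed case is covered), and the additivity of max-entropy over product supports, yielding $2n$ on the left-hand side.
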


Theorem~\ref{thm:imposs:rand} implies that $\OT{1}{2}{1}$ cannot be extended by quantum protocols in the following sense: Given a protocol that implements $m+1$ instances of $\OT{1}{2}{1}$ from $m$ instances of $\OT{1}{2}{1}$ with an error $\eps$, we can apply this protocol iteratively and implement $4m$ instances of $\OT{1}{2}{1}$ from $m$ instances of $\OT{1}{2}{1}$ with an error of $\eps':=3m\eps$, assuming that Bob follows the protocol. Thus, Corollary~\ref{cor:imposs4} implies that $12\sqrt{\eps'}+3h(\sqrt{\eps'})/m\geq 2$ if $\eps'\leq 0.002$. Thus, $\eps'\geq 0.002$ and $\eps \geq \frac{1}{1500m}$. 
Hence, any quantum protocol that implements  $m+1$ instances of $\OT{1}{2}{1}$ from $m$ instances of $\OT{1}{2}{1}$ must have an error of at least $\frac{1}{1500m}$.

The second bound of Theorem \ref{thm:imposs:rand} also holds for more general primitives that generate a pure state $\ket{\psi}_{ABE}$, distributes registers $A$ and  $B$ to Alice and Bob and keeps the purification in its register $E$.
\begin{theorem} \label{thm:QQ}
 To implement a $\OT{1}{2}{k}$ with an error of at most $\eps$, where $0\leq \eps \leq 0.002$, from a primitive $\ket{\psi}_{ABE}$, we need 
\[2\hvn{E}{\psi} \geq  (1- 3 \sqrt{\eps}) \cdot k - 3 h(\sqrt{\eps})\;.\]
\end{theorem}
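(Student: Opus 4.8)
The plan is to adapt the proof of Theorem~\ref{thm:imposs:rand} essentially verbatim, replacing the ``split'' primitive $\ket{\psi}_{UVU'V'}$ by the general purification $\ket{\psi}_{ABE}$. First I would let $\rho_{AB}^c$ be the final state of the protocol when both players are honest and Bob has input $c\in\{0,1\}$, where the primitive $\ket{\psi}_{ABE}$ has distributed register $A$ to Alice and $B$ to Bob while keeping $E$. Exactly as in the previous proofs, correctness and security for Alice (via Lemma~\ref{lem:semihonest} and the Alicki--Fannes inequality \eqref{eq:alifan}) give the lower bound
\[
\chvn{X_0}{B}{\rho^1}\geq (1-20\eps)\cdot k-10h(\eps)\;.
\]

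Next I would set up the modified, purified protocol. The key observation is that since $\ket{\psi}_{ABE}$ is pure, the register $E$ held by the primitive is a purification of $\rho_{AB}$; equivalently, Bob could himself have prepared $\ket{\psi}_{ABE}$ and kept both $B$ and $E$. So consider the modified protocol in which Bob additionally holds $E$ (and simulates the honest primitive's local behaviour, which is trivial here since the primitive does nothing after distributing the state). Bob is strictly more powerful, so the protocol is still $\eps$-secure for Bob, and the overall state is pure conditioned on the classical communication. Applying Lemma~\ref{lem:classical-attack} (using $\dis(\rho_A^0,\rho_A^1)\leq 2\eps$ from security for Bob) exactly as in the proof of Theorem~\ref{thm:imposs:rand}, Bob can guess $X_0$ from $\rho_{BE}^1$ with probability at least $1-\eps-2\sqrt{\eps}$, so by \eqref{eq:fano-q},
\[
\chvn{X_0}{BE}{\rho^1}\leq h(\eps)+h(2\sqrt{\eps})+(\eps+2\sqrt{\eps})\cdot k\;.
\]

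Then I would close the gap between $\chvn{X_0}{BE}{\rho^1}$ and $\chvn{X_0}{B}{\rho^1}$. Here the relevant inequality is \eqref{eq:cond-mi}, which (after relabeling $A\to X_0$, $B\to B$, $C\to E$) gives
\[
\chvn{X_0}{BE}{\rho^1}\geq\chvn{X_0}{B}{\rho^1}-2\hvn{E}{\psi}\;,
\]
where I use that the marginal of the final state on $E$ is still the original $\rho_E$ of the primitive since no operation ever touches $E$ (so $\hvn{E}{\rho^1}=\hvn{E}{\psi}$). Combining the three displayed inequalities and using $h(\sqrt{\eps})>11h(\eps)$ and $21\eps<\sqrt{\eps}$ for $\eps\leq 0.002$ (exactly the numerical facts invoked in the proof of Theorem~\ref{thm:imposs:com}) yields $2\hvn{E}{\psi}\geq (1-3\sqrt{\eps})\cdot k-3h(\sqrt{\eps})$.

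I do not expect a serious obstacle: the whole argument is a specialization/generalization of the two preceding theorems, and the only thing one must be slightly careful about is the claim that Bob ``can simulate the original protocol locally'' with $E$ in hand --- this is where purity of $\ket{\psi}_{ABE}$ is essential, and it is the precise analogue of Bob holding $U'V'$ in Theorem~\ref{thm:imposs:rand}. The mild subtlety worth stating explicitly is that $\hvn{E}{}$ is unchanged throughout the protocol (since $E$ is never acted upon), so that the $\hvn{C}{}$ term in \eqref{eq:cond-mi} can be taken to be $\hvn{E}{\psi}$ rather than the a priori larger entropy of $E$ at the end of the modified protocol.
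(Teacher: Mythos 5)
Your proposal is correct and matches the paper's intended argument: the paper explicitly remarks that "the proof of Theorem~\ref{thm:QQ} follows exactly the same reasoning as Theorem~\ref{thm:imposs:rand} and is omitted," and your adaptation does exactly that --- replace the purification registers $U'V'$ by $E$, hand $E$ to Bob (which is legitimate precisely because $\ket{\psi}_{ABE}$ is pure so Bob can prepare it himself and forward $A$ to Alice), derive the same upper and lower bounds on $\chvn{X_0}{B}{\rho^1}$ and $\chvn{X_0}{BE}{\rho^1}$, and close the gap via \eqref{eq:cond-mi} together with the observation that $\hvn{E}{\rho^1}=\hvn{E}{\psi}$ since $E$ is never acted upon. (Note that the displayed inequality "$\chvn{X_0}{BB'}{\rho^1}\geq \chvn{X_0}{B}{\rho^1}-\hvn{B'}{}$" in the paper's proof of Theorem~\ref{thm:imposs:rand} apparently drops the factor $2$ that \eqref{eq:cond-mi} carries; your version with $-2\hvn{E}{\psi}$ is the one consistent with the factor of $2$ in the claimed bound.)
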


The proof of Theorem~\ref{thm:QQ} follows exactly the same reasoning as Theorem~\ref{thm:imposs:rand} and is omitted.

Next, we give an additional lower bound for reductions of OT to commitments that shows that the \emph{number} of commitments (of arbitrary size) used in any $\eps$-secure protocol must be at least $\Omega( \log(1/\eps) )$. We model the commitments as before, i.e., the functionality applies the isometry $U:\ket{y}_Y\mapsto \ket{yy}_{YY'}$ and stores $YY'$ in separate registers $E_A$ and $E_B$ for Alice and Bob.
The proof idea is the following: We let the adversary guess a subset $\mT$ of commitments that he will be required to open during the protocol. He honestly executes all commitments in $\mT$, but cheats in all others. If the adversary guesses $\mT$ right, he is able to cheat in the same way as in any protocol that does not use any commitments. 
\begin{theorem} \label{thm:imposs1}
Any quantum protocol that implements $\OT{1}{2}{k}$ using $\kappa$ commitments (of arbitrary length) must have an error of at least $2^{-\kappa} / 36$.
\end{theorem}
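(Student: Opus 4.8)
The plan is to combine the "guess the challenge set" idea sketched just before the statement with the bound of Theorem~\ref{thm:imposs:rand}, which says that OT of strings of length $k$ cannot be implemented from a primitive $\ket\psi_{ABE}$ whose purifying register $E$ has small von Neumann entropy. Concretely, I would fix an $\eps$-secure protocol $\Pi$ for $\OT{1}{2}{k}$ that uses $\kappa$ commitments of arbitrary length, modelled as before: each commitment applies the copy isometry $U:\ket y_Y\mapsto\ket{yy}_{YY'}$, keeping the two halves in registers $E_A,E_B$ held by the functionality. The first step is to guess, uniformly at random, the subset $\mT\subseteq[\kappa]$ of commitments that will be required to be opened in the honest execution (there are at most $2^\kappa$ possibilities, so the guess is correct with probability at least $2^{-\kappa}$).

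Next I would describe the modified ``protocol'' $\Pi'$ that uses \emph{no} commitment functionality: for the commitments whose index lies in $\mT$, the parties simulate the functionality \emph{locally} and faithfully, exactly as in the proof of Theorem~\ref{thm:imposs:com} (the committer measures the bit, keeps a copy, and only reveals it in the open phase; the receiver holds a purifying register and measures it at open time). For the commitments \emph{not} in $\mT$, the parties simply skip them — since an honest run never opens those, this changes nothing in the honest execution, and conditioned on the event that $\mT$ was guessed correctly, $\Pi'$ reproduces exactly the transcript and outputs of $\Pi$. Crucially, $\Pi'$ is a protocol in which the only shared resource is (effectively) local randomness / purifications, so it falls into the scope of Theorem~\ref{thm:imposs:rand} / Theorem~\ref{thm:QQ}: there is no external register $E$ at all (or a trivial one), hence $2H(E)_\psi = 0$, and the bound forces $(1-3\sqrt\delta)k - 3h(\sqrt\delta) \le 0$ for the error $\delta$ of $\Pi'$. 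For $k\ge 1$ this is violated once $\delta$ is below a fixed absolute constant (e.g.\ $\delta \le 0.002$), so $\Pi'$ must have error at least some constant $\delta_0>0$.

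The remaining step is the accounting that turns the error lower bound on $\Pi'$ into an error lower bound on $\Pi$. Conditioned on guessing $\mT$ correctly — an event of probability $\ge 2^{-\kappa}$, independent of the honest parties' inputs — the adversarial strategy in $\Pi'$ can be run \emph{inside} $\Pi$ by a malicious player: he behaves honestly on the commitments in $\mT$ (which he can, since he is asked to open exactly those) and applies the $\Pi'$-cheating operations elsewhere. Hence any distinguishing advantage $\delta_0$ achievable against $\Pi'$ is achieved against $\Pi$ with probability at least $2^{-\kappa}$, giving a distinguishing advantage of at least $2^{-\kappa}\delta_0$ against $\Pi$; the security of $\Pi$ then forces $\eps \ge 2^{-\kappa}\delta_0$. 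Tracking the constant through Theorem~\ref{thm:imposs:rand} with $k\ge 1$ yields $\delta_0 \ge 1/36$, i.e.\ $\eps \ge 2^{-\kappa}/36$, which is the claim.

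The main obstacle I anticipate is the composition/security-reduction bookkeeping in the last paragraph: one has to argue carefully that ``guess $\mT$, behave honestly on $\mT$, cheat off $\mT$'' is a legitimate malicious strategy within $\Pi$ and that its success probability really factorizes as (probability of correct guess) $\times$ (success of the $\Pi'$-attack), using that the honest parties' messages and the challenge set $\mT$ are generated before the adversary commits to the bad behaviour, so that conditioning on the guess does not distort the honest distribution. The purely quantum part — purifying the commitment functionality and invoking Lemma~\ref{lem:classical-attack} / Theorem~\ref{thm:imposs:rand} — is essentially identical to the proof of Theorem~\ref{thm:imposs:com} and should go through verbatim; it is the reduction of a constant-error impossibility to a $2^{-\kappa}$-error impossibility, and pinning down the constant $1/36$, that needs the care.
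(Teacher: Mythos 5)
Your proposal takes a genuinely different route from the paper's proof, and it has a gap that is more than bookkeeping. The paper does \emph{not} reduce to Theorem~\ref{thm:imposs:rand} or Theorem~\ref{thm:QQ}; instead it makes a case distinction on the quantity $\dis(\rho_{A E_A}^0,\rho_{A E_A}^1)$ and designs two explicit cheating strategies (one for Alice, one for Bob) directly against $\Pi$, invoking only Lemma~\ref{lem:classical-attack} and Lemma~\ref{lem:semihonest}; the constant $1/36$ then falls out of the arithmetic with $\eps' := 1/18$ and the threshold $1/2 + 5\eps$, not from the constants in Theorem~\ref{thm:imposs:rand}.

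The reason you cannot black-box Theorem~\ref{thm:imposs:rand} here is that the ``protocol'' $\Pi'$ you construct is not an $\eps$-secure protocol with a well-defined error inherited from $\Pi$; it is a cheating strategy. Concretely, once $E_A$ is in Alice's hands (as happens when she guesses $\mT$ and simulates the unopened commitments locally), the security conditions used inside the proof of Theorem~\ref{thm:imposs:rand} --- most importantly $\dis(\rho_A^0,\rho_A^1)\le 2\eps$, which is what makes Lemma~\ref{lem:classical-attack} applicable --- no longer follow from the $\eps$-security of $\Pi$, because $\eps$-security only constrains $\rho_A$, not $\rho_{A E_A}$. This is exactly what the paper's case distinction handles: in Case~1, $\dis(\rho_{A E_A}^0,\rho_{A E_A}^1)\ge \eps'$ and Alice herself distinguishes $c=0$ from $c=1$ with advantage $\ge 2^{-\kappa}\eps'$, contradicting~\eqref{eq:secB}; in Case~2, $\dis(\rho_{A E_A}^0,\rho_{A E_A}^1)< \eps'$, and \emph{this} is the hypothesis fed into Lemma~\ref{lem:classical-attack} on Bob's side. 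Your sketch handles only (an analogue of) Case~2 and assumes the needed closeness for free.

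There is also a problem with the ``factor out $2^{-\kappa}$'' step. It works cleanly only for quantities where a harmless fallback exists when the guess of $\mT$ is wrong. The paper uses precisely such quantities: in Case~1 the trace distance (where independence of the guess from the honest execution gives the multiplicative $2^{-\kappa}$ directly), and in Case~2 the probability of guessing the first bit of $X_0$ (where Bob outputs a uniformly random bit on failure, yielding $\frac12 + 2^{-\kappa}(\frac12-\eps-\sqrt{2\eps'})$ and a contradiction with being $5\eps$-close to uniform via~\eqref{eq:guess}). A general ``distinguishing advantage $\delta_0$ against $\Pi'$ becomes $2^{-\kappa}\delta_0$ against $\Pi$'' claim does not follow without specifying such a fallback, and in any case $\delta_0\ge 1/36$ is not what Theorem~\ref{thm:imposs:rand} delivers --- that theorem only gives a threshold around $0.002$ for $k\ge 1$. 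To close the gap you would need to (i) reinstate the case distinction or some substitute handling the failure of $\dis(\rho_{A E_A}^0,\rho_{A E_A}^1)\le 2\eps$, and (ii) replace the appeal to Theorem~\ref{thm:imposs:rand} by a direct attack (as in the paper) whose success probability degrades by exactly the factor $2^{-\kappa}$.
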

\begin{proof}
We define $\eps:=2^{-\kappa} / 36$. Let $\rho_{AB E_A E_B}^c$ be the final state of an $\eps$-secure protocol, when both players are honest and Bob has input $c \in \{0,1\}$. We distinguish two cases. In the first case, we assume that $\dis(\rho_{A E_A}^{0},\rho_{A E_A}^{1}) \geq \eps':=1/18\;$. We let Bob be honest and let Alice apply the following strategy: She chooses a random subset $\mT$ of $[k]$. She executes all commitments in $\mT$ honestly, but for all commitments not in $\mT$ she sends $\ket{0}$ to $E_A$ and simulates the action of the commitment functionality in her quantum register. Otherwise, she follows the whole protocol honestly.

During the execution of the protocol, Bob may ask Alice to open a certain set of commitments, $\mT'$. If $\mT' = \mT$, which happens with probability $2^{-\kappa}$ independently of everything else, then at the end of the protocol the global state is $\rho^c$, but $E_A$ is now part of Alice's system. Thus, the states of Alice's system for $c=0$ and $c=1$, have distance at least $\eps' \cdot 2^{-\kappa} > 2\eps$, which contradicts condition~\eqref{eq:secB}.


In the second case, we assume that $\dis(\rho_{A E_A}^{0},\rho_{A E_A}^{1}) < \eps'$.
From condition~\eqref{eq:corr} follows that honest Bob can guess $X_1$ with probability $1- \eps$ if $c=1$. According to Lemma~\ref{lem:semihonest}, $X_0$ should be $5\eps$-close to uniform with respect to $\rho^1_{B}$. To obtain a contradiction to the security condition~\eqref{eq:secA2}, it is according to equation~\eqref{eq:guess} sufficient to show that Bob can guess the first bit of $X_0$ with a probability greater than $1/2 + 5\eps$.

Again, if Bob guesses the set $\mT$ right, then $E_B$ is part of Bob's system. Then Lemma~\ref{lem:classical-attack} guarantees the existence of a unitary $U_{B E_B}$ such Bob can transform the state $\rho^1$ into a state $\rhob^1$ with $\dis(\rho^0,\rhob^1) \leq \sqrt{2\eps'}$. Thus, Bob can guess $X_0$ with an error of at most $ \sqrt{2\eps'} + \eps$ given $\rhob^1$. If he fails to guess $\mT$, he simply outputs a random bit as his guess for the first bit of $X_{0}$. Since the probability that he guesses the subset $\mT$ correctly is exactly $2^{-\kappa}$, he can guess the first bit of $X_{0}$ with probability
\begin{align*}
 (1 - 2^{-\kappa}) \cdot \frac12 &+ 2^{-\kappa} \cdot (1-\eps - \sqrt{2\eps'})\\
& = \frac12 + 2^{-\kappa} \cdot \left( \frac12 - \eps - \sqrt{2\eps'} \right) \\
& > \frac12 + 2^{-\kappa} \cdot \left( \frac12 - \eps'/2 - \sqrt{2\eps'} \right) \\
& = \frac12 + 2^{-\kappa} \cdot \frac{5}{36}\\
& = \frac12 + 5\eps\;.
\end{align*}
\end{proof}

\section{Reduction of OT to String Commitments}\label{sec:OTtoSC}
We will now show how to construct a protocol that is optimal with respect to the lower bounds of both Theorem~\ref{thm:imposs:com} and Theorem~\ref{thm:imposs1}. 
\begin{figure}[!t]

\begin{framed}
\noindent Protocol \textbf{OTfromCommitment \\}
\begin{enumerate}
\item Alice prepares $m$ EPR pairs, $(\ket{00}+\ket{11})/\sqrt{2}$, and sends one qubit of each pair to Bob. Bob selects $\hat \theta \in \{0, 1\}^m$ at random and measures the received qubits in basis $\hat \theta $, obtaining $\hat x \in \{0,1\}^m$. Alice chooses a basis $\theta \in \{0, 1\}^m$ at random (but does not measure her qubits yet).
\item Bob commits in blocks of size $b$ to $\hat \theta$ and $\hat x$. Alice samples a random subset $t \subseteq [\kappa]$ of cardinality $\alpha \kappa$ and asks Bob to open the commitments to the corresponding blocks of values  $(\hat \theta_i,\hat x_i)$. Let $\mT$ be the set of bits in $[m]$ corresponding to $t$. Alice measures her qubits indexed by $\mT$ in Bob's basis $\hat \theta_t$ to obtain $x_t$ and verifies that $x_i = \hat x_i$ whenever $\theta_i = \hat \theta_i$. If Bob does not commit to all values as required or does not open all commitments or if Alice detects an inconsistency, Alice outputs outputs two random $k$-bit strings $z_0,z_1$ and terminates the protocol.
\item (Set partitioning) Alice sends $\theta$ to Bob. Bob partitions $\bar \mT:=[m]\setminus \mT$ into the subsets $I_c = \{i \in \bar \mT :~\theta_i = \hat \theta_i \}$ and $I_{1-c}=\{i \in \bar \mT:~\theta_i\neq\hat \theta_i\}$ and sends $I_0$ and $I_1$ to Alice. 
Additionally, Alice measures her qubits in basis $\theta$ to obtain $x$.
\item (Key extraction) Alice chooses and sends to Bob two-universal hash functions $f_0,f_1$ with output length $k$, and computes $z_0 := f_0(x_{I_0})$ and $z_1 := f_1(x_{I_1})$. Bob computes $z_c = f(\hat x_{I_0})$.
\end{enumerate}
\end{framed}
\end{figure}

We modify the protocol from \cite{BBCS92} by grouping the $m$ pairs of values into $\kappa$ blocks of size $b := m / \kappa$. We let Bob commit to the blocks of $b$ pairs of values at once. The subset $\mT$ is now of size $\alpha \kappa$, and defines the blocks to be opened by Bob. If Bob is able to open all commitments in $\mT$ correctly, then the state of the protocol must be close  in a certain sense to the state that would result from correctly measuring all qubits. Since we consider security in the malicious model, a dishonest player may abort the protocol by not sending any message. A possibility to handle this would be to include a special output \texttt{aborted} to the definition of the primitive. Here, we take the following, different approach, which is also used, for example, in~\cite{KoWeWu09}: Whenever a player does not send a (well-formed) message, the other player assumes that a fixed default message as, for example, the all-zero string has been sent. Note that our protocol is different from 
the protocols analyzed in~\cite{DFLSS09,BF09}. Besides replacing the bit commitments by strings commitments, Alice outputs two random strings if Bob aborts in the commitment or in the check step. This allows us to implement an ideal OT functionality that does not have a special output \texttt{aborted}. 

We only need to estimate the error probability of the classical sampling strategy that corresponds to the new checking procedure of Alice and apply the result from \cite{BF09}. We need the following sampling result, which follows from the inequalities in \cite{Hoeffd63} (the proof is given in Appendix \ref{app:lemmas}).

\begin{lemma} \label{lem:sampling2}
Let $\alpha \in [0,\frac12]$. Let $y = (y_1, \dots y_m)$ be a bit string of length $m := b \kappa$ that we group into $\kappa$ blocks of size $b$. Let $\mT^*$ be a random subset of $[\kappa]$ of size $\alpha \kappa$, $\mT$ the corresponding set of bits in $[m]$ and $\bar \mT$ the complement of $\mT$. Let $\mT'$ be a random subset of $\mT$, where every element is chosen to be in $\mT'$ with probability $\frac12$, independently of everything else.  We have for any $\delta > 0$
\begin{align*}
\Pr \bigg [ \frac 1 {|\mT'|} \sum_{i \in \mT'} y_i \leq \frac 1 {(1 - \alpha) m} \sum_{i \in \bar \mT} y_i - \delta \bigg ] \leq \eps\;,
\end{align*}
where $\eps := 3 \exp(- (1/2 - \eps) \alpha \kappa \delta^2 /2)$.
\end{lemma}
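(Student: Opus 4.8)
The plan is to split the randomness of the test into its two independent sources --- the uniformly random block-subset $t\subseteq[\kappa]$ of size $\alpha\kappa$ that determines $\mathcal{T}$, and, given $t$, the independent inclusion of each bit of $\mathcal{T}$ in $\mathcal{T}'$ with probability $\tfrac12$ --- and to control the two induced estimation errors separately. Writing $\mu_A:=\frac1{|A|}\sum_{i\in A}y_i$ for non-empty $A\subseteq[m]$, the bad event $\mu_{\mathcal{T}'}\le\mu_{\bar{\mathcal{T}}}-\delta$ implies, for any split $\delta=\delta_1+\delta_2$ with $\delta_1,\delta_2>0$, that at least one of $\mu_{\mathcal{T}'}\le\mu_{\mathcal{T}}-\delta_1$ or $\mu_{\mathcal{T}}\le\mu_{\bar{\mathcal{T}}}-\delta_2$ holds; so by a union bound it is enough to bound these two probabilities, the first conditioned on an arbitrary value of $t$ and the second over the choice of $t$.

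For the block-selection error I would pass to the block averages $\beta_j:=\frac1b\sum_{i\in\text{block }j}y_i\in[0,1]$, $j\in[\kappa]$. Since $\mathcal{T}$ is the union of the $\alpha\kappa$ blocks indexed by $t$ and $\bar{\mathcal{T}}$ the union of the remaining $(1-\alpha)\kappa$ blocks, we get $\mu_{\mathcal{T}}=\frac1{\alpha\kappa}\sum_{j\in t}\beta_j$, $\mu_{\bar{\mathcal{T}}}=\frac1{(1-\alpha)\kappa}\sum_{j\notin t}\beta_j$, together with the convex-combination identity $\frac1\kappa\sum_j\beta_j=\alpha\,\mu_{\mathcal{T}}+(1-\alpha)\,\mu_{\bar{\mathcal{T}}}$, so that $\mu_{\mathcal{T}}\le\mu_{\bar{\mathcal{T}}}-\delta_2$ forces $\mu_{\mathcal{T}}\le\frac1\kappa\sum_j\beta_j-(1-\alpha)\delta_2$. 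Here $\mu_{\mathcal{T}}$ is exactly the sample mean obtained by drawing $\alpha\kappa$ of the $\kappa$ values $\beta_j$ without replacement, so Hoeffding's inequality \cite{Hoeffd63}, which holds verbatim for sampling without replacement, bounds this probability by $\exp(-2\alpha\kappa(1-\alpha)^2\delta_2^2)$.

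For the sub-sampling error I would fix $\mathcal{T}$ and note that $\mathcal{T}'$ arises from the string $y_{\mathcal{T}}$ of length $|\mathcal{T}|=\alpha b\kappa$ by keeping each position independently with probability $\tfrac12$; this is precisely the situation of the averaging sampler in Lemma 5.5 of \cite{BabHay05}, which (also absorbing the exponentially unlikely event $\mathcal{T}'=\emptyset$) gives $\Pr[\mu_{\mathcal{T}'}\le\mu_{\mathcal{T}}-\delta_1\mid\mathcal{T}]\le 3\exp(-c\,|\mathcal{T}|\,\delta_1^2)$ for a universal constant $c$, and since $|\mathcal{T}|=\alpha b\kappa\ge\alpha\kappa$ this is at most $3\exp(-c\alpha\kappa\delta_1^2)$; the factor $3$ in the lemma originates here. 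Taking $\delta_1=\delta_2=\delta/2$ (or a slightly unbalanced split) and adding the two tail bounds produces an expression of the form $3\exp(-c'\alpha\kappa\delta^2)$, which one then simplifies to the $\eps$ stated in the lemma.

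The only non-elementary ingredient is Lemma 5.5 of \cite{BabHay05}, whose own proof is a Hoeffding/Azuma-type martingale estimate; granting it, the remaining work is the constant bookkeeping in the final step --- verifying, for $\alpha\in[0,\tfrac12]$ (where $(1-\alpha)^2\ge\tfrac14$), that the sum of the two tail bounds is dominated by a single term of the claimed shape, pinning down the constant $8$ and the $\tfrac12-\alpha$ factor. I expect this last reconciliation of constants, rather than any conceptual point, to be the main (and essentially only) obstacle.
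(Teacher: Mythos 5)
Your plan matches the paper's own treatment, which gives no detailed argument at all — it simply asserts that the lemma ``follows from Lemma 5.5 in \cite{BabHay05} and Hoeffding's inequality'' — and your union-bound split of the bad event into a block-selection error ($\mu_{\mT}$ versus $\mu_{\bar\mT}$) and a sub-sampling error ($\mu_{\mT'}$ versus $\mu_{\mT}$) is the right skeleton, as is the Hoeffding-without-replacement bound $\exp(-2\alpha\kappa(1-\alpha)^2\delta_2^2)$ for the block step. There is, however, one genuine gap in the sub-sampling step: Lemma~5.5 of \cite{BabHay05} (also recorded in a commented-out block of this paper's source as the ``$(n,\xi,e^{-r\xi^2/2})$-sampler'' statement) is about uniformly random subsets of a \emph{fixed} size, i.e.\ sampling without replacement, whereas $\mT'$ is generated by independent Bernoulli$(\tfrac12)$ inclusion; it is therefore not ``precisely the situation'' of that lemma, and applying it directly would fail. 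To bridge this you must first control the random size: a binomial Chernoff bound gives $|\mT'|\ge(\tfrac12-\delta)\,|\mT|$ except with probability $e^{-\Omega(|\mT|\delta^2)}$, and, conditioned on $|\mT'|$, the set $\mT'$ \emph{is} a uniformly random fixed-size subset of $\mT$, to which the Babai--Hayes sampler then applies with effective sample size at least $(\tfrac12-\delta)\alpha b\kappa$. This extra conditioning is precisely where the $(\tfrac12-\delta)$ factor in the lemma's $\eps$ comes from; you read it as $(\tfrac12-\alpha)$, understandably, because the printed formula ``$3\exp(-(1/2-\eps)\alpha\kappa\delta^2/8$'' contains both a circular $\eps$ and a missing parenthesis, but the later definition $\alpha':=(\tfrac12-\delta)\alpha$ in Lemma~\ref{lem:sec:Alice} fixes the intended reading. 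It also means the factor $3$ is better attributed to the three-way union bound (block selection, $|\mT'|$ small, sub-sample deviation) than to Lemma~5.5 itself. With that step added, the remaining bookkeeping — choosing a split of $\delta$, using $(1-\alpha)^2\ge\tfrac14$ for $\alpha\le\tfrac12$, and absorbing the sum of exponential tails into the single stated term — goes through as you anticipate.
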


\begin{lemma}[Security for Alice]\label{lem:sec:Alice}
Let $Z_0$ and $Z_1$ be the strings from $\{0,1\}^k$ output by Alice. Then there exists a binary $C$ such that for any $\eps,\delta > 0$ the following upper bound on the distance from uniform of $Z_{1-C}$ with respect to $Z_C$ and Bob's system holds:
\begin{align}\label{sec:Alice}
\dis(\rho_{Z_{1-C}Z_CEC},\tau_{\{0,1\}^{k}} \otimes &\rho_{Z_CEC})\nonumber\\
&\leq2^{-\frac 1 2 ((\frac 1 4 -\eps/2 -h(\delta))(1-\alpha)m-k)-1}\nonumber\\
&~+2\exp(-2\eps^2(1-\alpha)m)\nonumber\\
&~+\sqrt{3}\exp(-\alpha'\kappa\delta^2/4)\;,
\end{align}
where $E$ denotes the quantum state output by Bob, $\id$ the identity operator on $\mathbb{C}^{2^k}$ and $\alpha' := (1/2 - \delta) \alpha$. 
\end{lemma}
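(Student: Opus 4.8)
The plan is to adapt the standard analysis of BB84-based oblivious transfer (as in~\cite{DFLSS09,BF09}) to the present block-commitment structure, the new ingredient being the sampling bound of Lemma~\ref{lem:sampling2}. On the event that Alice aborts in Step~2 she outputs two independent uniform strings, so \eqref{sec:Alice} holds trivially there; I therefore condition throughout on Alice accepting. First I would purify Bob's actions together with the commitment functionality via the Stinespring dilation~\eqref{eq:stinespring}, so that Bob's entire side---including the registers $E_B$ held by the commitments---is one register $E$; since a commitment is modelled by the isometry $\ket{y}\mapsto\ket{yy}$, the blocks Bob commits to in Step~2 are well-defined classical strings $\hat\theta,\hat x\in\sbin^m$. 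Alice's checking procedure---measuring each opened block in the committed basis $\hat\theta$ and comparing with $\hat x$ on the positions with $\theta_i=\hat\theta_i$---is exactly a classical sampling strategy of the form bounded by Lemma~\ref{lem:sampling2}, with the opened blocks playing the role of $t$ and the checked positions the role of $\mT'$. By the Bouman--Fehr quantum sampling theorem (which replaces the classical error probability $3\exp(-\alpha'\kappa\delta^2/8)$ by its square root, the third term of~\eqref{sec:Alice}), conditioned on acceptance the joint state of Alice's $\bar\mT$-qubits and $E$ is within that trace distance of a state supported on $\operatorname{span}\{H^{\hat\theta_{\bar\mT}}\ket{x'} : d_H(x',\hat x_{\bar\mT})\le\delta(1-\alpha)m\}$, a subspace of dimension at most $2^{h(\delta)(1-\alpha)m}$.

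Next I would extract min-entropy from the conjugate measurement. Put $G:=\{i\in\bar\mT:\theta_i\ne\hat\theta_i\}$ and choose the binary $C$ so that $|I_{1-C}\cap G|\ge|G|/2$; this is possible because $I_0\sqcup I_1=\bar\mT$. On any basis state $H^{\hat\theta_{\bar\mT}}\ket{x'}$, Alice's Step~3--4 measurement of the $G$-coordinates in basis $\theta_G$ gives an outcome that is uniform and independent of $E$ and of the coordinates outside $G$. Combining this with the subspace-dimension bound (the ``state in a subspace $\Rightarrow$ min-entropy loss at most the logarithm of its dimension'' principle) and with a Hoeffding bound $\Pr[\,|G|<(1/2-\eps)(1-\alpha)m\,]\le\exp(-2\eps^2(1-\alpha)m)$---which is independent of Alice's acceptance event since the check only involves $\theta$ on $\mT$, not on $\bar\mT$---standard chain-rule manipulations for the conditional min-entropy, absorbing $X_{I_{1-C}\setminus G}$ and the $k$-bit value $Z_C=f_C(x_{I_C})$ into the conditioning, should give
\[
H_{\min}\!\big(X_{I_{1-C}} \,\big|\, Z_C\,E\,C\big)\;\ge\;\big(\tfrac14-\tfrac{\eps}{2}-h(\delta)\big)(1-\alpha)m
\]
except with probability at most $\sqrt{3}\exp(-\alpha'\kappa\delta^2/16)+2\exp(-2\eps^2(1-\alpha)m)$ (the second factor $2$ coming from an additional Hoeffding estimate, e.g.\ on the size of the checked set).

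Finally, since $f_{1-C}$ is a fresh two-universal hash function with $k$-bit output, privacy amplification against quantum side information~\cite{Renner2005} applied to $Z_{1-C}=f_{1-C}(x_{I_{1-C}})$ with side information $(Z_C,E,C)$ bounds the distance from uniform by $\tfrac12\cdot2^{-\frac12(H_{\min}-k)}=2^{-\frac12((\frac14-\eps/2-h(\delta))(1-\alpha)m-k)-1}$, and a union bound with the two failure events above yields exactly~\eqref{sec:Alice}. The step I expect to be the main obstacle is the min-entropy bookkeeping in the second paragraph: $Z_C$ depends on coordinates of $I_C$ that may themselves lie in $G$, so the subspace-to-min-entropy step and the chain-rule steps have to be ordered carefully, the non-i.i.d.\ correction $h(\delta)(1-\alpha)m$ has to be carried consistently through all of them so that the constants match the statement, and one must verify that $|G|$ really does concentrate independently of acceptance.
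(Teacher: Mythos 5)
Your proposal takes essentially the same approach as the paper's proof: both reduce the security analysis to the quantum sampling framework of Bouman--Fehr, invoke Lemma~\ref{lem:sampling2} to bound the classical sampling error of Alice's block-wise test (producing the $\sqrt{3}\exp(-\alpha'\kappa\delta^2/16)$ term from the square-root lifting to the quantum setting), use the resulting ``ideal state in a small Hamming-ball subspace'' to lower-bound the conditional min-entropy of the conjugate-basis outcomes on $I_{1-C}$, and then apply privacy amplification to obtain the $2^{-\frac12((\frac14-\eps/2-h(\delta))(1-\alpha)m-k)-1}$ term. The paper simply cites ``the proof of Theorem~4 in \cite{BF09} for $\beta=0$'' for the min-entropy and privacy-amplification bookkeeping that you spell out, and it treats the accept/reject case inside the ideal state (if $\beta>0$ the outputs are uniform and the distance is zero) rather than conditioning on acceptance at the outset as you do, but these are equivalent reorderings of the same argument and both yield the same error decomposition.
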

\begin{proof}
\cancel{
Let $K_0$ and $K_1$ be the strings from $\{0,1\}^k$ output by Alice. We show that there exists a bit $c$ such that $K_{1-c}$ is close to uniform with respect to Bob's system (given $K_{c}$), i.e., for any $\eps,\delta > 0$, we have, with $\alpha' := (1/2 - \delta) \alpha$,
\begin{align}\label{sec:Alice}
\dis(\rho_{K_{1-c}K_cE},\frac{1}{2^k}\id \otimes \rho_{K_cE})\nonumber\leq& \frac{1}{2}2^{-\frac 1 2 ((\frac 1 4 -\eps/2 -h(\delta))(1-\alpha)m-k)}\nonumber\\
&~+2\exp(-2\eps^2(1-\alpha)m)\nonumber\\&+\sqrt{3}\exp(-\alpha'\kappa\delta^2/4)
\end{align}
where $E$ denotes the quantum state output by Bob and $\id$ the identity operator on $\mathbb{C}^{2^k}$. 
}
We consider the state shared between Alice and Bob after Bob has committed to the bases $\hat\theta$ and the measurement outcomes $\hat x$ where we can assume $\hat\theta=\hat x=(0,\dots,0)$. Since we want to prove an upper bound on  \eqref{sec:Alice}, we can assume that Bob always opens all commitments. Otherwise the distance from uniform can only decrease. Alice now chooses a subset $\mT$ to be opened by Bob. As in the proof of Theorem 4 from \cite{BF10}, Lemma~\ref{lem:sampling2} implies that the joint state is $\sqrt{3}\exp(-\alpha'\kappa\delta^2/4)$-close to an ideal state that is for every choice of $\mT$ and $\mS$ in a superposition of states with relative Hamming weight in a $\delta$-neighbourhood of $\beta$ within $A_{\bar \mT}$, where $\beta$ is the ratio of inconsistencies that Alice detects and $\mS$ is the subset of $\mT$ that Alice checks. We assume that the state equals this ideal state and add the error later. Then, following the proof of Theorem 4 in \cite{BF09} for $\beta=0$, we obtain 
that the  distance from uniform of one of the outputs with respect to Bob's system (given the other output) is bounded from above by
\begin{align*}
 2^{-\frac 1 2 ((\frac 1 4 -\eps/2 -h(\delta))(1-\alpha)m-k)-1}+2\exp(-2\eps^2(1-\alpha)m)\;.
\end{align*}

If $\beta > 0$, the distance from uniform is zero. Thus, the statement follows by adding the distance of the ideal state to the real state.
\end{proof}
\begin{lemma}[Security for Bob]\label{lem:sec:Bob}
The protocol is perfectly secure for Bob.
\end{lemma}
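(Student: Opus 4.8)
The plan is to show that Alice's view is \emph{perfectly} independent of Bob's choice bit $c$, and that a simulator interacting with the ideal OT reproduces the real execution exactly. Assume Alice is malicious and Bob honest. Bob commits to $\hat\theta$ and $\hat x$ through the commitment functionality, which --- since Bob is the committer and is honest here --- is perfectly hiding: Alice never learns the committed values of any block that is not opened. In steps~1 and~2 the only values of $\hat\theta,\hat x$ entering Alice's view are those in the opened blocks, i.e.\ $\hat\theta_{\mT}$ and $\hat x_{\mT}$; the bit $c$ does not appear at all. The choice bit enters the protocol only in the \texttt{Set partitioning} step, where Bob sends the ordered pair $(I_0,I_1)$ with $I_c=\{i\in\bar\mT:\theta_i=\hat\theta_i\}$ and $I_{1-c}=\bar\mT\setminus I_c$.

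The key observation is that, conditioned on everything in Alice's view through step~2 (the qubits she sent, her basis $\theta$, the opened $\hat\theta_{\mT},\hat x_{\mT}$, her consistency-check outcome), the message $(I_0,I_1)$ is uniformly distributed over all ordered bipartitions of $\bar\mT$ and is independent of $c$. Indeed, $\hat\theta_{\bar\mT}$ is uniform and has not appeared anywhere in Alice's view so far, and the involution $\iota:\hat\theta\mapsto\hat\theta\oplus\mathbf{1}_{\bar\mT}$ (flip $\hat\theta$ on $\bar\mT$, keep it on $\mT$) is measure-preserving, leaves everything through step~2 unchanged, and exactly swaps the two cells of $(I_0,I_1)$; hence the distribution of $(I_0,I_1)$ for $c=1$ equals that for $c=0$. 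Propagating $\iota$ through steps~3 and~4 --- where on the cell $I_c$ Alice's and Bob's bases agree, so Bob's output $f_c(\hat x_{I_c})$ (the hash applied to Bob's outcomes on the cell where the bases match) is computed from the same measurement Alice performs --- one checks that the entire joint distribution of (Alice's final state, classical transcript, Bob's output) is identical for $c=0$ and $c=1$. In particular $\rho_A^0=\rho_A^1$, which already gives $\dis(\rho_A^0,\rho_A^1)=0$, i.e.\ condition~\eqref{eq:secB} with $\eps=0$.

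To obtain full security I would exhibit the simulator $\mS$ explicitly. $\mS$ runs the malicious Alice, playing honest Bob and simulating the commitment functionality towards her; it samples $\hat\theta$ uniformly but does \emph{not} measure the received qubits at step~1 and commits to nothing real (it only sends \texttt{committed} messages). When Alice requests the opening of the blocks in $t$, $\mS$ measures exactly the qubits in $\mT$ in basis $\hat\theta_{\mT}$ and reveals the resulting consistent values $\hat\theta_{\mT},\hat x_{\mT}$. After Alice sends $\theta$, $\mS$ measures the remaining qubits (those in $\bar\mT$) in Alice's basis $\theta$, obtaining $x_{\bar\mT}$, sets $I_0:=\{i\in\bar\mT:\theta_i=\hat\theta_i\}$ and $I_1:=\bar\mT\setminus I_0$, sends $(I_0,I_1)$, receives $f_0,f_1$, and hands $(\tilde z_0,\tilde z_1):=(f_0(x_{I_0}),f_1(x_{I_1}))$ to the ideal OT; it outputs Alice's final state. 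Since $x_{\bar\mT}$ and $\hat x_{\bar\mT}$ never enter Alice's view, the delayed measurements are invisible to her, so $\mS$ feeds Alice a transcript distributed exactly as in a real run with $c=0$; by the previous paragraph this matches a real run for \emph{every} $c$, and $\tilde z_c=f_c(x_{I_c})$ equals the value honest Bob outputs from the same transcript (on $I_c$ Alice and Bob measure in the same basis). Hence the real and ideal executions coincide perfectly.

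The main obstacle is the bookkeeping in the second paragraph: checking that $\iota$ identifies the $c=1$ experiment with the $c=0$ one not only for Alice's marginal view but jointly with Bob's output, which requires tracking how the relabelling of $\bar\mT$ interacts with Bob's measurement basis and with Alice's (possibly adversarial) dependence of $f_0,f_1$ on $(I_0,I_1)$. The remaining cases --- Alice aborting or injecting states that make Alice's own consistency check fail, in which case honest Bob still produces his default output and $\mS$ follows the default-message convention --- are routine.
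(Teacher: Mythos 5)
Your proof is correct and follows essentially the same route as the paper: in both, one replaces honest Bob by a purified/delayed version (commitments never measured, $\hat\theta_{\bar\mT}$ sampled late, the unopened qubits measured in Alice's basis $\theta$ only after $\theta$ arrives) so that both candidate outputs $z_0,z_1$ can be computed and the resulting state $\sigma_{A'Z_0Z_1C}$ factors as $\sigma_{A'Z_0Z_1}\otimes\sigma_C$ while $\sigma_{A'Z_CC}$ coincides with the real state $\rho_{A'YC}$. Your involution $\iota:\hat\theta\mapsto\hat\theta\oplus\mathbf{1}_{\bar\mT}$ is a clean way of making explicit the independence claim that the paper asserts without spelling it out; the one phrasing you should tighten is ``$\tilde z_c$ equals the value honest Bob outputs from the same transcript,'' which for $c=1$ holds at the level of joint distributions (your simulator always labels the matching cell $I_0$, so for a fixed transcript the two outputs can differ) rather than transcript-by-transcript, but your involution paragraph already supplies the needed distributional identity.
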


\begin{proof}
Let $\rho_{A'YC}$ be the state created by the protocol if Bob is honest. We consider a hypothetical protocol where Bob does not use any commitments. He stores all the qubits received from Alice. After Alice sends the set $\mT$, he chooses a basis $\hat \theta$  and measures his qubits corresponding to $\mT$ to obtain $\hat x_{\mT}$ in basis $\theta$, but does not yet measure the other qubits. Then he sends $\hat x_\mT$ and $\hat \theta_\mT$ to Alice. After he gets the basis $\theta$ from Alice he measures all his remaining qubits in Alice's basis $\theta$ to obtain $\hat x_{\bar \mT}$. Next, he chooses his input $C \in \{0,1\}$ and constructs the sets $I_0$ and $I_1$ using $\theta$ and $\hat \theta$ as in the protocol. After receiving $f_{0}, f_{1} \in \mF$ from Alice, he computes $z_0 = f_{0}(\hat x_{I_{0}})$ and $z_1 = f_{1}(\hat x_{I_{1}})$ . This results in a state $\sigma_{A'Z_0Z_1C}$, where $Z_0$ and $Z_1$ are the values computed by Bob. We have $\sigma_{A'Z_0Z_1C}=\sigma_{A'Z_0Z_1}\otimes\sigma_{C}$ 
and $\sigma_{A'Z_CC}=\rho_{A'YC}$.
\end{proof}

\begin{lemma}[Correctness]\label{lem:sec:cor}
The protocol is perfectly correct.
\end{lemma}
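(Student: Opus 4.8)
The plan is to follow an honest execution of Protocol \textbf{OTfromCommitment} and reduce everything to the basic correlation property of EPR pairs: for the maximally entangled state $(\ket{00}+\ket{11})/\sqrt{2}$ one has $(\ket{00}+\ket{11})/\sqrt{2} = (\ket{++}+\ket{--})/\sqrt{2}$, so measuring both halves in the \emph{same} basis --- computational (index $0$) or Hadamard (index $1$) --- always yields the same classical bit. All of correctness is an application of this fact to the individual pairs, together with the observation that an honest Bob never causes Alice to abort.

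First I would check that an honest run never enters the abort branch of Step~2. An honest Bob commits to $\hat\theta$ and $\hat x$ in blocks exactly as prescribed and opens precisely the blocks indexed by $t$, so the only thing that could make Alice output two random strings is detecting an inconsistency $x_i \neq \hat x_i$ for some $i \in \mT$ with $\theta_i = \hat\theta_i$. For such an $i$, Alice measures her half of the $i$-th pair in basis $\hat\theta_i = \theta_i$, which is exactly the basis Bob used to obtain $\hat x_i$; by the EPR correlation $x_i = \hat x_i$. Hence no inconsistency is ever detected, Alice does not terminate early, and the protocol proceeds through Steps~3 and~4. Next I would show that Bob's output coincides with the string $z_c$ computed by Alice. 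The qubits that matter for key extraction are those indexed by $\bar\mT = [m]\setminus\mT$, which Alice measures in her basis $\theta$ in Step~3 (and which were left untouched in Step~2). By construction $I_c = \{ i \in \bar\mT : \theta_i = \hat\theta_i \}$, so for every $i \in I_c$ Alice and Bob measured their respective halves of pair $i$ in the same basis, and the EPR correlation gives $x_i = \hat x_i$; thus $x_{I_c} = \hat x_{I_c}$. In Step~4 Alice computes $z_c = f_c(x_{I_c})$ while honest Bob computes $f_c(\hat x_{I_c}) = f_c(x_{I_c})$, so the two agree. Since this holds for both $c \in \{0,1\}$ and for all of Alice's (random) inputs $(X_0,X_1) = (Z_0,Z_1)$, Bob always recovers $Z_c$, which is precisely perfect correctness.

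There is no substantive obstacle here; the only point requiring care is bookkeeping --- keeping track that the qubits Alice measures in Step~2 (those in $\mT$, in Bob's announced basis $\hat\theta_t$) are disjoint from the qubits she measures in Step~3 (those in $\bar\mT$, in her own basis $\theta$), so that no pair is ever measured twice and the EPR argument applies cleanly to each index in $I_c$. Verifying this disjointness and confirming that the definition of $I_c$ forces matching bases is essentially all the proof needs.
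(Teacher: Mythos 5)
Your reasoning is correct and supplies the detail that the paper's two-sentence proof omits; the paper simply asserts that $Z_0$, $Z_1$, $C$ have the required joint distribution and that Bob always outputs $Z_C$. You correctly derive the latter from the EPR correlation: Alice measures the qubits in $\mT$ in Bob's announced basis $\hat\theta$, so she reproduces $\hat x$ exactly on $\mT$ and never triggers the abort branch, and for every $i \in I_c$ the bases agree by the definition of $I_c$, so $x_{I_c} = \hat x_{I_c}$ and hence $f_c(x_{I_c}) = f_c(\hat x_{I_c})$. The bookkeeping observation that the qubits measured in Step 2 (indexed by $\mT$) and Step 3 (indexed by $\bar\mT$) are disjoint is exactly the right thing to check.

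Two minor points worth tidying. First, the paper's lemma also covers the distributional half of correctness, namely that $Z_0$, $Z_1$ are uniform and independent of Bob's choice $C$; you pass over this, although it follows at once because honest measurements of the EPR halves in $\bar\mT$ produce fresh uniform bits, making $x_{I_0}$ and $x_{I_1}$ uniform and independent of which set Bob labels $I_0$. Second, your parenthetical ``Alice's (random) inputs $(X_0,X_1) = (Z_0,Z_1)$'' is a slip: in Protocol \textbf{OTfromCommitment} Alice has no input at all --- $Z_0$ and $Z_1$ are the random strings she \emph{outputs}. Neither point affects the soundness of your core argument.
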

\begin{proof}
If both players are honest, then $Z_0$, $Z_1$ and $C$ are independently distributed according to the required distributions. Furthermore, Bob always computes $Z_C$ as his output.
\end{proof}

The following theorem is then immediately implied by Lemmas~\ref{lem:sec:Alice},~\ref{lem:sec:Bob} and~\ref{lem:sec:cor}.
\begin{theorem}\label{thm:optimal-protocol}
There exists a quantum protocol that uses $\kappa = O(\log 1/\eps)$ commitments of size $b$, where $\kappa b = O(k + \log 1 / \eps)$, and implements a $\OT{1}{2}{k}$ with an error of at most~$\eps$.
\end{theorem}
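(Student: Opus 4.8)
The plan is to combine the three security lemmas just established --- Lemma~\ref{lem:sec:Alice} (security for Alice), Lemma~\ref{lem:sec:Bob} (security for Bob) and Lemma~\ref{lem:sec:cor} (correctness) --- and to choose the protocol parameters $\alpha$, $\delta$, $\eps'$, $m$, $\kappa$, $b$ so that the total error is at most~$\eps$ while $\kappa = O(\log 1/\eps)$ and $\kappa b = O(k + \log 1/\eps)$. The three lemmas together show that Protocol OTfromCommitment is a secure implementation of $\OT{1}{2}{k}$ whose error is dominated by the distance-from-uniform bound \eqref{sec:Alice} in Lemma~\ref{lem:sec:Alice} (the other two lemmas give \emph{perfect} security and correctness, so they contribute nothing to the error). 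Hence it suffices to make the right-hand side of \eqref{sec:Alice} at most $\eps$ under the stated asymptotic constraints on the number and size of commitments.

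First I would fix the constant sampling/entropy parameters independently of $\eps$ and~$k$: choose $\alpha \in (0,\tfrac12)$ a small constant (say $\alpha = \tfrac14$), and choose $\eps'$ (the internal smoothing parameter, written $\eps$ inside Lemma~\ref{lem:sec:Alice}) and $\delta$ small enough constants that the coefficient $\tfrac14 - \eps'/2 - h(\delta)$ is a strictly positive constant, call it $\gamma > 0$. With these constants fixed, the three summands on the right of \eqref{sec:Alice} become
\[
2^{-\frac12(\gamma(1-\alpha)m - k) - 1} \;+\; 2\exp\!\bigl(-2\eps'^2(1-\alpha)m\bigr) \;+\; \sqrt{3}\exp\!\bigl(-\alpha'\kappa\delta^2/16\bigr),
\]
where $\alpha' = (\tfrac12 - \delta)\alpha$ is again a positive constant. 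The second term decays exponentially in~$m$ and the third exponentially in~$\kappa$; the first term is at most $\eps/3$ precisely when $\gamma(1-\alpha)m - k \geq 2\log(6/\eps) + 2$, i.e. when $m \geq \bigl(k + 2\log(6/\eps) + 2\bigr)/(\gamma(1-\alpha))$. So I would set $m := \lceil c_1 (k + \log(1/\eps)) \rceil$ for a suitable constant $c_1$ depending only on $\gamma,\alpha$; this makes the first term $\le \eps/3$ and, since $m = \Omega(\log(1/\eps))$, also drives the second term below $\eps/3$. Finally I would set $\kappa := \lceil c_2 \log(1/\eps) \rceil$ for a constant $c_2$ with $\alpha' \delta^2 c_2 / 16 \geq 1$ and $\sqrt3 \le 3$, so the third term is at most $\eps/3$; then $b := \lceil m/\kappa \rceil = O\!\bigl((k + \log(1/\eps))/\log(1/\eps)\bigr)$, which gives $\kappa b = O(k + \log(1/\eps))$ as required. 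Summing, the total distance from uniform is at most $\eps$, so by Lemmas~\ref{lem:sec:Alice}--\ref{lem:sec:cor} the protocol is an $\eps$-secure implementation of $\OT{1}{2}{k}$ using $\kappa = O(\log 1/\eps)$ commitments of size $b$ with $\kappa b = O(k + \log 1/\eps)$.

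The only genuinely delicate point is bookkeeping the two distinct roles of the letter $\eps$: in the statement of Theorem~\ref{thm:optimal-protocol} it is the \emph{target} security error, whereas inside Lemma~\ref{lem:sec:Alice} and Lemma~\ref{lem:sampling2} it is a free internal parameter of the sampling analysis. I would rename the internal one to $\eps'$ throughout the parameter-choice argument to avoid confusion, and verify that after fixing $\eps'$ as a constant all three summands are genuinely controlled by the asymptotic choices of $m$ and~$\kappa$. A secondary check is that the floor/ceiling in $b = \lceil m/\kappa\rceil$ does not spoil the constraint $\kappa b = O(k + \log 1/\eps)$ --- it does not, since $\kappa \lceil m/\kappa \rceil \le m + \kappa = O(k + \log 1/\eps)$ --- and that $m = b\kappa$ is needed exactly, which one arranges by padding $m$ up to $b\kappa$ (harmless, as increasing $m$ only shrinks the error bound). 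Everything else is a direct substitution into \eqref{sec:Alice}, so no new technical machinery is needed beyond the lemmas already proved.
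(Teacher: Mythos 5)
Your proposal is correct and takes the same approach as the paper, which merely asserts that the theorem is ``immediately implied'' by Lemmas~\ref{lem:sec:Alice},~\ref{lem:sec:Bob} and~\ref{lem:sec:cor}; you have simply supplied the parameter-setting computation that the paper leaves implicit. Fixing $\alpha,\delta$ and the internal sampling parameter as constants so that $\gamma:=\tfrac14-\eps'/2-h(\delta)>0$, taking $m=\Theta(k+\log 1/\eps)$ and $\kappa=\Theta(\log 1/\eps)$, bounding each of the three terms of~\eqref{sec:Alice} by $\eps/3$, and noting that $\kappa\lceil m/\kappa\rceil\le m+\kappa$ is exactly the intended argument, and your remarks on renaming the internal $\eps$ and on padding $m$ up to a multiple of $\kappa$ are the correct bookkeeping points.
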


\section*{Acknowledgment}
The authors would like to thank Renato Renner and Marco Tomamichel for helpful discussions and the referees for their useful comments. 

\appendices
\section{Malicious OT implies Semi-honest OT}\label{appendix:sec:mal}

In the malicious model the adversary is not required to follow the protocol. Therefore, a protocol that is secure in the malicious model protects against a much bigger set of adversaries. On the other hand, the security definition in the malicious model only implies that for any (also semi-honest) adversary there exists a \emph{malicious} simulator for the ideal primitive, i.e., the simulator is allowed to change his input or output from the ideal primitive. Since this is not allowed in the semi-honest model, security in the malicious model does not imply security in the semi-honest model in general. For implementations of OT\footnote{And any other so-called deviation revealing functionality.}, however, it has been shown in  \cite{PR08} that this implication \emph{does} hold, because if the adversary is semi-honest, a simulator can only change the input with small probability. Otherwise, he is not able to correctly simulate the input or the output of the protocol. Therefore, any impossibility result for OT in the semi-honest model also implies impossibility in the malicious model.

We will state these result for $\OT{1}{n}{k}$ with explicit bounds on the errors.
\begin{lemmaC}
\label{lem:malicious:ot}
 If a protocol implementing $\OT{1}{n}{k}$ is secure in the malicious model with an error of at most $\eps$, then it is also secure in the semi-honest model with an error of at most $(2n+1)\eps$.
\end{lemmaC}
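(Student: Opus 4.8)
The plan is to convert the \emph{malicious} simulator guaranteed by the hypothesis into a \emph{semi-honest} simulator. The only structural difference between the two security notions is that the malicious simulator is allowed to change the corrupted party's input to, and output from, the ideal $\OT{1}{n}{k}$, whereas the semi-honest simulator is not. So the whole argument rests on the ``deviation-revealing'' property of oblivious transfer: if the adversary happens to run the honest protocol, then the input that the malicious simulator feeds to the ideal functionality must coincide with the adversary's true input except with probability $O(n\eps)$; otherwise either correctness or one of the two secrecy conditions of the protocol would be violated. Once this is established, the semi-honest simulator is obtained by running the malicious simulator but overriding the input it submits to the ideal functionality with the true input, and the statistical price of this override is exactly the probability that the two inputs differ.

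For a semi-honest corrupted Alice with input $x=(x_0,\dots,x_{n-1})$, let $S_A$ be the malicious simulator; running it on $x$ produces a simulated view $\nu$ together with an ``effective input'' $\tilde x$ that it hands to the ideal OT, after which honest Bob (input $c$) receives $\tilde x_c$. Since $S_A$ emulates Bob internally and Bob's messages carry (almost) no information about $c$, the distribution of $\tilde x$ is independent of $c$. Fixing $c$: in the real execution honest Bob outputs $x_c$ except with probability $\eps$, while in the ideal execution he outputs $\tilde x_c$; by $\eps$-indistinguishability of the two executions, where a distinguisher knowing $x$ and $c$ simply checks Bob's output against $x_c$, we get $\Pr[\tilde x_c\neq x_c]\le 2\eps$, and a union bound over the $n$ blocks $c\in\{0,\dots,n-1\}$ gives $\Pr[\tilde x\neq x]\le 2n\eps$. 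Now define the semi-honest simulator $S_A'(x):=\nu$. On the event $\tilde x=x$ the pair $(S_A'(x),f(x,c))=(\nu,x_c)$ equals the ideal-world pair $(\nu,\tilde x_c)$, so these two pairs are $2n\eps$-close; combining with the $\eps$-closeness of $(\nu,\tilde x_c)$ to $(\mathit{View}_A^\Pi(x,c),\mathit{Out}_B^\Pi(x,c))$ from malicious security yields the claimed error $(2n+1)\eps$ for the corrupted-Alice case.

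For a semi-honest corrupted Bob with input $c$ the argument is symmetric. The malicious simulator $S_B$ extracts an effective choice $\tilde c$, queries the ideal OT once and receives $x_{\tilde c}$; again $\tilde c$ has a distribution independent of $x$, because the simulator only learns anything about $x$ \emph{after} committing to $\tilde c$. Running Bob honestly, correctness forces his output to be $x_c$ except with probability $\eps$, hence indistinguishability forces the ideal simulation's (implied) output to equal $x_c$ except with probability $2\eps$; since that output depends on $x$ only through the single block $x_{\tilde c}$, on the event $\tilde c\neq c$ it would be independent of $x_c$, which bounds $\Pr[\tilde c\neq c]$ by $O(n\eps)$. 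The semi-honest simulator $S_B'(c,z)$ runs $S_B$ and answers its lone ideal-OT query with the value $z=f(x,c)=x_c$ it is given; on the event $\tilde c=c$ this is exactly the faithful execution of $S_B$, so $(z,S_B'(c,z))$ is within $O(n\eps)$ of the malicious-simulator output, which is $\eps$-close to $(\mathit{Out}_B^\Pi(x,c),\mathit{View}_B^\Pi(x,c))$; assembling the error terms gives the stated bound $(2n+1)\eps$ again.

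The main obstacle is the extraction step --- establishing that the malicious simulator's ideal-world input equals the adversary's true input except with probability $O(n\eps)$. This is precisely where the concrete structure of OT, rather than that of an arbitrary functionality, is used: one must invoke correctness \emph{and} secrecy together, and, crucially, one must verify that the extracted value is a well-defined random variable whose law does not depend on the honest party's input, so that the union bound over the $n$ ``coordinates'' of the OT (the $n$ input blocks on Alice's side, the $n$ possible choices on Bob's side) is legitimate. A secondary, more cosmetic subtlety is that on Bob's side opening the ``wrong'' block still returns a syntactically valid $k$-bit string; this is handled by feeding the malicious simulator exactly the one block $f(x,c)$ that the semi-honest simulator is entitled to see, which makes the two simulations coincide on the good event and bounds their distance by the mismatch probability on the bad event.
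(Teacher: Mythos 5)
Your proof follows the paper's argument exactly: on each side, bound the probability that the malicious simulator's extracted ideal-world input disagrees with the adversary's true input by combining correctness with indistinguishability (and using that the extraction is independent of the honest party's input), then build the semi-honest simulator by overriding that input with the true one. The one loose spot is that your $O(n\eps)$ bound for $\Pr[\tilde c\neq c]$ on Bob's side is weaker than what your own reasoning actually yields --- the paper gets $4\eps$ (since the simulated output is independent of $x_c$ on the bad event, hence wrong with probability at least $1/2$), giving a Bob-side error of $5\eps$, which is dominated by $(2n+1)\eps$ for every $n\geq 2$.
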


\begin{proof}
From the security of the protocol we know that there exists a (malicious) simulator that simulates the view of honest Alice. If two honest players execute the protocol on input $(x_0,\dots,x_{n-1})$ and $c$, then with probability $1-\eps$ the receiver gets $y = x_c$. Thus, the simulator can change the input $x_i$ with probability at most $2\eps$ for all $0 \leq i < n-1$. We construct a new simulator that executes the malicious simulator but never changes the input. This simulation is $(2n+1)\eps$-close to the distribution of the protocol. From the security of the protocol we also know that there exists a (malicious) simulator that simulates the view of honest Bob. If two honest players execute the protocol with uniform input $(X_0,\dots,X_{n-1})$ and choice bit $c$, then with probability $1-\eps$ the receiver gets $y = x_c$. If the simulator changes the choice bit $c$, he does not learn $x_c$ and the simulated $y$ is not equal to $x_c$ with probability at least $1/2$. Therefore, the simulator can change $c$ or the output with probability at most $4\eps$. As above we can construct a simulator for the semi-honest model with an error of at most $5\eps$. 
\end{proof}
\cancel{
\begin{lemmaC}
\label{lem:malicious:rabin}
If a protocol implementing $\RabinOT{p}{k}$ is secure in the malicious model with an error of at most $\eps$, then it is also secure in the semi-honest model with an error of at most $\max(\frac{2^{k+1}}{2^k-1}\eps + 2\eps,2\eps/p)$.
\end{lemmaC}
\begin{proof}
From the security of the protocol we know that there exists a (malicious) simulator that simulates the view of honest Alice. If two honest players execute the protocol on input $x$, then with probability at most $\eps$ the receiver gets an output $x'\notin \{x,\Delta\}.$ Thus, the simulator can change the input $x$ with probability at most $2\eps/p$.  From the security of the protocol we also know that there exists a (malicious) simulator that simulates the view of honest Bob. Let the input be chosen uniformly. If the simulator changes the output from  $\Delta$ to $y'$, then with probability at most $1/2^k$ it holds that $y'=x$. Thus, the simulator may change the output with probability at most $\frac{2^{k+1}}{2^k-1}\eps/(1-p)$ from $\Delta$. Therefore the simulator may change an output $x\neq \Delta$ with probability at most $\frac{2^{k+1}}{2^k-1}\eps/(1-p) + 2\eps$. Otherwise the probability that $x'\notin \{x,\Delta\}$ is greater than $2\eps$. As in lemma \ref{lem:malicious:ot} we can now construct semi-honest simulators with an error of at most $\max(\frac{2^{k+1}}{2^k-1}\eps/(1-p) + 2\eps,2\eps/p)$.
\qed
\end{proof}
}
Note that some of our proofs could easily be adapted to the malicious model to get slightly better bounds than the ones that follow from the combination of the bounds in the semi-honest model and Lemma \ref{lem:malicious:ot}.
\section{Smooth Entropies}\label{app:smooth}
In the following we prove different properties of the entropies $\Chminteps{\eps}{X}{Y}$ and $\Chmaxeps{\eps}{X}{Y}$. Note that some of these properties (or special cases of them) have already been shown in~\cite{RenWol05}.

We first introduce the following auxiliary quantities. 

\begin{definition}
For random variables $X,Y$ and $\eps \in [0,1)$, we define 
\begin{align*}
 \rmaxE{\eps}{X}{Y}&:=\min_{\Omega:\Pr[\Omega]\geq 1-\eps}\max_{y \in \mY}|\supp{P_{X\Omega|Y=y}}|\text{ and}\\
 \rminTE{\eps}{X}{Y}&:=\min_{\Omega:\Pr[\Omega]\geq 1-\eps}\sum_{y \in \mY} P_Y(y)\max_{x}P_{X\Omega|Y=y}(x)\;.\end{align*}
\end{definition}
 Note that $\Chminteps{\eps}{X}{Y}=-\log \rminE{\eps}{X}{Y}$ and $\Chmaxeps{\eps}{X}{Y}=\log \rmaxE{\eps}{X}{Y}$.

The following lemma shows that the smooth conditional max-entropy is subadditive.
\begin{lemma}[Subadditivity]\label{lem:sub-add}
Let $X,Y,Z$ be random variables and $\eps,\eps' \geq 0$ such that $\eps + \eps' \in [0,1)$. Then
 \begin{align*}
    \Chmaxeps{\eps+\eps'}{XY}{Z}&\leq \Chmaxeps{\eps}{X}{Z}+\Chmaxeps{\eps'}{Y}{XZ}\;.
 \end{align*}
\end{lemma}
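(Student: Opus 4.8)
The plan is to reduce the statement to a multiplicative inequality for the auxiliary quantity $\rmaxE{\eps}{X}{Y}$, using the identity $\Chmaxeps{\eps}{X}{Y}=\log\rmaxE{\eps}{X}{Y}$. Concretely, I would prove
\[
\rmaxE{\eps+\eps'}{XY}{Z}\ \le\ \rmaxE{\eps}{X}{Z}\cdot\rmaxE{\eps'}{Y}{XZ},
\]
and then take logarithms. First I would fix events attaining the two minima on the right: an $\Omega_1$ with $\Pr[\Omega_1]\ge 1-\eps$ and $\max_z|\supp{P_{X\Omega_1\mid Z=z}}|=\rmaxE{\eps}{X}{Z}$, and an $\Omega_2$ with $\Pr[\Omega_2]\ge 1-\eps'$ and $\max_{x,z}|\supp{P_{Y\Omega_2\mid X=x,Z=z}}|=\rmaxE{\eps'}{Y}{XZ}$ (the minima are attained since the objectives take only finitely many values, and $\Omega_1$ may be taken conditionally independent of $Y$ given $XZ$). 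I would then combine them into a single smoothing event $\Omega$ for $\Chmaxeps{\eps+\eps'}{XY}{Z}$ by setting $P_{\Omega\mid X=x,Y=y,Z=z}(1):=P_{\Omega_1\mid X=x,Y=y,Z=z}(1)\cdot P_{\Omega_2\mid X=x,Y=y,Z=z}(1)$. Using $ab\ge a+b-1$ for $a,b\in[0,1]$ and averaging over $XYZ$ gives $\Pr[\Omega]\ge \Pr[\Omega_1]+\Pr[\Omega_2]-1\ge 1-\eps-\eps'$, so $\Omega$ is admissible (here the hypothesis $\eps+\eps'\in[0,1)$ guarantees $1-\eps-\eps'>0$).

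The core of the argument is a support-counting estimate. For a fixed $z$ with $P_Z(z)>0$, if $(x,y)\in\supp{P_{XY\Omega\mid Z=z}}$ then $P_{XY\mid Z=z}(x,y)>0$ and $P_{\Omega\mid X=x,Y=y,Z=z}(1)>0$, and since $P_{\Omega\mid X=x,Y=y,Z=z}(1)\le P_{\Omega_i\mid X=x,Y=y,Z=z}(1)$ for $i=1,2$, both factors are positive. Expanding $P_{X\Omega_1\mid Z=z}(x)=\sum_{y'}P_{XY\mid Z=z}(x,y')\,P_{\Omega_1\mid X=x,Y=y',Z=z}(1)$, the term $y'=y$ is positive, so $x\in\supp{P_{X\Omega_1\mid Z=z}}$; and from $P_{Y\Omega_2\mid X=x,Z=z}(y)=P_{Y\mid X=x,Z=z}(y)\,P_{\Omega_2\mid X=x,Y=y,Z=z}(1)$ one gets $y\in\supp{P_{Y\Omega_2\mid X=x,Z=z}}$. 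Hence
\[
\supp{P_{XY\Omega\mid Z=z}}\ \subseteq\ \bigcup_{x\in\supp{P_{X\Omega_1\mid Z=z}}}\{x\}\times\supp{P_{Y\Omega_2\mid X=x,Z=z}},
\]
so $|\supp{P_{XY\Omega\mid Z=z}}|\le|\supp{P_{X\Omega_1\mid Z=z}}|\cdot\max_x|\supp{P_{Y\Omega_2\mid X=x,Z=z}}|$. Maximizing over $z$ and invoking the defining properties of $\Omega_1$ and $\Omega_2$ yields $\max_z|\supp{P_{XY\Omega\mid Z=z}}|\le \rmaxE{\eps}{X}{Z}\cdot\rmaxE{\eps'}{Y}{XZ}$; since $\Omega$ is admissible this bounds $\rmaxE{\eps+\eps'}{XY}{Z}$, and taking logarithms gives the lemma.

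The main obstacle I anticipate is not conceptual but bookkeeping: making sure the randomized conditioning events are handled rigorously — that the product event $\Omega$ really has probability at least $1-\eps-\eps'$, and that membership in the support of the sub-normalized distributions propagates correctly along $XY\Omega\mid Z \rightsquigarrow X\Omega_1\mid Z$ and $Y\Omega_2\mid XZ$. Once these are set up, the remaining content is just the union/product bound on support sizes above.
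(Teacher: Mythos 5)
Your proof is correct and follows essentially the same route as the paper's: both pick near-optimal smoothing events for the two terms on the right, combine them into a single admissible event for the left-hand side (the paper uses the intersection $\Omega\wedge\Omega'$ with a Markov condition making it well-defined, while you define the product conditional probability directly — these are the same thing once the events are taken conditionally independent), bound its probability by the union bound (your $ab\ge a+b-1$ step), and finish with the identical union/product estimate on support sizes.
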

\begin{newproof}
Let $\Omega$ be an event with $\Pr[\Omega]\geq 1-\eps$ and 
\[\max_{x,z} |\supp{P_{Y\Omega|X=x,Z=z}}|\leq \rmaxE{\eps}{Y}{XZ}\;.\]
Let $\Omega'$ be an event with $\Pr[\Omega']\geq 1-\eps'$ and $\Markov{\Omega'}{(X,Z)}{(Y,\Omega)}$ such that 
\[\max_z |\supp{P_{X\Omega'|Z=z}}|\leq \rmax^\eps(X|Z)\;.\]
Then $\Pr[\Omega,\Omega']\geq 1-\eps-\eps'$ and 
\[\rmax^{\eps+\eps'}(XY|Z)\leq \max_z |\supp{P_{XY\Omega\Omega'|Z=z}}|\;.\]
We have
 \begin{align*}
  &\max_z |\supp{P_{XY\Omega\Omega'|Z=z}}|\\
	&~~\leq\max_z(|\supp{P_{X\Omega'|Z=z}}|\cdot \max_x |\supp{P_{Y\Omega|X=x,Z=z}}|)\\
  &~~\leq \max_z|\supp{P_{X\Omega'|Z=z}}|\cdot \max_{x,z} |\supp{P_{Y\Omega|X=x,Z=z}}|\;.\\&& \IEEEQED
 \end{align*}
\end{newproof}

Next, we show that conditioning on an additional random variable cannot reduce the conditional smooth entropies. 
\begin{lemma}\label{lem:mono2}
Let $X,Y,Z$ be random variables and $\eps \in [0,1)$. Then
\begin{align*}
  \Chminteps{\eps}{X}{Z}&\geq \Chminteps{\eps}{X}{YZ}\;.
\end{align*} 
\end{lemma}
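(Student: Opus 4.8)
The plan is to pass to the auxiliary quantity from the appendix, namely $\rminTE{\eps}{X}{Z}$, for which $\Chminteps{\eps}{X}{Z}=-\log \rminTE{\eps}{X}{Z}$. Since $t\mapsto -\log t$ is decreasing, the asserted inequality is equivalent to $\rminTE{\eps}{X}{Z}\leq \rminTE{\eps}{X}{YZ}$. So I would first fix an event $\Omega$ that attains the minimum in the definition of $\rminTE{\eps}{X}{YZ}$, i.e.\ $\Pr[\Omega]\geq 1-\eps$ and
\[
\sum_{y,z} P_{YZ}(y,z)\,\max_{x}P_{X\Omega|Y=y,Z=z}(x)=\rminTE{\eps}{X}{YZ}\;.
\]
This same $\Omega$ is admissible in the definition of $\rminTE{\eps}{X}{Z}$ (it only needs $\Pr[\Omega]\geq 1-\eps$), so it suffices to prove, for this $\Omega$,
\[
\sum_{z}P_Z(z)\,\max_{x}P_{X\Omega|Z=z}(x)\;\leq\;\sum_{y,z}P_{YZ}(y,z)\,\max_{x}P_{X\Omega|Y=y,Z=z}(x)\;.
\]

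The key step is to rewrite both sides in terms of the sub-normalized joint distribution $q(x,y,z):=P_{XYZ}(x,y,z)\,P_{\Omega\mid X=x,Y=y,Z=z}(1)$. Unwinding the convention $P_{X\Omega\mid Z=z}(x)=P_{X\mid Z=z}(x)\,P_{\Omega\mid X=x,Z=z}(1)$ and summing out $Y$ gives $P_Z(z)\,P_{X\Omega\mid Z=z}(x)=\sum_{y}q(x,y,z)$, while $P_{YZ}(y,z)\,P_{X\Omega\mid Y=y,Z=z}(x)=q(x,y,z)$ directly. Substituting, the inequality above becomes
\[
\sum_{z}\,\max_{x}\sum_{y}q(x,y,z)\;\leq\;\sum_{z}\sum_{y}\,\max_{x}q(x,y,z)\;,
\]
which holds term by term in $z$, since $q\geq 0$ and the maximum of a sum of nonnegative numbers is bounded by the sum of the maxima. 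Chaining the estimates, $\rminTE{\eps}{X}{Z}\leq \sum_{z}\max_{x}\sum_{y}q(x,y,z)\leq \sum_{z}\sum_{y}\max_{x}q(x,y,z)=\rminTE{\eps}{X}{YZ}$, and applying $-\log$ yields the claim.

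I do not expect a genuine obstacle here: the content is just the elementary inequality $\max\sum\le\sum\max$, and the only care needed is the bookkeeping of the sub-normalized conditional distributions $P_{X\Omega\mid\cdot}$ together with the convention $P_{X\Omega\mid Y=y}(x)=0$ when $P_Y(y)=0$ (so that divisions by zero never arise). It is worth stressing that no smoothing is lost in the reduction: the very same event $\Omega$ certifies both sides, so the inequality holds with identical smoothing parameter $\eps$, which is exactly what the lemma states.
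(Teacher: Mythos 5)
Your proof is correct and takes essentially the same route as the paper: pass to the quantity $\rminTE{\eps}{X}{\cdot}$, fix a single event $\Omega$ certifying both sides, and reduce to the elementary inequality "max of an average is at most the average of the maxima" applied conditionally on each $z$. Your bookkeeping via the unnormalized $q(x,y,z)=P_{XYZ}(x,y,z)\,P_{\Omega\mid X=x,Y=y,Z=z}$ is clean and the identities $P_Z(z)P_{X\Omega\mid Z=z}(x)=\sum_y q(x,y,z)$ and $P_{YZ}(y,z)P_{X\Omega\mid Y=y,Z=z}(x)=q(x,y,z)$ are correct. For what it's worth, the paper's displayed proof appears to contain a slip in its last line: after writing $\max_x\sum_y P_{Y\mid Z=z}(y)P_{X\Omega\mid Y=y,Z=z}(x)$, it bounds this by $\max_{x,y}P_{X\Omega\mid Y=y,Z=z}(x)$ rather than by $\sum_y P_{Y\mid Z=z}(y)\max_x P_{X\Omega\mid Y=y,Z=z}(x)$; the latter is what actually recovers $\rminTE{\eps}{X}{YZ}$ after summing over $z$, and is exactly the step your proof carries out correctly.
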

\begin{proof}
 Let $\Omega$ be an event with $\Pr[\Omega]\geq 1-\eps$. Then 
\begin{align*}
\sum_{z}P_Z(z) &\max_{x} P_{X\Omega \mid Z=z}(x)\\
 &= \sum_{z}P_Z(z)  \max_{x} \sum_{y} P_{Y|Z=z}(y) P_{X\Omega \mid Y=y,Z=z}(x) \\
&\leq \sum_{z}P_Z(z) \max_{x,y} P_{X\Omega \mid Y=y,Z=z}(x)\;.
\end{align*}
\end{proof}

The Shannon entropy satisfies the inequality $\cHS{X}{Z}-\cHS{X}{YZ}=\cIS{X}{Y}{Z}\leq \cHS{Y}{Z}$. The next lemma shows that this property can be generalized to the smooth min- and max-entropy.

\begin{lemma}\label{lem:mutual-inf2}
Let $X,Y,Z$ be random variables and $\eps,\eps' \geq 0$ such that $\eps + \eps' \in [0,1)$. Then
 \begin{align*}
    \Chminteps{\eps}{X}{Z}-\Chmaxeps{\eps'}{X}{YZ}\leq \Chminteps{\eps+\eps'}{Y}{Z}\;.
 \end{align*}
\end{lemma}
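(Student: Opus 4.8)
The plan is to reduce the claimed inequality to a multiplicative statement about the auxiliary quantities $r_{\emin}$ and $r_{\emax}$ introduced before Lemma~\ref{lem:sub-add}. Using $\Chminteps{\eps}{X}{Z}=-\log\rminTE{\eps}{X}{Z}$, $\Chmaxeps{\eps'}{X}{YZ}=\log\rmaxE{\eps'}{X}{YZ}$ and $\Chminteps{\eps+\eps'}{Y}{Z}=-\log\rminTE{\eps+\eps'}{Y}{Z}$, together with the fact that $-\log$ is decreasing, the lemma becomes equivalent to
\[
\rminTE{\eps+\eps'}{Y}{Z}\ \le\ \rminTE{\eps}{X}{Z}\cdot\rmaxE{\eps'}{X}{YZ}\,.
\]

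First I would fix optimal smoothing events. Let $\Omega_1$ attain the minimum defining $\rminTE{\eps}{X}{Z}$, so that $\Pr[\Omega_1]\ge 1-\eps$ and $\rminTE{\eps}{X}{Z}=\sum_z P_Z(z)\max_x P_{X\Omega_1\mid Z=z}(x)$; let $\Omega_2$ attain the minimum defining $\rmaxE{\eps'}{X}{YZ}$, so that $\Pr[\Omega_2]\ge 1-\eps'$ and $\max_{y,z}\abs{\supp{P_{X\Omega_2\mid Y=y,Z=z}}}=\rmaxE{\eps'}{X}{YZ}$. Since $\Omega_1$ is a priori specified only through $P_{\Omega_1 XZ}$ and $\Omega_2$ only through $P_{\Omega_2 XYZ}$, I would extend them to a common distribution $P_{\Omega_1\Omega_2 XYZ}$ in which $\Omega_1$ and $\Omega_2$ are conditionally independent given $(X,Y,Z)$; this preserves both marginals as well as $\Pr[\Omega_1]$ and $\Pr[\Omega_2]$, so the union bound gives $\Pr[\Omega]\ge 1-\eps-\eps'$ for the event $\Omega:=\Omega_1\cap\Omega_2$. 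Hence $\Omega$ is a feasible smoothing event in the definition of $\rminTE{\eps+\eps'}{Y}{Z}$, and therefore $\rminTE{\eps+\eps'}{Y}{Z}\le\sum_z P_Z(z)\max_y P_{Y\Omega_1\Omega_2\mid Z=z}(y)$.

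The core estimate then bounds this right-hand side. Fix $z$ with $P_Z(z)>0$ and let $y^{\ast}$ maximize $P_{Y\Omega_1\Omega_2\mid Z=z}(y)$. Writing $P_{Y\Omega_1\Omega_2\mid Z=z}(y^{\ast})=\sum_x P_{XY\Omega_1\Omega_2\mid Z=z}(x,y^{\ast})$, a term of this sum is nonzero only when $P_{X\Omega_2\mid Y=y^{\ast},Z=z}(x)>0$, so at most $\abs{\supp{P_{X\Omega_2\mid Y=y^{\ast},Z=z}}}\le\rmaxE{\eps'}{X}{YZ}$ of the terms are nonzero, while by monotonicity of probability each term is at most $P_{X\Omega_1\mid Z=z}(x)\le\max_{x'}P_{X\Omega_1\mid Z=z}(x')$. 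This gives $\max_y P_{Y\Omega_1\Omega_2\mid Z=z}(y)\le\rmaxE{\eps'}{X}{YZ}\cdot\max_x P_{X\Omega_1\mid Z=z}(x)$; multiplying by $P_Z(z)$, summing over $z$, and combining with the bound from the previous paragraph yields $\rminTE{\eps+\eps'}{Y}{Z}\le\rmaxE{\eps'}{X}{YZ}\cdot\rminTE{\eps}{X}{Z}$, which is exactly the displayed multiplicative inequality; applying $-\log$ recovers the statement.

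The only genuinely delicate step — hence the main obstacle — is the coupling of the two smoothing events $\Omega_1$ and $\Omega_2$: they are handed to us with different ambient distributions, so one must explicitly join them before the intersection $\Omega_1\cap\Omega_2$ and the union bound are even meaningful, and conditional independence given $(X,Y,Z)$ is the clean choice that keeps all the marginals and probabilities intact. The rest is bookkeeping with the definitions, using monotonicity of probability and the convention $P_{X\Omega\mid Y=y}(x)=0$ when $P_Y(y)=0$ to discard values of $z$ outside $\supp{P_Z}$.
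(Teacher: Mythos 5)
Your proof is correct and follows essentially the same route as the paper's: fix smoothing events for the two entropies, intersect them and apply the union bound, then bound $\max_y P_{Y\Omega_1\Omega_2\mid Z=z}(y)$ by the number of nonzero terms times the largest term, i.e., by $\rmaxE{\eps'}{X}{YZ}\cdot\max_x P_{X\Omega_1\mid Z=z}(x)$. The one point you make explicit that the paper leaves implicit is the coupling of $\Omega_1$ and $\Omega_2$ onto a common probability space; this is a legitimate care point and your choice of conditional independence given $(X,Y,Z)$ is fine (indeed, any coupling preserving the two marginals works, since the estimate only uses $P_{X\Omega_1\mid Z=z}$, $\supp{P_{X\Omega_2\mid Y=y,Z=z}}$ and monotonicity of probability).
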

\begin{proof}
Let $\Omega$ be an event with $\Pr[\Omega]\geq 1-\eps$ and 
\[\sum_z\max_{x}P_{XZ\Omega}(x,z)\leq \rminTE{\eps}{X}{Z}\;.\]
Let $\Omega'$ be an event with $\Pr[\Omega']\geq 1-\eps'$ such that
\[\max_{y,z}|\supp{P_{X\Omega'|Y=y,Z=z}}|\leq \rmax^\eps(X|YZ)\;.\]
Then $\Pr[\Omega,\Omega']\geq 1-\eps-\eps'$ and 
\[\rminTE{\eps+\eps'}{Y}{Z}\leq \sum_{z}P_Z(z)\max_{y} P_{Y\Omega\Omega'|Z=z}(y)\;.\]
We have for all $z$
\begin{align*}
 \max_{x,y} P_{XY\Omega\Omega'|Z=z}(x,y)&\leq \max_{x,y} P_{XY\Omega|Z=z}(x,y)\\
&\leq \max_{x} P_{X\Omega|Z=z}(x)\;.
\end{align*}
Furthermore, we have
\begin{align*}
 |\{x: P_{XY\Omega\Omega'|Z=z}(x,y) > 0\}|\leq  |\supp{P_{X\Omega'|Y=y,Z=z}}|\;.
\end{align*}
Together, we obtain
\begin{align*}
     \rminTE{\eps+\eps'}{Y}{Z} &\leq \sum_{z}P_Z(z)\max_{y} P_{Y\Omega\Omega'|Z=z}(y)\\
     &=\sum_{z}P_Z(z)(\max_{y}\sum_{x} P_{XY\Omega\Omega'|Z=z}(x,y))\\
     &\leq\sum_{z}P_Z(z)(\max_{y,z} |\supp{P_{X\Omega\Omega'|Y=y,Z=z}}|\\
		 &~~~~\cdot \max_{x,y} P_{XY\Omega\Omega'|Z=z}(x,y))\\
     &\leq\max_{y,z} |\supp{P_{X\Omega'|Y=y,Z=z}}|\\
		 &~~~~\cdot\sum_{z}P_Z(z) \max_{x} P_{X\Omega|Z=z}(x)\\
     &\leq \rminTE{\eps}{X}{Z} \cdot \rmax^{\eps'}(X|YZ)\;.
\end{align*}

\end{proof}

Note that the proof also implies the stronger inequality $\Hmin^{\eps}(XY|Z)-\Hmax^{\eps'}(X|YZ)\leq \Hmin^{\eps+\eps'}(Y|Z)$, which corresponds in a certain sense to the inequality $\cHS{X}{Z}-\cHS{X}{YZ}\leq \cHS{XY}{Z}$ for the Shannon entropy.

The following lemma shows that the smooth min-entropy $\Chminteps{\eps}{X}{Y}$ satisfies a data processing inequality, i.e., it cannot be decreased by additionally processing $Y$.

\begin{lemma}[Data Processing]\label{lem:markov-t}
Let $X,Y,Z$ be random variables with $\Markov{X}{Y}{Z}$ and $\eps \in [0,1)$. Then
\begin{align*}
  \Chminteps{\eps}{X}{Y} &\leq \Chminteps{\eps}{X}{YZ}\;.
\end{align*}
\end{lemma}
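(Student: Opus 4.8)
The plan is to pass to the auxiliary quantity $\rminTE{\eps}{\cdot}{\cdot}$ introduced just before the lemma, for which $\Chminteps{\eps}{X}{Y}=-\log\rminTE{\eps}{X}{Y}$. Since $-\log$ is decreasing, the claim $\Chminteps{\eps}{X}{Y}\leq\Chminteps{\eps}{X}{YZ}$ is equivalent to $\rminTE{\eps}{X}{YZ}\leq\rminTE{\eps}{X}{Y}$, and I would prove this latter inequality.

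To that end, fix an arbitrary smoothing event $\Omega$ with $\Pr[\Omega]\geq 1-\eps$ that lives on the space of $(X,Y)$, i.e.\ is specified by a conditional distribution $P_{\Omega\mid X,Y}$. Transport it to the larger space carrying $Z$ by declaring $\Omega$ to be conditionally independent of $Z$ given $(X,Y)$, that is, $P_{\Omega\mid X=x,Y=y,Z=z}(1):=P_{\Omega\mid X=x,Y=y}(1)$; this leaves $\Pr[\Omega]=\sum_{x,y}P_{XY}(x,y)P_{\Omega\mid X=x,Y=y}(1)$ unchanged, hence still $\geq 1-\eps$. Now invoke the Markov chain $\Markov{X}{Y}{Z}$, which gives $P_{X\mid Y=y,Z=z}=P_{X\mid Y=y}$ for all $(y,z)$ with $P_{YZ}(y,z)>0$. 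Together with the independence of the transported $\Omega$ from $Z$ this yields $P_{X\Omega\mid Y=y,Z=z}(x)=P_{X\mid Y=y}(x)\,P_{\Omega\mid X=x,Y=y}(1)=P_{X\Omega\mid Y=y}(x)$, and in particular $\max_x P_{X\Omega\mid Y=y,Z=z}(x)=\max_x P_{X\Omega\mid Y=y}(x)$ for every such $(y,z)$.

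Feeding the transported $\Omega$ into the definition of $\rminTE{\eps}{X}{YZ}$ and summing out $z$ (terms with $P_{YZ}(y,z)=0$ contributing nothing):
\begin{align*}
\rminTE{\eps}{X}{YZ} &\leq \sum_{y,z}P_{YZ}(y,z)\max_x P_{X\Omega\mid Y=y,Z=z}(x)\\
&= \sum_y\Big(\sum_z P_{YZ}(y,z)\Big)\max_x P_{X\Omega\mid Y=y}(x)\\
&= \sum_y P_Y(y)\max_x P_{X\Omega\mid Y=y}(x)\;.
\end{align*}
Since this holds for every event $\Omega$ on the space of $(X,Y)$ with $\Pr[\Omega]\geq 1-\eps$, taking the infimum of the right-hand side over all such $\Omega$ gives $\rminTE{\eps}{X}{YZ}\leq\rminTE{\eps}{X}{Y}$, and applying $-\log$ proves the lemma.

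The one delicate step — and the place I would take the most care — is the bookkeeping of how the smoothing event is carried from the smaller probability space (with only $X,Y$) to the larger one (with $X,Y,Z$): one must use exactly the conditional-independence extension of $\Omega$ so that $\Pr[\Omega]$ is preserved and the Markov hypothesis becomes applicable. Without $\Markov{X}{Y}{Z}$ this transport breaks, as it should: revealing $Z$ can only help a guesser of $X$, which is precisely why $\Chminteps{\eps}{X}{Y}\geq\Chminteps{\eps}{X}{YZ}$ holds unconditionally by Lemma~\ref{lem:mono2}; combined with the present lemma this shows the two quantities coincide under $\Markov{X}{Y}{Z}$. Everything else is the elementary marginalisation displayed above.
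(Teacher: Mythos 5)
Your proof is correct and follows essentially the same route as the paper's: extend the optimal (or an arbitrary) smoothing event $\Omega$ on the $(X,Y)$-space to the $(X,Y,Z)$-space via $\Markov{\Omega}{XY}{Z}$, use $\Markov{X}{Y}{Z}$ to get $P_{X\Omega\mid Y=y,Z=z}(x)=P_{X\Omega\mid Y=y}(x)$, and plug this candidate into the minimization defining $\rminTE{\eps}{X}{YZ}$. The only cosmetic difference is that you quantify over all admissible $\Omega$ and take the infimum at the end, whereas the paper picks an optimizer at the outset.
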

\begin{proof}
Let $\Omega$ be an event with $\Pr[\Omega]\geq 1-\eps$ and $\Markov{\Omega}{XY}{Z}$ such that
\begin{align*}
\rminTE{\eps}{X}{Y}&=\sum_y P_{Y}(y) \max_{x}P_{X\Omega|Y=y}(x)\;. 
\end{align*}

We have 
\begin{align*}
 P_{X\Omega \mid Y=y, Z=z}(x)&=P_{X \mid Y=y,Z=z}(x) P_{\Omega \mid X=x,Y=y,Z=z}\\
 &=P_{X \mid Y=y}(x) P_{\Omega \mid X=x,Y=y}\\
 &=P_{X\Omega|Y=y}(x)\;.
\end{align*}
Thus, we obtain
\begin{align*}
 \rminTE{\eps}{X}{YZ}&\leq \sum_{y,z}P_{YZ}(y,z)\max_{x} P_{X\Omega \mid Y=y, Z=z}(x)\\
 &=\sum_y P_{Y}(y) \max_{x}P_{X\Omega|Y=y}(x)\;.
\end{align*}

\end{proof}

The smooth max-entropy $\Chmaxeps{\eps}{X}{Y}$ also satisfies a data processing inequality, i.e., it cannot be decreased by additionally processing $Y$.
\begin{lemma}\label{lem:markov-max}
Let $X,Y,Z$ be random variables with $\Markov{X}{Y}{Z}$ and $\eps\in [0,1)$. Then
\begin{align*}
  \Chmaxeps{\eps}{X}{Y}&\leq\Chmaxeps{\eps}{X}{YZ}\;.
\end{align*}
\end{lemma}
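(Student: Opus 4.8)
The plan is to prove the statement in the ``$r$''-form, i.e.\ to show $\rmaxE{\eps}{X}{Y}\leq\rmaxE{\eps}{X}{YZ}$, which is equivalent to the claim after taking logarithms via $\Chmaxeps{\eps}{X}{Y}=\log\rmaxE{\eps}{X}{Y}$. Let $\Omega$ be an event attaining the minimum in the definition of $\rmaxE{\eps}{X}{YZ}$, so $\Pr[\Omega]\geq 1-\eps$ and $\max_{y,z}|\supp{P_{X\Omega|Y=y,Z=z}}|=\rmaxE{\eps}{X}{YZ}$, the maximum being over $(y,z)$ with $P_{YZ}(y,z)>0$. The goal is to build from $\Omega$ an event $\Omega'$ that depends only on $(X,Y)$ (conditionally independent of $Z$ given $(X,Y)$) such that $\Pr[\Omega']\geq 1-\eps$ and $|\supp{P_{X\Omega'|Y=y}}|\leq\rmaxE{\eps}{X}{YZ}$ for every $y$; such an $\Omega'$ is feasible for $\rmaxE{\eps}{X}{Y}$ and witnesses the desired inequality.

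The construction is to restrict $\Omega$ to the ``best slice'' of $Z$: for each $y$ with $P_Y(y)>0$ choose $z^{*}(y)\in\arg\max_{z:\,P_{Z|Y=y}(z)>0}P_{\Omega|Y=y,Z=z}$ and set $P_{\Omega'|X=x,Y=y}:=P_{\Omega|X=x,Y=y,Z=z^{*}(y)}$ for all $x$ with $P_{XY}(x,y)>0$; these conditionals are well defined since $P_{XYZ}(x,y,z^{*}(y))=P_{XY}(x,y)\,P_{Z|Y=y}(z^{*}(y))>0$ by $\Markov{X}{Y}{Z}$. Using the Markov chain in the form $P_{X|Y=y,Z=z}=P_{X|Y=y}$, one checks $P_{X\Omega'|Y=y}(x)=P_{X|Y=y}(x)\,P_{\Omega|x,y,z^{*}(y)}=P_{X\Omega|Y=y,Z=z^{*}(y)}(x)$, hence $\supp{P_{X\Omega'|Y=y}}=\supp{P_{X\Omega|Y=y,Z=z^{*}(y)}}$ and so $|\supp{P_{X\Omega'|Y=y}}|\leq\rmaxE{\eps}{X}{YZ}$. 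For the budget, the same identity gives $\sum_x P_{X|Y=y}(x)\,P_{\Omega'|x,y}=\sum_x P_{X|Y=y,Z=z^{*}(y)}(x)\,P_{\Omega|x,y,z^{*}(y)}=P_{\Omega|Y=y,Z=z^{*}(y)}$, which by the choice of $z^{*}(y)$ dominates the convex combination $\sum_z P_{Z|Y=y}(z)\,P_{\Omega|Y=y,Z=z}=P_{\Omega|Y=y}$; multiplying by $P_Y(y)$ and summing over $y$ yields $\Pr[\Omega']\geq\Pr[\Omega]\geq 1-\eps$.

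The only delicate point---and the reason this differs from the min-entropy version in Lemma~\ref{lem:markov-t}---is precisely the choice of $z^{*}(y)$. For the min-entropy one can take a Markov-form optimizer and simply note $P_{X\Omega|Y=y,Z=z}=P_{X\Omega|Y=y}$, because the relevant functional is an average of $\max_x P_{X\Omega|Y=y,Z=z}$ against the weights $P_{YZ}(y,z)$. Here the functional is a \emph{maximum} of support sizes, and the support of the $z$-mixture $P_{X\Omega|Y=y}=\sum_z P_{Z|Y=y}(z)\,P_{X\Omega|Y=y,Z=z}$ can be strictly larger than that of any single slice, so merely marginalizing $\Omega$ over $Z$ does not work; one must collapse $Z$ to a single value per $y$, and taking the value that maximizes $P_{\Omega|Y=y,Z=z}$ is exactly what keeps $\Pr[\Omega']\geq 1-\eps$. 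The rest is routine bookkeeping with $P_{X|Y=y,Z=z}=P_{X|Y=y}$.
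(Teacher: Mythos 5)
Your proof is correct and uses essentially the same key idea as the paper's proof: collapse the conditioning on $Z$ to a single best $z$-slice $z^*(y)$ for each $y$, using the Markov property $P_{X|Y=y,Z=z}=P_{X|Y=y}$ to transport the optimizing event back to one conditioned only on $(X,Y)$, and the choice of $z^*(y)$ as the slice maximizing $P_{\Omega|Y=y,Z=z}$ to preserve the probability budget. Your version is slightly more streamlined than the paper's, which first re-optimizes the event per $y$ (introducing intermediate events $\Omega_y$ and per-$y$ quantities $\rmaxE{\eps_y}{X}{Z,Y=y}$, $\rmaxE{\eps_y}{X}{Y=y}$) before picking the best $z$-slice and stitching; you skip the re-optimization and slice the original optimizer $\Omega$ directly, which yields the same bound with less bookkeeping.
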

\begin{proof}
Let $\Omega$ be an event such that 
\[\rmaxE{\eps}{X}{YZ}=\max_{y,z}|\supp{P_{X\Omega|Y=y,Z=z}}|\;.\]
For all $y$, we define $\eps_y:=P_{\Omega|Y=y}$. Let $\Omega_y$ be an event such that \[\rmaxE{\eps_y}{X}{Z,Y=y}=\max_{z}|\supp{P_{X\Omega_y|Y=y,Z=z}}|\;.\]

Let $\bar z_y$ be such that $P_{\Omega_y|Y=y,Z=\bar z}$ is maximal. We define $\bar \Omega_y$ with $P_{\bar \Omega_y|X=x,Y=y}:=P_{\Omega_y|X=x,Y=y,Z=\bar z}$. Then, we have $P_{\bar \Omega_y|Y=y}\geq P_{\Omega_y|Y=y}\geq 1-\eps_y$ and $P_{X\Omega_y|Y=y,Z=z}\geq P_{X\Omega_y|Y=y,Z=\bar z}= P_{X\bar \Omega_y|Y=y}$ and, therefore, \[\rmaxE{\eps_y}{X}{Z,Y=y}\geq \rmaxE{\eps_y}{X}{Y=y}\;.\] Thus, we get
\begin{align*}
\rmaxE{\eps}{X}{YZ}&=\max_{y,z}|\supp{P_{X\Omega|Y=y,Z=z}}|\\
&\geq \max_y \rmaxE{\eps_y}{X}{Z,Y=y}\\
&\geq \max_y \rmaxE{\eps_y}{X}{Y=y}\\
&\geq \rmaxE{\eps}{X}{Y}\;.
\end{align*}
\end{proof}

The smooth max-entropy satisfies the following monotonicity properties.
\begin{lemma}\label{lem:mono-max}
Let $X,Y,Z$ be random variables and $\eps \in [0,1)$. Then
\begin{align*}
  \Chmaxeps{\eps}{XY}{Z}\geq \Chmaxeps{\eps}{X}{Z}\geq\Chmaxeps{\eps}{X}{YZ}\;.
\end{align*} 
\end{lemma}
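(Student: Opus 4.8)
The plan is to reduce both inequalities to statements about the auxiliary quantities $\rmaxE{\eps}{\cdot}{\cdot}$. Since $\Chmaxeps{\eps}{X}{Y}=\log\rmaxE{\eps}{X}{Y}$ and $\log$ is increasing, it suffices to establish $\rmaxE{\eps}{XY}{Z}\geq\rmaxE{\eps}{X}{Z}\geq\rmaxE{\eps}{X}{YZ}$. Throughout I would use the convention already adopted in the paper that a conditional (sub-)distribution on a zero-probability event is identically zero; then such a distribution has empty support and does not affect any of the maxima that appear, so all maxima over $z$ (respectively over pairs $(y,z)$) may tacitly be restricted to values with $P_Z(z)>0$ (respectively $P_{YZ}(y,z)>0$).

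For the first inequality I would argue that a \emph{single} smoothing event works for both sides. Fix any event $\Omega$ with $\Pr[\Omega]\geq1-\eps$ and any $z$ with $P_Z(z)>0$. The image of $\supp{P_{XY\Omega\mid Z=z}}$ under the coordinate projection $(x,y)\mapsto x$ is exactly $\supp{P_{X\Omega\mid Z=z}}$, so $|\supp{P_{XY\Omega\mid Z=z}}|\geq|\supp{P_{X\Omega\mid Z=z}}|$. Taking the maximum over $z$ and then using that $\Omega$ is admissible for $\rmaxE{\eps}{X}{Z}$ gives $\max_z|\supp{P_{XY\Omega\mid Z=z}}|\geq\rmaxE{\eps}{X}{Z}$; minimizing the left-hand side over all admissible $\Omega$ then yields $\rmaxE{\eps}{XY}{Z}\geq\rmaxE{\eps}{X}{Z}$.

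For the second inequality I would take $\Omega$ to be an event attaining the minimum in the definition of $\rmaxE{\eps}{X}{Z}$, so that $\Pr[\Omega]\geq1-\eps$ and $\max_z|\supp{P_{X\Omega\mid Z=z}}|=\rmaxE{\eps}{X}{Z}$, and then reuse this same $\Omega$ as a candidate smoothing event for $\rmaxE{\eps}{X}{YZ}$. For every pair $(y,z)$ with $P_{YZ}(y,z)>0$ one has $P_{X\Omega\mid Z=z}(x)\geq P_{Y\mid Z=z}(y)\cdot P_{X\Omega\mid Y=y,Z=z}(x)$, hence $\supp{P_{X\Omega\mid Y=y,Z=z}}\subseteq\supp{P_{X\Omega\mid Z=z}}$ and therefore $|\supp{P_{X\Omega\mid Y=y,Z=z}}|\leq\rmaxE{\eps}{X}{Z}$. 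Taking the maximum over $(y,z)$ gives $\rmaxE{\eps}{X}{YZ}\leq\rmaxE{\eps}{X}{Z}$, as desired.

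I do not expect a genuine obstacle here: everything rests on the two elementary facts that marginalizing shrinks (the projection of) a support and that conditioning on more information cannot enlarge a support, combined with the crucial structural observation that the definition of $\rmaxE{\eps}{\cdot}{\cdot}$ imposes no Markov constraint on the smoothing event, so the optimal $\Omega$ for $\rmaxE{\eps}{X}{Z}$ is directly usable on the other side. The argument parallels the data-processing bound of Lemma~\ref{lem:markov-max} and the min-entropy monotonicity of Lemma~\ref{lem:mono2}; the only point that needs a little care — and the one most easily glossed over — is the consistent handling of the zero-probability conditionings flagged at the outset, which is why I would isolate that convention up front.
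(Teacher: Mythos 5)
Your proof is correct and takes essentially the same route as the paper's: for the first inequality you observe that $|\supp{P_{XY\Omega\mid Z=z}}|\geq|\supp{P_{X\Omega\mid Z=z}}|$ for any admissible $\Omega$ and $z$, and for the second you reuse the optimal $\Omega$ for $\Chmaxeps{\eps}{X}{Z}$ and note $\supp{P_{X\Omega\mid Y=y,Z=z}}\subseteq\supp{P_{X\Omega\mid Z=z}}$. The paper states these two support inequalities without further comment; your version simply spells out the min/max bookkeeping and the zero-probability convention more explicitly.
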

\begin{proof}
 Let $\Omega$ be an event with $\Pr[\Omega]\geq 1-\eps$. Then the first inequality follows from
\begin{align*}
 \max_{z} | \supp{P_{XY\Omega \mid Z=z}}|\geq \max_{z}|\supp{P_{X\Omega \mid Z=z}}|\;.
\end{align*}
and the second inequality from
\begin{align*}
 \max_{y,z} | \supp{P_{X\Omega \mid Y=y,Z=z}}|\leq \max_{z}|\supp{P_{X\Omega \mid Z=z}}|\;.
\end{align*}
\end{proof}

\cancel{
Next, we show that Lemmas~\ref{lem:mutual-inf},~\ref{lem:mono},~\ref{lem:markov} and (a stronger variant of) Lemma~\ref{lem:smooth-min-chain-rule1} also hold for $\Chminteps{\eps}{X}{Y}$. We introduce the following auxiliary quantity.
\begin{definition}
 For a distribution $P_{XY}$ and $\eps \in [0,1)$, we define
\begin{align*}
 \rminTE{\eps}{X}{Y}&:=\min_{\Omega:\Pr[\Omega]\geq 1-\eps}\sum_y P_Y(y)\max_{x}(P_{X\Omega|Y=y}(x))\;.
\end{align*}
\end{definition}

\begin{lemma}\label{lem:mutual-inf2}
Let $X,Y,Z$ be random variables and $\eps \in [0,1)$. Then
 \begin{align*}
    \Chminteps{\eps}{X}{Z}-\Chmaxeps{\eps'}{X}{YZ}\leq \Chminteps{\eps+\eps'}{Y}{Z}\;.
 \end{align*}
\end{lemma}
\begin{proof}
Let $\Omega$ be an event with $\Pr[\Omega]\geq 1-\eps$ and 
\[\sum_z\max_{x}P_{XZ\Omega}(x,z)\leq \rminTE{\eps}{X}{Z}\;.\]
Let $\Omega'$ be an event with $\Pr[\Omega']\geq 1-\eps'$ such that
\[\max_{y,z}|\supp{P_{X\Omega'|Y=y,Z=z}}|\leq \rmax^\eps(X|YZ)\;.\]
Then $\Pr[\Omega,\Omega']\geq 1-\eps-\eps'$ and 
\[\rminTE{\eps+\eps'}{Y}{Z}\leq \sum_{z}P_Z(z)\max_{y} P_{Y\Omega\Omega'|Z=z}(y)\;.\]
We have for all $z$
\begin{align*}
 \max_{x,y} P_{XY\Omega\Omega'|Z=z}(x,y)&\leq \max_{x,y} P_{XY\Omega|Z=z}(x,y)\\
&\leq \max_{x} P_{X\Omega|Z=z}(x)\;.
\end{align*}
Furthermore, we have
\begin{align*}
 |\supp{P_{XY\Omega\Omega'|Z=z}}|\leq  |\supp{P_{X\Omega'|Y=y,Z=z}}|\;.
\end{align*}
Together, we obtain
\begin{align*}
     \sum_{z}&P_Z(z)\max_{y} P_{Y\Omega\Omega'|Z=z}(y)=\sum_{z}P_Z(z)(\max_{y}\sum_{x} P_{XY\Omega\Omega'|Z=z}(x,y))\\
     &\leq\sum_{z}P_Z(z)(\max_{y,z} |\supp{P_{X\Omega\Omega'|Y=y,Z=z}}|\cdot \max_{x,y} P_{XY\Omega\Omega'|Z=z}(x,y)\\
     &\leq\max_{y,z} |\supp{P_{X\Omega'|Y=y,Z=z}}|\cdot\sum_{z}P_Z(z) \max_{x} P_{X\Omega|Z=z}(x)\\
     &\leq \rminTE{\eps}{X}{Z}+\rmax^\eps(X|YZ)\;.
\end{align*}
\end{proof}

\begin{lemma}\label{lem:smooth-min-chain-rule-t}
Let $X,Y,Z$ be random variables and $\eps \in [0,1)$. Then
\begin{align*}
 \Chminteps{\eps}{X}{YZ}\geq \Chminteps{\eps}{XY}{Z} - \Hmax(Y)\;.
\end{align*} 
\end{lemma}
\begin{proof}
Let $\Omega$ be an event with $\Pr[\Omega]\geq 1-\eps$ and 
\[\sum_{z}P_Z(z)\max_{x,y}P_{XY\Omega|Z=z}(x,y)= \rminTE{\eps}{XY}{Z}\;.\]
Then we have
 \begin{align*}
  \sum_{z}P_Z(z)\max_{x,y}P_{XY\Omega|Z=z}(x,y)&\cdot \mid \supp{P_Y}\mid\\
&\geq\sum_{z}P_Z(z)\sum_y\max_x P_{XY\Omega|Z=z}(x,y)\\
&=\sum_{y,z}P_{YZ}(y,z)\max_x P_{X\Omega|Y=y,Z=z}(x)\\
&\geq \rminTE{\eps}{X}{YZ}\;.
\end{align*}
\end{proof}

}

\section{Technical Lemmas}\label{app:lemmas}

\begin{lemma2}{\ref{lem:h-smooth}}
Let $(X,Y)$ and $(\hat X,\hat Y)$ be random variables distributed according to $P_{XY}$ and $P_{\hat X \hat Y}$, and let $\dis(P_{XY},P_{\hat X \hat Y})\leq \epsilon$. Then 
\begin{align*}
\cHS{\hat X}{\hat Y}&\geq\cHS{X}{Y}-\epsilon \log |\mathcal{X}|-h(\epsilon)\;.
\end{align*}
\end{lemma2}
\begin{proof}
There exist random variables $A,B$ such that $P_{XY|A=0}=P_{\hat X\hat Y|B=0}$ and $\Pr[A=0]=\Pr[B=0]=1-\epsilon$. Thus, using the monotonicity of the entropy and the fact that $\cHS{X}{YB=1}\leq \log |\mX|$ we get that
\begin{align*}
\cHS{\hat X}{\hat Y} &\geq (1-\eps)\cHS{\hat X}{\hat YA=0}+\eps \cHS{\hat X}{\hat YA=1}\\
&\geq(1-\epsilon)\cHS{X}{YB=0}\\
&=\cHS{X}{YB}-\epsilon\cHS{X}{YB=1}\\
&=\cHS{XB}{Y}-\cHS{B}{Y}-\epsilon\cHS{X}{YB=1}\\
&\geq\cHS{X}{Y}-h(\epsilon)-\epsilon \log |\mathcal{X}|\;.
\end{align*}
\end{proof}

\begin{lemma} \label{lem:semihonest}
Let $\rho_{X_0 X_1 B}$ satisfy conditions~\eqref{eq:secA1} and~\eqref{eq:secA2}.
If there exists a measurement $G$ on system $B$ such that $\Pr[ G(\rho_B) = X_1] \geq 1- \eps$, then
\begin{equation*}
\dis(\rho_{X_{0} X_{1} B}, \tau_{X_0} \otimes \rho_{X_{1} B}) \leq 5\eps\;.
\end{equation*}
\end{lemma}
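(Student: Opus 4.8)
The plan is to exploit the simulator guaranteed by security for Alice and to show that, except with probability $O(\eps)$, this simulator must feed the choice bit $C'=1$ to the ideal OT: conditioned on $C'=1$, property \eqref{eq:secA1} already makes $X_0$ perfectly uniform and independent of $(X_1,B)$ in the simulated state, and \eqref{eq:secA2} then transfers this conclusion to $\rho$.

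First I would invoke the security-for-Alice definition to obtain a state $\sigma_{X_0X_1BC'}$ with $C'\in\sbin$ (chosen by the simulator before it learns anything, hence independent of Alice's uniform inputs $X_0,X_1$) satisfying \eqref{eq:secA1} and \eqref{eq:secA2}. Set $q:=\Pr[C'=0]$ and let $\sigma^c$ denote $\sigma$ conditioned on $C'=c$. Conditioning \eqref{eq:secA1} on $C'=c$ gives $\sigma^c_{X_{1-c}X_cB}=\tau_{\sbin^k}\otimes\sigma^c_{X_cB}$; in particular $\sigma^1_{X_0X_1B}=\tau_{X_0}\otimes\sigma^1_{X_1B}$ and $\sigma^0_{X_0X_1B}=\tau_{X_1}\otimes\sigma^0_{X_0B}$.

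Next I would bound $q$. Partial trace does not increase trace distance, so \eqref{eq:secA2} gives $\dis(\sigma_{X_1B},\rho_{X_1B})\le\eps$; combining this with the hypothesis $\Pr[G(\rho_B)=X_1]\ge1-\eps$ and \eqref{eq:guess0} yields $\Pr[G(\sigma_B)=X_1]\ge1-2\eps$. On the other hand, in the branch $C'=0$ the register $X_1=X_{1-C'}$ is uniform on $\sbin^k$ and independent of $B$, so it is guessed with probability exactly $2^{-k}$; hence $1-2\eps\le q\,2^{-k}+(1-q)$, i.e.\ $q\,(1-2^{-k})\le2\eps$. Plugging the two product forms above into $\sigma_{X_0X_1B}-\tau_{X_0}\otimes\sigma_{X_1B}$, the $C'=1$ part cancels exactly and the difference collapses to $q\,\tau_{X_1}\otimes(\sigma^0_{X_0B}-\tau_{X_0}\otimes\sigma^0_B)$, so $\dis(\sigma_{X_0X_1B},\tau_{X_0}\otimes\sigma_{X_1B})=q\cdot\dis(\sigma^0_{X_0B},\tau_{X_0}\otimes\sigma^0_B)$. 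Using the elementary fact that the distance from uniform of a classical register of size $2^k$ is at most $1-2^{-k}$ (proved in a few lines by optimizing $\tr[P(\sigma^0_{X_0B}-\tau_{X_0}\otimes\sigma^0_B)]$ over block-diagonal $0\le P\le\id$), this is $\le q(1-2^{-k})\le2\eps$.

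Finally, two uses of the triangle inequality, with $\dis(\rho_{X_0X_1B},\sigma_{X_0X_1B})\le\eps$ and $\dis(\sigma_{X_1B},\rho_{X_1B})\le\eps$ both from \eqref{eq:secA2}, give $\dis(\rho_{X_0X_1B},\tau_{X_0}\otimes\rho_{X_1B})\le\eps+2\eps+\eps\le5\eps$. The only nonroutine ingredients are the exact cancellation of the $C'=1$ contribution and the $1-2^{-k}$ bound on distance from uniform; the single conceptual point to be careful about is that $C'$ is classical and independent of Alice's inputs, which is precisely what makes conditioning on the value of $C'$ legitimate throughout. I expect the cancellation step to be the main place where care is needed, since it is where \eqref{eq:secA1} for both values of $C'$ is used simultaneously.
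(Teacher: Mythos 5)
Your proposal is correct and arrives at the conclusion by a genuinely different route from the paper's own proof. Both arguments rest on the same two ingredients — bound $q=\Pr[C'=0]$ using the hypothesis that $X_1$ is guessable from $B$, then convert that bound into a bound on the distance from uniform of $X_0$ — but they execute each step differently.

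To bound $q$, the paper sets up a single-bit guessing game: the adversary outputs the first bit of $G(\sigma_B)$ when $C'=0$ and a uniform bit otherwise, wins with probability at least $\tfrac12+\tfrac{q}{2}-2\eps$, and since $X_{1-C'}$ is uniform this success probability is at most $\tfrac12$, giving $q\leq 4\eps$. You instead write $\Pr[G(\sigma_B)=X_1]\leq q\cdot 2^{-k}+(1-q)$ directly, using that under $C'=0$ condition \eqref{eq:secA1} forces $X_1=X_{1-C'}$ to be uniform and independent of $B$, and combine it with $\Pr[G(\sigma_B)=X_1]\geq 1-2\eps$; this yields $q(1-2^{-k})\leq 2\eps$, which is the same estimate but sharper for $k\geq 1$.

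For the distance bound, the paper introduces the hybrid state $\hat\sigma_{X_0X_1BC'}:=\tau_{X_0}\otimes\sigma_{X_1B}\otimes\ketbra{1}{1}$ in which $C'$ is fixed to $1$, and argues $\dis(\sigma,\hat\sigma)\leq q$. You instead exploit the exact algebraic cancellation
\begin{equation*}
\sigma_{X_0X_1B}-\tau_{X_0}\otimes\sigma_{X_1B}=q\,\tau_{X_1}\otimes\bigl(\sigma^0_{X_0B}-\tau_{X_0}\otimes\sigma^0_B\bigr)\;,
\end{equation*}
which follows from plugging $\sigma^1_{X_0X_1B}=\tau_{X_0}\otimes\sigma^1_{X_1B}$ and $\sigma^0_{X_1B}=\tau_{X_1}\otimes\sigma^0_B$ into both sides and watching the $C'=1$ contributions cancel. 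Taking trace norms and using the $(1-2^{-k})$ bound on the distance from uniform of a $k$-bit classical register (which indeed holds: with $a_x:=\tr[P_x\sigma^0_{B|x}]$ one has $\tr[P_x\sigma^0_B]\geq p_x a_x$, so the optimal block-diagonal $P$ gives at most $(1-2^{-k})\sum_x p_x a_x\leq 1-2^{-k}$), you get $\dis(\sigma_{X_0X_1B},\tau_{X_0}\otimes\sigma_{X_1B})\leq q(1-2^{-k})\leq 2\eps$. The final triangle inequality then produces $\eps+2\eps+\eps=4\eps\leq 5\eps$, marginally tighter than the paper's constant. Your exact-cancellation step is cleaner than the paper's hybrid argument because it isolates precisely where the $C'=0$ branch enters, and it makes fully explicit that condition \eqref{eq:secA1} is being applied to both values of $C'$ simultaneously — a point the paper's proof leaves more implicit. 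Both proofs are valid; yours is somewhat more transparent.
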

\begin{proof}
Let $\sigma_{X_0 X_1 B C'}$ be the state in conditions~\eqref{eq:secA1} and~\eqref{eq:secA2}. Then \eqref{eq:guess} implies 
\[ \Pr [ G(\sigma_B) = X_1] \geq \Pr [ G(\rho_B) = X_1] - \eps \geq 1 - 2 \eps\;.\]
In the state $\sigma_{X_0 X_1 B C'}$, we can guess the first bit of $X_{1-C'}$ if we output the first bit of $G(\sigma^B)$ whenever $C'=0$ and a random bit otherwise. We succeed with a probability of
\begin{align*}
      g \geq  & \frac12 \cdot \Pr[C' = 1] + \Pr [ G(\sigma^B) = X_1 \wedge C' = 0] \\
  = & \frac12 \cdot (1- \Pr[C' = 0]) + \Pr[C' = 0] - \Pr [ G(\sigma^B) \neq X_1 \wedge C' = 0] \\
  \geq & \frac12 \cdot (1- \Pr[C' = 0]) + \Pr[C' = 0] - 2 \eps \\
  = & \frac12 + \frac{\Pr[C'=0]}{2} - 2 \eps\;.
\end{align*}
Since $X_{1-C'}$ is uniform with respect to the rest, we have $g \leq \frac12$ and, therefore,
$\Pr[C'=0] \leq 4\eps$. This implies that for
$\hat \sigma_{X_0 X_1 B C'} := \tau_{X_{0}} \otimes \sigma_{X_{1} B} \otimes \ketbra{1}{1}$ we have
\begin{equation*}
\dis(\sigma_{X_{1-C'} X_{C'} B C'}, \hat \sigma_{X_{1-C'} X_{C'} B C'}) \leq 4\eps
\end{equation*}
and hence
\begin{align*}
\dis(\rho_{X_{0} X_{1} B}, \tau_{X_0} \otimes \rho_{X_{1} B})&\leq \dis(\rho_{X_{0} X_{1} B}, \sigma_{X_{0} X_{1} B})\\
&\;+\dis(\sigma_{X_{0} X_{1} B},\hat \sigma_{X_{0} X_{1} B})\\ 
&\leq 5\eps\;.
\end{align*}
\end{proof}

We will use the following lemma from \cite{Hoeffd63} to prove Lemma \ref{lem:sampling2}.
\begin{lemma}\label{lem:hoeffding}
Let $\beta=(\beta_1,\cdots,\beta_n) \in [0,1]^n$. Let $X_1,\cdots,X_k$ denote a random sample with replacement and let $Y_1,\cdots,Y_k$ denote a random sample without replacement from $\beta$. Then the following two inequalities hold
\begin{align}
\Pr\left[\frac{1}{k}\sum_{i=1}^k X_i \leq \frac{1}{n}\sum_{i=1}^n \beta_i-\eps\right]\leq e^{-2k\eps^2}\;,\nonumber\\
\Pr\left[\frac{1}{k}\sum_{i=1}^k Y_i\leq \frac{1}{n} \sum_{i=1}^n \beta_i-\eps\right]\leq e^{-2k\eps^2}\;.\nonumber
\end{align}
\end{lemma}

\begin{proof}[Proof of Lemma~\ref{lem:sampling2}]
Let $a_j$ be the number of bits where $y$ is equal to $1$ in the $j$th block, for $j \in [\kappa]$, and let $\bar \mT^*$ be the complement of $\mT^*$.
We set $\beta_j := 1 - a_j/b$. Then all $\beta_j$ are in $[0,1]$. Thus, we can apply Lemma~\ref{lem:hoeffding} to obtain
\begin{align} \label{eq:1}
\Pr &\left [ \frac 1 {(1 - \alpha) m} \sum_{i \in \bar \mT} y_i  \geq \frac 1 {m} \sum_{i=1}^{m} y_i + \eps \right ]\nonumber\\
&~~~~~~~~~~~~ = \Pr \left [\frac 1 {(1 - \alpha) m} \sum_{j \in \bar \mT^*} a_j \geq \frac 1 {m} \sum_{j=1}^{\kappa} a_j + \eps \right ] \nonumber\\
&~~~~~~~~~~~~ \leq e^{-2(1 - \alpha) \kappa \eps^2}\;.
\end{align}
Let $S \in \{0, \dots, \alpha m\}$ be the size of $\mT'$. Even if we condition on the event that $\mT'$ has size $s$, i.e, $S=s$, $\mT'$ is still a random subset of $[m]$. Hence, we can apply Lemma~\ref{lem:hoeffding} again and obtain
\[\Pr \left [ \frac 1 {s} \sum_{i \in \mT'} y_i \leq \frac 1 {m} \sum_{i=1}^{m} y_i - \eps \Big | S = s \right ] 
\leq e^{-2s \eps^2}\;,
\]
which implies that
\[\Pr \left [ \frac 1 {S} \sum_{i \in \mT'} y_i \leq \frac 1 m \sum_{i=1}^m y_i - \eps \Big | S \geq \alpha' m \right ] 
\leq e^{- 2\alpha' m \eps^2}\;,
\] where $\alpha':=(1/2-\eps)\alpha$.
From Lemma~\ref{lem:hoeffding} follows that 
\[ \Pr[ S \leq \alpha' m] = \Pr \left [ \frac{S}{\alpha m}\leq \frac12 - \eps \right ] \leq e^{-2 \alpha m \eps^2}\;.\]
Hence,
\begin{align} \label{eq:2}
\Pr \left [ \frac 1 {S} \sum_{i \in \mT'} y_i \leq \frac 1 m \sum_{i=1}^m y_i - \eps \right ] 
&\leq e^{-2 \alpha m \eps^2} + e^{- \alpha' \kappa \eps^2 / 2} \nonumber\\
&\leq 2 e^{- \alpha' m \eps^2 / 2}\;.
\end{align}
Combining equations~(\ref{eq:1}) and (\ref{eq:2}), we obtain
\begin{align}
\Pr \left[ \frac 1 {S} \sum_{i \in \mT'} y_i \leq \frac 1 {(1 - \alpha) m} \sum_{i \in \bar \mT} y_i - 2\eps \right] &\leq 2 e^{- 2\alpha' m \eps^2}\nonumber\\&~~~~ + e^{-2(1 - \alpha) \kappa \eps^2}\label{eq:lem:3} \nonumber\\ 
& \leq 3 e^{- 2\alpha' \kappa \eps^2 } \nonumber\;. 
\end{align}
\end{proof}


\end{document}